\newcommand{\newred}{}
\renewcommand {\a}{ \alpha }
\renewcommand{\b}{\beta}
\newcommand{\ga}{\gamma}
\renewcommand{\k}{\kappa}
\newcommand{\ka}{\kappa}
\newcommand{\p}{\partial}
\newcommand{\om}{\omega}
\newcommand{\Om}{\Omega}
\newcommand{\oq}{\ {\raise 7pt\hbox{${\scriptstyle\circ}$}}
\kern -7pt{
\hbox{$Q$}}}
\newcommand{\R}{ \mathbb R}
\newcommand{\Q}{ \mathbb Q}
\newcommand{\C}{ \mathbb C}
\renewcommand{\S}{ \mathbb S^{d-1}}
\newcommand{\N}{ \mathbb N}
\newcommand{\E}{\mathbb E}
\newcommand{\Zl}{ \mathbb Z^l}
\newcommand {\GH}{\mathfrak H}
\newcommand {\GV}{\mathfrak V}
\newcommand {\GU}{\mathfrak U}
\newcommand {\ba}{\mathbf a}
\newcommand {\BB}{\mathbf B}
\newcommand{\BN}{\mathbf N}
\newcommand {\BL}{\mathbf L}
\newcommand {\BS}{\mathbf S}
\newcommand {\bx}{\mathbf x}
\newcommand {\be}{\mathbf e}
\newcommand {\bk}{\mathbf k}
\newcommand {\bp}{\mathbf p}
\newcommand {\bq}{\mathbf q}
\newcommand {\bm}{\mathbf m}
\newcommand {\bz}{\mathbf z}
\newcommand {\by}{\mathbf y}
\newcommand {\bt}{\mathbf t}
\newcommand {\bs}{\mathbf s}
\newcommand {\bu}{\mathbf u}
\newcommand {\bv}{\mathbf v}
\newcommand {\bn}{\mathbf n}
\newcommand {\bnu}{\boldsymbol\nu}
\newcommand {\bmu}{\boldsymbol\mu}
\newcommand {\bka}{\boldsymbol\kappa}
\newcommand {\bth}{\boldsymbol\theta}
\newcommand {\boldeta}{\boldsymbol\eta}
\newcommand {\boldom}{\boldsymbol\om}
\newcommand {\BPhi}{\boldsymbol\Phi}
\newcommand {\bxi}{\boldsymbol\xi}
\newcommand {\BPsi}{\boldsymbol\Psi}
\newcommand{\lu}{\langle}
\newcommand{\ru}{\rangle}
\newcommand{\CN}{\mathcal N}
\newcommand{\CG}{\mathcal G}
\newcommand{\CR}{\mathcal R}
\newcommand{\CB}{\mathcal B}
\newcommand{\CL}{\mathcal L}
\newcommand{\CP}{\mathcal P}
\newcommand{\CA}{\mathcal A}
\newcommand{\CM}{\mathcal M}
\newcommand{\CC}{\mathcal C}
\newcommand{\CE}{\mathcal E}
\newcommand{\CD}{\mathcal D}
\newcommand{\1}
{{\,\vrule depth3pt height9pt}{\vrule depth3pt height9pt}
{\vrule depth3pt height9pt}{\vrule depth3pt height9pt}\,}
\DeclareMathOperator {\diam} {{diam}}
\theoremstyle{plain}
\newtheorem{thm}{Theorem}[section]
\newtheorem{theorem}{Theorem}[section]
\newtheorem{cor}[thm]{Corollary}
\newtheorem{corollary}[thm]{Corollary}
\newtheorem{cla}[thm]{Claim}
\newtheorem{lem}[thm]{Lemma}
\newtheorem{defn}[thm]{Definition}
\newtheorem{defin}[thm]{Definition}
\newtheorem{rem}[thm]{Remark}
\numberwithin{equation}{section}
\newcommand{\bee}{\begin{equation}}
\newcommand{\ene}{\end{equation}}
\newcommand{\bees}{\begin{equation*}}
\newcommand{\enes}{\end{equation*}}
\newcommand{\bes}{\begin{split}}
\newcommand{\ens}{\end{split}}
\newcommand{\bet}{\begin{thm}}
\newcommand{\ent}{\end{thm}}
\newcommand{\bel}{\begin{lem}}
\newcommand{\enl}{\end{lem}}
\newcommand{\bec}{\begin{cor}}
\newcommand{\enc}{\end{cor}}
\newcommand{\becl}{\begin{cla}}
\newcommand{\encl}{\end{cla}}
\newcommand{\bep}{\begin{proof}}
\newcommand{\enp}{\end{proof}}
\newcommand{\ber}{\begin{rem}}
\newcommand{\enr}{\end{rem}}
\newcommand{\ep}{\varepsilon}
\newcommand{\la}{\lambda}
\newcommand{\La}{\Lambda}
\newcommand{\de}{\delta}
\newcommand{\al}{\alpha}
\newcommand{\Z}{\mathbb Z}
\newcommand{\Ga}{\Gamma}
\newcommand {\BUps}{\boldsymbol\Upsilon}
\def\square{\RIfM@\bgroup\else$\bgroup\aftergroup$\fi
  \vcenter{\hrule\hbox{\vrule\@height.6em\kern.6em\vrule}\hrule}\egroup}
\numberwithin{equation}{section}
\begin{document}

\hoffset -4pc

\title[Bethe-Sommerfeld Absolute Continuity]
{Bethe-Sommerfeld Conjecture and Absolutely Continuous Spectrum of Multi-dimensional Quasi-periodic Schr\"odinger Operators}
\author[YK,LP,RS]
{Yu. Karpeshina, L. Parnovski, R. Shterenberg}

\numberwithin{equation}{section}


\date{\today}


\begin{abstract}
We consider Schr\"odinger operators $H=-\Delta+V(\bx)$ in
$\R^d$, $d\geq2$, with quasi-periodic potentials $V(\bx)$. We prove that
the absolutely continuous spectrum of a generic $H$ contains a semi-axis $[\lambda_*,+\infty)$. 
We also construct a family of eigenfunctions of the absolutely continuous spectrum; these eigenfunctions are small perturbations of the exponentials. 
 The proof is based on a version  of the multi-scale analysis in the momentum
space with several new ideas introduced along the way. 
\end{abstract}



\maketitle
\vskip 0.5cm

\newcommand{\meas}{{\mathrm {meas}}}
\newcommand{\conf}{{\mathrm {\bk}}}
\newcommand{\corner}{\phi}
\newcommand{\angl}{{\BPhi}}
\newcommand{\sph}{\BPhi}
\newcommand{\nk}{{k}}
\newcommand{\nbka}{{\bka}}
\newcommand{\nka}{{\kappa}}
\newcommand{\freq}{{\mathrm {freq}}}
\newcommand{\rank}{{\mathrm {Rank}}}
\newcommand{\spa}{{\mathrm {span}}}
\newcommand{\ext}{{\mathrm {ext}}}
\newcommand{\patch}{{\mathrm {res}}}
\newcommand{\m}{{\mathbf {\bn}}}
\newcommand{\n}{{\mathbf {n}}}
\newcommand{\q}{{\mathbf {q}}}
\renewcommand{\th}{\theta}
\newcommand{\OO}{{\mathcal O}}
\newcommand{\MM}{{\mathcal M}}
\newcommand{\no}{{|}}
\renewcommand{\p}{{\vec p}}
\newcommand{\cs}{\gamma_0}
\newcommand{\oldj}{p}
\newcommand{\cont}{{\bf {K}}}
\newcommand{\En}{E}
\newcommand{\cube}{[-1/2,1/2]}
\newcommand{\eig}{\lambda^{(\infty)}}
\newcommand{\bkappa}{{\bka}}
\newcommand{\CMM}{\CR^{\bm}}
\newcommand{\local}{\mathrm {local}}
\newcommand{\Czero}{{C_{6.14}}}
\newcommand{\Cone}{{C_{6.11}}}
\newcommand{\Ctwo}{{C_{6.21}}}
\newcommand{\Cthree}{{c_{6.18}}}
\newcommand{\Cfour}{{C_{6.18}}}
\newcommand{\Cfive}{{C_{6.47}}}
\newcommand{\Csix}{{C'_{6.47}}}
\newcommand{\tim}{\tilde m}
\newcommand{\ham}{\hat m}
\newcommand {\BOm}{\boldsymbol\Omega}
\newcommand{\Bourgain}{{multiscale }}
\newcommand{\CNN}{\mathcal B}
\newcommand{\tildeB}{L}
\newcommand{\dis}{e}
\newcommand{\sm}{{\mathrm {small}}}
\newcommand{\bi}{{\mathrm {big}}}
\renewcommand{\Re}{{\mathrm {Re \ }}}
\renewcommand{\Im}{{\mathrm {Im \ }}}
\newcommand{\Piy}{{\boldsymbol\Xi}}
\newcommand{\sh}{{\mathrm{sh}}}
\newcommand{\Cun}{\check Z}
\newcommand{\cun}{\check z}
\newcommand{\fun}{U}

\section{Introduction}\label{section0}
We study multidimensional Schr\"odinger operators acting on $L^2(\R^d)$, $d\ge 2$, defined in the following way. 
Let $\boldom_1,\dots,\boldom_l\in\R^d$, $l>d$, be a collection of vectors that we will call {\it the basic frequencies}. It will be convenient to form a `vector' out of the basic frequencies: 
$\vec\boldom:=(\boldom_1,\dots,\boldom_l)$. 
We consider the operator 
\bee\label{H}
H:=H_0+V,
\ene
where
\bee
H_0:=-\Delta
\ene
and $V$ is a real-valued potential of the form
\bee\label{V_q=0}
V:=\sum_{|\bn|\le Q} V_{\bn}\be_{\bn\vec\boldom},\ \ \ V_{\bn}=\bar V_{-\bn}.
\ene
The last sum is finite and taken over all vectors $\bn=(n_1,\dots,n_l)\in\Z^l$ with 
\bee\label{l_infinity}
|\bn|:=\max_{j=1,...,l}|n_j|<Q, \ \ \ Q\in\N.
\ene
We have also denoted 
\bee\label{be}
\be_{\bth}(\bx):=e^{i\lu\bth,\bx\ru}, \ \ \bth,\bx\in\R^d 
\ene
and
\bee
\bn\vec\boldom:=\sum_{j=1}^l n_j\boldom_j\in\R^d; 
\ene
these vectors  $\bn\vec\boldom$ are called {\it the frequencies}. 
For convenience and without loss of generality, we assume that the basic frequencies 
$\boldom_j\in \cube^d$ and thus $\vec\boldom\in\cube^{dl}$ (so that the Lebesgue measure of this set is one; obviously, we can always achieve this by rescaling if necessary) and they are linearly independent over rationals.
Our main result is the following theorem.

\begin{theorem}\label{mainth} For any finite set $\{V_\bn\}$, $V_{\bn}=\bar V_{-\bn}$ ($\bn\in\Z^l$, $|\bn|<Q$) of Fourier coefficients 
there exists a subset $\BOm_*=\BOm_*(\{V_\bn\})\subset[-1/2,1/2]^{dl}$ of basic frequencies with $meas(\BOm_*)=1$ such that for any $\vec\boldom\in\BOm_*$ the absolute continuous spectrum of the operator $H$ contains a semi-axis $[\lambda_*,\infty)$, where   $\lambda_*=\lambda_*(\vec\boldom,\{V_\bn\})$ is sufficiently large.
\end{theorem}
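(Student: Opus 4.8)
The plan is to run a multi-scale (KAM-type) analysis directly in momentum space, constructing for a full-measure set of basic frequencies a large family of generalized eigenfunctions of $H$ that are small perturbations of the plane waves $\be_{\bk}(\bx)=e^{i\lu\bk,\bx\ru}$, and then to use this family to prove that the spectrum is purely absolutely continuous on $[\lambda_*,\infty)$. Since $V$ couples a momentum $\bk$ only to the countable quasi-lattice $\bk+\Zl\vec\boldom$, the eigenvalue equation $H\Psi=\lambda\Psi$ with $\Psi=\sum_{\bn}c_{\bn}\be_{\bk+\bn\vec\boldom}$ becomes a discrete Schr\"odinger-type system on $\ell^2(\Zl)$ with diagonal entries $|\bk+\bn\vec\boldom|^2$ and finite-range off-diagonal coupling $V_{\bn-\bm}$. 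Because the operator is genuinely quasi-periodic and not periodic, there is no Bloch--Floquet reduction to a compact torus, so at each scale one has to confront an infinite system and keep it under control by careful truncation.

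First I would fix an increasing sequence of scales $M_1<M_2<\cdots$, growing super-exponentially, and for energies $\lambda$ of size $\sim M_j$ split the relevant sphere $\{|\bk|^2=\lambda\}$ into a \emph{non-resonant} and a \emph{resonant} part: $\bk$ is non-resonant at scale $j$ if $\big||\bk|^2-|\bk+\bn\vec\boldom|^2\big|$ stays bounded below by a fixed negative power of $M_j$ for all $\bn$ with $|\bn|$ up to a suitable cutoff $R_j$. On the non-resonant part a convergent Neumann/perturbation series produces an approximate eigenfunction and an approximate eigenvalue $\lambda^{(j)}(\bk)$; iterating from scale $j$ to scale $j+1$ --- correcting the eigenfunction and the eigenvalue and re-testing the non-resonance inequality with the updated energy --- yields in the limit a genuine eigenfunction $\Psi_{\bk}$ with eigenvalue $\eig(\bk)$ for $\bk$ ranging over a Cantor-type subset of the sphere whose relative measure tends to $1$ as $\lambda\to\infty$.

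The resonant part is where the real difficulty sits. Near a resonance surface $\{|\bk|^2\approx|\bk+\bn\vec\boldom|^2\}$, and still more near intersections of several such surfaces, the perturbation series diverges, so naive expansion must be replaced by a block reduction: one collects the near-degenerate modes into a finite block, forms the effective (Schur-complement) Hamiltonian on it, and analyzes its spectrum while re-expanding the non-degenerate complement. The combinatorics of which $\bn$ --- and how many --- can be simultaneously resonant grows with the scale, and getting sharp geometric control of these resonance sets and of their mutual intersections, together with the bookkeeping that propagates this control across scales, is the step I expect to be the main obstacle; this is presumably where the several new ideas advertised in the abstract enter. For the momenta that remain genuinely bad one removes, at scale $j$, the set of basic frequencies $\vec\boldom$ for which the bad set on the spheres in the $j$-th dyadic energy block has non-negligible measure; a Fubini-type estimate bounds this removed set by $\varepsilon_j$ with $\sum_j\varepsilon_j<\infty$, so that $\BOm_*$, defined as the set of $\vec\boldom$ that are eventually good, has full measure by Borel--Cantelli.

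Finally, with $\vec\boldom\in\BOm_*$ fixed and $\lambda_*$ chosen large (depending on $\vec\boldom$ and $\{V_\bn\}$), I would assemble the eigenfunctions into an isoenergetic construction: for each $\lambda\ge\lambda_*$ the equation $\eig(\bk)=\lambda$ cuts out a Lipschitz hypersurface $\CD_\lambda$ --- here one uses that $\eig(\bk)$ is a small perturbation of $|\bk|^2$ and invokes an implicit-function-theorem argument --- and the functions $\{\Psi_{\bk}\}_{\bk\in\CD_\lambda,\ \lambda\ge\lambda_*}$ form a sufficiently complete family of generalized eigenfunctions. Constructing the associated Fourier-type spectral transform and verifying that, after normalization, it is an isometry onto the $L^2$-space of the high-energy part of the spectrum shows that the spectral measure of $H$ on $[\lambda_*,\infty)$ is purely absolutely continuous, which is the assertion of the theorem; the $\Psi_{\bk}$ are at the same time the eigenfunctions that the abstract describes as small perturbations of the exponentials.
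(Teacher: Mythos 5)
Your outline follows the same overall route as the paper: pass to the fibre over $\bk$ (the system on $\ell^2(\Zl)$ with diagonal $\|\bk+\bn\vec\boldom\|^2$ and finite-range coupling $V_{\bn-\bm}$), run a multi-scale induction producing approximate eigenvalues $\la^{(j)}(\bk)$ and eigenfunctions on a non-resonant set of momenta, excise a summable sequence of bad frequency sets, and finish with an isoenergetic-surface/spectral-projection argument for absolute continuity. But the proposal has two genuine gaps, and they sit exactly at the points you yourself flag as ``presumably where the new ideas enter'' --- so the plan as written does not close.

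First, the base of the induction. Your ``block reduction'' at a resonance requires knowing what the blocks look like: how large they are, how far apart they are in $\Zl$, and how their spectra move as $\bk$ varies (otherwise you cannot estimate the measure of bad $\bk$). The ordinary Diophantine condition $\|\bn\vec\boldom\|>A|\bn|^{-\tilde\mu}$ is not enough for this: two resonance hyperplanes can be nearly parallel, and the clusters they generate can then be huge and entangled. The paper resolves this by imposing a \emph{Strong} Diophantine Condition --- a lower bound on $|\det(\bth_1,\dots,\bth_d)|$ for integer combinations $\bth_j$ of the frequencies, hence on the angles between resonance directions --- proves it is generic, and uses it to show that each step-one cluster is generated by a periodic lattice in a proper affine subspace of $\R^d$, is of controlled diameter, is well separated from the others, and gives a restriction of $H$ that is \emph{monotone} in a suitable direction of $\bk$. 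Without some substitute for this, your measure estimate for bad $\bk$ at the first nontrivial scale has no proof.

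Second, the inductive closure. The statement ``at scale $j$ every bad region is covered by a well-separated box of scale $j+1$, and the resolvent on good boxes is controlled'' is \emph{not} self-propagating: assuming it at level $n$ (even together with the natural measure bounds on the sets where the truncated resolvent is large) does not let you derive it at level $n+1$. The paper's fix is to carry along a second, auxiliary family of much larger ``enlarged'' cubes at every level, with its own resolvent and measure estimates; the Cartan-lemma step that produces the level-$(n+1)$ measure bound needs the enlarged estimate at level $n$, while the ordinary estimate at level $n$ is what produces the level-$(n+1)$ covering via the Bourgain-type lemma. Your ``bookkeeping that propagates this control across scales'' is precisely this missing mechanism, and a naive single-family induction fails here. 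A minor further point: you claim the spectrum is \emph{purely} absolutely continuous on $[\lambda_*,\infty)$; the theorem (and the paper) only assert that the a.c.\ spectrum \emph{contains} the semi-axis, and purity is not established.
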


The one-dimensional situation 
is thoroughly
   investigated in both the discrete and continuous settings,
   see e.g. \cite{2a}-\cite{6a}, \cite{9a}-\cite{4a}, \cite{5a}, \cite{7a}, \cite{3b}, \cite{10a}. In particular, in the one-dimensional continuous case, the spectrum is pure absolutely continuous and generically
Cantor at high energies, \cite{3}. In other situations, the spectrum of one-dimensional quasi-periodic operators can be of any nature (absolutely continuous, singular continuous, or pure point), and a transition between different types  of spectrum can happen even with a small change of coefficients (see e.g. \cite{Avila1},  \cite{6a}, \cite{9a}). The multidimensional case is much less studied, some important results being  \cite{BLS}--\cite{ChD}, \cite{FP}, \cite{GSV2}, \cite{JiLiSh}, \cite{PSh}, \cite{Sh1}-\cite{Shu}; see also recent papers \cite{WL} and  \cite{Shi}. 


 If $l=d$, which (generically) means that $V$ is periodic, the spectrum is known to be purely absolutely continuous everywhere    \cite{Th}. 
 Moreover, the Bethe-Sommerfeld conjecture  states that for $d\geq 2$, the spectrum of \eqref{H} contains a semi-axis. When this is the case (e.i. when the spectrum of an operator $H$ contains a semi-axis $[\la_0,+\infty)$), we will say that $H$ satisfies the Bethe-Sommerfeld property. 
  A variety of proofs of this property in the periodic case   have been developed over decades. For the most general and recent results see \cite{Par}, \cite{PS}. 
    For a limit-periodic potential that is periodic in one direction,  this property is established  for  $2\leq d \leq 4$ in \cite{SS}. For the general case
    of limit-periodic potentials  this property is established for $d=2$ in \cite{KL3}; see also \cite{DamG} for results on multidimensional discrete limit-periodic operators.
 Concerning the quasi-periodic case, a recent paper \cite{KS} (see also \cite{KS1} and \cite{KaSh}) has established the result in the case $d=2$, $l=4$. The result in that paper was formulated for prescribed basic frequencies: $\boldom_1=(1,0)$, $\boldom_2=(0,1)$, $\boldom_3=(\alpha,0)$, $\boldom_4=(0,\alpha)$ with Diophantine $\alpha$. 
Nevertheless, the methods of \cite{KS} are robust enough so that they are likely to work in other cases when $d=2$ and $l=4$, without the need to make many genericity assumptions.  Unfortunately, the approach in \cite{KS} could not be extended to higher dimensions or a larger number of frequencies.   
 

We will prove our theorem by constructing (generalized) eigenfunctions of the absolutely continuous spectrum as  small perturbations of exponential functions  $\be_{\bk}$, $\bk\in\R^d$, with large $\|\bk\|$. These eigenfunctions, denoted by $U^{(\infty)}(\bk,\bx)$, are a natural generalisation of the Bloch-Floquet solutions. Each such function  $U^{(\infty)}(\bk,\bx)$ is a solution of the equation $HU^{(\infty)}(\bk,\bx)=\lambda U^{(\infty)}(\bk,\bx)$ and has a form $U^{(\infty)}(\bk,\bx)=\be_{\bk}(1+u^{(\infty)}(\bk,\bx))$, where $u^{(\infty)}(\bk,\bx)$ is a small almost-periodic function: $||u^{(\infty)}||_{L^{\infty}}<C||\bk||^{-\de}$ for some positive $\de$. We obtain many finer  properties of these Floquet solutions, but we postpone formulating them exactly until Section  \ref{8.1'} (see Theorem \ref{T:Dec10}) to avoid introducing complicated notation. Theorem \ref{T:Dec10} can be considered as another main result of our paper. In fact, Theorem  \ref{mainth} is a relatively straightforward corollary of it. Another corollary of Theorem \ref{T:Dec10} is the long time behaviour of evolutionary equations (ballistic transport), which is discussed in \cite{KPS}.

To construct these solutions $U^{(\infty)}$, it is natural to consider the action of $H$ not in $L^2(\R^d)$, but in a linear space spanned by $\{\be_{\bk+\bn\vec\boldom}(\bx)\}_{\bn \in \Z^l,\bk\in \R^d}$; notice that this space is invariant under the action of $H$. Sometimes this linear space (or rather the closure of it, see Section \ref{section1}), denoted by $\GH(\bk)$, is called {\it {the fibre}} generated by $\bk$. 
Of course, $\GH(\bk)$ is not a subspace of $L^2(\R^d)$, but it is a subspace 
of the Besicovitch space $B^2(\R^d)$ -- the space that contains all the exponential functions \eqref{be} (see Section \ref{section1} for more details). The action of $H$ in $B^2(\R^d)$ is sometimes called the {\emph {Aubry dual}} of $H$. 
It is known that, as a set, the spectrum of $H$ on $L^2(\R^d)$ is the same as on  $B^2(\R^d)$ (see \cite{Shu0} and \cite{Shu}), but the nature of the spectrum is entirely different. As a result, the generalized eigenfunctions of the continuous spectrum of $H$ acting in $L^2$ will be proper eigenfunctions of $H$ acting on (the fibres in) $B^2$. The fact that these eigenfunctions will produce absolutely continuous spectrum in $L^2$ will follow from more or less standard estimates, assuming we have good control on the dependence of these eigenfunctions on the `initial momentum', $\bk$.  Thus,  we are going to  prove a partial localisation (i.e. the existence of the point spectrum) of $H$ acting in $B^2$, together with control on the behaviour of the eigenvalues and eigenfunctions (which includes making sure that every energy high enough is an eigenvalue). 

This localisation makes our results morally close to some theorems where complete localization (i.e. the spectrum being pure point) are established for the discrete quasi-periodic Schr\"odinger (or Schr\"odinger type) operators. Since such operators are usually bounded, the high energy regime does not exist. Therefore, the regime of a large coupling constant (or equivalently a small constant in front of the non-diagonal terms) is often  considered instead. One should note that this regime, although looking quite similar to the high energy regime, is not exactly the same, and  there are many differences between the two. The important results in the discrete setting are contained in papers \cite{74}, \cite{Bou1}, and \cite{JiLiSh}. In those papers, a complete localization is established for all dimensions and all numbers of frequencies; however, no control on the spectrum as a set is established and the spectrum could, in principle, be either Cantor, or the interval, or anything in-between. The proper `translation' of the Bethe-Sommerfeld Conjecture from our original setting (all high energies are in the spectrum) to the discrete setting would probably be that the spectrum contains an interval. Such a result was established in papers \cite{GSV1} and \cite{GSV2} in the case of one frequency (so that operators act on $\Z^1$) and large coupling constant. These papers established even more, namely that the spectrum {\it is} an interval (no gaps) and is completely localized. 

The methods we use to prove Theorem \ref{mainth} are based on a multi-scale analysis in the momenta space. As we consider the action of $H$ in $B^2$, it is natural to look at the invariant subspaces of $H$ generated by one exponential $\be_{\bk}$; we have called these subspaces fibres and denoted them by $\GH(\bk)$. There is a natural identification of each fibre with $l^2(\Z^l)$, so we are effectively considering a family of actions of $H$  on $l^2(\Z^l)$, parametrized by a point $\bk\in\R^d$ (see Section \ref{section1} for more details). Let $\lambda=\rho^2$ be a large value of energy. We want to prove that for most $\bk$ with $\|\bk\|\sim\rho$, we can find an eigenvalue of such an action with an eigenfunction that is a small perturbation of $\be_{\bk}$  (that corresponds to the delta-function at the origin of $\Z^l$). 

To begin with, we introduce a range of `scales' $r_n$ ($n=0,1,2,...$) defined by $r_0=10^{-6}$, $r_{n+1}=\rho^{r_n}$ (these, as well as some other notions introduced in this and the next few paragraphs, are not the exact definitions we will use in our article, but they give a fair idea of the ideas). We consider a collection of `central cubes' (or `boxes') $\hat K^{(n)}$; each such central cube is a ball of radius $\rho^{r_n}$ in the $l^{\infty}$-metric on $\Z^l$ centered at the origin. We will often use the name `cubes' for balls in $l^{\infty}$-norms in $\Z^l$ (since they look like cubes). Let $H^{(n)}=H^{(n)}(\bk)$ be the restriction of $H$ onto the linear subspace of $\Z^l$ spanned by $\hat K^{(n)}$ (of course, the projection onto this linear subspace does not commute with $H$, so by the restriction we mean $H$ multiplied by the projections onto $\hat K^{(n)}$ on both sides). We aim to achieve the following: 

{\bf Our Goal.} Each $H^{(n)}$ has one simple eigenvalue located sufficiently far (at least $\sim \rho^{-r_n}$-away) from the rest of the spectrum of $H^{(n)}$. This eigenvalue (denoted by $\la^{(n)}(\bk)$) behaves like $\|\bk\|^2$ plus smaller terms that are controlled, together with their derivatives, via perturbation theory. As a result, $\la^{(n)}$ is a continuous increasing function of $\|\bk\|$ and so, takes value $\rho^2$ for a large collection of $\bk$'s; we call the set of $\bk$ with this property   ($\la^{(n)}(\bk)=\rho^2$) the {\it isoenergetic surface} of $H^{(n)}$. 

Of course, we will be unable to achieve our goal for all $\bk$. For example, if $\|\bk\|=\|\bk+\boldom_1\|$ (or $\|\bk\|-\|\bk+\boldom_1\|$ is small, which on the physical level means that $\bk$ lies close to the diffraction (hyper)plane of $H$), we may have problems already on the very first step (to be precise, we start from $n=0$, so the very first step will have an official name {\it step zero}); in this case we would say that $\bk$ and $\bk+\boldom_1$ are in resonance. At each further step of our procedure, the definition of what the expression `in resonance' means exactly  will change and will include the spectrum of the restriction of $H$ to bigger and bigger cubes. However,  our goal is still achievable if we throw away some collection of `bad' initial points $\bk$; these bad points will take away only a small proportion of every isoenergetic surface with large enough energy (and this proportion will decay exponentially with $n$).     
    
The way we  achieve this is by induction. Suppose, the restriction, $H^{(n)}$, of $H$ to $\hat K^{(n)}$ satisfies our goal. Consider the restriction $H^{(n+1)}$ of $H$ to the next cube $\hat K^{(n+1)}$. We would like to treat $H^{(n+1)}$ as a perturbation of $H^{(n)}$ (extended somehow to $\hat K^{(n+1)}$). What can prevent us from doing this are the points inside  $\hat K^{(n+1)}\setminus \hat K^{(n)}$ that are in resonance with $\bk$. We would like to cover all such resonant points by cubes $\{K^{(n)}_m\}_{m=1}^M$ of size at most $r_n$ and then perform two tasks: 

Task 1.  Prove that for most $\bk$,   
$H^{(n)}$ is not in resonance with the restriction of $H$ onto all such cubes $K^{(n)}_m$, i.e. the spectra of the restrictions of $H$ onto $K^{(n)}_m$ lie sufficiently far away from the eigenvalue of $H^{(n)}$ we are interested in (the closest eigenvalue to $\lambda$).

 and 

Task 2. Prove that $H^{(n+1)}$ can be treated as a perturbation of the direct sum of  the restrictions of $H$ to all $K^{(n)}_m$ plus $H^{(n)}$. 

For this approach to work, we have to make sure that the sizes of $K^{(n)}_m$ are not too large and, moreover, that they are located sufficiently far away from each other in the lattice $\Z^l$. The tool to achieve this is a version of Bourgain's Lemma \cite{Bou1}, modified for our situation. This lemma is quite robust and can be treated as a `black box'; it tells us that we can achieve that $K^{(n)}_m$ are well-separated by throwing away a small proportion of `bad' frequency vectors $\vec\boldom$  from $[-1/2,1/2]^{dl}$. This lemma leads to the construction of a structure that we call a {\it \Bourgain structure}, see definition \ref{8.1}. This structure 
tells us that for each cube $K^{(j)}_{m_j}$ at any level $j\le n$ there are two possibilities: either this cube is good (or non-resonant; this means that the norm of the resolvent of the restriction of $H$ to this cube is not too large), or, if it is bad (resonant), then it is covered by some cube $K^{(j+1)}_{m_{j+1}}$ at the next level. We also know that all these cubes $\{K^{(j)}_{m_j}\}$ of all levels are located far away from each other (unless one of them is inside the other). Unfortunately, the existence of a  \Bourgain structure at level $n$ is not something that can be used as an induction assumption to imply the existence of a \Bourgain structure at the next level $n+1$ (even if we add some standard measure estimates like \eqref{Smn} to the assumption). In order to construct a structure that properly persists at the next level, we introduce a bigger structure that we call {\it an enlarged \Bourgain structure}, see definition \ref{8.2}; in this structure there are two different types of cubes at each level $j$: usual cubes $K^{(j)}$ and enlarged (much bigger) cubes $\tilde K^{(j)}$. The usual cubes are not too big, so that estimates there are good enough for perturbation theoretical arguments, whereas the enlarged cubes play an auxiliary role. The enlarged cubes of level $n$ are used only at the $n$-th step of induction and are discarded afterwards; in particular, we never cover enlarged cubes with any next level cubes. {\newred See also remark \ref{newrem5} for more comments on this structure.} While the standard \Bourgain structure (or similar structures) has appeared in many articles on this subject, we believe that the enlarged \Bourgain structure is novel.  This enlarged \Bourgain structure (and its use to establish the inductive step) is the first major new idea introduced in our paper.

After constructing the enlarged \Bourgain structure  at each level, we can perform tasks 1 and 2 indicated above.  The first task is done by means of Cartan's lemma (formulated in the Appendix), where again the enlarged \Bourgain structure comes in handy. Task 2 is performed using a tedious perturbation theoretical lemma \ref{abstractlemma}, which essentially consists of a resolvent identity written down many times.  

A serious problem with applying our approach is that both Bourgain's lemma and Cartan's lemma give us good enough estimates to be used in the induction procedure only once we have moved sufficiently far in the scale of approximations, namely when the power of $\rho$ becomes much larger than one.  
 Until we reach this level (i.e. until step two), we need to perform the first two steps -- step zero and step one -- by different means. Step zero is very simple: we do not have any cubes, but we declare any point $\bn$ inside the initial central cube  $\hat K^{(0)}$ good or bad, depending on the size of $|\|\bk+\bn\vec\boldom\|-\|\bk\||$, and then we keep only starting points $\bk$ for which all non-zero points inside  $\hat K^{(0)}$ are good. However, step one is something that, apparently, cannot be done using either a straightforward perturbation theory, or ideas related to the \Bourgain structure. Therefore,  we use a different approach. First, we assume that the basic frequencies $\{\boldom_j\}$ satisfy not just the standard Diophantine condition (meaning that their linear combinations cannot lie very close to the origin), but also something that we call the Strong Diophantine Condition (SDC). This condition means that also the angles between two different integer linear combinations of basic frequencies cannot be too small, as well as the angles between linear subspaces generated by the integer linear combinations of $\{\boldom_j\}$, see section \ref{section2} for precise definitions. We believe this condition is new; we prove that it is generic, i.e. it is satisfied on a set of basic frequencies of full measure. Once we impose SDC, the structure of the resonant boxes (or `clusters') $K^{(0)}_m$ at step one becomes manageable. Each such cluster is generated by a periodic lattice in a proper affine subspace $\GV$ of $\R^d$. The fact that these clusters are well-separated in $\Z^l$ is a consequence of the SDC (and the strong convexity of the Euclidean ball). The next observation is that the size of these clusters is much smaller than $\rho$, which implies that when we move $\bk$ in a certain direction, the restriction of $H$ to each such cluster is monotone. This makes the estimates of the measure of the set of bad $\bk$'s  (those are defined as $\bk$'s that may come in resonance with the restriction of $H$ to one of the clusters  $K^{(0)}_m$) quite straightforward. The vague idea of approximating a quasi-periodic operator by a direct sum of operators that are, effectively, periodic operators in proper subspaces of $\R^d$ has been used before (e.g. in \cite{PS} or \cite{PSh}), but the constructions there are completely different from ours. We consider the construction used in Step one (together with the SDC) the second main idea introduced in this paper.

Now we describe the structure of the paper. In Section \ref{section1}, we discuss notation and some major conventions we will be using in our paper. Section \ref{section2} gives the definition of the Strong Diophantine Condition and proves that it is generic. Section \ref{section3} performs the zeroth step of our procedure, while in Section \ref{section4} we construct resonant clusters $K^{(0)}_m$ necessary for the first step of our construction. In Section \ref{section5} we actually perform the first step. In Section \ref{goodset-3} we prove the version of the Bourgain's Lemma that we need for our construction. Section  \ref{section7} is one of the most important in our paper (together with section \ref{section4}): in Section \ref{section7} we set up the induction process (to be kicked off at  step 2). The complete inductive statement that is pushed onto the next level (modulo sets of small measure) is rather involved and includes several definitions and estimates. The main inductive statements, Theorem \ref{MGT} and Theorem \ref{Thm7.4}, are proved in Sections \ref{section8} and \ref{section9}. A short Section \ref{8.1'} then describes how to finish the proof of our main theorem, using mostly quite standard tools.  Finally, Section  \ref{section10} contains various Appendices; in order to help the reader, we have also included the Index of Notation at the end of this section. 

Let us now briefly discuss possible generalisations of our results. First, it seems that our results could be extended to infinite range potentials $V$ -- when Fourier coefficients $V_{\bn}$ decay super-exponentially in $\bn$ (very likely), or even exponentially (quite possible). We decided that, since our paper is quite long and  technical as it is, we will not consider these cases in it. Also, our approach seems to work for other differential operators with constant coefficients perturbed by a potential; pseudo-differential operators are a completely different ballgame, and our methods are likely be less effective  when dealing with them. 
We also think our approach can be used to study the regime of large coupling constant in discrete Schr\"odinger type operators. {\newred Finally, see remark \ref{newrm1} concerning the possibility to prove the complete absolute continuity of the spectrum of $H$ for large energies. 
}

\subsection*{Acknowledgments} The results were partially obtained during the programme Periodic and
Ergodic Spectral Problems in January  -- June 2015, supported by EPSRC Grant EP/K032208/1.
We are grateful to the Isaac Newton Institute for Mathematical Sciences, Cambridge,  for their
support and hospitality. YK is grateful to  Mittag-Leffler Institute their support and hospitality (April 2019).   The authors are grateful to Michael Goldstein, Wilhelm Schlag, and Mircea Voda  for useful discussions and to Jeffrey Galkowski for reading the early version of this article and making useful suggestions. The research of LP was partially supported by EPSRC grants EP/J016829/1 and EP/P024793/1. YK and RS were partially supported by NSF grant DMS-1814664. {\newred We are very grateful to the referees for reading this (very technical indeed) text and making many useful remarks and suggestions.}

\section{Notation and general conventions}\label{section1}
 We assume, as we can without loss of generality, that $V_{\bf 0}=0$. Let frequencies $\boldom_1,...,\boldom_l$ be  rationally independent and span the entire space $\R^d$. 
We denote by $\Z^l\vec\boldom$ the collection of all vectors $\bth\in\R^d$ that can be expressed as a linear combination of $\{\boldom_j\}$ with integer coefficients, i.e. the collection of all vectors of the form $\bn\vec\boldom$ with $\bn\in\Z^l$. This set is countable and non-discrete. We will use similar notation in other situations: if $A\subset\Z^l$, then $A\vec\boldom$ is the collection of all vectors of the form $\bn\vec\boldom$ with $\bn\in A$.
Of course, there is a natural isomorphism between $\Z^l$ and $\Z^l\vec\boldom$, 
\bees
(n_1,\dots,n_l)\mapsto \sum_{j=1}^l n_j\boldom_j. 
\enes
This isomorphism allows us to extend the $l^\infty$-norm \eqref{l_infinity} of elements of $\Z^l$ to the elements of $\Z^l\vec\boldom$. 
This notation is chosen to avoid confusion with the Euclidean length of $\bn\vec\boldom$ as an element of $\R^d$: the length of a vector $\bxi\in\R^d$ is denoted by $\|\bxi\|$.

We denote 
$$\Z_{\bk}^l=\Z^l(\bk):=\bk+\Z^l\vec\boldom\subset\R^d.
$$
We also denote by  
\bees
\Om(a;r),\ \ \BB(a;r)
\enes
balls with centre $a$ and radius $r$ in these norms ($\Om$ is the ball in $|\cdot|$-norm in $\Z^l$, and $\BB$ is the $||\cdot||$-ball in $\R^d$); we put 
\bee\label{newballs}
\Om(r):=\Om(0;r), \ \ \BB(r):=\BB(0;r), \ \ \Om'(r):=\Om(0;r)\setminus\{0\}.
\ene
Sometimes, we will be using superscripts $\R$ and $\Z$ to indicate that corresponding objects are in $\R^d$ or $\Z^l$ resp., so that, for example, $\Om(r)=\Om^{\Z}(r)$.
To indicate that we are dealing with $l_{\infty}$-norms in $\Z^l$, we often use expressions like $\Z$-distance, $\Z$-ball, etc. 
Since the balls $\Om$ in $\Z^l$ are taken in $\infty$-norm, we will sometimes refer to these balls as cubes. 
\begin{defn}\label{extended}
We define an extended cube (or extended ball) of size (or radius) $r$ in $\Z^l$ as any set that contains $\Om(a;r)$ and is contained in $\Om(a;2r)$ for some $a$; in this case, we will refer to $a$ as a centre of our extended ball and $r$ as its size (obviously, an extended ball could have several centres and several sizes). 
\end{defn}

Given a vector $\bk\in\R^d$ and a linear subspace $\GV\subset\R^d$, we denote by $\bk_{\GV}$ the orthogonal projection of $\bk$ onto $\GV$ and put $\bk_{\GV}^{\perp}:=\bk-\bk_{\GV}$. Given several vectors $\bx_1,\dots,\bx_n$ (from any vector space), we denote by 
\bee
R(\bx_1,\dots,\bx_n)=\spa \{\bx_1,\dots,\bx_n\}
\ene
the collection of all linear combinations of $\bx_1,\dots,\bx_n$ with real coefficients, and by
\bee
Z(\bx_1,\dots,\bx_n)
\ene
their linear combinations with integer coefficients. 

In this text there will be many different constants, and the letters by which we denote them indicate differences in their statuses. By letters $C$, $c$ we denote positive constants, the exact value of which is not important and can change each time these constants occur in the text (sometimes even each time they occur in one formula). The constants $C$ are assumed to be large, and the constants $c$ small. If, on the other hand, we use the expressions like $c_1$ or $C_{17}$, this means that the corresponding (positive) constant is fixed throughout the text. In section \ref{goodset-3} we will use constants $\Cun_j$ and $\cun_j$ with similar status. All our constants may depend on $d$ (the dimension of the Euclidean space), $l$ (the number of the basic frequencies $\boldom_j$), the norm of the potential $||V||_{\infty}$, $Q$ (the diameter of the support of $\hat V$, see \eqref{l_infinity}), and $\mu$ (the constant from the Strong Diophantine condition, discussed in Section \ref{section2}); moreover, we always assume that $d$, $l$, and $\mu$ are fixed and will frequently omit writing explicitly how our estimates depend on these quantities. If our fixed constants $C_j$ depend on other parameters ($x,y,z$, etc), we will express this by writing $C_j(x,y,z,...)$. 
Finally, sometimes we will use special constants, like $Z_0$ and $\gamma_0$. Those are some positive quantities that will be fixed at some stage of the proof (usually towards the end of the paper), but in the meantime it would be convenient for us to treat these quantities as parameters. When we write, for some positive quantities $A$ and $B$, that  $A\ll B$, this means $A <CB$ for some $C$ (that $C$, as we declared, can depend on $d$, $l$, $\mu$, and $V$).  

When $A\subset\R^a$ is a measurable set, by $\meas(A)$ we denote its Lebesgue $a$-dimensional measure.  

Our aim is to construct a mapping, putting into correspondence to vectors $\bk\in\R^d$ of sufficiently large length ($||\bk||>\rho_*$) a solution $u_{\bk}(\bx)$ of the equation 
\bee\label{solutions}
Hu_{\bk}=\eig(\bk)u_{\bk}.
\ene
The notation $\la=\eig(\bk)$ is used because it will be a limit of a sequence of approximations $\la^{(n)}(\bk)$; each $\la^{(n)}(\bk)$ is a small perturbation of $\|\bk\|^2$. 
This solution $u_{\bk}$ will be bounded (but obviously will not belong to $L_2(\R^d)$); moreover, it will be a small perturbation of the exponential $\be_{\bk}$.
More precisely, we will show that 
\bee
||u_{\bk}-\be_{\bk}||_{\infty}\ll ||\bk||^{-\delta},\ \ \delta>0. 
\ene
Then, establishing control of the behaviour of 
$\eig(\bk)$ as a function of $\bk$, allows us to deduce that the absolutely continuous spectrum of $H$ is non-empty and $\{u_{\bk}\}$ are the generalized eigenfunctions of this spectrum. Further, we will show that the equation 
\bee
\eig(\bk)=\la=:\rho^2
\ene
 has a solution $\bk$ for all sufficiently large $\rho$. This means that all large energies belong to the (absolutely continuous) spectrum. 
 Unfortunately, 
we will not be able to construct $u_{\bk}$ and $\eig(\bk)$ for all $\bk$ with large length -- even in the simpler periodic settings such mappings would not exist for $\bk$ located near the diffraction (Voronoi) planes. 
Nevertheless, we will construct these mappings for the majority of $\bk$. Namely, we will construct two sets $\CG^{\conf},\CNN^{\conf}\subset\R^d $; notation stands for {\it good} (sometimes also called non-resonant) and {\it bad} (or resonant) sets of vectors from the configuration space. These sets will satisfy the following properties:

(1) $\{\bk\in\R^d, ||\bk||>E_*\}=\CG^{\conf}\sqcup\CNN^{\conf}$ (a disjoint union; here, $E_*$ is a large fixed parameter); 

(2) Solutions $u_{\bk}$ and $\eig(\bk)$ of \eqref{solutions} are defined for $\bk\in\CG^{\conf}$; 

(3) For each $\rho>E_*+1$ the pre-image $(\eig)^{-1}(\rho)$ is non-empty (and contains a large proportion of the isoenergetic surface, or the `wobbly sphere', see below for more details). 

\ber\label{2.1}
Often, as here, we will use the notation with the letter(s) in the superscript being supposed to help the reader by telling them which variables are involved in the objects considered. For example, the set $\CG^{\conf}$ is a good set of $\conf$'s. The notation $\conf\in\CG^{\conf}$ should therefore be read as `$\conf$ is in a good set of variables $\conf$'. We also remark that notation $\CG$ will be used for the `good' sets in various variables, and $\CB$ is used for the `bad' sets.
\enr

The proof of our Theorem will consist in performing infinitely many steps; at each step $n$ we will throw away some bad set $\CNN^{\bk(n)}$. At the end we will put 
\bee
\CNN^{\bk}=\CNN^{\bk(\infty)}:=\cup_n \CNN^{\bk(n)}
\ene
and show that the set that we have thrown away is not too large in some sense. 

\ber
Most of the objects we construct (like good or bad sets) will depend on the Fourier coefficients  $V_{\bn}$ of the potential, which we consider to be fixed. The frequencies $\vec\boldom$, on the other hand, will be varying at some point. Indeed, we will construct the set $\BOm_*$ as a `good' set of frequencies:
\bee
\BOm_*=\CG^{\vec\boldom(\infty)}:=\cap_n \CG^{\vec\boldom(n)},
\ene
where $\CG^{\vec\boldom(n)}$ is a `good' set of frequencies we are keeping at step $n$. Of course, we will have
\bee
\CG^{\vec\boldom(n)}:=[-1/2,1/2]^{ld}\setminus \CNN^{\vec\boldom(n)},
\ene
where the bad sets $\CNN^{\vec\boldom(n)}$ have measures that quickly decay in $n$; we will also show that, moreover, the measure of the total bad set
\bee
\CNN^{\vec\boldom(\infty)}:=\cup_n \CNN^{\vec\boldom(n)}
\ene
can eventually be made arbitrarily small. We also emphasise that these bad sets (except the zeroth one) will depend on the Fourier coefficients  $\{V_{\bn}\}$. 
\enr

\ber
One of the difficulties for a reader in trying to follow the line of thought in this paper is the fact that we often switch between the objects that belong to $\R^d$ and $\Z^l$. To make things simpler, we will use `integer' letters (like $\bn$, $\bm$,  etc) for vectors from $\Z^l$ and Greek letters (like $\bxi$, $\boldeta$, etc) for vectors from $\R^d$. The only exception from this rule is the `initial' vector $\bk\in\R^d$; this choice was made  to make as few changes from the notation of \cite{KS} as possible. We will have another variable, $\bka\in\R^d$, that will have the same status as $\bk$; on some occasions it would be convenient to use both these letters simultaneously.   

\enr

Another convention we will sometimes follow is this: for a set $A\subset\Z^l\vec\boldom\subset\R^d$ we denote by $A^{\Z}\subset\Z^l$ a set for which we have $A=A^{\Z}\vec\boldom$.

The solution $u_{\bk}$ that we want to construct will belong to the fibre $\GH(\bk)$ `generated' by $\be_{\bk}$. 
 This subspace is defined like this:
\bee\label{7fibre}
\GH(\bk):=\{\sum_{\bn\in \Z^l}a_{\bn
}\be_{\bk+\bn\vec\boldom}\},
\ene
where the collection of complex coefficients $\{a_{\bn
}\}_{\bn\in\Z^l}$ belongs to $l^2(\Z^l)$. Of course, $\GH(\bk)$ is not a subspace of $L^2(\R^d)$, but it is a subspace of the Besikovitch space 
$B^2(\R^d)$. The Besicovitch space is defined as the collection of all (formal, countable) linear combinations of exponential functions with square-summable coefficients: 
\bee\label{B2}
B^2(\R^d):=\{\sum_{j\in\N}a_{j
}\be_{\bxi_j}: \ \bxi_j\in\R^d, \ (a_1,a_2,...)\in l^2\}.
\ene
The inner product is defined by $\lu\be_{\bxi},\be_{\bxi'}\ru_{B^2}=\de_{\bxi,\bxi'}$
This inner product makes $B^2(\R^d)$ a non-separable Hilbert space with the canonical (uncountable) orthonormal basis 
$\{\be_{\bxi}\}_{\bxi\in\R^d}$. 
The action of $H$ on individual exponentials (which is clearly defined) can naturally be extended to (a dense subspace of) $B^2(\R^d)$. Then we obviously have that for each $\bk\in\R^d$ the space $\GH(\bk)$ is a separable subspace of $B^2(\R^d)$ and is invariant under the action of $H$.  The `overall' action of $H$ on  $B^2(\R^d)$ can be expressed as a (non-countable) direct sum of $H$ acting on individual fibres.  
There is a natural isometry between $\GH(\bk)$ and $l^2(\Z^l)$; we will often identify these two spaces. We will also need a space $B^1(\R^d)$; this space is defined similarly to \eqref{B2}, but requiring that the vector of coefficients $(a_1,a_2,...)\in l^1$. The $l^1$ norm of $(a_1,a_2,...)$ makes $B^1(\R^d)$ a (non-separable) Banach space. 

\begin{defn}\label{proj}
Given a set $\La\subset\Z^l$, we denote by $l^2(\La)$ (resp. $\GH_{\La;\bk}=\GH(\La;\bk)$)
the closed subspace of $l^2(\Z^l)$ (resp. $\GH(\bk)$) spanned by the elements of $\La$ (resp. $\bk+\La\vec\boldom$). We also denote by  
$P(\La)$ (resp. $\CP(\La)=\CP(\La;\bk)$) the orthogonal 
projection onto this subspace. Given any operator $H$ acting on  $B^2(\R^d)$, we denote by $H(\La;\bk)$ the operator $\CP(\La;\bk)H\CP(\La;\bk)$ acting on $\GH_{\La;\bk}$. Similarly, if $H$ acts on $l^2(\Z^l)$, we denote $H(\La):=P(\La)HP(\La)$.  We also put $H(\bk):=H(\Z^l;\bk)$. 
\end{defn}

Obviously, $H_0(\bk)$ is diagonal in the natural basis consisting of $\{\be_{\bk+\bth}\}_{\bth\in \Z^l\vec\boldom}$. Given any operator $H(\bk)$ acting on $\GH(\bk)$ and $\bn_1,\bn_2\in \Z^l$, by $H(\bk)_{\bn_1\bn_2}$ we denote the matrix element:
\bee
H(\bk)_{\bn_1\bn_2}:=\lu H(\bk)\be_{\bk+\bn_1\vec\boldom},\be_{\bk+\bn_2\vec\boldom}\ru_{B^2}. 
\ene

Let us introduce the polar coordinates for vectors $\bk$: we put 
\bee\label{coordinates1}
\bk=\nk\BPhi, 
\ene
where 
\bee\label{neweq1}
\nk=||\bk||
\ene 
and 
\bee\label{neweq2}
\BPhi=\BPhi(\bk):=\bk/||\bk||\in\S.
\ene
We will need more detailed coordinates than $\BPhi$ on a sphere, but unless $d=2$ there are no convenient global coordinates on $\S$. Nevertheless, most of the time we will work not on the entire sphere, but on a small part of it. If this is the case, we can introduce coordinates on an individual patch of a sphere in the following way. Suppose, we work on a neighbourhood of a given point $\BPhi^*$ on a sphere. Then we can introduce Cartesian coordinates in $\R^d$ in such a way that the last basis vector $e_d$ coincides with $\BPhi^*$. Then we denote the first $d-1$ coordinates of a point $\BPhi\in\S$ by $(\phi_1,\dots,\phi_{d-1})$ so that 
\bee\label{coordinates2}
\BPhi=(\Phi_1,...,\Phi_d)=\BPsi(\phi_1,\dots,\phi_{d-1}):=(\phi_1,\dots,\phi_{d-1},\sqrt{1-\phi_1^2-\dots-\phi_{d-1}^2}). 
\ene
These coordinates make sense for $\BPhi$ lying near $\BPhi^*$, and $\BPhi^*$ has coordinates $(0,\dots,0,1)$. We call such coordinates {\it natural coordinates on $\S$ around $\BPhi^*$}. We start by covering the entire sphere with patches of the type described above and then restricting our attention to $\BPhi(\bk)$ lying in one such patch. Thus, without loss of generality we will often assume that $\bk$ has a form \eqref{coordinates1} and \eqref{coordinates2} with 
\bee\label{coordinates3}
\vec\phi:=(\phi_1,\dots,\phi_{d-1})\in (-\tilde\phi,\tilde\phi)^{d-1}=:\Pi 
\ene
and $\tilde\phi$ is a sufficiently small positive number. We will cover the sphere by multiple patches $\{(\Pi_m,\BPsi_m)\}$, meaning that 
$\Pi_m=(-\tilde\phi_m,\tilde\phi_m)^{d-1}$, $\BPsi_m:\Pi_m\to\S$ is a mapping of the form \eqref{coordinates2} (of course with the centre $\BPhi^*=\BPhi^*_m$ different for each $m$), and each point of $\S$ belongs to $\BPsi_m(\Pi_m)$ for at least one $m$. 
Throughout the proof we will obtain various estimates valid only for points from $\BPsi_m(\Pi_m)$ for a fixed $m$. 
Of course, our constructions will thus depend on $m$, but we will often omit this dependence in the notation. Nevertheless, each time we talk about points $\BPhi$ or $\phi$, we assume that we have fixed a certain patch $\{(\Pi_m,\BPsi_m)\}$. When we want to emphasise that a particular patch $\{(\Pi,\BPsi)\}$ is centred at $\BPhi^*$, 
we use notation $\Pi(\BPhi^*)$; 
 we also use notation 
$\BPsi_{\BPhi^*}$ 
for the corresponding mapping $\BPsi$, so that $\BPsi_{\BPhi^*}:\Pi(\BPhi^*)\to\S$. 
We denote $\CA^{\BPhi}_m:=\BPsi_m(\Pi_m)$ so that $\S=\cup_m \CA^{\BPhi}_m$. We also remark that sometimes we will make variables $\phi$ complex. This will lead to complexification of correspondent $\Pi$; the (analytic) mappings $\BPsi$ will be still defined by \eqref{coordinates2}. 

In the process of proving of our main Theorem, we will be obtaining finer and finer approximations of the eigenvalues and generalized eigenfunctions of $H$. Thus, at each step $n$ of the procedure we will consider a cover of $\S$ by smaller and smaller patches. We will indicate this by writing a superscript $(n)$ to indicate the objects we consider at the $n$-th step. Thus, we will have a patch at the step $n$ being defined as 
\bee \CA^{\BPhi(n)}_m:=\BPsi_m(\Pi_m^{(n)}), \label{patches} \ene
where $\Pi_m^{(n)}=(-\tilde\phi_m^{(n)},\tilde\phi_m^{(n)})^{d-1}$.
The `size' of the patch of order $n$, $\tilde\phi^{(n)}=\tilde\phi_m^{(n)}$, is chosen to be independent of $m$ and decay exponentially with $n$; the explicit formulas for these sizes will be given later. We will define a complexification of $\Pi_m^{(n)}$, 
\bee
\Pi_{m,\C}^{(n)}:=( D(\tilde\phi^{(n)}))^{d-1},
\ene
where
\bee
D(r):=\{z\in\C:\ |z|<r\};
\ene
we also define
\bee\label{D}
D:=D(1)=\{z\in\C:\ |z|<1\}. 
\ene
\ber
Of course, all the patches $\Pi_m^{(n)}$ of $\vec\phi$-variables for fixed $n$ and different $m$ are identical. We still want to treat them as different objects, since the way we will split a patch $\Pi_m^{(n)}$ into good and bad parts will depend on $m$. 
\enr

Thus, for each $n$ we have a cover of $\S$ by patches $\{\CA^{\BPhi(n)}_m\}$; we will often refer to $n$ as the order or the level of the patches. We will be assuming that any patch of any level $n\ge 1$ is covered by at least one patch of a previous level. Therefore, given any point $\BPhi$, we can choose several patches of order $n$ that cover this point, then several patches of order $n-1$ that cover the patch of order $n$, etc. Sometimes it will be necessary to keep track of the `allegiance' of a given point (i.e. the patches it has been assigned to) for all levels. 
This leads to the following notion:

\begin{defin}\label{def7.1}
A matryoshka $\CM^{\BPhi(n)}$ of patches of level $n$ is a collection of patches $\{\CA^{\BPhi(j)}_{m_j}\}_{j=0}^n$, where each patch of level $j+1$ lies inside a patch of level $j$. We say that matryoshka $\CM^{\BPhi(n)}$ is good, if each of its patches is good.
\end{defin} 

\ber
The last part of the previous definition involves the notion of a good patch. The definition of a good patch is quite involved, and we will have to wait until section \ref{section7} to give it completely. At the moment, we notice that a patch at any level $n>0$ can possibly be good only if it is covered by a good patch at the previous level. Thus our procedure will look like this: at each step $n$ we will declare some patches bad, throw them away, then consider only good patches, cover them by next level patches, declare some of them bad, throw them away, etc.  
\enr

\ber
At a certain stage in the proof, we will have to introduce the complexified good sets. This would mean that the set of parameters $\vec\phi$ will be made complex. {\newred The reason for this is that we will be using several results from the theory of complex variables, like Rouch\'e's Lemma \ref{complexlemma}  or Cartan's Lemma \ref{Cartan}. 
This makes most of the tools we apply in this paper ``intrinsically complex'', perhaps the only exception being Bourgain's type Lemma \ref{MGL}.}
We will distinguish between real and complex good sets by writing $\CG^{\vec\phi}_{\R}$ or $\CG^{\vec\phi}_{\C}$, resp. Sometimes we will also complexify other parameters, like $\rho$.
\enr

Usually, in this paper we will be working in a  fixed window of energies: we assume that $\la=\rho^2$ with $\rho\in [E-1,E+1]$ and $E$ for convenience is assumed to be integer. Then all our estimates will be made in terms of functions (powers, exponentials, etc) of $E$, where $E$ is assumed to be fixed, but large: $E\ge E_*=E_*(\{V_{\bn}\},\vec\boldom)$. The number $E_*$ (which we will also conveniently assume to be an integer) is our `initial' straightforward lower bound, above which all our constructions will work. We will define $E_*$ through several lower bounds it should satisfy. To begin with, we  assume that $E_*$ satisfies inequalities  \eqref{E_*}; later, we will add one more, \eqref{E_*1}.  The `final' bound, $\lambda_*$ (with the property that the interval $[\lambda_*,+\infty)$ is covered by absolutely continuous spectrum) will be chosen at the very end of the proof, in section \ref{8.1'}, and it will depend on  $\vec\boldom$ in an uncontrolled way. 

We will use these patches in $\BPhi$ coordinates in Section \ref{section3}; in further Sections, we will also need patches in other variables.
The variables that will be covered in patches will be chosen from the following 
list: the energy $\lambda=\rho^2\in\R$, 
 the spherical angle $\BPhi\in\S$, the frequency vector $\vec\boldom\in\R^{ld}$, and $\bxi\in\R^d$. The latter is an auxiliary variable that runs through the spherical layer $\|\bxi\|\in [E-1,E+1]$; at some stage we will put $\bxi=\bk+\bn\vec\boldom$, but it is convenient to consider $\bxi$ as a separate independent variable for a while. We assume that  the following `region of interest' is covered by the patches: $\{\bxi\in\R^d, \|\bxi\|\in[E-1,E+1]\}$, $\rho\in [E-1,E+1]$, $\vec\boldom\in[-1/2,1/2]^{ld}$.
 
{\newred
\ber\label{newrm2}
As stated above, $\bxi$ will play a role of a point on a quasi-periodic lattice $\CL:=\{\bk+\bn\vec\boldom, \bn\in\Z^l\}$ and introducing it as a separate parameter will enable us to respond to the question ``Suppose that $\CL$ has passed through a point $\bxi\in\R^d$, i.e. there is $\bn\in\Z^l$ such that  $\bxi=\bk+\bn\vec\boldom$. What can we say about a  neighbourhood of $\bn$ in $\Z^l$ then?'' We want to emphasise that, despite a similarity between variables $\bk$ and $\bxi$, their roles are completely different. Indeed, while we are allowed to throw away some bad regions for $\bk$, we are not allowed to do this for $\bxi$, since $\CL$ is generically dense in $\R^d$. Also, sometimes we will shift $\bk$ by a small vector, but we have no freedom to do the same with each $\bxi$ -- we need to deal with them no matter where they are located.    
\enr
} 
 
\ber 
We will not consider patches in $k:=||\bk||$ as such. The reason is the following. Given a good spherical angle $\BPhi$ and the energy $\rho$, we will be considering consecutive approximations of the `true' eigenvalue $\la^{(\infty)}(\bk)$, given by $\la^{(n)}(\bk)$ at step $n$. Then, we will consider only points $\bka^{(n)}(\BPhi)=\kappa^{(n)}(\BPhi)\BPhi$ 
 satisfying the property $\la^{(n)}(\bka^{(n)}(\BPhi))=\la$ and small neighbourhoods of such points. In other words, once we fix a patch for $\BPhi$ and a patch for $\rho$, this would determine a `domain of interest' for $\bk$ and allow us to stick to considering only $\bk$ from these domains. These domains would not form, strictly speaking, patches in $\bk$, but they will play an important role in our constructions and so we will call them quasi-patches in $\bk$ and denote in the same way as `proper' patches, i.e. we use notation $\CA^{\bk(n)}$ for these quasi-patches. We will also say that these quasi-patches in $\bk$ are associated with corresponding patches in $\BPhi$. The rigorous definition of these (quasi-)patches will be given later (see \eqref{CAbk} or \eqref{CAbkn}).     
\enr

\ber
The reader may also wonder why we have introduced an extra parameter $E$ that is, essentially, of the same size as $\rho$ and then have defined our objects (like good or bad lattice points, boxes, etc.) in terms of $E$ rather than $\rho$. The reason is that we do not want these objects to change if we start varying $\rho$. Thus, $E$ is a parameter that roughly defines `the size of $\rho$', but does not change if $\rho$ varies within a small interval.
\enr
 
 The patches will be denoted by the symbol $\CA$ with superscripts describing which variables are participating in this patch (and, possibly, another superscript $(n)$ indicating that those are the patches at step $n$). Initially, we assume that each patch is a product of balls in each of the participating variables; the radii of these balls (the size of the patch in corresponding variable) will depend on $n$ and will be defined each time we introduce a particular patch. For example, $\CA^{\rho,\bxi,\vec\boldom}$ of size $\En^{-1}$ in $\rho$ and $\bxi$ and $\En^{-2}$ in $\vec\boldom$ is a product of three balls: $B(\rho^*,\En^{-1})$,  $B(\bxi^*,\En^{-1})$, and $B(\vec\boldom^*,\En^{-2})$ (it is usually clear in which spaces these balls are located); the corresponding letter with a star ($\rho^*$,$\bxi^*$, etc.) denotes a centre of such ball. Sometimes we will have to assume that the centre (say, $\vec\boldom^*$) of our patch satisfies a certain property. 
This will be done by choosing a point 
$\vec\boldom^{*,{\textrm {new}}}\in B(\vec\boldom^*,\En^{-2})$ 
that satisfies our property (assuming such a point exists) and then considering a new patch $B(\vec\boldom^{*,{\textrm {new}}},2\En^{-2})$ instead of $B(\vec\boldom^*,\En^{-2})$. This process gives us the possibility to assume, without loss of generality, that if a property is satisfied at some point of a patch, it is satisfied at its centre. Sometimes it will be convenient to assume that the patches do not intersect pairwise, in which case we will assign points that belong to the intersection of two (or more) patches to just one of these patches. After this procedure the patches are no longer products of balls, but subsets of such products  (however, they do not intersect). 
Sometimes we will write, slightly abusing the notation, $\BPhi\in\CA^{\BPhi,\rho,\bxi}$, meaning of course that $\BPhi$ belongs to the projection of $\CA$ onto the $\BPhi$-variables. The $n$-th level patches should, ideally, be labelled in the following way:  
$\{\CA^{\bxi (n)}_{m^{\bxi}_n}\}$,  $\{\CA^{\rho (n)}_{m^{\rho}_n}\}$, etc. Here, $m^{\bxi}_n$, say, is a natural number running between $1$ and the overall number of patches in $\bxi$ of order $n$. Since this notation is too cumbersome, we will often denote by $\tim$ a universal label of any patch in any variable; the meaning of this label can be different each time we use it; for example, $\CA^{\bxi (n)}_{\tim}$ is a patch 
$\CA^{\bxi (n)}_{m^{\bxi}_n}$.

Here is the list of the conventions we use.  The patches (of each level) in $\vec\boldom$ do not intersect. At step zero, these patches cover $[-1/2,1/2]^{ld}$. Starting from the next step, we will declare some patches in $\vec\boldom$ bad; those patches are not covered by further patches, but the good patches are.   The patches in $\rho$ do overlap, so that each point $\rho\in [E-1,E+1]$ is covered by (at least) two patches at any step. The patches in $\bxi$ have a special arrangement. First, for fixed $n$ we cover all $\bxi$ by `small' patches -- balls of certain radius $r=r^{(n)}$. As we stated earlier, these $r^{(n)}$ depend on $n$ and will be properly defined later. 
Then we increase the size of each `small' patch by a factor of $10$, so that each point $\bxi,\ \|\bxi\|\in[E-1,E+1]$, is now covered by multiple patches $\CA^{\bxi(n)}$, each of size $10r^{(n)}$. These scaled up patches are the patches we will be working with. 

\ber\label{bxibk}
Suppose that we are considering points $\bk$ inside a quasi-patch $\CA^{\bk(n)}$
of size  $r^{(n)}$.
Suppose now that we shift this quasi-patch by a certain vector $\bn\vec\boldom$. Then various points of the shifted quasi-patch $\CA^{\bk(n)}+\bn\vec\boldom$ are covered by different patches $\CA^{\bxi(n)}_m$, but our construction ensures that there is at least one patch,   say $\CA^{\bxi(n)}_{m_0}$, that covers the shifted quasi-patch completely. In such situations it will be convenient for us to assign all the points $\bxi\in (\BB(\bk_0,r^{(n)})+\bn\vec\boldom)$ to this one particular big patch $\CA^{\bxi(n)}_{m_0}$ of size $10r^{(n)}$.
\enr

 We will often say that a certain estimate is stable on a patch. This would mean that if an estimate holds at one point of a patch, it will hold at all other points, possibly with a different constant (usually new constant is an old constant times $2$ or $1/2$). 
 
Sometimes, when we talk about points $\bk$, $\rho$, or $\bxi$ we will need to keep track of all the patches it belonged to for all levels. This leads to the following definition, similar to definition \ref{def7.1}  

\begin{defin}\label{def7.1xi}
A matryoshka $\CM^{\bxi(n)}$ of patches of level $n$ is a collection of patches $\{\CA^{\bxi(j)}_{\tim_j}\}_{j=0}^n$, where each patch of level $j+1$ lies inside a patch of level $j$. The definition of matryoshkas in $\rho$ or $\bk$ is similar (only in the case of $\bk$ we consider quasi-patches here). 
\end{defin} 

\ber\label{rem1}
The reason why we need these tedious constructions is the following. It is not enough for our purposes to just construct the solutions to \eqref{solutions}, but to prove various regularity properties of $\lambda^{(\infty)}(\bk)$. In order to achieve this, we have to establish regularity of approximations $\lambda^{(n)}(\bk)$. And to do this, we need to make sure that certain parts of our construction are stable when $\bk$ runs along patches of different levels. The notion of matryoshka formalises these properties, see also remark \ref{rem2}. 
\enr

The size of the patches of order $n$ will decay with $n$ so fast that the number of all possible matryoshkas of order $n$ can be estimated by the square of the number of patches of order $n$. We also will use the `universal index' $\tim$ to label either matryoshkas or patches within one matryoshka.

Sometimes one (or more) of coordinates of our $d$-dimensional vectors will become complex. In this case we put $||\a||^2_\R:=\langle\a,\a\rangle _\R$, where
$\langle \a,\b\rangle_\R:=\sum _{j=1}^da_jb_j$ when $\a, \b \in \C^d$. Recall that 
$||\a||^2=\langle\a,\a\rangle $, where
$\langle \a,\b\rangle:=\sum _{j=1}^da_j\bar b_j$

The bad  set $\CNN^{\BPhi}$ described above will be constructed as a union of infinitely many bad sets of smaller and smaller size. Our procedure consists of infinitely many approximating steps, and at each step $n$ we will throw away some bad set $\CNN^{\BPhi(n)}$; the measure of these sets will decay exponentially in $n$. 
We will often call these bad sets the {\it resonant sets}, since the term resonant set is often used in physics in analogous situations. The procedure will be slightly different at step zero and step one compared to further inductive steps. We start by fixing sufficiently large $\la$ and will find a solution (in fact, lots of them) of the equation 
\bees
Hu=\la u.
\enes
During our construction we will need to assume that $\rho$ is sufficiently large, $\rho\geq E_*$. Several necessary conditions (i.e. some lower bounds on $E_*$) are summarized in \eqref{E_*}. Later, we will add one more condition \eqref{E_*1}. Then, in section 11, we explain how to choose the final  $\lambda_*=\rho_*^2$ (where $\rho_*\ge E_*$) for Theorem~\ref{mainth}. 

We will  assume 
that the basic frequency vector $\vec\boldom$ satisfies the Strong Diophantine Condition (SDC) introduced in the next Section. Our constructions will depend on the choice of $E_*$ and the parameters in the SDC, but we will omit this dependence in the notation.  
Since we are trying to find $\bk$ satisfying $\eig(\bk)=\la$ with $\eig(\bk)$ a perturbation of $\|\bk\|^2$, we will consider vectors $\bk$ satisfying $|\|\bk\|-\rho|\ll \la^{-1/2}$. Suppose, $\bxi=\bk+\bn\vec\boldom$ with $\bn\in \Z^l$ (here, $\bn$, $\vec\boldom$, and $\nk$ are real, but $\BPhi$ may be complex; {\newred recall definitions \eqref{neweq1} and \eqref{neweq2}}). We obviously have
\bees
||\bxi||^2=\nk^2+||\bn\vec\boldom||^2+2\nk\Re\lu\BPhi,\bn\vec\boldom\ru 
\enes
and
\bees
||\bxi||_\R^2=\nk^2+||\bn\vec\boldom||^2+2\nk\lu\BPhi,\bn\vec\boldom\ru_\R ; 
\enes
the latter expression is obviously holomorphic in $(\phi_1,\dots,\phi_{d-1})$. 

Sometimes we will want to consider operators $H(\bkappa)$ (or $H(\La;\bkappa)$) for complex $\bkappa\in\C^d$, assuming that $H=H_0+V$.  In this case, we define the operator $H(\bkappa)$ as an operator acting in $\Z^l$ with  
\bee
H(\bkappa)_{\bn_1\bn_2}=V_{\bn_2-\bn_1}, \ \ \bn_2\ne\bn_1 
\ene 
and
\bee
H(\bkappa)_{\bn_1\bn_1}=||\bkappa+\bn_1\vec\boldom||_{\R}^2. 
\ene 
The operator $H(\La;\bkappa)$ is defined similarly: we just additionally assume that $$
H(\bkappa)_{\bn_1\bn_2}=0
$$ if $\bn_1\not\in\Lambda$ or $\bn_2\not\in\Lambda$. We will usually assume that $\bkappa$ is a small (but possibly complex) perturbation of $\bk$ in this context. 

If $A$ is an operator acting in a Hilbert space, by $||A||$ we denote the operator norm of $A$, whereas by $||A||_1$ and $||A||_2$ we denote correspondingly the trace or Hilbert-Schmidt norm of $A$. 

We will often use a convention that if $P$ is an orthogonal projection and $H$ is an operator, then $(PHP)^{-1}$ is an inverse of $PHP$ restricted to the range of $P$.  

\ber
There is one more thing we should warn the reader about. Since we consider our paper to be rather technical and difficult to read, we try to simplify the exposition. In particular, in situations when the precise values of constants/powers are not important, sometimes we use values that look simpler but are perhaps not optimal. Therefore, if a reader thinks that some estimates can be easily improved a little bit, this may well be the case.  
\enr

\section{Diophantine conditions}\label{section2}
{\newred In this section, we will  give the definition
of the Strong Diophantine Condition, discuss several of its immediate consequences, and prove that it is generic (Lemma \ref{new:generic}).}

Let $\vec\boldom=\boldom_1,\dots,\boldom_l$ ($l\ge d+1$) be a vector of basic frequencies as above. We assume that $\{\boldom_j\}$  are linearly independent over $\Q$. The first condition we have to assume is that they are Diophantine. This means the following. Suppose, $\bn=(n_1,\dots,n_l)$ is an integer vector. Recall that  $\bn\vec\boldom=\sum_{j=1}^l n_j\boldom_j\in \R^d$ is the `inner product' between $\bn$ and $\vec\boldom$. Then we require that 
\bee\label{weak}
||\bn\vec\boldom||>A|\bn|^{-\tilde\mu}
\ene  
for some positive $A$ and $\tilde\mu$.  
It is well-known (and will be proven later anyway) that for sufficiently large $\tilde\mu$ this condition is generic in a sense that it is satisfied for $\vec\boldom\in\tilde\Omega$, where $\tilde\Omega$ is a subset of $\cube^{ld}$ of full measure. This condition 
is not sufficient for our purposes as we need to control not just the lengths of various linear combinations of $\boldom_j$, but also the angles between these linear combinations. In order to do this, we will impose a much stronger condition, which will still be generic. We call this new assumption the {\it strong Diophantine condition} and will sometimes call the condition \eqref{weak} the {\it weak Diophantine condition}. Let us introduce more notation first. We denote by 
\bee\label{BN}
\BN=(n_{jk})_{j,k=1}^{j=d,k=l}
\ene
 a matrix with integer entries $n_{jk}$. We denote its norm by 
\bee 
 ||\BN||_*:=\sum_{j=1}^d\sum_{k=1}^l|n_{jk}|
\ene 
  and form the following linear combinations: 
\bee \label{thetas} 
  \bth_j=\bth_j(\vec\boldom,\BN):=\sum_{k=1}^l n_{jk}\boldom_k\in\R^d, \ \ \ j=1,\dots,d.
\ene  
    Obviously, if $Ran(\BN)<d$ (in which case we call $\BN$ {\it degenerate}), then $\bth_j$ are linearly dependent. Let us denote by ${\mathcal {ND}}(l,d)$ the  collection of all non-degenerate matrices \eqref{BN}.  
    Now we formulate our condition:

{\bf Strong Diophantine Condition} (SDC)

Assume that $\BN\in{\mathcal {ND}}(l,d)$, i.e. $Ran(\BN)=d$. Then the determinant of $\{\bth_j\}$ is bounded below:
\bee\label{strong}
|\det(\bth_1,\dots,\bth_d)|>B_0||\BN||_*^{-\hat\mu}
\ene
for some positive $B_0$ and $\hat\mu$. 

\ber
It is clear that this condition implies the weak diophantine condition. The opposite is not true: for example, the  following three vectors: $(1,0)$, $(0,1)$, and $(a,b)$ with Diophantine $a$ and Liouville $b$ in $\R^2$ satisfy weak, but not strong Diophantine condition. We also notice that Condition A from \cite{PSh} follows from SDC (Condition A states that if $\bth_1,\dots,\bth_d$ are as above, then either $\{\bth_j\}$ are linearly independent (over $\R$), or they are linearly dependent with integer coefficients). In fact, SDC may be considered as a `quantified version' of Condition A. 
\enr

\bel\label{angles}
SDC implies the following statement: suppose, $\{\bth_j\}$ are as in \eqref{thetas} with non-degenerate $\BN$. Suppose, $2\le n\le d$. Then the angle between $\bth_n$ and the subspace
spanned by $\bth_1,...,\bth_{n-1}$ is bounded below by $B_0||\BN||_*^{-\mu}$, where
$\mu$ is any number satisfying $\mu>\hat\mu+d$.  
\enl
\bep
This follows from the fact that $|\det(\bth_1,\dots,\bth_d)|$ is bounded above by the product of the sine of the angle we discuss and $\prod_{j=1}^d|\bth_j|$, and $|\bth_j|\le ||\BN||_*$. 
\enp

\ber
We note that SDC implies even more: if we form any two strongly different linear subspaces out of the vectors $\{\bth_j\}$, (two subspaces are strongly different if none of them is contained in the other), then the angle between them is also bounded below by 
$C||\BN||_*^{-\mu}$. Since we do not need this statement in our paper, we omit the proof.  
\enr

\bel\label{determinant}
SDC implies the following statement: for any linear independent system of $s\leq d$ integer vectors $\bm_j\in\Z^l$, $j=1,\dots,s$, the corresponding vectors $\bth_j:=\bm_j\vec\boldom$ generate an  $s$-dimensional parallelepiped with the $s$-dimensional volume bounded below by   $B_0(\sum_j|\bm_j|)^{-\hat\mu }$.
\enl
\bep
We just notice that, since vectors $\vec\boldom_1,...,\vec\boldom_l$ span the entire space $\R^d$, we can  add $d-s$ vectors $\vec\boldom_p$ to our collection $\{\bth_j\}$
so that the resulting set of $d$ vectors is linearly independent. Now the statement follows from the SDC. 
\enp

Now, let us prove that SDC is generic. Namely, we will prove the following statement:
\bel\label{new:generic}
Suppose, $\hat\mu>ld^2$ is fixed. Then there exists $\BOm_0$ a subset of $\cube^{ld}$ of full measure such that for any $\vec\boldom\in\BOm_0$ the inequality \eqref{strong} is satisfied for some positive $B_0=B_0(\vec\boldom)$. Moreover, for each $B_0>0$ there is a set $\BOm_0(B_0)$ of measure at least $1-C(\hat\mu)B_0^{1/d}$ such that \eqref{strong} holds. 
\enl

\bep

Suppose, $\BN\in{\mathcal {ND}}(l,d)$ is  fixed.

\bel
Let $a>0$. Then 
\bee
\meas\ \{\vec\boldom\in \cube^{ld}:\ |\det(\bth_1,\dots,\bth_d)|<a\}< 2da^{1/d}.
\ene
\enl
\bep
Since $\BN=\{n_{jk}\}$ is non-degenerate, there is a $d\times d$ non-degenerate minor of $\BN$. Without loss of generality, we can assume that $\det(n_{jk})_{j,k=1}^d$ is non-zero, which means that 
\bee
|\det(n_{jk})_{j,k=1}^d|\ge 1. 
\ene
Denote by  $(\om_{j1},\dots,\om_{jd})$  the coordinates of $\boldom_j$. We want to study $\tilde D:=\det(\bth_1,\dots,\bth_d)$ as a function of $\{\om_{jk}\}$. 
As a function of  $\om_{11}$, $\tilde D$ is linear, say $\tilde D=s_1\om_{11}+t_1$, where $s_1$ and $t_1$ are functions of the rest of coordinates $\{\om_{jk}\}$. Then we can write 
\bee
\bes
&\{\vec\boldom\in \cube^{ld}:\ |\det(\bth_1,\dots,\bth_d)|<a\}\\
&= \{\vec\boldom\in \cube^{ld}:\ |\det(\bth_1,\dots,\bth_d)|<a\ \& \ |s_1|> a^{(d-1)/d}\}\\
&\cup
\{\vec\boldom\in \cube^{ld}:\ |\det(\bth_1,\dots,\bth_d)|<a\ \&\  |s_1|\le a^{(d-1)/d}\}. 
\end{split}
\ene
Obviously, if $|s_1|> a^{(d-1)/d}$, then 
\bee
\meas\ \{\om_{11}\in \cube:\ |s_1\om_{11}+t_1|<a\}<2a^{1/d}
\ene
and thus 
\bee
\meas\ \{\vec\boldom\in \cube^{ld}:\ |\det(\bth_1,\dots,\bth_d)|<a\ \&\  |s_1|> a^{(d-1)/d}\}< 2a^{1/d}. 
\ene
Now let us write $s_1$ as a function of $\om_{22}$. It is again linear, say 
$s_1=s_2\om_{22}+t_2$, where $s_2$ and $t_2$ are functions of all the coordinates $\om_{jk}$ except $\om_{11}$ and $\om_{22}$. As before, we can write 
\bee
\bes
&\{\vec\boldom\in \cube^{ld}:\ |\det(\bth_1,\dots,\bth_d)|<a\ \& \ |s_1|\le a^{(d-1)/d}\}\\&=
\{\vec\boldom\in \cube^{ld}:\ |\det(\bth_1,\dots,\bth_d)|<a\ \& \ |s_1|\le a^{(d-1)/d}\ \& \ |s_2|> a^{(d-2)/d}\}\\
&
\cup
\{\vec\boldom\in \cube^{ld}:\ |\det(\bth_1,\dots,\bth_d)|<a\ \& \ |s_1|\le a^{(d-1)/d}\ \& \ |s_2|\le a^{(d-2)/d}\}
\end{split}
\ene
and estimate
\bee
\meas\ \{\vec\boldom\in \cube^{ld}:\ |\det(\bth_1,\dots,\bth_d)|<a\ \&\  |s_1|<a^{(d-1)/d}\ \&\  |s_2|> a^{(d-2)/d}\}< 2a^{1/d}. 
\ene
We carry on this process until we have to express $s_{d-1}$ as a function of $\om_{dd}$, at which stage we notice that we have $s_{d-1}=s_d\om_{dd}+t_d$ with 
\bee
s_d=\det(n_{jk})_{j,k=1}^d. 
\ene
This means that $|s_d|\ge 1$ and 
\bee
\bes
&\meas\ \{\vec\boldom\in \cube^{ld}:\ |\det(\bth_1,\dots,\bth_d)|<a\\
&\ \& \ |s_1|\le a^{(d-1)/d}\ \& \ |s_2|\le a^{(d-2)/d}\dots\ \& \ |s_{d-1}|\le a^{1/d} \}\le 2a^{1/d}. 
\end{split}
\ene
Summing all these inequalities, we arrive at
\bee
\meas\ \{\vec\boldom\in \cube^{ld}:\ |\det(\bth_1,\dots,\bth_d)|<a\}< 2da^{1/d}.
\ene
\enp

Now we notice that for each positive $N$ the number of non-degenerate matrices $\BN\in$ with $||\BN||_*= N$ is bounded above by $ld(2N)^{ld-1}$. Therefore, 
\bee
\bes
&\meas\ \{\vec\boldom\in \cube^{ld}:\exists \BN\in{\mathcal {ND}}(l,d), ||\BN||_*= N,\ |\det(\bth_1(\vec\boldom,\BN),\dots,\bth_d(\vec\boldom,\BN))|<a\}\\ 
&< 2^{ld}ld^2a^{1/d}N^{ld-1}.
\end{split}
\ene
Putting $a=B_0N^{-\hat\mu}$, we see that 
\bee
\bes
&
\meas\ \{\vec\boldom\in \cube^{ld}:\exists \BN\in{\mathcal {ND}}(l,d), ||\BN||_*= N,\\ &|\det(\bth_1(\vec\boldom,\BN),\dots,\bth_d(\vec\boldom,\BN))|<B_0||\BN||_*^{-\hat\mu}\}
< B_0^{1/d}2^{ld}ld^2N^{ld-\hat\mu/d-1}.
\end{split}
\ene
Choosing $\hat\mu$ so that $ld-\hat\mu/d-1<-1$ (or equivalently $\hat\mu>ld^2$), we obtain 
\bee
\bes
&
\meas\ \{\vec\boldom\in \cube^{ld}:\exists \BN\in{\mathcal {ND}}(l,d), \ |\det(\bth_1(\vec\boldom,\BN),\dots,\bth_d(\vec\boldom,\BN))|<B_0||\BN||_*^{-\hat\mu}\}\\
&< B_0^{1/d}2^{ld}ld^2\sum_{N=1}^{\infty} N^{ld-\hat\mu/d-1}=:B_0^{1/d}C(\hat\mu).
\end{split}
\ene
Therefore, 
\bee
\bes
&
\meas\ \{\vec\boldom\in \cube^{ld}:\\& \forall B_0>0\ \exists \BN\in{\mathcal {ND}}(l,d), \ 
|\det(\bth_1(\vec\boldom,\BN),\dots,\bth_d(\vec\boldom,\BN))|<B_0||\BN||_*^{-\hat\mu}\}=0. 
\end{split}
\ene
\enp

\section{Step zero}\label{section3}
{\newred
In this Section, we will perform the zeroth step of our procedure. This step is, essentially, a somewhat sophisticated perturbation theory applied to our operator. 
The main results here are Theorem \ref{Thm1} and its Corollary \ref{Cor4.8}, where we construct the zeroth approximation to the isoenergetic surface and list its properties. The error obtained here is $\la^{-1+\epsilon}$.
}
 
We start by fixing the value of $\hat\mu>ld^2$: for definiteness, we put for the rest of the paper 
\bees
\hat\mu:=ld^2+1
\enes 
and  
\bees
\mu:=ld^2+d+2. 
\enes
We also temporarily fix some positive value for $B_0$ (it will be fixed until Section \ref{8.1'}) such that $C(\hat\mu)B_0^{1/d}<1/100$. Then we define the good (and bad, respectively) sets of frequencies at step zero to be
\bee\label{CG0}
\CG^{\vec\boldom(0)}=\CG^{\vec\boldom(0)}_{B_0}:=\BOm_0(B_0)
\ene
and
\bee\label{CN0}
\CNN^{\vec\boldom(0)}=\CNN^{\vec\boldom(0)}_{B_0}:=[-1/2,1/2]^{ld}\setminus\BOm_0(B_0). 
\ene
The measure of the bad set is at most $C(\hat\mu)B_0^{1/d}$. At step one we will not change these sets, but the bad set will start growing (albeit slowly) starting from step two. We usually will consider $B_0$ being fixed and will skip writing $B_0$ as the subscript; at the very end of the proof we will return to checking how the measure of all important sets depends on $B_0$.  
We assume that $\vec\boldom\in \CG^{\vec\boldom(0)}_{B_0}$. 

\subsection{Resonant sets}

Consider a point $\bk=k\BPsi_j(\vec\phi)\in\R^d$. Given any $\bxi=\bk+\bn\vec\boldom\in \Z_{\bk}^l$ with $|\bn|$ being not very large, we want to exclude the possibility of $||\bk||$ and $||\bxi||$ being close to each other. The `closeness' will be measured by the parameter $B\in\R$ which later will be chosen of order $E^{1-\ep}$ (with positive but small $\ep$). 
Recall that we assume $\rho\in [E-1,E+1]$ and $E$ is fixed at the moment,
whereas $\rho$ is allowed to vary.  

{\newred
\begin{defn}\label{new:def3}
For each $\bn\in \Z^l$ we define the resonant  set $\CNN^{\angl}(\bn;B)\in\S$:
\bee\label{resonance}
\CNN^{\angl}(\bn;B)=\{\BPhi\in\S,\ |\lu\BPhi,\bn\vec\boldom\ru|\le B E^{-1}\}.
\ene
\end{defn}}

The next two estimates are based on simple geometry.
\bel
Suppose, $|\nk-\rho|\leq E^{-1}$, $\rho\in [E-1,E+1]$, $|\bn|\leq E^{1/5}$, $B\ge E^{4/5}$ 
and $\BPhi\not\in \CNN^{\angl}(\bn;B)$. Then for $E>d^5$ we have 
\bee\label{new:good1}
\bigm| ||\nk\BPhi+\bn\vec\boldom||^2-
\rho^2\bigm|> B
\ene
\enl
\bep
This follows from:
\bee
\bes
&||\nk\BPhi+\bn\vec\boldom||^2-
\rho^2\bigm|=|2\nk\lu\BPhi,\bn\vec\boldom\ru+||\bn\vec\boldom||^2+(k^2-\rho^2)|\\
\ge 
& 2 k |\lu\BPhi,\bn\vec\boldom\ru|-(||\bn\vec\boldom||^2+|k^2-\rho^2|)\ge 2 B - d |\bn|^2-2>B. 
\end{split}
\ene

\enp

\bel We have
\bee
\meas(\CNN^{\angl}(\bn;B))\leq  C ||\bn\vec\boldom||^{-1}E^{-1}B.
\ene
\enl
{\newred
\bep
This is an elementary geometry. 
\enp
}
Now the Diophantine property implies 
\bec
\bee
\meas(\CNN^{\angl}(\bn;B))\leq C B_0^{-1} |\bn|^{\hat\mu}E^{-1}B.
\ene
\enc

We also need to introduce another parameter $\tildeB\in\R$ which will measure how many vectors $\bn\vec\boldom$ we force to be `good'. This parameter later will be chosen to be of order $E^{\ep}$. Now, 
we define a resonant set {\newred (recall definition \eqref{newballs})}:
\bee
\CNN^{\angl}(B,\tildeB):=\cup_{\bn\in\Om'(\tildeB)}\CNN^{\angl}(\bn;B)\subset \S. 
\ene

We obviously have:
\bee
\meas(\CNN^{\angl}(B,\tildeB))\leq  C B_0^{-1}\tildeB^{l+\hat\mu}E^{-1}B. 
\ene

We will use the following convention throughout this paper. 
By ${\iota}$ we denote a finite ordered collection of integers, like $(3)$, or $(1,2,-5)$. We assume that adding $0$ at the end of any such collection does not change it and then 
we introduce a lexicographic order on the resulting equivalence classes, so that say $(3,-2)<(3,0)=(3)<(3,1)<(3,1,2)<(4)$. To each such collection $\iota$ we put into correspondence a positive number $\sigma_{\iota}$. We also assume that if $\iota_1<\iota_2$, then 
$C_0\sigma_{\iota_1}< \sigma_{\iota_2}$, where
\bee\label{7sigma}
C_0:=1000 l^3\mu^2d^2. 
\ene
We also assume that $\sigma_{1,d}<C_0^{-1}$, where $(1,d)$ is the biggest allowed value of $\iota$ for $\sigma_\iota$ in our paper (we write $\sigma_{1,d}$ for $\sigma_{(1,d)}$ for brevity). 
Since the total number of all possible values of $\iota$ we use in this paper will be finite, we always can find numbers $\sigma_{\iota}$ satisfying these properties. {\newred
The reason we introduce this convention is that we will have a tremendous amount of different exponents at each scale. This convention will indicate which of these exponents is larger and, moreover, will guarantee that a ratio between any two different exponents is sufficiently large to guarantee the estimates needed.} 



In what follows we will assume that $E\geq E_*$ where $E_*$ is an integer satisfying 
\bee\label{E_*}
E_*^{\sigma_0}>\max\{B_0^{-1},100\|V\|_\infty,\frac{100Q}{\sigma_0}\}.
\ene

Now, finally, we can define the zero step  resonant (or bad) set
\bee
{\CNN^{\angl (0)}}:=\cup_{s=0}^{d-1}\bigl[\CNN^{\angl}(E^{1-(l+\mu+1)\sigma_{0,s,1}},E^{\sigma_{0,s,1}})
\cup\CNN^{\angl}(E^{1-(l+\mu+1)\sigma_{1,s,1}},E^{\sigma_{1,s,1}})\bigr].  
\ene
The necessity for taking such a complicated set will be clear later. The index $(0)$ on the l.h.s. signifies that this is a bad set obtained at the  step zero of our procedure.  We also call
the complement of this set the first non-resonant (or good) set: 
\begin{equation} \label{W1} 
\CG^{\angl (0)}=\CG^{\angl (0)}(\rho):=
\S\setminus \CNN^{\angl (0)}. 
\end{equation} 
Clearly, it  is open. 
From the above we have the following estimate:
\bee\label{W1'}
\meas(\CNN^{\angl (0)})\leq  E^{-\sigma_{0,0,1}}.
\ene

We call the angles $\BPhi$ good, or non-resonant if they belong to $\CG^{\angl(0)}$ and $\vec\phi\in\Pi_m$ is non resonant if $\BPsi_m(\vec\phi)$ is good, a patch $\Pi _m$ being defined after (\ref{coordinates2}).

We have to introduce now the covering of $\S$ by patches of the zeroth level $\{\CA^{\BPhi(0)}_m\}$, $\CA^{\BPhi(0)}_m=\BPsi_m(\Pi^{(0)}_m)$, and the size of $\Pi^{(0)}_m$ is $\frac{E^{-2}}{10}$, {\newred which is the proper size to guarantee the stability of our estimates.} We will discuss these patches in more detail in the next section; now we just state that $\vec\phi\in\Pi_m^{(0)}$ is called good if $\BPsi_m(\vec\phi)$ is good. We also put
\bee
\CG^{\vec\phi (0)}_m:=\{\vec\phi\in\Pi_m:\ \BPsi_m(\vec\phi)\in\CG^{\angl (0)}\},
\ene
and then  we define 
$\CG^{\vec\phi (0)}_{m,\C}$ as the complex $E^{-(l+\mu+3)\sigma_{1,d-1,1}}$-neighbourhood of
$\CG^{\vec\phi (0)}_m$.


Let us list the properties of the good set that either directly follow from the definition, or are corollaries of standard perturbation results.

\bel\label{L:G1} 
\begin{enumerate}
\item For every $\BPhi \in \CG^{\angl (0)}$, $\rho\in[E-1,E+1]$, $s=0,...,d-1$, 
 and $\bn\in \Z^l$ with $0<|\bn|\le E^{\sigma_{0,s,1}}$ the following inequality
holds:
\begin{equation} \label{G1-1}
|\|\bk+\bn\vec\boldom\|^2-
\rho^2|\ge E^{1-(l+\mu+1)\sigma_{0,s,1}},
\end{equation} 
where $\bk=\rho\BPhi$.

\item The estimate above is stable in $E^{-(l+\mu+2)\sigma_{1,d-1,1}}$-neighbourhood of $\rho$. Recall that this means that if $\nka\in \C:
|\nka-\rho|<E^{-(l+\mu+2)\sigma_{1,d-1,1}}$ and $\vec\phi \in \CG^{\vec\phi(0)}_{j,\C}$, a slightly weaker
inequality holds for $\nbka:=\nka\BPsi_j(\vec\phi)$: 
\begin{equation}\label{G1-2}
\left|\|\nbka+\bn\vec\boldom\|^{2}-{\rho}^{2}\right|>\frac{E^{1-(l+\mu+1)\sigma_{0,s,1}}}{2}. 
\end{equation}

\end{enumerate}
Analogous statements hold when we replace $\sigma_{0,s,1}$ with $\sigma_{1,s,1}$.
\enl

\bec \label{C:L:G1} If $\vec\phi \in \CG^{\vec\phi(0)}_{j,\C}$, $|\nka-\rho|<E^{-(l+\mu+2)\sigma_{1,d-1,1}}$, $\nbka=\nka\BPsi_j(\vec\phi)$, and $z$ is on the circle
\begin{equation} \label{G-4}
\cont_0:=\{z:|z-\rho^{2}|=\frac14 E^{1-(l+\mu+1)\sigma_{1,d-1,1}}\},
\end{equation}
then  the following inequality holds for all $\bn\in \Z^l$ with $|\bn|\le E^{\sigma_{1,d-1,1}}$:
\begin{equation} \label{G1-5}
|\|\nbka+\bn\vec\boldom\|^2-
z|\ge \frac14 E^{1-(l+\mu+1)\sigma_{1,d-1,1}}.
\end{equation} 

\enc

\subsection{Perturbation results}

Let $r=1,2...$ and recall that $\cont_0$ is a contour given by \eqref{G-4} and $\nbka$ is a point lying close to $\bk$. We put $\hat K^{(0)}:=\Omega(\frac12 E^{\sigma_{1,d-1,1}})$ (this is the zeroth of {\it central cubes} that will be properly defined in section \ref{section7}) and define (recall definition \ref{proj})
\bee
\CP^{(0)}=\CP^{(0)}(\nbka):=\CP(\hat K^{(0)};\nbka)
\ene
and
\bee\label{newH_0}
H^{(0)}=H^{(0)}(\nbka):=H(\hat K^{(0)};\nbka)=\CP^{(0)}(\nbka)H(\nbka)\CP^{(0)}(\nbka)=:H_0^{(0)}+V_0^{(0)},
\ene
where 
\bee
H_0^{(0)}=H_0^{(0)}(\nbka):=\CP^{(0)}(\nbka)H_0(\nbka)\CP^{(0)}(\nbka)
\ene
and
\bee
V^{(0)}_0:=H^{(0)}(\nbka)-H_0^{(0)}(\nbka).
\ene

{\newred We will use the following objects that appear in successive iterations of resolvent identities:}

\begin{equation}\label{g}
 g^{(0)}_r({\nbka}):=\frac{(-1)^r}{2\pi
ir}\hbox{Tr}\oint_{\cont_0}\left((H^{(0)}_0(\nbka)
-z)^{-1}V^{(0)}_0\CP^{(0)}(\nbka)\right)^rdz
\end{equation} 
and
\begin{equation}\label{G}
G^{(0)}_r({\nbka})
:=\frac{(-1)^{r+1}}{2\pi
i}\oint_{\cont_0}\left((H^{(0)}_0(\nbka)
-z)^{-1}V^{(0)}_0\CP^{(0)}(\nbka)\right)^r\\
(H^{(0)}_0(\nbka)
-z)^{-1}dz.
\end{equation}
Note that $g^{(0)}_1(\nbka)=0$. {\newred Indeed, note that $V^{(0)}_0\CP^{(0)}(\nbka)$ has zeroes on the diagonal since the zeroth Fourier coefficient of $V$ vanishes. On the other hand, $(H^{(0)}_0(\nbka)
-z)^{-1}$ is a diagonal operator. Thus, the product of these operators has zeroes on the diagonal and a zero trace.}

 The coefficient
$g^{(0)}_2(\nbka)$ admits the following representation:
\begin{equation}
g^{(0)}_2(\nbka)
    =\sum _{\bn\in \Z^l\setminus \{0\}}| V_{\bn}| ^2(\|\nbka\|^{2}-
        \|{\nbka}+\bn\vec\boldom\|^{2})^{-1}.
 \end{equation}

\begin{thm} \label{Thm1} 
Suppose, $\vec\phi \in \CG^{\vec\phi(0)}_{j,\C}\cap\R^{d-1}$   and   $\nka\in\R$,
$|\nka-\rho|\leq E^{-(l+\mu+2)\sigma_{1,d-1,1}}$,
$\nbka=\nka\BPsi_j(\vec\phi)$. Then there exists a single {\newred (i.e. unique and simple)} eigenvalue of
$H^{(0)}({\bka})$ in the interval $I_0:=(
\rho^{2}-E^{1-(l+\mu+2)\sigma_{1,d-1,1}}, \rho^{2}+E^{1-(l+\mu+2)\sigma_{1,d-1,1}})$. It is given by the absolutely convergent
series:
\begin{equation}\label{eigenvalue}
\lambda^{(0)}({\nbka})=\lambda^{(0)}({\nbka};\rho)=\kappa^{2}+\sum\limits_{r=2}^\infty
g^{(0)}_r({\nbka}).
\end{equation} 
The coefficients $g^{(0)}_r({\nbka})$
satisfy the following estimates:
\begin{equation}\label{estg} 
|g^{(0)}_r({\nbka})|\ll 
E^{-(r-1)(1-(l+\mu+1)\sigma_{1,d-1,1})+l\sigma_{1,d-1,1}}. 
\end{equation}
 Moreover,
 \begin{equation}\label{estg_2}
|g^{(0)}_2({\nbka})|\ll E^{-2+2(l+\mu+1)\sigma_{0,0,1}+\sigma_{0}}. 
\end{equation} 
The
corresponding spectral projection is given by the series:
\begin{equation}\label{sprojector}
\E^{(0)}({\nbka})=\E_{\mathrm {unp}}({\nbka})+\sum\limits_{r=1}^\infty G^{(0)}_r({\nbka}),
\end{equation} 
where $\E_{\mathrm {unp}}({\bka})$ is the unperturbed spectral
projection (onto the linear span of $\be_{\bka}$). The operators $G^{(0)}_r({\nbka})$ satisfy the estimates:
\begin{equation}
\label{jan27}
\left\|G^{(0)}_r({\nbka})\right\|_1\ll E^{-r(1-(l+\mu+1)\sigma_{1,d-1,1})+l\sigma_{1,d-1,1}}.
\end{equation}
Matrix elements of $G^{(0)}_r({\nbka})$ satisfy the following
relations:
\begin{equation}
G^{(0)}_r({\nbka})_{\bn\bn'}=0,\ \ \mbox{if}\ \ \
rQ<|\bn|+|\bn'|. \label{zeros} 
\end{equation}
In particular, we have:
 \begin{equation}\label{perturbation}
\lambda^{(0)}({\nbka})=\nka^{2}+O\left(E^{-2+(3l+2\mu+2)\sigma_{1,d-1,1}}\right),
\end{equation}
\begin{equation}\label{perturbation*}
\left\|\E^{(0)}({\nbka})-\E_{\mathrm {unp}}({\nbka})\right\|_1\ll E^{-1+(2l+\mu+1)\sigma_{1,d-1,1}}.
\end{equation}
Matrix elements of the spectral projection $\E^{(0)}(\nbka)$ also satisfy
the estimate:
\begin{equation}\label{matrix elements}
\left|\E^{(0)}({\nbka})_{\bn\bn'}\right|<E^{-\dis^{(0)}(\bn,\bn')}\ \
\mbox{when}\ |\bn|>2Q \mbox{\ or }
|\bn'|>2Q,
\end{equation}
where we have defined 
$$\dis^{(0)}(\bn,\bn'):=(|\bn|+|\bn'|)(2Q)^{-1}.$$
The  coefficients $g^{(0)}_r({\nbka})$ and operators
$G^{(0)}_r({\nbka})$ can be analytically extended  as functions of $\vec \phi $ to $\CG^{\vec\phi(0)}_{j,\C}$  and   as functions of $\nka $ to the disk,
$|\nka-\rho|\leq E^{-(l+\mu+2)\sigma_{1,d-1,1}}$, $\nka\in\C$, the estimates \eqref{estg}, \eqref{estg_2}, \eqref{jan27} being preserved.
\end{thm}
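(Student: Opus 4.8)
The plan is to treat $H^{(0)}(\nbka)=H_0^{(0)}(\nbka)+V_0^{(0)}$ as a perturbation of $H_0^{(0)}(\nbka)$, which is diagonal in the basis $\{\be_{\nbka+\bn\vec\boldom}\}_{\bn\in\hat K^{(0)}}$ with eigenvalues $\|\nbka+\bn\vec\boldom\|_\R^2$. The point is that Corollary \ref{C:L:G1} guarantees that along the contour $\cont_0$ all of these diagonal entries stay at distance at least $\tfrac14 E^{1-(l+\mu+1)\sigma_{1,d-1,1}}$ from $z$, so the resolvent $(H_0^{(0)}(\nbka)-z)^{-1}$ is bounded on $\cont_0$ by $4E^{-1+(l+\mu+1)\sigma_{1,d-1,1}}$, while $\|V_0^{(0)}\|\le\|V\|_\infty$. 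By \eqref{E_*} this product is much smaller than one (a fixed negative power of $E$), so the Neumann series for $(H^{(0)}(\nbka)-z)^{-1}$ converges uniformly on $\cont_0$. First I would write the Riesz projection $\E^{(0)}(\nbka)=-\tfrac1{2\pi i}\oint_{\cont_0}(H^{(0)}(\nbka)-z)^{-1}dz$, expand the resolvent into its Neumann series in powers of $(H_0^{(0)}-z)^{-1}V_0^{(0)}$, and identify the $r$-th term with $G^{(0)}_r(\nbka)$ from \eqref{G}; similarly, $-\tfrac1{2\pi i}\oint_{\cont_0} z(H^{(0)}(\nbka)-z)^{-1}dz$ minus $\kappa^2\E_{\mathrm{unp}}$ gives $\sum_r g^{(0)}_r(\nbka)$ after taking the trace and integrating by parts (this is where the $1/r$ and the extra $z$-factor collapse into the form \eqref{g}). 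Since exactly one diagonal entry of $H_0^{(0)}$, namely $\|\nbka\|_\R^2$, lies inside $\cont_0$ (by the choice $|\nka-\rho|\le E^{-(l+\mu+2)\sigma_{1,d-1,1}}$ the unperturbed eigenvalue is within the contour, and by Corollary \ref{C:L:G1} no other one is), the unperturbed projection has rank one, hence so does $\E^{(0)}(\nbka)$ once the series perturbation is small; this yields the single simple eigenvalue in $I_0$ and formula \eqref{eigenvalue}.

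The estimates \eqref{estg}, \eqref{jan27} then follow by bounding each term of the series: one factor of the contour length ($\sim E^{1-(l+\mu+1)\sigma_{1,d-1,1}}$ up to constants), $r$ (or $r+1$) resolvent factors each contributing $E^{-1+(l+\mu+1)\sigma_{1,d-1,1}}$, and $r$ factors of $V_0^{(0)}$ bounded by $\|V\|_\infty$. The trace norm is controlled because the rank of the relevant operators is polynomial in the cube size $E^{\sigma_{1,d-1,1}}$, which accounts for the extra $E^{l\sigma_{1,d-1,1}}$ factor (the lattice $\hat K^{(0)}\subset\Z^l$ has $\sim E^{l\sigma_{1,d-1,1}}$ points). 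The sharper bound \eqref{estg_2} on $g^{(0)}_2$ uses its explicit representation as $\sum_{\bn\neq0}|V_\bn|^2(\|\nbka\|_\R^2-\|\nbka+\bn\vec\boldom\|_\R^2)^{-1}$: the sum is over $|\bn|<Q$ only (since $V_\bn$ is supported there), and for each such $\bn$ we use the step-zero non-resonance bound \eqref{G1-2} with $s=0$ to get a denominator of size $\gtrsim E^{1-(l+\mu+1)\sigma_{0,0,1}}$, giving $|g^{(0)}_2|\ll \|V\|_\infty^2 Q^l E^{-2+2(l+\mu+1)\sigma_{0,0,1}}$, and $Q^l$ is absorbed into $E^{\sigma_0}$ by \eqref{E_*}. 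The support property \eqref{zeros} is immediate: each application of $V_0^{(0)}$ shifts the lattice index by at most $Q$, so $r$ applications cannot connect $\bn$ to $\bn'$ unless $|\bn-\bn'|\le rQ$, and since one index must start at $0$ (inside the rank-one unperturbed projection when tracking matrix elements originating from $\be_\bka$) we get $|\bn|+|\bn'|\le rQ$; the same shift-by-$Q$ bookkeeping, iterated, gives the exponential decay \eqref{matrix elements} for the projection matrix elements once $|\bn|$ or $|\bn'|$ exceeds $2Q$. Formulas \eqref{perturbation} and \eqref{perturbation*} are just the $r=2$ (resp. $r=1$) leading terms with the tail of the series absorbed into the error.

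For the analytic continuation, the only point is that all the estimates above were uniform in $\vec\phi$ over the complex neighbourhood $\tilde\CG^{\vec\phi(0)}_{j,\C}$ and in complex $\nka$ with $|\nka-\rho|\le E^{-(l+\mu+2)\sigma_{1,d-1,1}}$, because Lemma \ref{L:G1}(2) and Corollary \ref{C:L:G1} were stated with exactly this complex stability built in. The diagonal entries $\|\nbka+\bn\vec\boldom\|_\R^2$ are holomorphic in $(\nka,\vec\phi)$ (as noted in the excerpt, using $\|\cdot\|_\R^2$ rather than $\|\cdot\|^2$), so each term of the (uniformly, absolutely convergent) series is holomorphic, and the sum is holomorphic by Weierstrass. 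The derivative bound on $\partial_{\nka}\lambda^{(0)}$ follows from differentiating the leading term $\kappa^2$ (giving $2\nka$) and applying the Cauchy integral formula on a circle of radius $\sim E^{-(l+\mu+2)\sigma_{1,d-1,1}}$ to the $O(E^{-2+(3l+2\mu+2)\sigma_{1,d-1,1}})$ error, which divides the error bound by that radius and produces the stated exponent. The main obstacle, such as it is, is purely bookkeeping: one must verify that the various exponents of $E$ in \eqref{estg}, \eqref{estg_2}, \eqref{perturbation}, etc. are consistent with the chosen $\sigma_\iota$-hierarchy and with \eqref{E_*} so that every series genuinely converges and every error term is a negative power of $E$ — there is no conceptual difficulty beyond standard analytic perturbation theory (the Kato--Rellich/Riesz-projection machinery), only the care needed to keep the $\sigma$-indices straight.
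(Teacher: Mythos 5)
Your proposal follows the same route as the paper's own proof, which simply invokes the standard Riesz-projection/resolvent-expansion machinery and refers to \cite{Ka} and \cite{KS} for the details; your write-up supplies those details correctly for the projection series, for the support property \eqref{zeros} (where the key point, which you state somewhat obliquely, is that the contour integral of a given term vanishes unless the chain of intermediate lattice sites passes through $\bn=0$, the only site whose unperturbed eigenvalue lies inside $\cont_0$), for the trace-norm bounds, and for the analytic continuation and derivative estimates.

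There is, however, one genuine gap: the derivation of \eqref{estg_2}. From the explicit formula $g^{(0)}_2(\nbka)=\sum_{\bn\ne 0}|V_{\bn}|^2(\|\nbka\|^2-\|\nbka+\bn\vec\boldom\|^2)^{-1}$, a single application of the non-resonance bound \eqref{G1-2} with $s=0$ gives each denominator a lower bound of order $E^{1-(l+\mu+1)\sigma_{0,0,1}}$ and hence only $|g^{(0)}_2|\ll E^{-1+(l+\mu+1)\sigma_{0,0,1}+\sigma_0}$ --- one full power of $E$ weaker than the claimed $E^{-2+2(l+\mu+1)\sigma_{0,0,1}+\sigma_0}$; your stated reasoning does not support the exponent you write down. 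The missing ingredient is the cancellation between $\bn$ and $-\bn$: since $V$ is real one has $|V_{-\bn}|=|V_{\bn}|$, and writing $\|\nbka+\bn\vec\boldom\|^2-\|\nbka\|^2=2a+b$ with $a=\lu\nbka,\bn\vec\boldom\ru$ and $b=\|\bn\vec\boldom\|^2$, the paired sum of the two reciprocals equals $-2b/(b^2-4a^2)$. Here $|b|\le dQ^2$ is bounded while each factor $|\pm 2a+b|\gtrsim E^{1-(l+\mu+1)\sigma_{0,0,1}}$, which is what produces the $E^{-2+2(l+\mu+1)\sigma_{0,0,1}}$, the remaining $Q$- and $\|V\|$-dependent constants being absorbed into $E^{\sigma_0}$ via \eqref{E_*}. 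This is not a cosmetic point: \eqref{perturbation} cannot be obtained from \eqref{estg} alone, whose $r=2$ term is only $O(E^{-1+(2l+\mu+1)\sigma_{1,d-1,1}})$; it requires \eqref{estg_2} for the $r=2$ term combined with \eqref{estg} for the tail $r\ge 3$, so the gap propagates to \eqref{perturbation} and to the final derivative estimate as you have written them.
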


\begin{proof} While \eqref{sprojector} and \eqref{G} are standard formulae based on expansion of the resolvent in
perturbation series, the representation \eqref{eigenvalue} and \eqref{g} may look a little bit less obvious though it still follows from similar arguments. The details can be found, for example, in \cite{Ka}, {\newred Theorem 2.1 or \cite{KS} Theorem 3.3. We also remark that later in our paper we prove a similar result (for the next step) in a more subtle situation, Theorem  \ref{Thm2}, so a reader who does not want to look at external sources, could find a proof there. In particular, the proof uses the fact that at this step all $\bn\in\Z^l$ that are not too large are good and \eqref{new:good1} is satisfied.}
\end{proof}

\begin{corollary}\label{Cor4.8}
Expansions \eqref{eigenvalue}, \eqref{sprojector} can be analytically extended as functions of $\vec \phi $ to $\CG^{\vec\phi(0)}_{j,\C}$  and   as functions of $\nka $ to the disk,
$|\nka-\rho|\leq E^{-(l+\mu+2)\sigma_{1,d-1,1}}$, $\nka\in\C$. The following estimate holds:
\begin{equation} \label{2021}
\partial_{\nka}\lambda^{(0)}({\nbka})=2\nka+O\left(E^{-2+(3l+2\mu+2)\sigma_{1,d-1,1}+(l+\mu+2)
\sigma_{1,d-1,1}}\right).
\end{equation}
\end{corollary}
The analyticity follows from that of $g^{(0)}_r({\nbka})$, $G^{(0)}_r({\nbka})$   and estimates \eqref{estg}, \eqref{estg_2}, \eqref{jan27}. Estimate \eqref{2021} follows from the Cauchy formula with respect to $\nka $ in the $ E^{-(l+\mu+2)\sigma_{1,d-1,1}}$-neighbourhood of $\rho $. Similar estimates can be written for all derivatives of $\lambda^{(0)}$ and $\E^{(0)}$ with respect to $\nka$ and $\vec \phi $.

Theorem \ref{Thm1} implies, in particular, that 
the function $\lambda^{(0)}(\nka)$ is increasing for real $\nka$ {\newred and $\vec\phi$}. Therefore, for $\vec\phi \in \CG^{\vec\phi(0)}_{j,\C}\cap\R^{d-1}$ the equation 
\bee
\lambda^{(0)}({\nka\BPsi_j(\vec\phi)})=\la=\rho^2
\ene
has a unique solution $\nka=\nka^{(0)}=\nka^{(0)}(\vec\phi;\rho)$. Let us denote 
\bees
\nbka^{(0)}=\nbka^{(0)}(\vec\phi;\rho):=\nka^{(0)}(\vec\phi)\BPsi_j(\vec\phi).
\enes
Then it follows from the definition that we have
\bee\label{tau}
\lambda^{(0)}(\nbka^{(0)})=\la. 
\ene
We can extend this definition in an obvious way to define the function 
\bee
\nka^{(0)}(\vec\phi):=\nka^{(0)}(\BPsi_j(\vec\phi))
\ene
whenever $\vec\phi\in\Pi_j$. 
Moreover, we can continue the function $\nka^{(0)}(\vec\phi)$ analytically (again, as locally convergent power series) 
to $\CG^{\vec\phi(0)}_{j,\C}$ so that \eqref{tau} is still satisfied there. Also, we have
\bee\label{new17}
|\nka^{(0)}(\vec\phi)-\rho|+|\nabla_{\vec\phi}(\nka^{(0)})|+|\nabla^2_{\vec\phi}(\nka^{(0)})|= o(E^{-2}); 
\ene
the proof is analogous to the proof of Lemma 2.11 in \cite{KL1}.

Let us consider the set of points in $\R^d$ given by the formula:
$\{\nbka=\nbka^{(0)} (\BPsi_j(\vec\phi);\rho), \ \ \vec\phi \in \CG^{\vec\phi(0)}_{j,\C}\cap\R^{d-1}\} $. This set is a slightly distorted 
sphere 
with holes. All the points of this set satisfy
the equation $\lambda^{(0)} (\nbka ^{(0)}(\BPhi;\rho))=\rho^2$.
We call this set the isoenergetic surface of the operator $H^{(0)}$ and denote it 
by ${\CD}^{(0)}(\rho)$. The ``radius"
$\nka^{(0)}(\vec \phi;\rho )$ of ${\CD}^{(0)}(\rho)$
monotonously increases with $\rho $.

Now we define the quasi-patches in $\bk$ of level $0$ (corresponding to a choice of $\rho$). First, it will be convenient to define $\nbka^{(-1)}(\BPhi,\rho):=\rho$. Then we put
\bee\label{CAbk0}
\CA^{\bk(0)}=\CA^{\bk(0)}(\BPhi^*):=\{\bk\in\R^d:\ \bk\|\bk\|^{-1}\in\CA^{\BPhi(0)}(\BPhi^*)\ \&\ 
|\|\bk\|-\nbka^{(-1)}(\BPhi,\rho)|<E^{-1}\}
\ene
and call this set $\CA^{\bk(0)}$ a (quasi-)patch in $\bk$ associated to $\CA^{\BPhi(0)}$.
The (quasi-)patches in $\bk$ at higher levels $n\ge 1$ will be defined analogously, once we have constructed isoenergetic functions  $\nbka^{(n)}(\BPhi,\rho)$ and patches in $\BPhi$ at higher levels: we will put
\bee\label{CAbk}
\CA^{\bk(n)}=\CA^{\bk(n)}(\BPhi^*):=\{\bk\in\R^d:\ \bk\|\bk\|^{-1}\in\CA^{\BPhi(n)}(\BPhi^*)\ \&\ 
|\|\bk\|-\nbka^{(n-1)}(\BPhi,\rho)|< r^{(n)} \},
\ene
where  $r^{(n)}$ is the size of $\CA^{\BPhi(n)}$. The set $\CA^{\bk(n)}$ will be called a  (quasi-)patch in $\bk$ associated to $\CA^{\BPhi(n)}$ (corresponding to a choice of $\rho$). 
Strictly speaking, we would have to put
 $Er^{(n)}$ to the RHS of \eqref{CAbk} to make it consistent with \eqref{CAbk0}, but we can afford not to do this (if $n\ge 1$) to simplify formulas a little bit. 

\section{Step zero: preparation for step one}\label{section4}

In this step, we will improve the error of our estimates from $\la^{-1+\epsilon}$ in \eqref{perturbation} to an exponentially small error  in \eqref{perturbation-2}. {\newred
We will split all the points of the lattice $\{\bk+\bn\vec\boldom, \bn\in\Z^l\}$ 
into good and bad groups. We will prove that the bad lattice points are grouped into clusters (definition \ref{new:clusters}), that these clusters form a {\em periodic} lattice of rank $<d$ and of size that is not too big (Lemmas \ref{1} and \ref{2'}), and that these clusters are well-separated (Lemma \ref{Z0}). These estimates will imply that our operator restricted to each of the clusters is monotone in certain parameter (Lemma \ref{new:5.26} and its Corollary \ref{new:5.27}). This, in turn, will allow us to control the number of poles of the resolvent of this restriction (Lemmas \ref{rescluster} and \ref{resclustershrink}). These lemmas are crucial in proving  
the main estimate of Step one, Theorem \ref{Thm2}, where our error will become exponentially small.
}

We assume that the spherical angle $\BPhi=\bk\|\bk\|^{-1}$ corresponding to our starting point $\bk$ is good at the previous step (i.e., $\BPhi\in\CG^{\BPhi(0)}$) and proceed  to reduce our operator $H(\bk)$ further.

\subsection{Pre-clusters}

In this section, the resonant zones will be characterised by two parameters, that we now denote by $L$ and $\tilde L$. The role of $L$ will be slightly different than in the previous section, but it will still be of order $E^{\epsilon}$; more precisely, we will assume that $E^{\sigma_0}\leq L,\,\tilde L\leq E^{\sigma_{1,d}}$. 
We also put 
\bee \label{new5.1}
r_{1,3}:=E^{\sigma_0},\ \ \ r_{1,2}:= \cs^{2} r_{1,3},\ \ \ r_{1,1}:= \frac{\cs}{10} r_{1,2},
 \ene
 where $\cs$ is a small positive constant to be defined later,  
 and try to construct an approximation of the eigenvalue $\lambda ^{(\infty)}(\bka)$ with an error $O(E^{-r_{1,1}})$.

Recall the convention stated in remark \ref{2.1} and {\newred definition \ref{new:def3}.} Our next definition is somewhat similar, only instead of talking about bad angles $\BPhi$, we will define bad vectors $\bn\in\Z^l$.  
Suppose, $\bxi\in\R^d$ and $\rho,L\in\R_+$. We put ($\CR$ stands for `resonant'):
\bee\label{badtheta}
\CMM=\CMM(L,\bxi)=\CMM(L,\bxi,\vec\boldom,\rho)
:=\{\bm\in\Z^l
:\ \ |\|\bxi+\bm\vec\boldom\|^2-
\rho^2
|\le L\}.
\ene
  An equivalent definition is this:
\bee
\CMM=\{\bm\in\Z^l
:\ \ \bxi+\bm\vec\boldom\in\BS\BL(\rho,L)\}
\ene
where by
\bee\label{spherical_layer}
\BS\BL(\rho,L)
:=\BB(0,\sqrt{\rho^2+L})\setminus\BB(0,\sqrt{\rho^2-L})
\ene
we have denoted a spherical layer.
Later, we will put 
\bee\label{bxibn}
\bxi=\bxi(\bn):=\bk+\bn\vec\boldom\in\Z^l_{\bk}, 
\ene
but at the moment we want to treat $\bxi$ on its own right. This means that estimates for $\bxi$ will be given for all $\bxi$ belonging to a patch of certain size, and then we will be using these estimates for $\bxi$ given by \eqref{bxibn} for various $\bn\in\Z^l$. 

\begin{defn}\label{newdef1}
We will say that $\bxi\in\R^d$  is $L$-bad if $\bxi\in\BS\BL(\sqrt{\rho^2-L},\sqrt{\rho^2+L})$ and $\bn\in\Z^l$ is $L$-bad if $\bxi=\bk+\bn\vec\boldom\in\BS\BL(\sqrt{\rho^2-L},\sqrt{\rho^2+L})$. These points are $L$-good otherwise. The definition of $L$-bad $\bn$ depends on the choice of the `initial point' $\bk$, and sometimes, in order to emphasize this, we will say that 
$\bn$ is $L$-bad with respect to $\bk$.
\end{defn}

 Of course, $\bm$ is $L$-bad with respect to $\bxi=\bk+\bn\vec\boldom$ if and only if  $\bn+\bm$ is $L$-bad with respect to $\bk$.

Before giving main definitions, let us formulate one important property of $\BS\BL$; this property can be proved using school-level geometry (see Figure~\ref{fig1} for illustration).

\bel
Suppose, $\GU$ is a hyperplane in $\R^d$ and $\bu\in\R^d$ is a vector. Suppose, 
\bee
\bx,\by\in\BS\BL (\rho,L)
\cap\bigl[\GU+\bu\bigr].
\ene
Denote by $\bx_{\GU}$ and $\by_{\GU}$ the projections of $\bx$ and $\by$ onto $\GU$. Then 
\bee\label{bxby}
\bigm|\|\bx_{\GU}\|-\|\by_{\GU}\|\bigm|\ll L^{1/2}.
\ene
\enl
\bep
We obviously have:
\bee
\bigm|\|\bx_{\GU}\|-\|\by_{\GU}\|\bigm|(\|\bx_{\GU}\|+\|\by_{\GU}\|)\le 2L. 
\ene
Together with the inequality $\bigm|\|\bx_{\GU}\|-\|\by_{\GU}\|\bigm|\le(\|\bx_{\GU}\|+\|\by_{\GU}\|)$ this concludes the proof. 
\enp

\begin{figure}[ht]
\centering
\includegraphics[scale=0.35]{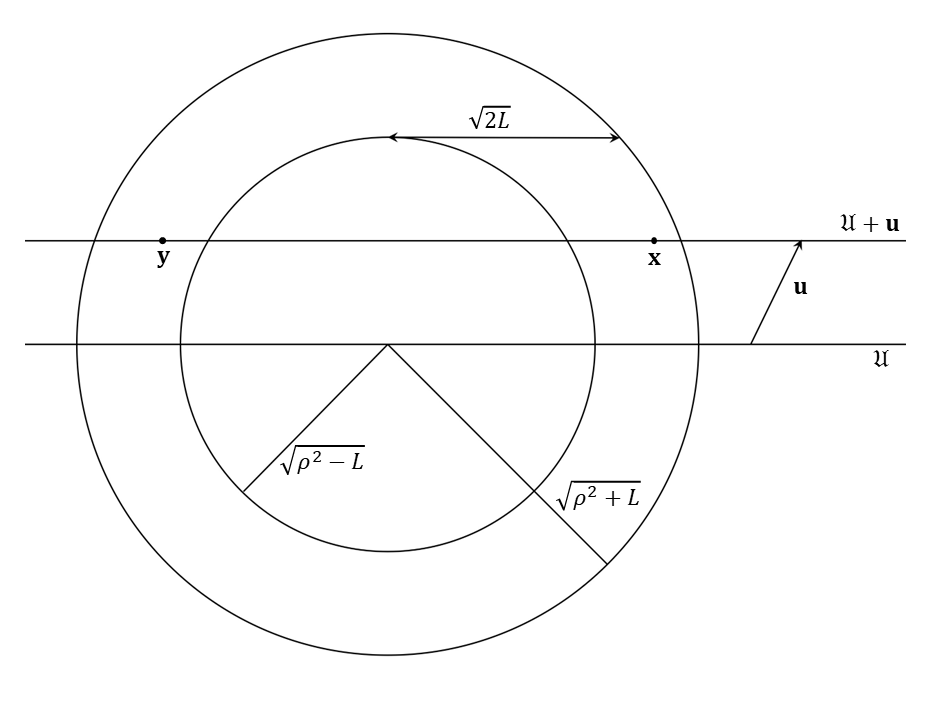}
\caption{}
\label{fig1}
\end{figure}

Now we proceed with our main definitions. 
\begin{defn}
We say that two frequencies $\bm_1\vec\boldom$ and $\bm_2\vec\boldom$ are {\it $(L,\tilde L)$-coupled} (sometimes adding `with respect to $\bxi$') and write $\bm_1\sim_{L,\tilde L}\bm_2$, if:

1. $\bm_1,\bm_2\in  \CMM(L,\bxi)$;

2. $|\bm_1-\bm_2|<\tilde L$. 

In this case we will also say that corresponding points $\bxi_j=\bxi+\bm_j\vec\boldom$ ($j=1,2$) are $(L,\tilde L)$-coupled.
\end{defn}
\ber
In this (and further) definitions we do not distinguish between a pair of frequencies $(\bm_1\vec\boldom,\bm_2\vec\boldom)$ and a pair of integer vectors  $(\bm_1,\bm_2)$ being coupled. 
\enr
\begin{defn}
We say that two frequencies $\bm\vec\boldom$ and $\bm'\vec\boldom$ are {\it $(L,\tilde L)$-conjugate} (with respect to $\bxi$), if 
there is a finite collection of points $\bm_0:=\bm,\bm_1,\dots,\bm_m=\bm'$
such that for each $j=0,\dots,m-1$ we have $\bm_j\sim_{L,\tilde L}\bm_{j+1}$. 
\end{defn}
\begin{defn}\label{new:clusters}
Given any point $\bxi\in\R^d$, we denote by $\BUps_{L,\tilde L}^{\Z}(\bxi)$ the collection of vectors from $\Z^l$ that are $(L,\tilde L)$-conjugate to $0\in\Z^l$ with respect to $\bxi$. If $\bxi$ is an $L$-good point (i.e. if $\bxi\not\in\BS\BL(\rho,L)
$), we define this set to be empty. 
We also denote
\bee
\BUps_{L,\tilde L}(\bxi):=\bxi+\BUps_{L,\tilde L}^\Z(\bxi)\vec\boldom:=\{\bxi+\bm\vec\boldom,\ \bm\in \BUps_{L,\tilde L}^\Z(\bxi)\}.
\ene
\end{defn}

Obviously, this definition gives us a disjoint union of $\BS\BL(\rho,L)
$ into equivalence classes: if $\bxi'\in \BUps_{L,\tilde L}(\bxi)$, then $\BUps_{L,\tilde L}(\bxi')=\BUps_{L,\tilde L}(\bxi)$. 

\begin{defn}
By the {\it rank} $\rank(\BUps^{\Z}_{L,\tilde L}(\bxi))$ we call the dimension of the linear subspace of $\R^l$ spanned by $\{\bm,\ \bm\in \BUps_{L,\tilde L}^\Z(\bxi)\}$.  
\end{defn}

{\newred
\ber
Here and below, of course, by the linear subspace in $\R^l$ spanned by integer vectors $\{\bm\}$ we mean the subspace spanned by the images of $\{\bm\}$ under the natural embedding of $\Z^l$ into $\R^l$.
\enr}

\bel\label{1}
Suppose, $E^{\sigma_0}\le L,\tilde L\le E^{\sigma_{1,d}}$. Then 
$$s:=\rank(\BUps_{L,\tilde L}^{\Z}(\bxi))<d,$$ 
$$\hbox{\rm diam}(\BUps_{L,\tilde L}(\bxi))<L\tilde L^{s\mu},$$ 
and the number of elements $\#\BUps_{L,\tilde L}^\Z(\bxi)$ satisfies 
 $$\#\BUps_{L,\tilde L}^\Z(\bxi)<L^s\tilde L^{(s^2+1)\mu}.$$
\enl
\bep
Let $\GV$ be a subspace of $\R^l$ generated by $s'\le l$ linearly independent vectors from $\Z^l$. We denote by $\BUps_{L,\tilde L}^\Z(\bxi;\GV)$ the collection of points $\bm'$ such that there exists a sequence $\bm_0:=0,\bm_1,\dots,\bm_m=\bm'$ of vectors from $\GV$ 
satisfying the usual property: for each $j=0,\dots,m-1$ we have $\bm_j\sim_{L,\tilde L}\bm_{j+1}$.  
We call the {\it rank} $\rank(\BUps^{\Z}_{L,\tilde L}(\bxi;\GV))$ the dimension of the linear span of all vectors in $\BUps_{L,\tilde L}^\Z(\bxi;\GV)$. 
Obviously, 
\bee
\rank(\BUps^{\Z}_{L,\tilde L}(\bxi;\GV))\le\dim\GV. 
\ene
We can also look at the linear subspace of $\R^d$:  
\bee
\GV\vec\boldom:=\{\bv\vec\boldom:\ \bv\in\GV\}\subset\R^d. 
\ene
\ber
Note that $\GV$ is a subspace of $\R^l$, whereas $\GV\vec\boldom$ is a subspace of $\R^d$. 
\enr
Let us prove the following statement:
\bel\label{2}
 Suppose, 
$\rank\BUps_{L,\tilde L}^\Z(\bxi;\GV)=\dim\GV$.  
Then we have:
\bee\label{projection}
||\bxi_{\GV\vec\boldom}||\ll L\tilde L^{s'\mu}. 
\ene
Moreover, 
\bee\label{projection1}
||\boldeta_{\GV\vec\boldom}||\ll L\tilde L^{s'\mu}  
\ene
for any $\boldeta\in(\bxi+\BUps_{L,\tilde L}^\Z(\bxi;\GV)\vec\boldom)=:\BUps_{L,\tilde L}(\bxi;\GV)$, 
\bee\label{diameter}
\diam(\BUps_{L,\tilde L}(\bxi;\GV))\ll L\tilde L^{s'\mu},
\ene
 and
\bee \label{number}
\#(\BUps_{L,\tilde L}^\Z(\bxi;\GV))\ll L^{s'}\tilde L^{({s'}^2+1)\mu}.
\ene
\enl
\bep 
The proof goes by induction in $s'=\dim\GV$. The case $s'=0$ is trivial, so we assume $s'\geq1$. 
Suppose, $s'=1$, so $\bxi\in\BUps_{L,\tilde L}(\bxi;\GV)$ (or, equivalently,  $\bxi\in\BS\BL(\rho,L)$).  Then there is a vector $\bm\in(\GV\cap\Om(\tilde L))$, $\bm\ne 0$, such that $\bm\in \BUps_{L,\tilde L}^\Z(\bxi)$ (otherwise the rank of $\BUps_{L,\tilde L}^\Z(\bxi;\GV)$ is zero). This means that $\boldeta:=\bxi+\bm\vec\boldom\in \BS\BL(\rho,L)$. 
Since $|\bm|\le \tilde L$, the diophantine condition implies that either $||\bxi_{\GV\vec\boldom}||\le ||\bm\vec\boldom||<\tilde L$ (in which case \eqref{projection} is trivial), or $||\bxi_{\GV\vec\boldom}||>||\bm\vec\boldom||$, in which case we have {\newred (since $\dim \GV\vec\boldom=1$):}
\bee
\bigm| ||\bxi_{\GV\vec\boldom}||-||\boldeta_{\GV\vec\boldom}|| \bigm|=
\bigm| ||\bxi_{\GV\vec\boldom}||-||\bxi_{\GV\vec\boldom}+(\bm\vec\boldom)|| \bigm|
=||\bm\vec\boldom||\ge \tilde L^{-\mu}. 
\ene
Therefore, \eqref{badtheta}  implies 
\bee
||\bxi_{\GV\vec\boldom}||\tilde L^{-\mu}\ll \bigm| ||\bxi_{\GV\vec\boldom}||^2-||\boldeta_{\GV\vec\boldom}||^2\bigm|{\newred
=\bigm| ||\bxi||^2-||\boldeta||^2\bigm|
}
\ll 
 L,
\ene
which implies 
\eqref{projection}. Since the same argument can be repeated for all $\boldeta\in\BUps_{L,\tilde L}(\bxi;\GV)$, this also implies \eqref{projection1}.  This in turn implies that $\diam( \BUps_{L,\tilde L}(\bxi))\ll L\tilde L^{\mu}$ and 
$$\#(\BUps_{L,\tilde L}^\Z(\bn;\GV))\ll L\tilde L^{2\mu}$$ 
(we use the Diophantine condition again).

Suppose now we have proved our statement for $\dim\GV=s'$; let us prove it for $\dim\GV=s'+1$. The first assumption of Lemma implies that there exists a sequence of points $\{\bxi_j=\bxi+\bm_j\}$, $j=0,\dots,m$ such that $\bm_0=0$, $\bm_j\in\CMM(L)$, $\bm_j-\bm_{j-1}\in\Omega(\tilde L)$, and 
$\spa\{\bm_j\}=\GV$. 
Let $n$ be the smallest index for which $\dim\spa(\bm_1,\dots,\bm_n)=s'+1$. This definition implies that $\GV':=\spa(\bm_1,\dots,\bm_{n-1})$ has dimension $s'$. Moreover, $\rank\BUps_{L,\tilde L}^\Z(\bxi;\GV')=s'=\dim\GV'$. Therefore, the induction assumption implies that 
$||(\bxi_{n-1})_{\GV'\vec\boldom}||\ll L\tilde L^{s'\mu}$. The result for $s'=1$ implies that 
$||(\bxi_{n-1})_{\spa((\bm_{n}-\bm_{n-1})\vec\boldom)}||\ll L\tilde L^{s'\mu}$. The strong diophantine condition implies that the angle between $(\bm_{n}-\bm_{n-1})\vec\boldom$ and $\GV'\vec\boldom$ is at least $\tilde L^{-\mu}$. Therefore, $||(\bxi_{n-1})_{\GV\vec\boldom}||\ll L\tilde L^{(s'+1)\mu}$. The induction assumption implies that $||\bxi-\bxi_{n-1}||=||(\bxi-\bxi_{n-1})_{\GV'\vec\boldom}||\ll L\tilde L^{s'\mu}$ and therefore $||(\bxi)_{\GV\vec\boldom}||\ll L\tilde L^{(s'+1)\mu}$. Since we can repeat the same arguments starting from any point $\boldeta\in\BUps_{L,\tilde L}(\bxi;\GV)$, this implies \eqref{projection1} and \eqref{diameter}. We postpone the proof of \eqref{number} 
until Lemma \ref{2'}, where we prove a stronger estimate \eqref{number'}. 
\enp
Now, in order to finish the proof of Lemma \ref{1}, we put $\GV$ to be a span of $\BUps _{L,\tilde L}^\Z(\bxi)$. Then $s'=\dim\GV=\rank\BUps_{L,\tilde L}^\Z(\bxi)=s$, and it remains to show that $s<d$. Suppose that $s\ge d$. Then $\GV\vec\boldom=\R^d$ and $||\bxi||=||\bxi_{\GV\vec\boldom}||\ll L\tilde L^{d\mu}$, which contradicts our assumption $|\ ||\bxi||^2-\rho^2|\le L$. 
\enp

\ber
This lemma and Strong Diophantine Condition imply that $\rank(\BUps^{\Z}_{L,\tilde L}(\bxi))$  is equal to $\rank(\BUps_{L,\tilde L}(\bxi))$ -- the dimension of the subspace of $\R^d$ spanned by $\{\bm\vec\boldom,\ \bm\in \BUps_{L,\tilde L}^\Z(\bxi)\}$. 
\enr

We will also need a statement slightly stronger than Lemma \ref{2}. Let us fix $\bxi$, $\rho$, 
$L$ and $\tilde L$, and denote 
\bee\label{GaZ-1}
\GV=\GV(\bxi)=\mbox{span}\big(\BUps _{L,\tilde L}^\Z(\bxi)\big)\subset\R^l, 
\ \ \dim\GV=\hbox{Rank}\,\BUps_{L,\tilde L}^\Z(\bxi)=:s.
\ene
We also define
\bee\label{GaZ}
\Ga^{\Z}(\bxi):=\GV(\bxi)\cap\Z^l.
\ene

Before formulating our estimate, we will prove a simple technical result.
\bel\label{sublattice}
Suppose, $\Ga\subset\R^s$ is a lattice of full rank, and $\ga_1,\dots,\ga_s\in\Ga$ are linearly independent with $||\ga_j||<\tilde L$. Then 
there is a basis $\bmu_1,\dots,\bmu_s$ of $\Ga$ with $||\bmu_j||<C(s)\tilde L$. The same result holds if we replace $||\cdot||$ by any other norm in $\R^s$. 
\enl
\bep
Since all norms in $\R^s$ are equivalent, it is enough to prove this statement for the Euclidean norm. 
We choose $\bmu_1$ to be any element of $\Ga\setminus\{0\}$ with the smallest length. Suppose, $\bmu_1,\dots,\bmu_{j}$ are chosen. Denote 
$\GV=\GV_j:=R(\bmu_1,\dots,\bmu_{j})$ and $\GU=\GU_j:=\GV_j^\perp$. For each $\ga\in\Ga$ we denote by $\ga_{\GU}$ its orthogonal projection onto $\GU$. We have $\ga_{\GU}=(\ga+\GV)\cap\GU$.  We denote the set of such projections (i.e. projections of all elements $\ga\in\Ga$ onto $\GU$) by $\Ga_{\GU}$. It is easy to see that $\Ga_{\GU}$
 is a lattice of full rank in $\GU$. Indeed, $\Ga_{\GU}$ is, obviously, an abelian  group, isomorphic to the quotient $\Gamma/Z(\bmu_1,...,\bmu_j)$, and this quotient group has rank $s-j$. Also, it is clear that the span of $\{\gamma_{\GU}\}$ equals $\GU$. Therefore, the collection $\{\gamma_{\GU}\}$ is a (periodic) lattice. 

Next, let us choose $\bnu_{j+1}$ to be any non-zero element of $\Ga_{\GU}$ with the smallest length (note that $\bnu_{j+1}$ does not always belong to $\Ga$). We define  $\bmu_{j+1}$ to be an element of $\Ga\cap (\bnu_{j+1}+\GV)$ such that $\bmu_{j+1}-\bnu_{j+1}$ is in the first Brillouin zone of the lattice $Z(\bmu_1,\dots,\bmu_{j})$.  This means:
\bee
||\bmu_{j+1}-\bnu_{j+1}||\le ||\bmu_{1}||+\dots+||\bmu_{j}||. \label{Sept15}
\ene 
We repeat this procedure to obtain a collection of vectors $\bmu_1\dots,\bmu_s$. 

Our construction implies that for all $j=1,...,s$ we have 
\bee\label{induction}
\Ga\cap R(\bmu_1,\dots,\bmu_{j})=Z(\bmu_1,\dots,\bmu_{j}).
\ene
Indeed, it is clear that the RHS is a subset of the LHS. The opposite inclusion is proved by induction. The base is obvious. Suppose, 
$\Ga\cap\GV_j=Z(\bmu_1,\dots,\bmu_{j})$ and put $\GV_{j+1}:=R(\bmu_1,\dots,\bmu_{j+1})$. Keeping in mind that $\bnu_{j+1}$ is a non-zero element of $\Ga_{\GU}$ with the smallest length, it is easy to see that $\ga_{\GU_j}=n\bnu_{j+1}$, $n\in \Z$, for any $\gamma\in \Ga\cap\GV_{j+1}$. Hence,
\bee
\Ga\cap\GV_{j+1}\subset \Ga\cap\cup_{n\in\Z}(n\bnu_{j+1}+\GV_j)=\Ga\cap \cup_{n\in\Z}(n\bmu_{j+1}+\GV_j)= \cup_{n\in\Z}(n\bmu_{j+1}+\GV_j\cap \Ga).
\ene
Using the induction assumption that $\Ga\cap\GV_j=Z(\bmu_1,\dots,\bmu_{j})$, we obtain $\Ga\cap\GV_{j+1}=Z(\bmu_1,\dots,\bmu_{j+1})$.
This proves \eqref{induction} for all $j=1,...,s$. Putting $j=s$ there shows that $\{\bmu_1\dots,\bmu_s\}$ is a basis of $\Ga$. 

It remains to prove  that every $||\bmu_j||\ll\tilde L$. 
Our construction implies $\|\bmu_1\|\leq \tilde L$. Assume we have proved that $||\bmu_{l}||\leq C\tilde L$, $l=1,...,j$. Note that there is a $\ga _i$ with a non-zero projection $\ga _{i,j}$ on $\GU _j$. By the definition of $\bnu_{j+1}$, 
 $||\bnu_{j+1}||\le||\ga_{i,j}||\le||\ga_{i}||<\tilde L$.
Using this inequality and (\ref{Sept15}), by induction, we obtain $||\bmu_{j+1}||<C\tilde L$. 
\enp
Now we formulate another estimate we will need later on. 
\bel\label{2'}
We have:
\bee\label{diameter'}
\diam((\Ga^\Z(\bxi)\cap\CMM(L,\bxi))\vec\boldom)\ll L\tilde L^{s\mu},
\ene 
\bee \label{number'}
\#(\Ga^\Z(\bxi)\cap\CMM(L,\bxi))\ll L^s\tilde L^{(s^2+1)\mu},
\ene
\bee \label{diameter''}
\diam(\Ga^\Z(\bxi)\cap\CMM(L,\bxi))\ll L\tilde L^{(s+2)\mu}.
\ene
\enl
\ber
Note that $\Ga^\Z(\bxi)\cap\CMM(L,\bxi)\subset\Z^l$, while  $(\Ga^\Z(\bxi)\cap\CMM(L,\bxi))\vec\boldom\subset\R^d$.
\enr
\bep
We know that $\Ga^\Z(\bxi)$ is an $s$-dimensional lattice containing $s$ linearly independent vectors $\ga_1,\dots,\ga_s$ of length smaller than $\tilde L$. By Lemma \ref{sublattice}, we can find a basis $\bmu_1,\dots,\bmu_s$ of $\Ga^{\Z}(\bxi)$ with 
$|\bmu_j|\ll\tilde L$. 
Lemma \ref{determinant} now implies that 
\bee \label{3.17}
\hbox{vol}\,(\bmu _1\vec\boldom,\dots,\bmu_s\vec\boldom)\gg\tilde L^{-\mu}. 
\ene
Here, $\hbox{vol}\,(\bmu _1\vec\boldom,\dots,\bmu_s\vec\boldom)$ means the volume of the corresponding $s$-dimensional parallelepiped. In particular, this implies that the lattice generated by $(\bmu _1\vec\boldom,\dots,\bmu_s\vec\boldom)$ is periodic. 
Next, we notice that \eqref{projection} together with \eqref{bxby}
implies \eqref{diameter'}. This, together with \eqref{3.17} and standard covering arguments yields \eqref{number'} and \eqref{diameter''}; recall that $\mu>d$. 

\enp

Now we divide all the bad points $\bn\in\CMM(E^{\sigma_{0}},\bxi)$ into sets of different ranks in the following way. For $s=0,1,\dots,d-1$ and $\oldj=0,1$ we denote $\BUps_{\oldj,s}^\Z(\bxi):=\BUps_{E^{\sigma_{\oldj,s}},E^{\sigma_{\oldj,s}}}^\Z(\bxi)$. Note that $\rank(\BUps_{\oldj,s}^\Z(\bxi))$ is a non-decreasing function of $s$ taking values $0,1,...,d-1$. It is a simple exercise that this implies that for each $\oldj$ there is at least one number $s=0,1,\dots,d-1$ with the property that 
 $\rank(\BUps_{p,s}^\Z(\bxi))=s$. We want to call one of these numbers $s$ the $\oldj$-rank of $\bxi$, and, as will be clear soon, the proper choice is the biggest of such numbers. This leads to the following definition: 

\begin{defn} \label{def3.11}
 Let $\oldj=0,1$. The biggest number $s$ satisfying 
 \bee\label{new:s}
 \rank(\BUps_{p,s}^\Z(\bxi))=s
\ene 
  is called the $\oldj$-rank of $\bxi$ and denoted by $\rank_\oldj(\bxi)$.  By $\CR_{\oldj,s}$ we denote the collection of all points $\bxi$ with $\oldj$-rank being equal to $s$. By 
 $\Ga^{\Z}(\bxi)=\Ga^{\Z}_p(\bxi)$ we denote the set given by \eqref{GaZ-1}-\eqref{GaZ}, assuming that $L=\tilde L=E^{\sigma_{\oldj,s}}$ with $s=\rank_\oldj(\bxi)$.  
\end{defn}
The immediate (and very important) consequence of this definition is that if $\rank_\oldj(\bxi)=s$, then $\rank(\BUps_{\oldj,s+1}^\Z(\bxi))=s$. {\newred 
Indeed, clearly $s=\rank(\BUps_{\oldj,s}^\Z(\bxi))\le\rank(\BUps_{\oldj,s+1}^\Z(\bxi))$ and we cannot have $\rank(\BUps_{\oldj,s+1}^\Z(\bxi))\ge s+1$ since then $s$ would not be the biggest number satisfying \eqref{new:s}.
}
This leads to the following corollaries (that hold for both $\oldj=0,1$; we will often omit mentioning the dependence on $\oldj$ in what follows): 
\bel\label{Z0}
Suppose, $\bxi\in\CR_{\oldj,s}$. Then $\BUps_{\oldj,s}(\bxi)\subset\CR_{\oldj,s}$ and 
$\rank\BUps_{\oldj,s}^\Z(\bxi)=s$. Any point $\bm$ which is within  
$E^{\sigma_{\oldj,s}}$ $\Z$-distance from $\BUps_{\oldj,s}^\Z(\bxi)$ 
is either inside $\BUps_{\oldj,s}^\Z(\bxi)$, or is $E^{\sigma_{\oldj,s}}$-good. 
\enl
\bel\label{Z}
Suppose, $\bxi\in\CR_{\oldj,s}$. Then any $E^{\sigma_{\oldj,s+1}}$-bad point $\bm$ that is $(E^{\sigma_{p,s+1}},E^{\sigma_{p,s+1}})$-conjugate to $0$ is inside $\Ga^\Z(\bxi)$. 
\enl
\begin{defn}\label{7BUps}
Given a point $\bxi\in\CR_{\oldj,s}$, we will call $\BUps_{p,s}^\Z(\bxi)$ the {\it primitive pre-cluster} corresponding to $\bxi$. Often we will omit writing the subscript $s$. 
\end{defn}
In order to proceed further with our procedure, we need to enlarge this pre-cluster a little bit. 
\begin{defn}\label{new:ext}
Suppose, $\bxi\in\CR_{\oldj,s}$. By the {\it extended pre-cluster} of $\bxi$ we call the following set:
\bee\label{ext}
\check\BUps^{\Z}_p(\bxi):=\CMM(E^{\sigma_{p,s}},\bxi)\cap\Gamma^\Z(\bxi)+
\Om(E^{\sigma_{\oldj,0}})\cap\Ga^\Z(\bxi). 
\ene
\end{defn}

Estimate \eqref{number'} implies
\bee\label{importantest}\#(\check\BUps^{\Z}_{\oldj}(\bxi))\ll E^{(s+(s^2+1)\mu)\sigma_{\oldj,s}}E^{l\sigma_{\oldj,0}}
\leq E^{d^2(l+\mu)\sigma_{\oldj,s}}.\ene
Definition {\newred \ref{new:ext}}, estimate \eqref{diameter''} and Lemmas \ref{Z0}, \ref{Z} imply
\bel\label{Z0new}
Suppose, $\bxi\in\CR_{\oldj,s}$. Any point $\bm$ which is within  
$E^{\sigma_{\oldj,s+1}}/2$ $\Z$-distance to $\check\BUps_{\oldj,s}^\Z(\bxi)$ 
is either inside $\check\BUps_{\oldj,s}^\Z(\bxi)$, or is $E^{\sigma_{\oldj,s}}$-good. In addition, if $\bm\not\in\Gamma^\Z(\bxi)$ then it is $E^{\sigma_{\oldj,s+1}}$-good.
\enl

\begin{defn}\label{def5.19}
We also define (see Figure~\ref{fig2} below for illustration) the super-extended pre-cluster 
\bee\label{superext}
\tilde \BUps^{\Z}_{\oldj}(\bxi):=\check\BUps^{\Z}_{\oldj}(\bxi)+\Om(E^{\sigma_{\oldj,s,1}}) 
\ene
and the intermediate pre-cluster:
\bee\label{haterext}
\hat \BUps^{\Z}_\oldj(\bxi):=\tilde\BUps_p^{\Z}(\bxi)\cap\Ga^\Z(\bxi)=\check\BUps^{\Z}_{\oldj}(\bxi)+
(\Om(E^{\sigma_{\oldj,s,1}})\cap\Ga^\Z(\bxi)). 
\ene
Obviously, we have $\check\BUps^{\Z}_\oldj(\bxi)\subset\hat\BUps^{\Z}_\oldj(\bxi)
\subset\tilde\BUps^{\Z}_\oldj(\bxi)$. 
 
\end{defn}

\begin{figure}[ht]
\centering
\includegraphics[scale=0.30]{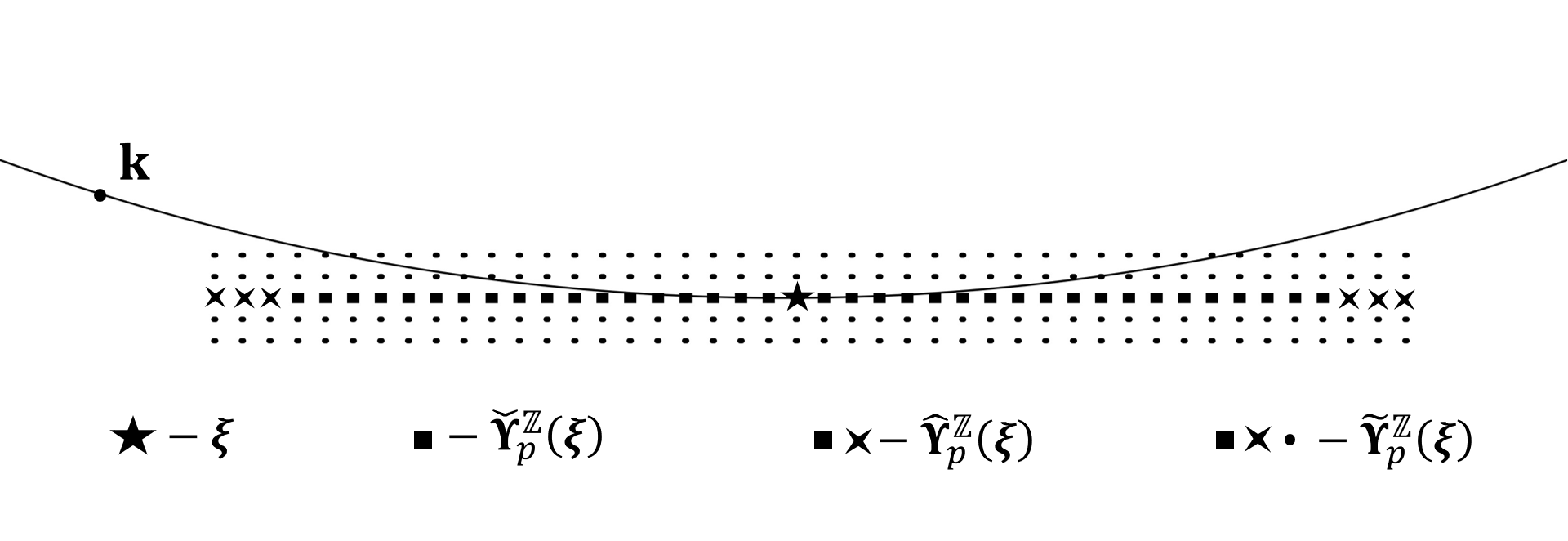}
\caption{}
\label{fig2}
\end{figure}

\subsection{Operator restricted to clusters\label{ORC}}


Fix $p=0,1$. 
Consider patches $\{\CA^{\rho,\bxi(0)}_{\tim}=\CA^{\rho(0)}_{\tim}\times \CA^{\bxi(0)}_{\tim}\}$ of size $\En^{-1}$ in both variables. For each such patch $\CA^{\rho,\bxi(0)}_{\tim}$ we denote by $\rho^*=\rho^*_{\tim}$ and $\bxi^*=\bxi^*_{\tim}$ the corresponding centres. From now on, we will consider only pre-clusters of the form $\BUps(\rho^*_{\tim},\bxi^*_{\tim})$; if we have any other point $(\bxi,\rho)\in\CA^{\rho,\bxi}_{\tim}$, then instead of a pre-cluster  $\BUps(\rho,\bxi)$ we will be considering $\BUps(\rho^*_{\tim},\bxi^*_{\tim})$.  

We also introduce the patches $\{\CA^{\BPhi(0)}_j\}$ of size $\frac{\En^{-2}}{10}$; note that a change of $E^{-2}$ of $\BPhi$ results in a change of $E^{-1}$ of $\bk$, so the size of the patches $\CA^{\BPhi(0)}$ is consistent with the size of $\CA^{\rho,\bxi(0)}$.  Consider a point $\BPhi$ from any fixed patch $\CA^{\BPhi(0)}_j$ with centre $\BPhi^*_j$. We will only be interested in the `good' patches, i.e. we assume that $\BPhi_j^*=\BPsi(\vec\phi)$ with $\vec\phi\in\CG^{\vec\phi(0)}_{\C}\cap\R^{d-1}$. We fix $\BPhi^*=\BPhi^*_j$ and $\rho^*=\rho_{\tim}^*$ for the rest of this subsection. Denote  $\bk^*=\bk^*_j:=\bka^{(0)}(\BPhi^*_j,\rho^*)=\ka^{(0)}(\BPhi^*)\BPhi^*$. 
Suppose now that for some $\bn_0\in\Z^l$ we have 
$\|\bk^*_j+\bn_0\vec\boldom\|\in  [\En-1,\En+1]$. 
By the discussion in section \ref{section1}, this implies that the entire shifted ball
$
\BB(\bk^*_j,\frac{E^{-1}}{10})+\bn_0\vec\boldom
$ 
is inside at least one patch
$\CA^{\bxi}_{m}$. Let $\bxi^*=\bxi^*_m=\bxi^*_m(\bn_0)$ be  its centre (if there are several patches in $\bxi$ with this property, we choose any of them). 
This gives us the possibility to define the ranks of the integer vectors $\bn_0$ as the ranks of the corresponding points $\bxi_m^*(\bn_0)$. Suppose, $\rank(\bn_0)=s$, so that $\bxi_m^*\in\CR_{p,s}$. Given any pre-cluster defined in the previous section (say, $\BUps^{\Z}_{\oldj}(\bxi_m^*)$), we define the corresponding cluster in the following way:
\bee\label{7CC}
\CC_\oldj(\bn_0,\bk)=\CC_\oldj(\vec\boldom,\rho^*,\bk,\bn_0):=\BUps^{\Z}_\oldj(\bxi_m^*)+\bn_0\subset\Z^l; 
\ene
the extended, intermediate and super-extended clusters are defined similarly, e.g.
\bee\label{7CC1}
\tilde\CC_\oldj(\bn_0,\bk)=\tilde\CC_\oldj(\vec\boldom,\rho^*,\bk,\bn_0):=\tilde\BUps^{\Z}_\oldj(\bxi_m^*)+\bn_0\subset\Z^l.
\ene
\ber\label{neweverything}
Note that our clusters cover all the bad lattice points: if $\bm\not\in\cup_{\bn}\CC_p(\bn)$, then $\bm$ is $E^{\sigma_{p,0}}$-good. 
\enr
\ber The reason why we need both types of clusters: $\tilde\CC_0$ and $\tilde\CC_1$ is rather technical. 
The extended and super-extended clusters $\tilde\CC_1$ are the zeroth level objects of what we will later define as the \Bourgain structure (see definition \ref{8.1}). 
We will use $\tilde\CC_0$, essentially, only in sections 5.3 and \ref{section5} when we construct the approximation $\la^{(1)}(\bk)$ to the eigenvalue $\la^{(\infty)}(\bk)$ based on the restriction of $H$ to the central cube $\hat K^{(1)}$.  Afterwards, when we run the induction, we will use only the clusters $\tilde\CC_1$ (except when considering `the prodigal sons' -- cubes that look very  similar to $\hat K^{(1)}$). We also comment that the proofs of various properties for $p=0$ and $p=1$ usually are completely the same. See Remark \ref{p=0} and a comment just before \eqref{nonsimple} for examples of specific statements where we needed to have construction with both $p=0$ and $p=1$ in place.  
\enr

The results from this section  imply that these extended clusters do not intersect pairwise; moreover, if $\bn_2\not\in\check\CC_p(\bn_1)$, then 
\bee\label{intersect}
d(\check\CC_p(\bn_1),\check\CC_p(\bn_2))\ge \frac12 \En^{\sigma_{\oldj,s+1}},\ \ s={\max\rank(\bn_1,\bn_2)}.
\ene 
Similarly, if $\bn_2\not\in\tilde\CC_p(\bn_1)$, then 
\bee\label{intersecttilde}
d(\tilde\CC_p(\bn_1),\tilde\CC_p(\bn_2))\gg  \En^{\sigma_{\oldj,s+1}},\ \ s={\max\rank(\bn_1,\bn_2)}.
\ene

Next, we 
fix a point $\bn_0$ (again, until the end of this subsection) and put for $\bk\in\BB(\bk^*_j,\frac{E^{-1}}{10})$ (recall definition \ref{proj}):
\bee
\bes
\CP=\check\CP&=\CP_{\oldj}(\bk)=\CP_{\oldj}(\vec\boldom,\rho,\BPhi^*,\bn_0;\bk):=\CP(\check\CC_{\oldj}(\bn_0,\bk);\bk),\\ 
\hat\CP&=\hat\CP_{\oldj}(\bk)=\hat\CP_{\oldj}(\vec\boldom,\rho,\BPhi_j^*,\bn_0;\bk):=\CP(\hat\CC_p(\bn_0,\bk);\bk),\\  
\tilde\CP&=\tilde\CP_{\oldj}(\bk)=\tilde\CP_{\oldj}(\vec\boldom,\rho,\BPhi_j^*,\bn_0;\bk):=\CP(\tilde\CC_p(\bn_0,\bk);\bk); \label{3.27}
\end{split}
\ene
this notation will be kept throughout this subsection. Obviously, $\check \CP<\hat \CP<\tilde \CP$. 
We will study how the resolvents of the operators $H(\check\CC_{\oldj}(\bn_0,\bk),\bk)=\CP(\bk)H(\bk)\CP(\bk)$ and $H(\tilde\CC_{\oldj}(\bn_0,\bk),\bk)=\tilde\CP(\bk)H(\bk)\tilde\CP(\bk)$ depend on $\bk$. Recall that the definition of all the clusters $\CC(\bn_0)$ was given with respect to a fixed point $\bxi_m^*$ and thus the clusters (as subsets of $\Z^l$) do not change when we vary $\bk\in \BB(\bk^*_j,\frac{E^{-1}}{10})$; these clusters stay fixed even when we make $\bk$ 
complex. 


Now we need to introduce coordinates $(\phi_1,\dots,\phi_{d-1})$ in $\Pi^{(0)}$ a bit more carefully than before. Namely, we request that $\phi_1$ measures the angle in the $2$-dimensional plane containing three points: $0$, $\bk^*$, and $\bxi_0:=\bk^*+\bn_0\vec\boldom$. More precisely, we choose the coordinates so that $\BPhi^*=\bk^*||\bk^*||^{-1}$ has coordinates 
$(0,\dots,0,1)$ (as we did before), and, moreover, $\bxi_0$ has coordinates $\|\bxi_0\|(\sin\al,0,\dots,0,\cos\al)$, where $\al$ is the angle between $\bk^*$ and $\bxi_0$. Once we have these coordinates, we put 
$\hat\phi:=(\phi_2,\dots,\phi_{d-1})$; thus, $(\phi_1,\dots,\phi_{d-1}):=(\phi_1 ,\hat\phi )$.
\bel\label{eta}
Suppose, the $\oldj$-rank ($\oldj=0,1$) of $\bn_0$ is $s\ge 1$ and $\BPhi^*\in\CG^{\BPhi(0)}$. Then 
\bee\label{theta}
\|\bn_0\vec\boldom\|\gg E^{1-(l+\mu+2)\sigma_{\oldj,s,1}},
\ene  
\bee\label{bphi1}
\al\gg E^{-(l+\mu+2)\sigma_{\oldj,s,1}},
\ene and 
\bee\label{bphi2}
\pi-\al\gg E^{-(l+\mu+2)\sigma_{\oldj,s,1}}.
\ene
\enl
\bep
Since $\bxi_0=\bk^*+\bn_0\vec\boldom$ is $E^{\sigma_{\oldj,s}}$-bad, we have
 \bee
\bigm|||\bk^*+\bn_0\vec\boldom||^2-||\bk^*||^2\bigm|<E^{\sigma_{\oldj,s}}.
\ene
Our assumption that $s\ge 1$ implies that there exists a vector $\bm\in \Z^l$, $0<|\bm|\le  E^{\sigma_{\oldj,s}}$ such that $\bxi_0+\bm\vec\boldom$ is also $E^{\sigma_{\oldj,s}}$-bad, i.e.
\bee
\bigm|||\bk^*+\bn_0\vec\boldom+\bm\vec\boldom||^2-||\bk^*||^2\bigm|<E^{\sigma_{\oldj,s}}. 
\ene
These two estimates imply that 
\bee\label{5.40}
|\lu \bk^*+\bn_0\vec\boldom,\bm\vec\boldom\ru|\ll E^{2\sigma_{\oldj,s}}.
\ene
Since $\BPhi^*\in\CG^{\BPhi(0)}$ and $|\bm|\le E^{\sigma_{\oldj,s,1}}$, we have
\bee\label{n5.41}
|\lu \bk^*,\bm\vec\boldom\ru|> E^{1-(l+\mu+1)\sigma_{\oldj,s,1}}. 
\ene
These two inequalities imply 
\bee
|\lu \bn_0\vec\boldom,\bm\vec\boldom\ru|> E^{1-(l+\mu+1)\sigma_{\oldj,s,1}}/2.
\ene
Since $||\bm\vec\boldom||\le|\bm|\le E^{\sigma_{\oldj,s,1}}$, we have
\bee
||\bn_0\vec\boldom||> E^{1-(l+\mu+2)\sigma_{\oldj,s,1}}/2,
\ene
which is \eqref{theta}. If instead of subtracting \eqref{n5.41} from \eqref{5.40}, we add them, we obtain 
\bee\label{-xi}
||\bxi_0+\bk^*||> E^{1-(l+\mu+2)\sigma_{\oldj,s,1}}/2.
\ene
Now we look at the triangle with the vertices $0$, $\bk^*$, and $\bxi_0$. Two sides of this triangle have lengths $||\bk^*||$ and $||\bxi_0||$ which are close to $E$, while the third side has length 
$||\bn_0\vec\boldom||\gg E^{1-(l+\mu+2)\sigma_{\oldj,s,1}}$. This implies that corresponding angle $\al$ is bounded below by $cE^{-(l+\mu+2)\sigma_{\oldj,s,1}}$. Similar argument using \eqref{-xi} gives us \eqref{bphi2}. 
\enp
\bec\label{eta1}
Suppose, $\bn\in\check\CC_\oldj(\bn_0)$. Denote $\bth=(\th_1,\dots,\th_d):=\bn\vec\boldom$. Then the first coordinate $\th_1$  satisfies $|\th_1|\gg  E^{1-(l+\mu+2)\sigma_{\oldj,s,1}}$. The sign of $\th_1$ is the same for all points $\bn\vec\boldom$ with $\bn\in\check\CC_\oldj(\bn_0)$.
\enc
\bep
This follows from lemma \ref{eta} and \eqref{diameter''}. 
\enp
Now we want to start moving $\bk$ so that the coordinate $\phi_1$ may become complex (to emphasise this, we use letter $\bka$ instead of $\bk$), but the modulus of $\bk$ is still the same (independent on $\vec\phi$); the rest of $\phi_j$ (when $j>1$) are assumed to be real. We also denote by $\bn$ any vector from $\check\CC_\oldj(\bn_0)$. Then
\bee\label{bk}
\bka=k(\phi_1,\dots,\phi_{d-1},\sqrt{1-\phi_1^2-\dots-\phi_{d-1}^{2}}).
\ene
We have:
\bee\label{square}
\Im||\bka+\bn\vec\boldom||^2_{\R}=2k(\th_1+k\Re\phi_1)\Im\phi_1+2k\Im((1-\phi_1^2-|\hat\phi|^2)^{1/2})(\th_d+k\Re((1-\phi_1^2-|\hat\phi|^2)^{1/2})).
\ene
Corollary \ref{eta1} implies that $|k\th_1|\gg E^{2-(l+\mu+2)\sigma_{\oldj,s,1}}$; the rest of the terms in the RHS of \eqref{square} are much smaller assuming $|\Re\phi_1|\le 10E^{-2}$ and $|\hat\phi|\le 10E^{-2}$. This implies the following result:
\bel\label{new:5.26}
Suppose, $s\ge 1$, $|\Re\phi_1|\le 10E^{-2}$ and $|\hat\phi|\le 10E^{-2}$. Then $\Im(||\bka+\bn\vec\boldom||^2_{\R})$ has the same sign for all $\bn\in\check\CC_\oldj(\bn_0)$ when the sign of $\Im\phi_1$ is fixed. Also, the following inequality holds:
\bee\label{1per}
|\Im(||\bka+\bn\vec\boldom||^2_{\R})|\gg E^{2-(l+\mu+2)\sigma_{\oldj,s,1}}|\Im\phi_1|. 
\ene
\enl
\bec\label{new:5.27}
If all conditions of the previous lemma are satisfied and $\vec\phi$ is real with $|\vec\phi|\leq 10E^{-2}$, then $H(\check\CC_{\oldj}(\bn_0);\bk)
=\CP_p(\bk)H(\bk)\CP_p(\bk)$
is monotone in $\phi_1 $ and all its eigenvalues $\lambda _q(\vec\phi )$ satisfy the estimates:
\begin{equation} 
\label{1per*} 
\left|\frac{\partial \lambda _q(\vec\phi ) }{\partial \phi_1 }\right|\gg  E^{2-(l+\mu+2)\sigma_{\oldj,s,1}}.
\end{equation}
\enc
\bep

Since $||\bk+\bn\vec\boldom||^2_{\R}$ is a holomorphic function of $\phi_1$, Cauchy-Riemann equation and 
inequality \eqref{1per} imply that 
\begin{equation}\label{1per**}
\left|\frac{\partial  ||\bk+\bn\vec\boldom||^2_{\R}}{\partial \phi_1 }\right|\gg  E^{2-(l+\mu+2)\sigma_{\oldj,s,1}}
\end{equation} 
for real $\phi_1 $ and the derivative has the same sign for all $\bn\vec\boldom\in\check\CC_\oldj(\n_0)$.  Taking into account that $V$ does not depend on $\phi_1$ {\newred and, thus, $\phi_1$ enters only the diagonal part of $H(\check\CC_{\oldj}(\bn_0);\bk)$,} we obtain \eqref{1per*}. 
\enp

Let us now fix the values of all angular variables except the first one: we assume that $\hat\phi$ is fixed and consider the determinant of the matrix $(H-\rho^2)(\check\CC_{\oldj}(\bn_0);\bka)=\CP_p(\bka)(H(\bka)-\rho^2)\CP_p(\bka)$ as a function of $\phi_1$. This determinant has a form $P_1(\phi_1)+P_2(\phi_1)\sqrt{1-|\hat\phi|^2-\phi_1^2}$, where $P_1$ and $P_2$ are polynomials of degree at most $2E^{d^2(l+\mu)\sigma_{\oldj,s}}$ (due to \eqref{importantest}). Obviously, this determinant vanishes only if we have
\bee
P_1^2-P_2^2(1-|\hat\phi|^2-\phi_1^2)=0,
\ene
and this is a polynomial in $\phi_1$. Thus, the number of poles $\phi_1$ of the resolvent $(H-\rho^2)^{-1}(\check\CC_{\oldj}(\bn_0);\bka)$
does not exceed $5E^{d^2(l+\mu)\sigma_{\oldj,s}}$. 
Let $\tilde\OO$ be the union of the discs of radius $E^{-2-\sigma_{\oldj,s,1}}$ around the poles in the $2E^{-2}$-neighbourhood of $0$. Obviously, there is a disc 
$D(r):=\{|\phi_1|\le r\}$ with $E^{-2}\leq r\leq\frac32 E^{-2}$ such that the boundary of $D(r)$ does not intersect $\tilde\OO$. Note that $D(r)$  contains the disk $|\phi_1|<E^{-2}$, and any connected component of $\tilde\OO$ which has common points with $D(r)$ is completely inside it. 

After this preparatory work, we move from $\bk$ to $\nbka^{(0)}$. Recall that for $\BPhi\in\CG^{\BPhi(0)}$, the point $\nbka^{(0)}$ is a vector having the same direction as $\BPhi$ such that the eigenvalue at the zeroth step $\la^{(0)}$ satisfies $\la^{(0)}(\nbka^{(0)})=\rho^2$. 
Similarly to \eqref{bk}, we write 
\bee
\nbka^{(0)}=\nka^{(0)}(\phi_1,\dots,\phi_{d-1},\sqrt{1-\phi_1^2-\dots-\phi_{d-1}^{2}}).
\ene
and consider this as a function of $\phi_1$ with all other variables fixed.

\bel\label{rescluster} Suppose, the $\oldj$-rank ($\oldj=0,1$) of $\bn_0$ is $s\ge 1$ and $\BPhi^*\in\CG^{\BPhi(0)}$. Then,  
as a function of $\phi_1$, the resolvent 
\bee
(H(\check\CC_{\oldj}(\bn_0);\nbka^{(0)})-\rho^2)^{-1}=(\CP_p(\nbka^{(0)})(H(\nbka^{(0)})-\rho^2)\CP_p(\nbka^{(0)}))^{-1}, \ \ s\geq1, 
\ene
has no more than $5E^{d^2(l+\mu)\sigma_{\oldj,s}}$ poles in $D(r)$. On the boundary $\partial D(r)$ we have the estimate
\begin{equation}\label{resclusterest}
\|(H(\check\CC_{\oldj}(\bn_0);\nbka^{(0)})-\rho^2)^{-1}\|\ll  E^{2(l+\mu+3)\sigma_{\oldj,s,1}}.
\end{equation}
\enl
\begin{proof} We start with the corresponding statements for $\bk$ and prove that
\begin{equation}\label{resclusterest11}
\|(H(\check\CC_{\oldj}(\bn_0);\bk)-\rho^2)^{-1}\|\ll E^{2(l+\mu+3)\sigma_{\oldj,s,1}}.
\end{equation}
Suppose, $\phi_1\in \partial D(r)$. 
 If $|\Im \phi_1 |>E^{-2-(l+\mu+4)\sigma_{\oldj,s,1}}$, then \eqref{resclusterest11} follows from \eqref{1per}. Suppose $|\Im \phi_1 |\leq E^{-2-(l+\mu+4)\sigma_{\oldj,s,1}}$. 
 Put  $\phi _*:=\Re \phi_1 $. 
 Then $\phi_*$ is at least $\frac12 E^{-2-\sigma_{\oldj,s,1}}$ away from the nearest pole. Thus (see \eqref{1per*}), 
\bee 
 \|(H(\check\CC_{\oldj}(\bn_0);\bk(\phi _*,\hat\phi))-\rho^2)^{-1}\|\ll E^{(l+\mu+3)\sigma_{\oldj,s,1}}. 
\ene 
 Using perturbative arguments again, we obtain
 \bee
\|(H(\check\CC_{\oldj}(\bn_0);\bk(\phi _1,\hat\phi))-\rho^2)^{-1}\|\ll E^{(l+\mu+3)\sigma_{\oldj,s,1}}
\ene
 and, hence, \eqref{resclusterest11} holds. Now, the statements of the lemma follows from \eqref{new17} and the standard perturbation arguments.
 \end{proof}

Now we will extend the results of Lemma \ref{rescluster} to a bigger projections $\tilde\CP$.
The following notation we will use only in this subsection: for $\Lambda\subset \Z^l$ we denote
\bee\label{d(n)}
\dis(\Lambda):=d(\Lambda,\check\CC_p(\bn_0,\bk))=\min\{|\bn-\bm|:\ \ \bn\in\Lambda,\ \bm\in\check\CC_p(\bn_0,\bk)\}.
\ene
We also put 
\begin{equation}\label{trivial3}
\dis(\Lambda,\Lambda'):=\frac{\dis(\Lambda)+\dis(\Lambda')}{20Q}.
\end{equation}

\begin{lem}\label{resclustershrink}
As a function of $\phi_1$, the resolvent 
\bee
(H(\tilde\CC_{\oldj}(\bn_0);\nbka^{(0)})-\rho^2)^{-1}=(\tilde \CP(\nbka^{(0)})(H(\nbka^{(0)})-\rho^2)\tilde \CP(\nbka^{(0)}))^{-1}, \ \ s\geq1, 
\ene
has no more than $5E^{d^2(l+\mu)\sigma_{\oldj,s}}$ poles in $D(r)$. Let $\OO=\OO(\varepsilon)$ be the union of the discs of radius $\varepsilon<E^{-2}$ around each pole. Then on the boundary of $\OO$ we have
\begin{equation}\label{resclusterestshrink}
\|(H-\rho^2)^{-1}(\tilde\CC_{\oldj}(\bn_0);\nbka^{(0)})\|\leq  E^{2(l+\mu+3)\sigma_{\oldj,s,1}}\left(\frac{E^{-2}}{\varepsilon}\right)^{5E^{d^2(l+\mu)\sigma_{\oldj,s}}}.
\end{equation}
Moreover, on the boundary of $\OO$ the following estimate for the truncated resolvent holds. Let 
$\Lambda$ and $\Lambda'$ be two subsets of $\tilde\CC_{\oldj}(\bn_0)$ satisfying 
 $\dis(\Lambda)+\dis(\Lambda')>20Q$.  Then  
\begin{equation}\label{trivial2}
\begin{split} &
\left \|\CP(\Lambda';\nbka^{(0)})
(H(\tilde\CC_{\oldj}(\bn_0);\nbka^{(0)})-\rho^2)^{-1}
\CP(\Lambda;\bka^{(0)})
\right\|\leq \cr & E^{2(l+\mu+3)\sigma_{\oldj,s,1}}\left(\frac{E^{-2}}{\varepsilon}\right)^{5E^{d^2(l+\mu)\sigma_{\oldj,s}}}
E^{-\sigma_{\oldj,s}\dis(\Lambda,\Lambda')}+E^{-\sigma_{\oldj,s}}.
\end{split}
\end{equation}
 \end{lem}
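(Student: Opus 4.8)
The plan is to derive the estimates for the super-extended cluster $\tilde\CC_\oldj(\bn_0)$ from the control on $\check\CC_\oldj(\bn_0)$ already obtained in Lemma~\ref{rescluster} (we are in the case $s\ge1$) by Schur-complementing out the shell $\bar\CC:=\tilde\CC_\oldj(\bn_0)\setminus\check\CC_\oldj(\bn_0)$ and then invoking Cartan's lemma from the Appendix. Write $\bar\CP:=\CP(\bar\CC;\nbka^{(0)})$, so that $\tilde\CP=\check\CP+\bar\CP$. The point that drives the whole argument is that, by Lemma~\ref{Z0new}, every lattice point $\bm\in\bar\CC$ is $E^{\sigma_{\oldj,s}}$-good, so at $\phi_1=0$ the diagonal entry $\|\nbka^{(0)}+\bm\vec\boldom\|_\R^2-\rho^2$ has modulus $\ge E^{\sigma_{\oldj,s}}$; since $|\partial_{\phi_1}\|\bka+\bm\vec\boldom\|_\R^2|\ll E^2$ (using \eqref{new17} to pass from $\bka$ to $\nbka^{(0)}$) and the disc $D(r)$ has radius $r\le\tfrac32E^{-2}$, this diagonal entry drifts by only $O(1)$ over $D(r)$ and therefore stays $\ge\tfrac12 E^{\sigma_{\oldj,s}}\gg\|V\|_\infty$ throughout. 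Hence $D_{\bar\CC}:=\bar\CP(H(\nbka^{(0)})-\rho^2)\bar\CP$ is boundedly invertible on all of $D(r)$, with $\|D_{\bar\CC}^{-1}\|\ll E^{-\sigma_{\oldj,s}}$, and — since $V$ has finite range $Q$ — a resolvent-identity (Neumann) expansion gives the exponential off-diagonal decay $\|\CP(\Lambda')D_{\bar\CC}^{-1}\CP(\Lambda)\|\ll E^{-\sigma_{\oldj,s}}(C\|V\|_\infty E^{-\sigma_{\oldj,s}})^{d(\Lambda,\Lambda')/Q}$. The Schur complement then factors $\det(H(\tilde\CC_\oldj(\bn_0);\nbka^{(0)})-\rho^2)$ as $\det(D_{\bar\CC})$ — nowhere zero on $D(r)$ — times $\det(H(\check\CC_\oldj(\bn_0);\nbka^{(0)})-\rho^2-W)$ with $W:=\check\CP H\bar\CP\,D_{\bar\CC}^{-1}\,\bar\CP H\check\CP$ the usual self-energy (holomorphic on $D(r)$, with exponentially decaying entries concentrated near $\partial\check\CC_\oldj(\bn_0)$), and it expresses $\tilde\CP(H(\tilde\CC_\oldj(\bn_0))-\rho^2)^{-1}\tilde\CP$ through $(H(\check\CC_\oldj(\bn_0))-\rho^2-W)^{-1}$, $D_{\bar\CC}^{-1}$, and the couplings $\check\CP H\bar\CP$, $\bar\CP H\check\CP$.

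For the pole count I would argue as in the proof of Lemma~\ref{rescluster}, now using this factorisation. Since $\det(D_{\bar\CC})\ne0$ on $D(r)$, the poles of $(H(\tilde\CC_\oldj(\bn_0);\nbka^{(0)})-\rho^2)^{-1}$ in $D(r)$ are the zeros of $\det(H(\tilde\CC_\oldj(\bn_0);\nbka^{(0)})-\rho^2)$ there. A zero at $\phi_1^0\in D(r)$ means $\rho^2\in\spec H(\tilde\CC_\oldj(\bn_0);\nbka^{(0)})$ at $\phi_1=\phi_1^0$; by the bound $|\partial_{\phi_1}\|\nbka^{(0)}+\bm\vec\boldom\|_\R^2|\ll E^2$ every eigenvalue branch moves by only $O(1)$ over $D(r)$, so such a branch sits within $O(1)+\|V\|_\infty<E^{\sigma_{\oldj,s}}$ of $\rho^2$ at $\phi_1=0$, and by min--max — since $\bar\CC$ is $E^{\sigma_{\oldj,s}}$-good — there are at most $\#\check\CC_\oldj(\bn_0)\le E^{d^2(l+\mu)\sigma_{\oldj,s}}$ such branches (recall \eqref{importantest}). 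Exactly as in Lemma~\ref{rescluster}, writing the determinant as $P_1(\phi_1)+P_2(\phi_1)\sqrt{1-|\hat\phi|^2-\phi_1^2}$ and passing to $P_1^2-P_2^2(1-|\hat\phi|^2-\phi_1^2)=(P_1-P_2\sqrt{\,\cdot\,})(P_1+P_2\sqrt{\,\cdot\,})$: the factor $P_1+P_2\sqrt{\,\cdot\,}$ is the physical determinant, whose $D(r)$-zeros come only from the at most $\#\check\CC_\oldj(\bn_0)$ branches that can reach $\rho^2$ (the rest stay $\ge\tfrac12E^{\sigma_{\oldj,s}}$-far from $\rho^2$ on $D(r)$ and contribute nothing), while $P_1-P_2\sqrt{\,\cdot\,}$ is the determinant at the reflection of $\nbka^{(0)}$ into the opposite hemisphere, where the relevant diagonal entries equal $\rho^2+2\|\bn_0\vec\boldom\|^2+O(E^{1+\sigma_{\oldj,s,1}})>\rho^2+E^{\sigma_{\oldj,s}}$ throughout $D(r)$ — using $\|\bn_0\vec\boldom\|\gg E^{1-(l+\mu+2)\sigma_{\oldj,s,1}}$ from Lemma~\ref{eta} and the smallness of the $\sigma_\iota$ — so that factor has no zeros there. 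Altogether there are at most $N:=5E^{d^2(l+\mu)\sigma_{\oldj,s}}$ poles in $D(r)$.

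With the pole count and the boundary control from Lemma~\ref{rescluster} in hand, Cartan's lemma from the Appendix — applied after the single square root has been absorbed, so the relevant quantities are holomorphic on $D(r)$ — converts the passage from $\partial D(r)$ to $\partial\OO(\varepsilon)$ into the loss $(E^{-2}/\varepsilon)^{N}$, which is exactly the price of $N$ zeros in a disc of radius $\sim E^{-2}$; this yields \eqref{resclusterestshrink}. The truncated bound \eqref{trivial2} I would then read off the block resolvent identity: the part supported on $\bar\CC\times\bar\CC$ reduces to $D_{\bar\CC}^{-1}$, which is $O(E^{-\sigma_{\oldj,s}})$ and exponentially decaying, accounting for the isolated term $E^{-\sigma_{\oldj,s}}$; every other contribution must pass through $\check\CC_\oldj(\bn_0)$ and so picks up the two exponentially small hop-factors from $\Lambda$ and from $\Lambda'$ to $\check\CC_\oldj(\bn_0)$ — whose exponents are collected precisely in $\dis(\Lambda,\Lambda')$ — times the norm $E^{2(l+\mu+3)\sigma_{\oldj,s,1}}(E^{-2}/\varepsilon)^{N}$ of the check-block resolvent on $\partial\OO(\varepsilon)$, the hypothesis $\dis(\Lambda)+\dis(\Lambda')>20Q$ guaranteeing at least one such hop.

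The main difficulty is the Schur reduction itself: the check-cluster resolvent is only bounded by the large quantity $E^{2(l+\mu+3)\sigma_{\oldj,s,1}}$, which dwarfs the scale $E^{\sigma_{\oldj,s}}$ at which $\bar\CC$ is good, so $H(\tilde\CC_\oldj(\bn_0))$ is genuinely not a norm-small perturbation of $H(\check\CC_\oldj(\bn_0))\oplus D_{\bar\CC}$, and the self-energy $W$ — although its entries decay exponentially and are concentrated near $\partial\check\CC_\oldj(\bn_0)$ — is not small against that resolvent either; in particular one cannot expand in $W$, which is why the pole count must be run on the full operator $H(\tilde\CC_\oldj(\bn_0))$ (whose diagonal is genuinely monotone in $\phi_1$) rather than on the reduced one. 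What makes everything go through is purely geometric: $D(r)$ has radius $\le\tfrac32E^{-2}$, so along it the good diagonal entries drift by only $O(1)$ and never approach $\rho^2$; this is what makes $D_{\bar\CC}$ invertible on $D(r)$, caps the pole count at $N$, lets the determinant factor cleanly, and keeps the resulting Cartan loss $(E^{-2}/\varepsilon)^{N}$ acceptable at the next scale $E^{\sigma_{\oldj,s+1}}$.
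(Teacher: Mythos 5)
Your Schur-complement framing is a legitimate alternative to the paper's decomposition $\tilde\CP H\tilde\CP=H'+W$ with $H':=\CP H\CP+(\tilde\CP-\CP)H_0(\tilde\CP-\CP)$, and your shell estimates ($D_{\bar\CC}$ invertible on all of $D(r)$ with norm $\ll E^{-\sigma_{\oldj,s}}$, exponential off-diagonal decay from the finite range of $V$) are correct. But there is a genuine gap at the central step: you never establish the boundary bound $\|(\tilde\CP(H-\rho^2)\tilde\CP)^{-1}\|\ll E^{2(l+\mu+3)\sigma_{\oldj,s,1}}$ on $\partial D(r)$ (the paper's \eqref{step2raz*hat}), which is precisely the input the maximum-principle step needs. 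You invoke ``the boundary control from Lemma \ref{rescluster}'', but that lemma controls $(H(\check\CC_{\oldj}(\bn_0))-\rho^2)^{-1}$, whereas your Schur formula requires the inverse of the Schur complement $S=(H(\check\CC_{\oldj}(\bn_0))-\rho^2)-W$ with $W$ the self-energy --- and, as you yourself observe, $\|W\|\sim E^{-\sigma_{\oldj,s}}$ is not small against $E^{-2(l+\mu+3)\sigma_{\oldj,s,1}}$ (recall $\sigma_{\oldj,s,1}>C_0\sigma_{\oldj,s}$), so one cannot pass from $A^{-1}$ to $S^{-1}$ by a Neumann expansion. Identifying this obstruction is not the same as overcoming it. The paper's resolution is Lemma \ref{A}: because the extended pre-cluster \eqref{ext} carries a collar $\Om(E^{\sigma_{\oldj,0}})\cap\Ga^\Z(\bxi)$ of $E^{\sigma_{\oldj,s}}$-good points around the resonant core, the dangerous cross term is expanded in $R_0\sim E^{\sigma_{\oldj,0}}Q^{-1}$ resolvent-identity steps, each of which stays in the collar and contributes a factor $2\|V\|E^{-\sigma_{\oldj,s}}$, and the accumulated $(2\|V\|E^{-\sigma_{\oldj,s}})^{R_0+1}$ overwhelms the large core resolvent $E^{2(l+\mu+3)\sigma_{\oldj,s,1}}$. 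Any version of your Schur argument must reproduce this collar mechanism to bound $S^{-1}$ on $\partial D(r)$; without it the lemma is not proved.

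Two smaller points. First, your pole count bounds the number of eigenvalue branches that can meet $\rho^2$, but Lemma \ref{complexlemma} needs the number of zeros of the determinant in $D(r)$ counted with multiplicity, and a single branch of an algebraic function can vanish to high order or return several times in the complex disc (the monotonicity \eqref{1per*} holds only for real $\phi_1$); running the polynomial-degree argument directly on $\tilde\CC_{\oldj}(\bn_0)$ gives a degree governed by $\#\tilde\CC_{\oldj}(\bn_0)$, which exceeds the stated $5E^{d^2(l+\mu)\sigma_{\oldj,s}}$. The paper gets the stated count by doing the degree argument on $\check\CC_{\oldj}(\bn_0)$ (Lemma \ref{rescluster}) and transferring it to $\tilde\CC_{\oldj}(\bn_0)$ via the convergent series \eqref{step2dvahat}. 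Second, the tool for passing from $\partial D(r)$ to $\partial\OO(\varepsilon)$ is the Rouch\'e-type Lemma \ref{complexlemma} of Appendix 5, not Cartan's Lemma (which in this paper is the measure estimate of Appendix 6); your description of the mechanism is right, only the name is wrong.
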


\begin{proof}
For simplicity, we will omit writing $\nbka^{(0)}$ as an argument during the proof of this lemma.
The plan of the proof is as follows. To begin with, we are going to obtain estimates assuming that $\phi_1$ is on the boundary of $D(r)$ and then, we will apply abstract lemma \ref{complexlemma} from Appendix 5. So, suppose that $\phi_1$ is on the boundary of $D(r)$. 

Let us denote  
$$ H':=\CP H\CP+(\tilde \CP-\CP)H_0(\tilde \CP-\CP)$$ 
($\CP$ and $\tilde\CP$ were defined in \eqref{3.27}). 
Then we obviously have: 
$$
\tilde \CP H\tilde \CP=H'+W,
$$
where we have denoted
\bee
W:=\tilde \CP V\tilde \CP-\CP V\CP.
\ene
We also denote 
$$A:=-(H'-\rho^2)^{-1}W(H'-\rho^2)^{-1}.$$
Let us prove first the following statement:
\bel \label{A}
The following estimate holds when $\phi_1\in\partial D(r)$:
\begin{equation}
\label{||A||2hat}
\|A\|<E^{-\sigma_{\oldj,s}}.
\end{equation}
\enl
\bep
To prove \eqref{||A||2hat} it suffices to check
\begin{equation}\label{||A||2-2hat}
\|(\tilde\CP-\CP)A(\tilde \CP-\CP)\|<\|V\|E^{-2\sigma_{\oldj,s}}
\end{equation}
and
\begin{equation}\label{||A||2-3hat}
\|(\tilde \CP-\CP)A\CP\|<6\|V\|E^{-2\sigma_{\oldj,s}}.
\end{equation}
Estimate \eqref{||A||2-2hat} follows from Lemma \ref{Z0new}, so we proceed to 
 \eqref{||A||2-3hat}. Lemmas \ref{Z0new} and \ref{rescluster} imply 
\bee\label{est1}
\|(\tilde \CP-\hat\CP)A\CP\|\ll\|V\|E^{-\sigma_{\oldj,s+1}+2(l+\mu+3)\sigma_{\oldj,s,1}},
\ene
and thus what remains is to estimate $(\hat\CP-\CP)A\CP$. 
We represent $(H'-\rho^2)^{-1}\CP$ using multiple resolvent identities as follows:
\begin{align}\label{May11-13bhat}
&(H'-\rho^2)^{-1}\CP=
\sum_{t=0}^{R_0}\left(-(H_0-\rho^2)^{-1}\CP V\CP\right)^t(H_0-\rho^2)^{-1}\CP+\cr &
\left(-(H_0-\rho^2)^{-1}\CP V\CP\right)^{R_0+1}(H'-\rho^2)^{-1}\CP, 
\end{align}
where we put 
\bee\label{R_0}
R_0:=[E^{\sigma_{\oldj,0}}Q^{-1}]-2.
\ene
This implies (remember that all projections involved commute with $H_0$): 
\begin{equation}\label{May11-13bhat1}
(\hat\CP-\CP)W\CP(H'-\rho^2)^{-1}\CP=
\sum_{t=0}^{R_0}F_t+
(\hat\CP-\CP)W\CP\left(-(H_0-\rho^2)^{-1}\CP V\CP\right)^{R_0+1}(H'-\rho^2)^{-1}\CP,
\end{equation}
where we have defined
\bee
F_t:=(\hat\CP-\CP)W\CP\left(-(H_0-\rho^2)^{-1}\CP V\CP\right)^{t}(H_0-\rho^2)^{-1}\CP.
\ene
Therefore,  
\begin{equation}\label{step25hat}
\|(\hat\CP-\CP)W\CP(H'-\rho^2)^{-1}\|\leq 
\sum_{t=0}^{R_0}\left\|F_t\right\|+
\left\|(\hat\CP-\CP)W\left((H_0-\rho^2)^{-1}\CP W\CP\right)^{R_0+1}\right\|\|(H'-\rho^2)^{-1}\CP\|.
\end{equation}
Note that matrix elements $(F_t)_{\bn\bn'}$ are
equal to zero if $|\bn-\bn'|>Q(t+1)$  
(see \eqref{V_q=0}).
Thus, the only non-trivial elements $(F_t)_{\bn\bn'}$ should satisfy
 $$\bn\in\hat\CC_\oldj(\bn_0)\setminus\check\CC_\oldj(\bn_0),\ \ \
\bn'\in\check\CC_\oldj(\bn_0), \ \ |\bn-\bn'|\leq Q(t+1). $$
In particular, we have $\bn,\bn'\in\Ga^\Z(\bxi_0)+\bn_0$ (recall that $\bxi_0=\bk^*+\bn_0\vec\boldom$) and  $|\bn-\bn'|\leq E^{\sigma_{\oldj,0}}-1$ (this follows from our choice \eqref{R_0}). 
Definition \eqref{ext} implies $\bn\not\in \CR^{\bm}(E^{\sigma_{p,s}},\bxi_0)$. Now, 
 $|\nbka^{(0)}-\bk|=O(E^{-1})$ implies that $\left|||\nbka^{(0)}+\bn'\vec\boldom||_{\R}^{2}-\rho^2\right|>E^{\sigma_{p,s}}/2$. Therefore, all the matrix elements $\left((H_0-\rho^2)^{-1}\CP V\CP\right)_{\bn_1,\bn_2}$ that give a non-zero contribution to 
$F_t$ must satisfy the same assumptions as $\bn'$, namely, $\bn_j\in \check\CC_\oldj(\bn_0)$, but within $\Z$-distance less than $Q(t+1)$ from $\bn_j$ there should be  a point from $\hat\CC_\oldj(\bn_0)\setminus\check\CC_\oldj(\bn_0)$ ($j=1,2$). This means that such matrix elements satisfy 
$$\bigm|\left((H_0-\rho^2)^{-1}\CP V\CP\right)_{\bn_1,\bn_2}\bigm|\ll \|V\|E^{-\sigma_{\oldj,s}}.$$
All this implies 
that for $t\leq R_0$ we have:
\begin{equation}\label{Br}
\begin{split}& \|F_t\|\leq 
(2\|V\|E^{-\sigma_{\oldj,s}})^{t+1},\cr &
\left\|(\hat\CP-\CP)W\left((H_0-\rho)^{-1}\CP W\CP\right)^{R_0+1}\right\|\leq
\|V\|\big(2\|V\|E^{-\sigma_{\oldj,s}}\big)^{R_0+1}.
\end{split}
\end{equation} 
This and \eqref{resclusterest} imply 
\bee
\bes
& 
 \|(\hat\CP-\CP)W\CP(H'-\rho^2)^{-1}\|\leq\\
 &
\sum_{t=0}^{R_0}
(2\|V\|E^{-\sigma_{p,s}})^{t+1}+\|V\|\big(2\|V\|E^{-\sigma_{p,s}}\big)^{R_0+1}E^{2(l+\mu+3)\sigma_{\oldj,s,1}}.
\end{split}
\ene

By lemma \ref{Z0new} and definition \ref{def5.19}, all points in 
$\hat\CC(\bn_0)\setminus  \check\CC(\bn_0)$ are $E^{\sigma_{p,s}}$-good. 
This means 
\bee\label{new5.74}
\|(\hat\CP-\CP)(H'-\rho^2)^{-1}\|\leq 2E^{-\sigma_{\oldj,s}}, 
\ene
which implies  \eqref{||A||2-3hat}. Here we also assumed that $E>E_*$,  so that, in particular,  
$$\sigma_{p,0}E^{\sigma_{\oldj,0}}Q^{-1}>1.$$ 
This finishes the proof of lemma \ref{A}. 

\enp
Now we go back to the proof of Lemma \ref{resclustershrink}. 
Let us consider the perturbation series 
\begin{equation}\label{step2dvahat}
(\tilde\CP(H-\rho^2)\tilde\CP)^{-1}=\sum_{t=0}^\infty(H' -\rho^2)^{-1}\left(-W(H'-\rho^2)^{-1}\right)^t.
\end{equation}
Estimates \eqref{resclusterest} and \eqref{||A||2hat} imply that when $\phi_1\in\partial D(r)$, we have 
\begin{equation}\label{step2raz*hat}
\left\|(\tilde\CP(H-\rho^2)\tilde\CP)^{-1}\right\| \ll E^{2(l+\mu+3)\sigma_{\oldj,s,1}}.
\end{equation}

Estimate \eqref{resclusterestshrink} (for $\phi_1\in\partial \OO$) now follows from Lemma \ref{complexlemma} if we take into account the estimate for the number of poles (see Lemma~\ref{rescluster}). We notice that perturbative arguments also imply that the number of poles inside $D(r)$ is the same for $H(\check\CC_{\oldj}(\bn_0))$ and $H(\tilde\CC_{\oldj}(\bn_0))$.

It remains to estimate the truncated resolvent. Let us assume, as we can without loss of generality, that ${\dis}(\Lambda')\leq{\dis}(\Lambda)$. We have 
\begin{equation}\label{decomp}
\begin{split}
& 
\CP({\Lambda'})(\tilde\CP(H-\rho^2)\tilde\CP)^{-1}\CP({\Lambda})=
\sum\limits_{t=0}^R\CP({\Lambda'})( H'-\rho^2)^{-1}\left(-W(H'-\rho^2)^{-1}\right)^t\CP({\Lambda})+
\cr &
\CP({\Lambda'})(\tilde\CP(H-\rho^2)\tilde\CP)^{-1}\left(-W(H'-\rho^2)^{-1}\right)^{R+1}\CP({\Lambda}),
\end{split}
\end{equation}
where $R:=\left[\frac{\dis({\Lambda})}{Q}\right]-2$. Since $W_{\bn_1\bn_2}=0$ when $|\bn_1-\bn_2|>Q$,  it is easy to see that the first term in the right hand side  is holomorphic when $\phi_1\in D(r)$. Indeed, if $|\bm|\le QR$, then we obviously have that $(\Lambda+\bm)\cap\check\CC_{\oldj}(\bn_0)=\emptyset$. 
Now, estimate \eqref{trivial2} follows from Lemma~\ref{Z0new},  \eqref{resclusterestshrink}, and \eqref{new5.74}. 

\end{proof}

The case $s=0$ is much simpler and completely analogous to Lemma 3.22 from \cite{KS}. Here we formulate the corresponding result (in a shorter and more convenient form) without proof.

\begin{lem}\label{s=0}
Let $s=0$. As a function of $\phi_1$, the resolvent $(H-\rho^2)^{-1}(\tilde\CC_{\oldj}(\bn_0);\nbka^{(0)})=(\tilde \CP(\nbka^{(0)})(H(\nbka^{(0)})-\rho^2)\tilde \CP(\nbka^{(0)}))^{-1}$ has at most $2$ poles in the disc $|\phi_1|<E^{-2}$. If $\varepsilon\leq E^{-2-\sigma_{\oldj,0}}$ is the distance to the nearest pole, then we have: 
\begin{equation}\label{resclusterestshrink0}
\|(H(\tilde\CC_{\oldj}(\bn_0);\nbka^{(0)})-\rho^2)^{-1}\|\leq \|\bn_0\vec\boldom\|^{-1}
\varepsilon^{-2}.
\end{equation}
\end{lem}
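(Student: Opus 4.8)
The plan follows Lemma~3.22 of \cite{KS}; the point is that for $s=0$ the cluster degenerates to (a translate of) a single $\Z$-ball carrying one resonant centre. Indeed, $\rank_\oldj(\bn_0)=0$ forces, via Definition~\ref{def3.11} and \eqref{GaZ-1}--\eqref{GaZ}, the relevant lattice $\Ga^\Z$ to be $\{0\}$ and the primitive pre-cluster to equal $\{0\}$, so, unwinding Definition~\ref{def5.19} and \eqref{7CC}, $\tilde\CC_\oldj(\bn_0)=\Om(\bn_0;E^{\sigma_{\oldj,0,1}})$. Since $C_0\sigma_{\oldj,0,1}<\sigma_{\oldj,1}$, every point of this cluster other than $\bn_0$ lies within $\Z$-distance $<E^{\sigma_{\oldj,1}}/2$ of $\bn_0$, hence is $E^{\sigma_{\oldj,1}}$-good by Lemma~\ref{Z0new} (with $s=0$); passing from $\bk^*$ to $\nbka^{(0)}$ (a shift of size $O(E^{-1})$, cf.\ \eqref{new17}) and letting $\phi_1$ run over $D(\tfrac32E^{-2})$ changes the corresponding diagonal entries of $H(\nbka^{(0)})$ only by $O(1)$, so there $|\,\|\nbka^{(0)}+\bn\vec\boldom\|_\R^2-\rho^2\,|>\tfrac14E^{\sigma_{\oldj,1}}$ for all $\bn\in\tilde\CC_\oldj(\bn_0)\setminus\{\bn_0\}$. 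Writing the matrix of $H(\tilde\CC_\oldj(\bn_0);\nbka^{(0)})$ in block form with respect to $\{\bn_0\}$ and its complement,
\bee
H(\tilde\CC_\oldj(\bn_0);\nbka^{(0)})=\begin{pmatrix} a(\phi_1) & b^{\,T}\\ b & \GB(\phi_1)\end{pmatrix},\qquad a(\phi_1):=\|\nbka^{(0)}+\bn_0\vec\boldom\|_\R^2,
\ene
the coupling $b$ has $\|b\|\le\|V\|$ and is supported within $\Z$-distance $Q$ of $\bn_0$, while $\GB(\phi_1)$ is the restriction of $H(\nbka^{(0)})$ to the good points of the cluster.

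First I would show, by the diagonally dominant resolvent series (the diagonal of $\GB$ is $\gg E^{\sigma_{\oldj,1}}$ away from $\rho^2$ while $\|V\|\ll E^{\sigma_{\oldj,1}}$), that $\GB(\phi_1)-\rho^2$ is invertible on $D(\tfrac32E^{-2})$ with norm of the inverse $\le CE^{-\sigma_{\oldj,1}}$. The Feshbach/Schur complement then exhibits $(H(\tilde\CC_\oldj(\bn_0);\nbka^{(0)})-\rho^2)^{-1}$ as meromorphic in $\phi_1$ whose poles in $D(\tfrac32E^{-2})$ are exactly the zeros of the scalar holomorphic function $s(\phi_1):=a(\phi_1)-\rho^2-b^{\,T}(\GB(\phi_1)-\rho^2)^{-1}b$, and gives
\bee
\|(H(\tilde\CC_\oldj(\bn_0);\nbka^{(0)})-\rho^2)^{-1}\|\le \frac{2}{|s(\phi_1)|}+CE^{-\sigma_{\oldj,1}},\qquad s(\phi_1)=\bigl(a(\phi_1)-\rho^2\bigr)+\CE(\phi_1),\ \ |\CE|\le CE^{-\sigma_{\oldj,1}}.
\ene

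The second step is to recognise $a(\phi_1)-\rho^2$ as a tiny holomorphic perturbation of a quadratic polynomial in $\phi_1$. Choosing the coordinates of \eqref{bk} so that $\bk^*$ and $\bxi_0=\bk^*+\bn_0\vec\boldom$ lie in the $(\phi_1,\phi_d)$-plane, so that $\bn_0\vec\boldom=(\eta_1,0,\dots,0,\eta_d)$ with $\eta_1^2+\eta_d^2=\|\bn_0\vec\boldom\|^2$, and writing $\nbka^{(0)}(\vec\phi)=\nka^{(0)}(\vec\phi)\BPsi(\vec\phi)$, we have
\bee
a(\phi_1)=(\nka^{(0)})^2+\|\bn_0\vec\boldom\|^2+2\nka^{(0)}\Bigl(\eta_1\phi_1+\eta_d\sqrt{1-\phi_1^2-|\hat\phi|^2}\Bigr);
\ene
expanding the square root and using \eqref{new17} to absorb the $\phi_1$-dependence of $\nka^{(0)}$, this equals $P(\phi_1)+R(\phi_1)$ with $|R|\le CE^{-3}$ on $D(\tfrac32E^{-2})$ and
\bee
P(\phi_1)=\bigl(a(0)-\rho^2\bigr)+2\nka^{(0)}\eta_1\,\phi_1-\frac{\nka^{(0)}\eta_d}{\sqrt{1-|\hat\phi|^2}}\,\phi_1^2,
\ene
where, crucially, $\max\bigl(2\nka^{(0)}|\eta_1|,\ \nka^{(0)}|\eta_d|\bigr)\ge \nka^{(0)}\|\bn_0\vec\boldom\|/\sqrt2\gg E\|\bn_0\vec\boldom\|$ because $\bn_0\vec\boldom$ has no $\hat\phi$-components. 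Since $\deg P\le 2$, choosing (as in the case $s\ge1$) a radius $r\in[E^{-2},\tfrac32E^{-2}]$ whose circle stays $\tfrac1{10}E^{-2}$-away from the at most two zeros of $P$, and applying Rouch\'e to $s=P+(R+\CE)$ on $\partial D(r)$, shows that $s$ has at most two zeros in $D(E^{-2})$ — i.e.\ at most two poles of the resolvent there.

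Finally, at a point $\phi_1\in D(E^{-2})$ realising a distance $\varepsilon\le E^{-2-\sigma_{\oldj,0}}$ to the nearest pole, I would factor $s=g\prod_k(\phi_1-\zeta_k)$ over its at most two zeros $\zeta_k$ in $D(r)$, with $g$ non-vanishing; the maximum principle for $1/g$ together with the lower bound for $|s|$ on $\partial D(r)$ coming from the two displays above — which is $\gg E\|\bn_0\vec\boldom\|(E^{-2})^2$, or $\gg E\|\bn_0\vec\boldom\|E^{-2}$ when $s$ has a single zero in $D(r)$ — yields $|s(\phi_1)|\ge cE\|\bn_0\vec\boldom\|\prod_k|\phi_1-\zeta_k|\ge cE\|\bn_0\vec\boldom\|\,\varepsilon^2$ (each factor is $\ge\varepsilon$ and $\varepsilon<1$); the abstract Lemma~\ref{complexlemma} packages precisely this bookkeeping. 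Combining with the Schur bound and using $\|\bn_0\vec\boldom\|^{-1}\varepsilon^{-2}\ge E^3/3$ to absorb the harmless additive $CE^{-\sigma_{\oldj,1}}$ gives the claimed estimate. The one genuinely delicate point — which I expect to be the main obstacle — is the bookkeeping when $\|\bn_0\vec\boldom\|$ is a small negative power of $E$: then $a(\phi_1)-\rho^2$ barely varies over $D(E^{-2})$, and one must invoke the step-zero non-resonance of $\BPhi^*$ (Lemma~\ref{L:G1}, forcing $|\bn_0|>E^{\sigma_{\oldj,d-1,1}}$ for a cluster to exist at all) together with the Diophantine bound $\|\bn_0\vec\boldom\|>A|\bn_0|^{-\tilde\mu}$ and the smallness of the $\sigma$-exponents to conclude that either there is no pole in $D(E^{-2})$, so the statement is vacuous, or $a(0)-\rho^2=O(E^{-\sigma_{\oldj,1}})$, which keeps the quadratic-perturbation argument valid; this is carried out exactly as in \cite{KS}.
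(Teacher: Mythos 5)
First, a point of reference: the paper does not prove Lemma~\ref{s=0} at all — it is stated "without proof" as being "completely analogous to Lemma 3.22 from \cite{KS}" — so there is no in-paper argument to measure you against. Your skeleton (reduce the $s=0$ cluster to $\Om(\bn_0;E^{\sigma_{\oldj,0,1}})$ with a single resonant site, pass to the Schur complement $s(\phi_1)=a(\phi_1)-\rho^2-b^{T}(\GB-\rho^2)^{-1}b$, identify poles with zeros of $s$, exploit the quadratic structure of $a(\phi_1)$, and finish with Rouch\'e and Lemma~\ref{complexlemma}) is the natural one and the structural reductions in your first paragraph are correct.

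The gap is in the Rouch\'e/lower-bound step, and it is broader than the "small $\|\bn_0\vec\boldom\|$" caveat you flag at the end. For Rouch\'e you need $|P|>|R+\CE|$ on $\partial D(r)$, and the same lower bound for $|s|$ on $\partial D(r)$ drives your final minimum-modulus estimate. But $\CE=b^{T}(\GB-\rho^2)^{-1}b$ is only $O(E^{-\sigma_{\oldj,1}})$, and even after absorbing $\CE(0)$ into the constant term of $P$, its variation over $D(\tfrac32E^{-2})$ is of order $E^{-1-2\sigma_{\oldj,1}}$ (each diagonal entry of $\GB$ has $\phi_1$-derivative of order $E$, since the neighbours $\bn_0+\bm$, $|\bm|\le Q$, carry first coordinates $(\bm\vec\boldom)_1=O(1)$). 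By contrast, the quadratic part of $P$ contributes at most $\nka|\eta_d|r^2\le C\|\bn_0\vec\boldom\|E^{-3}\le CE^{-2}$ on the disc — \emph{always} smaller than $E^{-1-2\sigma_{\oldj,1}}$ — and the linear part contributes $\sim|\eta_1|E^{-1}$, which beats the perturbation only when $|\eta_1|\gtrsim E^{-2\sigma_{\oldj,1}}$. So in the regime where the quadratic coefficient is the large one (and whenever $\|\bn_0\vec\boldom\|$ is moderate or small), $|P|$ on your contour is dominated by the terms you are treating as the error, Rouch\'e gives nothing, and neither the pole count nor the bound $|s|\gg E\|\bn_0\vec\boldom\|(E^{-2})^2$ on $\partial D(r)$ follows. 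Your proposed fix for the delicate case (Diophantine bound plus "either no pole or $a(0)-\rho^2=O(E^{-\sigma_{\oldj,1}})$") addresses only the constant term, not this variation mismatch. To close the argument one has to use two further inputs: (i) the correction depends on $\phi_1$ only through $\bxi(\phi_1)=\nbka^{(0)}(\vec\phi)+\bn_0\vec\boldom$, so subtracting the identity $\la^{(0)}(\nbka^{(0)}(\vec\phi))\equiv\rho^2$ for the unshifted cube makes every error term carry a factor $\|\bn_0\vec\boldom\|$ (the "prodigal son" mechanism, cf.\ Lemma~\ref{lemmasimple-n}); and (ii) the badness of $\bn_0$, i.e.\ $\|\bn_0\vec\boldom\|^2+2\langle\bk^*,\bn_0\vec\boldom\rangle=O(E^{\sigma_{\oldj,0}})$, pins down $\eta_d=-\|\bn_0\vec\boldom\|^2/(2k)+O(E^{\sigma_{\oldj,0}-1})$ and hence forces $|\eta_1|$ to be comparable to $\|\bn_0\vec\boldom\|$ outside of two degenerate configurations ($\|\bn_0\vec\boldom\|$ tiny, or $\bxi_0$ nearly antipodal to $\bk$) that require separate treatment. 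None of this is in your sketch, so as written it does not establish either the pole count or \eqref{resclusterestshrink0}.
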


\subsection{Operator restricted to several clusters}\label{section5.3}
Now, for the rest of this section we put $p=0$. In this subsection, we consider $\bk$  fixed (with all the clusters defined with $\bk$ as a `base point') and $\bkappa$ is within the distance $E^{-1}$ from $\bk$. The integer vector $\bn_0$ is no longer fixed, however. Instead, we consider all clusters of the form $\tilde\CC(\bn)$ defined in the previous subsection which have a non-trivial intersection with the ball $\Om(E^{r_{1,2}}/2)$ and label them as $\{\tilde\CC_{l}\}_{m=1}^M$. We also put 
\bee
\CP_m=\CP_{m}(\bk):=\CP(\tilde\CC_{m},\bk).
\ene
and
\bee\label{CP}
\CP_{\patch}=\CP_{\patch}(\bk):=\sum_m\CP_m
\ene
(`res' stands for `resonant'). 
\ber
Note that the dependence of $\CP_{\patch}$ on $\bk$ is absent in certain sense. Namely, suppose that $\bk'\in\BB(E^{-1},\bk)$. Then there is a natural isometry between $\GH(\bk)$ and $\GH(\bk')$ (the shift by $\bk'-\bk$). This isometry presents the unitary equivalence between $\CP_{\patch}(\bk)$ and $\CP_{\patch}(\bk')$. 
\enr
\ber
To avoid discussion about the clusters near the boundary of $\Omega(E^{r_{1,2}}/2)$, we will consider the following extended ball:
\bee\label{extball}
\hat K^{(1)}=\Om_{\ext}(0,E^{r_{1,2}}/2):=\Om(E^{r_{1,2}}/2)\bigcup \cup_m\tilde\CC_{m}
\ene
(called  the central cube at level one) 
and later consider the restriction of $H$ onto this central cube. 
We will use similar convention in our further steps as well. We also notice that for any $\bn\in\Omega(E^{r_{1,2}})$ we have $\|\bn\vec\boldom\|\geq E^{-\mu r_{1,2}}$.
\enr
The following result immediately follows from \eqref{intersecttilde}:
\bel\label{PHP*}
We have, for $m\ne m'$:
\bee
\CP_m\CP_{m'}=\CP_m V\CP_{m'}=0.
\ene
Moreover,
\bee\label{PHP}
\CP_{\patch} H\CP_{\patch}=\sum_m \CP_m H\CP_m.
\ene
The number of clusters can be trivially estimated by $E^{lr_{1,2}}$. We also note that since the point $\bk$ is assumed to have been good at step zero (i.e. $\bk\|\bk\|^{-1}\in\CG^{\BPhi (0)}$), the clusters cannot occur sooner than at the distance $\E^{\sigma_{1,d-1,1}}$ from $0$, which means that 
\bee
\CP^{(0)}(\bk)V\CP_{\patch}(\bk)=0.
\ene
\enl
\bec
We have:
\bee
\CP_{\patch}=\CP(\tilde\CC,\bk),
\ene
where of course we have denoted
\bee
\tilde\CC:=\cup_m \tilde\CC_{m}.
\ene
\enc
Recall that the coordinates $\vec\phi$ we have introduced around the point $\BPhi^*$ are dependent on the cluster $\tilde\CC_{j}$ we were considering: the first variable $\phi_1$ is going from $\BPhi^*$ towards this cluster. In order to emphasize this, we will write $\vec\phi_j=((\phi_1)_j,\hat\phi_j)$ to indicate the set of coordinates generated by the $j$-th cluster (of course, the centre of this cluster also depends on $j$: ($\BPhi^*=\BPhi^*_j$). Then, for each $j$ and each (real) value of $\hat\phi_j$ we define the good and  bad sets of (complex) $(\phi_1)_j$: the bad sets are complex $\varepsilon$-neighbourhoods of the poles  with $\varepsilon:=E^{-r_{1,3}}$.  The union of the real parts 
of these bad sets (paired with the corresponding $\hat\phi$) forms the resonant (bad) set of angles corresponding to cluster $j$; we denote it by $ \CNN_{j}^{\vec\phi (1)}
\subset\Pi_j^{(0)}$. 
The image on the sphere of the bad set under the mapping $\BPsi_j$  is called the bad set of spherical angles corresponding to the patch $(\Pi_j^{(0)},\BPsi_{j})$:
\bee
\CNN_{j}^{\sph (1)}:=\BPsi_{j}(\CNN_{j}^{\vec\phi (1)}),
\ene
and the overall bad set of spherical angles is of course 
\bee\label{sphericalpatch}
\CNN_{}^{\sph (1)}:=\cup_j\CNN_{j}^{\sph(1)}.
\ene
The superscript $(1)$ refers to the fact that these are sets introduced at the first step. Finally, the set of good angles is 
\bee\label{goodPhi1}
\CG_{}^{\sph (1)}=\CG_{}^{\sph (1)}(\rho):=\S\setminus \CNN^{\sph (1)}.
\ene

Recall that we have defined  
\bee \label{r1}
r_{1,3}:=E^{\sigma_0},\ \ \ r_{1,2}:= \cs^{2} r_{1,3},\ \ \ r_{1,1}:=\frac{\cs}{10} r_{1,2},
 \ene
 where $\cs$ is a small constant to be defined later. At the moment we use one specific property of this constant: $(\mu+d+l)\cs<1$. 
 We also choose $\varepsilon$ in Lemmas~\ref{resclustershrink} and \ref{s=0} to be equal $E^{-r_{1,3}}$. Then we have

\bel
The measures of bad spherical and angular sets satisfy
\bee\label{angle1}
\meas(\CNN_j^{\vec\phi (1)})\ll E^{-2(d-2)-r_{1,3}+d^2(l+\mu)\sigma_{0,d-1}}<E^{-5r_{1,3}/6}
\ene
and
\bee\label{angle2}
\meas(\CNN_j^{\sph (1)})<E^{-5r_{1,3}/6}.
\ene
\enl
Finally, the good complex set of angles in patch $(\Pi_j^{(0)},\BPsi_j)$ is the $E^{-r_{1,3}E^{2d^2(l+\mu)\sigma_{0,d-1}}}$-neighbourhood of 
$\CG_{j}^{\vec\phi(1)}$ in $\Pi_{j,\C}^{(0)}$:
\bee\label{good1}
\CG_{j,\C}^{\vec\phi(1)}:=\{\vec\phi\in\Pi_{j,\C}^{(0)}:\ 
d(\CG_{j}^{\vec\phi (1)},\vec\phi)<E^{-r_{1,3}E^{2d^2(l+\mu)\sigma_{0,d-1}}}\}.
\ene 
Obviously, perturbations of size $E^{-r_{1,3}E^{2d^2(l+\mu)\sigma_{0,d-1}}}$ preserve the estimates of the resolvent from Lemmas~\ref{resclustershrink} and \ref{s=0} with, possibly, an extra factor $2$. This and basic perturbation theory imply: 
\begin{lem}\label{respatchshrink}
Suppose, $\vec\phi\in\CG_{j,\C}^{\vec\phi (1)}$ and $\k\in\C$ satisfies
\bee
|\nka-\nka^{(0)}(\vec\phi)|<E^{-r_{1,3}E^{2d^2(l+\mu)\sigma_{0,d-1}}}. 
\ene
Then
\begin{equation}\label{respatchesshrink}
\|( \CP_{\patch}(H(\nbka)-\rho^2) \CP_{\patch})^{-1}\|\le 
E^{r_{1,3}5E^{d^2(l+\mu)\sigma_{0,d-1}}}
\end{equation}
and
\begin{equation}\label{respatchesshrink1}
\|( \CP_{\patch}(H(\nbka)-\rho^2) \CP_{\patch})^{-1}\|_1\le
E^{lr_{1,2}+r_{1,3}5E^{d^2(l+\mu)\sigma_{0,d-1}}}.
\end{equation}
Here, of course, $\nbka:=\nka\BPsi_j(\vec\phi)$. 
\end{lem}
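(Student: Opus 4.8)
The plan is to reduce the statement to the single-cluster resolvent bounds of Lemmas~\ref{resclustershrink} and \ref{s=0}, exploiting that $\CP_{\patch}$ projects onto a union of pairwise well-separated clusters. First I would record the block structure: by Lemma~\ref{PHP*} (which rests on the separation \eqref{intersecttilde}) we have $\CP_m\CP_{m'}=\CP_m V\CP_{m'}=0$ for $m\ne m'$, and since $\rho^2\CP_{\patch}=\sum_m\rho^2\CP_m$,
\begin{equation*}
\CP_{\patch}(H(\nbka)-\rho^2)\CP_{\patch}=\bigoplus_m\big(H(\tilde\CC_m;\nbka)-\rho^2\big)
\end{equation*}
with respect to the orthogonal decomposition $\GH_{\tilde\CC;\nbka}=\bigoplus_m\GH_{\tilde\CC_m;\nbka}$. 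Consequently
\begin{align*}
\big\|(\CP_{\patch}(H(\nbka)-\rho^2)\CP_{\patch})^{-1}\big\|&=\max_m\big\|(H(\tilde\CC_m;\nbka)-\rho^2)^{-1}\big\|,\\
\big\|(\CP_{\patch}(H(\nbka)-\rho^2)\CP_{\patch})^{-1}\big\|_1&=\sum_m\big\|(H(\tilde\CC_m;\nbka)-\rho^2)^{-1}\big\|_1,
\end{align*}
so it remains to bound each single-cluster resolvent and to count the clusters together with their cardinalities.

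For a cluster $\tilde\CC_m$ of rank $s\ge1$ I would apply Lemma~\ref{resclustershrink} with $\varepsilon=E^{-r_{1,3}}$. For $\vec\phi\in\CG^{\vec\phi(1)}_{j,\C}$ the point $\vec\phi$ lies, by \eqref{good1}, within $E^{-r_{1,3}E^{2d^2(l+\mu)\sigma_{0,d-1}}}\ll\varepsilon$ of the real good set, on which each pole is at distance at least $\varepsilon$; hence $\vec\phi$ lies outside $\OO(\varepsilon/2)$, and since the resolvent is holomorphic away from its poles the maximum principle turns the bound \eqref{resclusterestshrink} on $\partial\OO(\varepsilon/2)$ into the same bound at $\vec\phi$. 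Inserting $E^{-2}/(\varepsilon/2)=2E^{r_{1,3}-2}$ and using $\sigma_{0,s}\le\sigma_{0,d-1}$, $\sigma_{0,s,1}\le\sigma_{0,d-1,1}$ and $r_{1,3}=E^{\sigma_0}$ (so that the prefactor $E^{2(l+\mu+3)\sigma_{0,d-1,1}}$ and the $-2$ in the exponent are swamped by the main term), this gives $\|(H(\tilde\CC_m;\nbka^{(0)})-\rho^2)^{-1}\|\le E^{r_{1,3}5E^{d^2(l+\mu)\sigma_{0,d-1}}}$. For a cluster of rank $0$ I would use Lemma~\ref{s=0} with the same $\varepsilon$, together with the lower bound $\|\bn\vec\boldom\|\ge E^{-\mu r_{1,2}}$ for nonzero $\bn\in\Om(E^{r_{1,2}})$ recorded just after \eqref{extball}; this yields the much smaller bound $E^{\mu r_{1,2}+2r_{1,3}}$.

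Next I would pass from $\nbka^{(0)}$ to a general $\nbka=\nka\BPsi_j(\vec\phi)$ with $|\nka-\nka^{(0)}(\vec\phi)|<E^{-r_{1,3}E^{2d^2(l+\mu)\sigma_{0,d-1}}}$ and $\vec\phi\in\CG^{\vec\phi(1)}_{j,\C}$. Such a displacement changes the diagonal entries $\|\nbka+\bn\vec\boldom\|^2_{\R}$ of $H(\tilde\CC_m;\cdot)$ — where $\bn$ ranges over a cluster, so $\|\bn\vec\boldom\|\ll E^{r_{1,2}}$ because $\tilde\CC_m$ meets $\Om(E^{r_{1,2}}/2)$ and has $\Z$-diameter $\ll E^{C\sigma_{0,d-1,1}}$ — by at most $CE^{r_{1,2}}E^{-r_{1,3}E^{2d^2(l+\mu)\sigma_{0,d-1}}}$, which is far below the reciprocal of the bound just obtained; the resolvent identity (equivalently a Neumann series in the diagonal perturbation) then preserves all these estimates up to a factor $2$, absorbed into the exponent. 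The dependence of $\nka^{(0)}(\vec\phi)$ on $\vec\phi$, controlled by \eqref{new17}, is handled the same way. Taking the maximum over $m$ gives \eqref{respatchesshrink}. For the trace-norm bound I would apply $\|A\|_1\le(\dim A)\,\|A\|$ on each block: by Lemma~\ref{PHP*} there are at most $E^{lr_{1,2}}$ clusters, and by \eqref{importantest} together with the super-extension \eqref{superext} each $\tilde\CC_m$ has at most $E^{C\sigma_{0,d-1,1}}$ elements, so
\begin{equation*}
\big\|(\CP_{\patch}(H(\nbka)-\rho^2)\CP_{\patch})^{-1}\big\|_1\le E^{lr_{1,2}}\cdot E^{C\sigma_{0,d-1,1}}\cdot E^{r_{1,3}5E^{d^2(l+\mu)\sigma_{0,d-1}}}\le E^{lr_{1,2}+r_{1,3}5E^{d^2(l+\mu)\sigma_{0,d-1}}},
\end{equation*}
the factor $E^{C\sigma_{0,d-1,1}}$ disappearing since $r_{1,3}=E^{\sigma_0}$ dominates $C\sigma_{0,d-1,1}$.

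Given Lemmas~\ref{PHP*}, \ref{resclustershrink} and \ref{s=0}, the argument is essentially bookkeeping, and the one point that needs care is precisely this accounting of exponents: checking that the choice $\varepsilon=E^{-r_{1,3}}$ reproduces exactly the constant $5$ in \eqref{respatchesshrink}--\eqref{respatchesshrink1}, and that every polynomial-in-$E^{\sigma}$ factor (the cluster count, the cluster cardinality, the $E^{2(l+\mu+3)\sigma_{0,d-1,1}}$ prefactor, and the factor $2$ lost in perturbation) is swamped by the dominant term $E^{r_{1,3}5E^{d^2(l+\mu)\sigma_{0,d-1}}}$ — which relies on $r_{1,3}=E^{\sigma_0}$ growing with $E$.
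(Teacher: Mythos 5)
Your proposal is correct and follows exactly the route the paper intends: the paper gives essentially no written proof beyond "perturbations of size $E^{-r_{1,3}E^{2d^2(l+\mu)\sigma_{0,d-1}}}$ preserve the estimates of the resolvent from Lemmas~\ref{resclustershrink} and \ref{s=0}... This and the basic perturbation theory imply", and you have filled in precisely that argument — block-diagonalisation via Lemma~\ref{PHP*}, the single-cluster bounds with $\varepsilon=E^{-r_{1,3}}$, stability under the small perturbation of $\nbka$, and the dimension count for the trace norm.
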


\subsection{Good and Bad angles for Step one}
Now we extend the good and bad sets from one patch onto the entire sphere $\S$. Recall that the sphere is covered by the patches  
$\CA^{\BPhi(0)}_j\subset \S$ so that $\S=\cup_j \CA^{\BPhi(0)}_j$ and 
 each patch $\CA^{\BPhi(0)}_j$
 is the image of $\Pi_j^{(0)}$ under the mapping $\BPhi=\BPsi_j(\vec\phi)$; each $\Pi_j^{(0)}$ is a neighbourhood of the origin of diameter smaller than $E^{-2}$ (the size of the `patch' in $\bk$ was $E^{-1}$, which corresponds to the size $E^{-2}$ of patches in $\BPhi$), and the mapping $\BPsi_j$ has the following form in a proper coordinate system: 
\bee\label{BPsil}
\BPsi_j(\phi_1,\dots,\phi_{d-1})=(\phi_1,\dots,\phi_{d-1},\sqrt{1-\phi_1^2-\dots-\phi_{d-1}^2})_j
\ene
(the index $j$ in the RHS means that this expression is considered in the natural coordinates around $\BPhi_j^*:=\BPsi_j(0)$ -- see \eqref{coordinates2} and the discussion afterwards). 


Summing estimates \eqref{angle1} for all patches, we obtain
\bel
The measure of the bad spherical  set satisfies
\bee\label{angle2n}
\meas(\CNN_{}^{\sph (1)})<E^{-r_{1,3}/2}.
\ene
\enl

We will also need the estimates for the resolvent in the
neighbourhood of $0$. Recall that the projection $\CP_{\patch}$ is defined in \eqref{CP}. 
Let $\cont_1$ be a circle in the complex plane:
 \begin{equation}  \label{C-2} 
 \cont_1=\{z\in \C: |z-\rho^{2}|=
\frac 12 E^{-r_{1,3}5E^{d^2(l+\mu)\sigma_{0,d-1}}}\}.
\end{equation}
Using the definition of $\bka^{(0)}(\vec\phi)$, 
we obtain the following lemma; recall that $H^{(0)}$ is the zero step `central box restriction' of $H$ defined by \eqref{newH_0}.
\bel\label{estnonres0} 
Let
$\vec\phi\in\CG_{j,\C}^{\vec\phi (1)}$ for some $j$, $\nka\in\C$: $|\nka-\k^{(0)}(\vec\phi)|<E^{-r_{1,3}E^{2d^2(l+\mu)\sigma_{0,d-1}}}$ and $z\in \cont_1$. Then,
\bee\label{new5.98} 
\|(H^{(0)}(\bka)-z)^{-1}
\|
\leq
4E^{r_{1,3}5E^{d^2(l+\mu)\sigma_{0,d-1}}}. 
\ene 
\enl
The proof is completely analogous to the proof of the corresponding Lemma 3.21 from \cite{KaSh} and we omit it here.

Let
\begin{equation}\label{defP_j} 
\tilde \CP_{\patch}(\nbka):=\CP_{\patch}(\nbka)+\CP^{(0)}(\nbka).  
\end{equation}
\bel 
Let
$\vec\phi\in\CG_{j,\C}^{\vec\phi (1)}$ for some $j$, $\nka\in\C$: $|\nka-\k^{(0)}(\vec\phi)|<E^{-r_{1,3}E^{2d^2(l+\mu)\sigma_{0,d-1}}}$ and $z\in \cont_1$.
Then 
\begin{equation} \label{May29-14a}
\CP_{\patch}V\CP^{(0)}=0, 
\end{equation}
\begin{equation} \label{May29-14b}
\tilde \CP_{\patch}H(\nbka)\tilde
\CP_{\patch}=  H^{(0)}(\nbka)+\CP_{\patch}H(\nbka)\CP_{\patch}
\end{equation} 
and
\begin{equation}\label{estfull}
\left\|\left(\tilde\CP_{\patch}\big(H(\nbka)-z\big)\tilde\CP_{\patch}\right)^{-1}\right\|\leq
4E^{r_{1,3}5E^{d^2(l+\mu)\sigma_{0,d-1}}}.
\end{equation}
\enl
%
\begin{proof} The proof is a straightforward corollary of \eqref{respatchesshrink}, \eqref{C-2} and \eqref{new5.98}. 
\end{proof}

\section{Step one}\label{section5}
{\newred
In this section, we will study our operator $H$ as a perturbation of the operator restricted to clusters constructed in the previous Section. Since the distance between these clusters is relatively large, multiple applications of the resolvent identity will allow us to prove that the error in such approximation is exponentially small. This is done in Theorem \ref{Thm2}. Later, in  
Lemma \ref{ldk-2}, we study the properties of the isoenergetic surface obtained at this step. 
}
\subsection{Operator $H^{(1)}$. Perturbation Formulas}

Throughout this section, we set $p=0$. Denote 
\bee
\CP^{(1)}=\CP^{(1)}(\nbka)=\CP(\hat K^{(1)};\nbka)
\ene
and
\bee
H^{(1)}=H^{(1)}(\nbka)=H(\hat K^{(1)};\nbka)
\ene
(recall that $\hat K^{(1)}$ was defined in \eqref{extball}).

Let us fix the patch $(\Pi_j^{(0)},\BPsi_j)$ centred at $\BPhi^*_j$ and assume  $\vec\phi \in\CG_{j,\C}^{\vec\phi (1)}$.  We
consider $H^{(1)}(\nbka)$ as a perturbation of
\begin{equation}\label{gulf1} 
\tilde H^{(1)}:=\tilde \CP_{\patch}H(\bka)\tilde
\CP_{\patch}+\left(\CP^{(1)}-\tilde \CP_{\patch}\right)H_0(\bka)\left(\CP^{(1)}-\tilde \CP_{\patch}\right),
\end{equation}
where $\tilde \CP_{\patch}$ is defined in \eqref{defP_j}.
By \eqref{May29-14b} and \eqref{PHP},  the first term on the right-hand side of \eqref{gulf1} has a block structure. 
The second term in \eqref{gulf1} is, obviously, diagonal. Thus, $\tilde H^{(1)}$ has a block-diagonal structure.
Let $W$ be the perturbation, i.e. 
\begin{equation}
W:=H^{(1)}-\tilde H^{(1)}=\CP^{(1)}V\CP^{(1)}-\tilde \CP_{\patch}V\tilde \CP_{\patch}. \label{W}\end{equation}
By analogy with \eqref{g}, \eqref{G}, we put:
\begin{equation}\label{g2} g^{(1)}_r({\nbka}):=\frac{(-1)^r}{2\pi
ir}\hbox{Tr}\oint_{\cont_1}\left(W(\tilde
H^{(1)}({\nbka})-zI)^{-1}\right)^rdz,
\end{equation} \begin{equation}\label{G2}
G^{(1)}_r({\nbka}):=\frac{(-1)^{r+1}}{2\pi i}\oint_{\cont_1}(\tilde
H^{(1)}({\nbka})-zI)^{-1}\left(W(\tilde
H^{(1)}({\nbka})-zI)^{-1}\right)^rdz
\end{equation}
(recall that the contour $\cont_1$ is defined in \eqref{C-2}). 

The next theorem is the analogue of Theorem 4.1 from \cite{KaSh}.

\begin{thm} \label{Thm2} Suppose, $\vec\phi\in \CG_{j,\C}^{\vec\phi(1)}\cap\R^{d-1}$, 
$\nka\in\R$,
$|\nka-\k^{(0)}(\vec\phi )|\leq E^{-r_{1,3}E^{2d^2(l+\mu)\sigma_{0,d-1}}}$,
$\nbka=\nka\BPsi_j(\vec\phi)$. Then, for sufficiently
large $E$, there exists a single eigenvalue of
$H^{(1)}({\nbka})$ in the interval\\ 
\bee
I_1:=\left( \rho^2-E^{-r_{1,3}E^{2d^2(l+\mu)\sigma_{0,d-1}}+1},
\rho^{2}+E^{-r_{1,3}E^{2d^2(l+\mu)\sigma_{0,d-1}}+1}\right). 
\ene
It is given by the absolutely
converging series:
\begin{equation}\label{eigenvalue-2}\lambda^{(1)}({\nbka})=\lambda^{(0)}({\nbka})+
\sum\limits_{r=2}^\infty g^{(1)}_r({\nbka}).
\end{equation} 
For
coefficients $g^{(1)}_r({\nbka})$ the following estimates hold:
\begin{equation}\label{estg2} |g^{(1)}_r({\nbka})|<E^{-E^{\sigma_{1,d-1,1}}(2Q)^{-1}}
E^{-\sigma_0r/4}.
\end{equation}
The corresponding spectral projection is given by the series:
\begin{equation}\label{sprojector-2}
\E ^{(1)}({\nbka})=\E^{(0)}({\nbka})+\sum\limits_{r=1}^\infty
G^{(1)}_r({\nbka})
\end{equation} 
(recall that $\E^{(0)}({\nbka})$ is the spectral
projection of $H^{(0)}(\nbka)$). The operators $G^{(1)}_r({\nbka})$ satisfy
the estimates:
\begin{equation}
\label{Feb1a} \left\|G^{(1)}_r({\nbka})\right\|_1<E^{-E^{\sigma_{1,d-1,1}}(4Q)^{-1}}
E^{-\sigma_0r/4}
\end{equation}
and
\begin{equation}\label{Feb6a}
G^{(1)}_r({\nbka})_{\bn\bn'}=0,\ \ \mbox{if}\ \ 4\sqrt{d}\cdot rE^{\sigma_{1,d-1,1}}<|\bn|+|\bn'|. 
\end{equation}
Coefficients $g^{(1)}_r({\nbka})$ and operators
$G^{(1)}_r({\nbka})$ can be analytically extended to the complex neighbourhood $\CG^{\vec\phi (1)}_{j,\C}$ as functions of $\vec\phi $ and to the complex $E^{-r_{1,3}E^{2d^2(l+\mu)\sigma_{0,d-1}}}-$
neighbourhood of $\nka^{(0)}(\vec\phi )$ as functions
of $\k$, estimates \eqref{estg2}, \eqref{Feb1a}
being preserved.
\end{thm}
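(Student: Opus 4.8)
The plan is to run the same resolvent-expansion machinery as in Theorem~\ref{Thm1}, but now with the unperturbed operator replaced by the block-diagonal operator $\tilde H^{(1)}$ from \eqref{gulf1} and the perturbation being $W$ from \eqref{W}. First I would establish the key resolvent bound: for $z\in\cont_1$ and $\vec\phi,\nka$ in the stated range, one has $\|(\tilde H^{(1)}(\nbka)-z)^{-1}\|\le 4E^{r_{1,3}5E^{d^2(l+\mu)\sigma_{0,d-1}}}$. This follows from the block-diagonal structure of $\tilde H^{(1)}$: on the block $\tilde\CP_{\patch}$ the bound is exactly \eqref{estfull}, while on the complementary block $(\CP^{(1)}-\tilde\CP_{\patch})H_0(\CP^{(1)}-\tilde\CP_{\patch})$ the operator is diagonal with entries $\|\nbka+\bn\vec\boldom\|_{\R}^2$ for $\bn\in\hat K^{(1)}$ that are neither in the central cube nor in any cluster, hence $E^{\sigma_{0,d-1}}$-good by Remark~\ref{neweverything}, so $|\,\|\nbka+\bn\vec\boldom\|_{\R}^2-z\,|\ge E^{\sigma_{1,d-1,1}}/2-\frac12 E^{-r_{1,3}5E^{\dots}}-O(E^{-1})\ge E^{\sigma_{1,d-1,1}}/4$ on $\cont_1$. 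The minimum of the two bounds gives the claimed estimate.

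Next I would control the size of $W$ acting between the resolvent factors. The crucial point is that $W$ couples a point $\bn$ to $\bn'$ only when $|\bn-\bn'|\le Q$, and — by \eqref{May29-14a}, \eqref{PHP*}, and Lemma~\ref{PHP*} — the only nonzero contributions to $g^{(1)}_r$ and $G^{(1)}_r$ come from paths that must travel between different blocks of $\tilde H^{(1)}$. Since distinct clusters are separated by $\gg E^{\sigma_{0,s+1}}$ in $\Z$-distance (see \eqref{intersecttilde}) and the central cube is separated from all clusters by $\gtrsim E^{\sigma_{1,d-1,1}}$, any path of length $r$ in $Q$-steps that leaves and re-enters a block must have $r\gtrsim E^{\sigma_{1,d-1,1}}/Q$. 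A single off-diagonal $W$-step changes the $\Z$-distance by at most $Q$, while each resolvent factor on a good region contributes a gain of roughly $E^{-\sigma_0}$ (the cheapest such factor, coming from the diagonal good points in $\hat K^{(1)}\setminus(\hat K^{(0)}\cup\tilde\CC)$, which sit in the spherical layer of width $E^{\sigma_{0,0}}$). Combining these observations: the smallest $r$ for which $g^{(1)}_r$ is nonzero is of order $E^{\sigma_{1,d-1,1}}(2Q)^{-1}$, and each further step costs a factor $\ll E^{-\sigma_0/4}$ (here the extra factor $1/4$ absorbs the polynomial losses from the number of paths, the dimension $l$, and the resolvent prefactors in \eqref{estfull}). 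This yields \eqref{estg2}; the analogous bookkeeping with an extra resolvent factor on the outside gives the trace-norm bound \eqref{Feb1a}, and tracking the $Q$-step structure of each path gives the support property \eqref{Feb6a} with the constant $4\sqrt d$ coming from converting $\|\cdot\|$-balls in $\R^d$ to $\Z$-cubes.

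Then I would assemble the standard consequences. Writing $\E^{(1)}(\nbka)=-\frac{1}{2\pi i}\oint_{\cont_1}(H^{(1)}(\nbka)-z)^{-1}dz$ and expanding $(H^{(1)}-z)^{-1}=(\tilde H^{(1)}-z)^{-1}\sum_{r\ge0}(-W(\tilde H^{(1)}-z)^{-1})^r$, which converges in operator (indeed trace) norm by the two bounds above, one gets \eqref{sprojector-2} upon taking the residue, with the $r=0$ term reproducing $\E^{(0)}(\nbka)$ because inside $\cont_1$ the spectrum of $\tilde H^{(1)}$ near $\rho^2$ is exactly the single eigenvalue $\lambda^{(0)}(\nbka)$ (the cluster blocks contribute no spectrum in $\cont_1$ by \eqref{respatchesshrink}, and the diagonal good block none by the bound above). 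Since $\tr\E^{(1)}=\tr\E^{(0)}=1$, $H^{(1)}$ has exactly one eigenvalue in the disc bounded by $\cont_1$, hence in $I_1$; its value is $\lambda^{(1)}(\nbka)=\tr(H^{(1)}\E^{(1)})$, which after the standard contour manipulation (cf.\ \cite{Ka}) equals $\lambda^{(0)}(\nbka)+\sum_{r\ge2}g^{(1)}_r(\nbka)$, with $g^{(1)}_1=0$ because the trace of $W(\tilde H^{(1)}-z)^{-1}$ integrates to zero (no diagonal contribution survives the residue, exactly as $g^{(0)}_1=0$ in Theorem~\ref{Thm1}). Analyticity in $\vec\phi$ and $\nka$ on the complexified domain follows term-by-term since every ingredient is analytic and the series converge uniformly.

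The main obstacle is the combinatorial/geometric estimate in the second paragraph: one must carefully verify that every monomial $W(\tilde H^{(1)}-z)^{-1}\cdots W(\tilde H^{(1)}-z)^{-1}$ with $r<E^{\sigma_{1,d-1,1}}(2Q)^{-1}$ vanishes, and that for larger $r$ the product of the "cheap" resolvent gains genuinely beats the growth in the number of admissible paths and the resolvent prefactors $E^{r_{1,3}5E^{d^2(l+\mu)\sigma_{0,d-1}}}$ from \eqref{estfull}. This is exactly the kind of multiple-resolvent-identity bookkeeping carried out in Lemma~\ref{resclustershrink} and in \cite{KaSh}, Theorem~4.1, so I would follow that template: decompose $\hat K^{(1)}$ into the central cube, the (separated) clusters, and the diagonal good remainder; classify the off-diagonal steps of $W$ by which regions they bridge; and use \eqref{intersecttilde} together with the lower bound $\sigma_{1,d-1,1}\gg\sigma_0$ (guaranteed by \eqref{7sigma} and the ordering of the $\sigma_\iota$'s) to force the required number of expensive transitions. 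Everything else — convergence, the trace identities, analytic continuation — is routine and parallels Theorem~\ref{Thm1} verbatim.
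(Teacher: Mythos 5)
Your first paragraph (the bound $\|(\tilde H^{(1)}-z)^{-1}\|\ll E^{r_{1,3}5E^{d^2(l+\mu)\sigma_{0,d-1}}}$ on $\cont_1$ via the block structure and \eqref{estfull}) and your third paragraph (residue calculus, $\tr\E^{(1)}=1$, $g^{(1)}_1=0$) match the paper. The genuine gap is in your second paragraph: the mechanism you propose for the exponential prefactor in \eqref{estg2} and \eqref{Feb1a} — that every monomial $(W(\tilde H^{(1)}-z)^{-1})^r$ with $r\lesssim E^{\sigma_{1,d-1,1}}(2Q)^{-1}$ \emph{vanishes} because a $Q$-banded path cannot bridge the well-separated blocks — is false. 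The resolvent $(\tilde H^{(1)}-z)^{-1}$ is block-diagonal but \emph{not banded inside a block}: a single resolvent factor on the central block $\CP^{(0)}H\CP^{(0)}$ connects a site at the boundary of $\hat K^{(0)}$ directly to the origin. So already $\mathrm{Tr}\oint_{\cont_1}(W(\tilde H^{(1)}-z)^{-1})^2\,dz\neq 0$ in general, and no "minimal path length" argument can produce the factor $E^{-E^{\sigma_{1,d-1,1}}(2Q)^{-1}}$.

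The correct source of that factor — which is the heart of the paper's proof and is absent from your proposal — is the decay of the matrix elements of the \emph{zeroth-step spectral projection}, estimate \eqref{matrix elements} of Theorem \ref{Thm1}. One writes $A:=W(\tilde H^{(1)}-z)^{-1}=A_0+A_1+A_2$ with $A_0:=(\CP^{(1)}-\E^{(0)})A(\CP^{(1)}-\E^{(0)})$, etc.; $A_0$ is holomorphic inside $\cont_1$, so $\oint(\tilde H^{(1)}-z)^{-1}A_0^r\,dz=0$ and every surviving term contains $A_1$ or $A_2$. The key bound is then $\|\E^{(0)}W\CP'\|_1<\|V\|E^{-E^{\sigma_{1,d-1,1}}(3Q)^{-1}}$ (see \eqref{Feb6}): $W$ only couples $\E^{(0)}$ to sites within $Q$ of the boundary of $\hat K^{(0)}$, where $|\E^{(0)}_{\bn\bn''}|<E^{-\dis^{(0)}(\bn,\bn'')}$ is already exponentially small. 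The factor $E^{-\sigma_0 r/4}$ is a \emph{separate} mechanism, namely $\|\tilde A\|<E^{-\sigma_0}$ (estimate \eqref{||A||2}), which is itself not automatic: proving $\|\CP'\tilde A\tilde\CP_{\patch}\|\ll E^{-2\sigma_0}$ requires the multiple-resolvent-identity expansion \eqref{May11-13b}, where the $Q$-banded structure of $V$ \emph{is} used, but only to show that the truncated expansion of $(\tilde H^{(1)}-z)^{-1}\CP^{(0)}$ seen from the boundary region never reaches the origin and hence only visits sites with large diagonal. Your proposal conflates these two mechanisms into one, and the one it relies on does not work. Relatedly, for \eqref{Feb6a} the relevant displacement per factor is the block size $\tfrac12E^{\sigma_{1,d-1,1}}$ of $\tilde H^{(1)}$, not the $Q$-step structure of $W$.
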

\begin{cor} \label{corthm2} 
For the perturbed eigenvalue and its spectral
projection the following estimates hold:
 \begin{equation}\label{perturbation-2}
\lambda^{(1)}({\nbka})=\lambda^{(0)}({\nbka})+ O\left(E^{-E^{\sigma_{1,d-1,1}}(2Q)^{-1}}\right),
\end{equation}
\begin{equation}\label{perturbation*-2}
\left\|\E^{(1)}({\nbka})-\E^{(0)}({\nbka})\right\|_1<E^{-E^{\sigma_{1,d-1,1}}(4Q)^{-1}},
\end{equation}
\begin{equation}
\left|\E^{(1)}({\nbka})_{\bn\bn'}\right|<E^{-\dis^{(1)}(\bn,\bn')}\ \
\mbox{when}\ |\bn|>4\sqrt{d}\,E^{\sigma_{1,d-1,1}} \mbox{\ or }
|\bn'|>4\sqrt{d}\,E^{\sigma_{1,d-1,1}},\label{Feb6b}
\end{equation}
where we have defined 
$$\dis^{(1)}(\bn,\bn'):=\frac{\sigma_0}{16\sqrt{d}}(|\bn|+|\bn'|)E^{-\sigma_{1,d-1,1}}+E^{\sigma_{1,d-1,1}
}(4Q)^{-1}.$$
\end{cor}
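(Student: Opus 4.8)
The plan is to obtain all three estimates as immediate consequences of the series expansions and the quantitative per-term bounds established in Theorem~\ref{Thm2}, together with the fact that $\E^{(0)}$ is localized on the central cube $\hat K^{(0)}=\Omega(\tfrac12 E^{\sigma_{1,d-1,1}})$. Throughout, all geometric series in $r$ converge because $E^{-\sigma_0/4}$ is small for $E\ge E_*$ by \eqref{E_*}, so each tail is bounded by a fixed multiple of its leading term.

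First I would prove \eqref{perturbation-2}: by \eqref{eigenvalue-2} the difference $\lambda^{(1)}(\nbka)-\lambda^{(0)}(\nbka)$ equals $\sum_{r\ge2}g^{(1)}_r(\nbka)$, and summing the bound \eqref{estg2} over $r$ gives a geometric series with sum $\ll E^{-E^{\sigma_{1,d-1,1}}(2Q)^{-1}}$, which is exactly \eqref{perturbation-2}. The estimate \eqref{perturbation*-2} is obtained in the same way from \eqref{sprojector-2} and the trace-norm bound \eqref{Feb1a}, summing $\sum_{r\ge1}\|G^{(1)}_r(\nbka)\|_1$ and again bounding the tail by its leading term.

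The matrix-element decay \eqref{Feb6b} is the only part requiring a small additional observation. Fix $\bn,\bn'$ with $|\bn|>4\sqrt d\,E^{\sigma_{1,d-1,1}}$ (the case $|\bn'|>4\sqrt d\,E^{\sigma_{1,d-1,1}}$ being identical). Since $|\bn|$ exceeds the radius $\tfrac12 E^{\sigma_{1,d-1,1}}$ of $\hat K^{(0)}$, we have $\bn\notin\hat K^{(0)}$, hence $\E^{(0)}(\nbka)_{\bn\bn'}=0$, so by \eqref{sprojector-2} one has $\E^{(1)}(\nbka)_{\bn\bn'}=\sum_{r\ge1}G^{(1)}_r(\nbka)_{\bn\bn'}$. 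By \eqref{Feb6a} the $r$-th term vanishes unless $r\ge r_0:=(|\bn|+|\bn'|)(4\sqrt d\,E^{\sigma_{1,d-1,1}})^{-1}$; bounding each surviving term by its trace norm via \eqref{Feb1a} and summing the resulting geometric series over $r\ge r_0$ yields
\begin{equation*}
\bigl|\E^{(1)}(\nbka)_{\bn\bn'}\bigr|\ll E^{-E^{\sigma_{1,d-1,1}}(4Q)^{-1}}\,E^{-\sigma_0 r_0/4}=E^{-E^{\sigma_{1,d-1,1}}(4Q)^{-1}-\frac{\sigma_0}{16\sqrt d}(|\bn|+|\bn'|)E^{-\sigma_{1,d-1,1}}}=E^{-\dis^{(1)}(\bn,\bn')},
\end{equation*}
the implied constant being absorbed using $E\ge E_*$ (or the strict inequality available in \eqref{Feb1a}).

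I do not expect any genuine obstacle: the argument is pure bookkeeping. The only points needing care are (i) noticing that $\E^{(0)}$ is supported in $\hat K^{(0)}$ so that it makes no contribution to the matrix elements in the regime of \eqref{Feb6b}, and (ii) checking that the cutoff index $r_0$ forced by the support property \eqref{Feb6a} is precisely the one producing the linear-in-$(|\bn|+|\bn'|)$ term in the definition of $\dis^{(1)}$.
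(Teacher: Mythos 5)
Your proposal is correct and follows the same route as the paper: \eqref{perturbation-2} and \eqref{perturbation*-2} by summing the geometric series from \eqref{estg2} and \eqref{Feb1a}, and \eqref{Feb6b} by combining \eqref{Feb1a} with the support condition \eqref{Feb6a} to bound $\bigl(\E^{(1)}-\E^{(0)}\bigr)_{\bn\bn'}$ and then using that $\E^{(0)}(\nbka)_{\bn\bn'}$ vanishes when $|\bn|$ or $|\bn'|$ exceeds $E^{\sigma_{1,d-1,1}}$. Your identification of the cutoff index $r_0$ forced by \eqref{Feb6a} and the resulting linear-in-$(|\bn|+|\bn'|)$ term in $\dis^{(1)}$ is exactly the paper's (unwritten) computation.
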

Estimates \eqref{perturbation-2} and \eqref{perturbation*-2} easily
follow from \eqref{eigenvalue-2}, \eqref{sprojector-2} and
\eqref{estg2} and \eqref{Feb1a}. Formula \eqref{Feb6b} follows
from \eqref{sprojector-2}, \eqref{Feb1a} and \eqref{Feb6a}. Indeed,
using these estimates, we obtain
$\left|\left(\E^{(1)}({\nbka})-\E^{(0)}({\nbka})\right)_{\bn\bn'}\right|
<E^{-\dis^{(1)}(\bn,\bn')}$. Taking into account that $\E^{(0)}({\nbka})_{\bn\bn'}=0$
when $|\bn|>E^{\sigma_{1,d-1,1}}$ or $|\bn'|>E^{\sigma_{1,d-1,1}}$, we
arrive at  \eqref{Feb6b}.
{\newred
\ber While the proof of Theorem \ref{Thm2} is very similar to the proof of Theorem \ref{Thm1} and uses the same resolvent identities, there are several extra technical subtleties in the proof of the former. Therefore, we are providing the full proof of Theorem \ref{Thm2}. Note that we will use part of this proof as the start of induction when proving Lemma \ref{abstractlemma}.
\enr}
\begin{proof}
Put 
\bee
\CP':=\CP^{(1)}-\tilde{\CP}_{\patch}. 
\ene
By \eqref{gulf1}, \eqref{W} we have:
$$\tilde{H}^{(1)}(\bka)=\tilde{\CP}_{\patch}H(\bka)\tilde{\CP}_{\patch}+\CP'H_0(\bka)\CP',\ \ \
W=\CP'V\CP'+\CP'V\tilde{\CP}_{\patch}+\tilde{\CP}_{\patch}V\CP'.$$
We will often omit $\bka $ in the arguments when it
cannot lead to confusion. Let $z\in\cont_1$. By \eqref{estfull}, we have
\begin{equation}\label{step2raz}
\left\|(\tilde{H}^{(1)}-zI)^{-1}\right\| <4E^{r_{1,3}5E^{d^2(l+\mu)\sigma_{0,d-1}}}.
\end{equation} Let us consider the perturbation series
\begin{equation}\label{step2dva}
(H^{(1)}-z)^{-1}=\sum_{r=0}^\infty(\tilde H^{(1)}-z)^{-1}\left(-W(\tilde H^{(1)}-z)^{-1}\right)^r.
\end{equation}
Put
$$\tilde A:=-(\tilde H^{(1)}-z)^{-1}W(\tilde H^{(1)}-z)^{-1}.$$
To check the convergence it is enough to show that
\begin{equation}\label{||A||2}\|\tilde A\|<E^{-\sigma_0}.
\end{equation}
Estimates \eqref{step2raz} and \eqref{||A||2} yield
\begin{equation}\label{step2raz*}
\left\|({H}^{(1)}-zI)^{-1}\right\| <8E^{r_{1,3}5E^{d^2(l+\mu)\sigma_{0,d-1}}}.
\end{equation}
To prove \eqref{||A||2} it suffices to check
\begin{equation}\label{||A||2-2}\|\CP'\tilde A\CP'\|<4\|V\|E^{-2\sigma_0}
\end{equation}
and
\begin{equation}\label{||A||2-3}\|\CP'\tilde A\tilde {\CP}_{\patch}\|<8\|V\|E^{-2\sigma_0}.
\end{equation}

Remark \ref{neweverything} implies that
$$
||{\newred (\CP'H\CP'-zI)^{-1}}||\ll E^{-\sigma_0},
$$
after which estimate \eqref{||A||2-2} becomes trivial. 
Therefore, we proceed to 
 \eqref{||A||2-3}. By Lemma~\ref{PHP*}, it is enough to check

\begin{equation}
\label{||A||2-4}\|\CP'\tilde A\CP^{(0)}\|<8\|V\|E^{-2\sigma_0} 
\end{equation}
and
\begin{equation}
\label{||A||2-7}\|\CP'\tilde A\tilde {\CP}_j\|<8\|V\|E^{-2\sigma_0}.
\end{equation}

To prove  \eqref{||A||2-4} we represent $(\tilde H^{(1)}-z)^{-1}\CP^{(0)}$ as follows:
\begin{align}\label{May11-13b}
&(\tilde H^{(1)}-z)^{-1} \CP^{(0)}=
\sum_{r=0}^{R_0}\left(-(H_0-z)^{-1}\CP^{(0)}V\CP^{(0)}\right)^r(H_0-z)^{-1}\CP^{(0)}+\cr &
\left(-(H_0-z)^{-1}\CP^{(0)}V\CP^{(0)}\right)^{R_0+1}(\tilde
H^{(1)}-z)^{-1}\CP^{(0)}, \end{align}
 where $R_0$ will be fixed later.
Then,
\begin{equation}\label{step25}
\begin{split}& \|\CP'V\CP^{(0)}(\tilde H^{(1)}-z)^{-1}\|\leq
\sum_{r=0}^{R_0}\left\|B_r\right\|+
\left\|\CP'V\left((H_0-z)^{-1}\CP^{(0)}V\CP^{(0)}\right)^{R_0+1}\right\|\|(\tilde
H^{(1)}-z)^{-1}\CP^{(0)}\|, \cr
&
B_r:=\CP'V\left((H_0-z)^{-1}\CP^{(0)}V\CP^{(0)}\right)^r(H_0-z)^{-1}\CP^{(0)}.
\end{split} \end{equation}
Note that matrix elements $(B_r)_{\bn\bn'}$ are
equal to zero if $|\bn-\bn'|>Q(r+1)$ (see \eqref{V_q=0}).
Thus, the only non-trivial elements $(B_r)_{\bn\bn'}$ are such that $$
\bn\in\Omega(\frac12 E^{\sigma_{1,d-1,1}}+Q)\setminus\Omega (\frac12 E^{\sigma_{1,d-1,1}}),\ \ \
\bn'\in \Omega (\frac12 E^{\sigma_{1,d-1,1}}), \ \ |\bn-\bn'|\leq Q(r+1). $$
Let $r$ be chosen in such a way that 
\bee\label{Qr+1}
Q(r+1)\leq E^{\sigma_{1,d-1,1}}/6.
\ene
It follows that  $(B_r)_{\bn{\bf 0}}=0$. If $\bn' \neq 0$, then (see \eqref{G1-2})
$$\left|\|\bka^{(0)}(\vec\phi) +\bn'\vec\boldom\|^{2}-z\right|>E^{1-(l+\mu+2)\sigma_{1,d-1,1}}.$$  
Therefore, for $r$ satisfying \eqref{Qr+1} we have:
$$\|B_r\|\leq
(\|V\|E^{-1/2})^{r+1},
$$
$$
\left\|\CP'V\left((H_0-z)^{-1}\CP^{(0)}V\CP^{(0)}\right)^{r+1}\right\|\leq
\|V\|\big(\|V\|E^{-1/2}\big)^{r+1}. $$ 
Now, we fix $R_0:=[E^{\sigma_{1,d-1,1}} /(6Q)]-1$. Then the condition $Q(r+1)\leq
E^{\sigma_{1,d-1,1}}/6$ is satisfied for all $r\leq R_0$ and using Lemma~\ref{estnonres0} we get 
$$ \|\CP'V\CP^{(0)}(\tilde H^{(1)}-z)^{-1}\|\leq
\sum_{r=0}^{R_0}(\|V\|E^{-1/2})^{r+1}+\|V\|\big(\|V\|E^{-1/2}\big)^{R_0+1}4E^{r_{1,3}5E^{d^2(l+\mu)\sigma_{0,d-1}}}.
$$  
Assuming that $E>E_*$  (so that, in particular, $\frac{
E^{\sigma_{1,d-1,1}}}{6Q}>20r_{1,3}E^{d^2(l+\mu)\sigma_{0,d-1}}$), we obtain \eqref{||A||2-4}.

The proof of \eqref{||A||2-7} is quite similar, so we just outline the difference. Consider the slightly more difficult case $s:=\rank(\CC_j)\geq1$. Then instead of \eqref{step25}, we write 

\begin{equation}\label{step25a}
\begin{split}& \|\CP'V\tilde\CP_j(\tilde H^{(1)}-z)^{-1}\|\leq
\sum_{r=0}^{R_0}\left\|B_r\right\|+
\left\|\CP'V\left((H_0-z)^{-1}\tilde\CP_jV\tilde\CP_j\right)^{R_0+1}\right\|\|(\tilde
H^{(1)}-z)^{-1}\tilde\CP_j\|, \cr
&
B_r:=\CP'V\left((H_0-z)^{-1}\tilde\CP_j V\tilde\CP_j\right)^r(H_0-z)^{-1}\tilde\CP_j.
\end{split} \end{equation}

The only non-zero elements $(B_r)_{\bn\bn'}$ are those when  $\bn$ lies in $Q$-neighbourhood of $\tilde\CC_j$, but not in $\tilde\CC_j$, and $\bn'\in\tilde\CC_j$ (and 
$|\bn-\bn'|\leq Q(r+1)$). 

Let $r$ be such that $Q(r+1)\leq E^{\sigma_{0,s,1}}/6$. It follows that  $(B_r)_{\bn{\bn'}}=0$ if $\bn'\in \check\CC_j$. However, if $\bn'\in \tilde\CC_j\setminus\check\CC_j$, then (see Lemma~\ref{Z0new} and Remark \ref{neweverything})
$$\left|\|\bka^{(0)}(\vec\phi) +\bn'\vec\boldom\|^{2}-z\right|>E^{\sigma_{0,s}}/2.$$  
Now we choose $R_0=[E^{\sigma_{0,s,1}} /(6Q)]-1$ and continue as above. We also use the estimate \eqref{resclusterestshrink} (with $\varepsilon=E^{-r_{1,3}}$). The proof of  \eqref{||A||2-7} easily follows.

To prove \eqref{Feb1a}, we consider the operator
$$A:=W\left(\tilde H^{(1)}-z\right)^{-1}$$ 
and
represent it as 
$$
A=A_0+A_1+A_2,
$$ 
where
$$
A_0=\left(\CP^{(1)}-\E^{(0)}\right)A \left(\CP^{(1)}
-\E^{(0)}\right),
$$ 
$$
A_1=\left(\CP^{(1)}-\E^{(0)}\right)A
\E^{(0)},
$$
and
$$A_2= \E^{(0)}A
\left(\CP^{(1)}-\E^{(0)}\right).
$$
Note that we have
$\E^{(0)}W\E^{(0)}=0$ because of \eqref{W}. It is easy to see that, by construction, $A_0$ is holomorphic inside $\cont_1$ (see, e.g. Theorem~\ref{Thm1}). Hence,
$$\oint _{\cont_1}\left(\tilde H^{(1)}-z\right)^{-1}A_0^r dz=0.$$
Therefore,
\begin{equation} \label{Feb1} 
G^{(1)}_r=\frac{(-1)^{r}}{2\pi i}\sum
_{j_1,...j_r=0,1,2,\ j_1^2+...+j_r^2\neq 0}I_{j_1...j_r},
\ene
where
\bee
I_{j_1...j_r}:=\oint _{\cont_1}\left(\tilde
H^{(1)}-z\right)^{-1}A_{j_1}.....A_{j_r} dz.
\end{equation} 
At least one of indices in each term is equal to 1 or 2.
Let us show that
\begin{equation} \label{A_2}
\|A_2\|_1<\|V\|E^{-E^{\sigma_{1,d-1,1}}(3Q)^{-1}}.
\end{equation} First, we notice that
$\E^{(0)}W(\CP^{(1)}-\E^{(0)})=\E^{(0)}W\CP'$ by \eqref{W} and Lemma~\ref{PHP*}. It suffices
to show that
\begin{equation}\label{Feb6}
\|\E^{(0)}W\CP'\|_1<\|V\|E^{-E^{\sigma_{1,d-1,1}}(3Q)^{-1}},
\end{equation} 
since $\|\CP'\left(\tilde
H^{(1)}-z\right)^{-1}\|=\|\CP'\left(
H_0-z\right)^{-1}\|<2E^{-\sigma_0}$ for $z\in \cont_1 $.
To show \eqref{Feb6}, we write 
$$
\left(\E^{(0)}W\CP'\right)_{\bn\bn'}=\sum _{\bn'':\ |\bn''|\leq \frac12E^{\sigma_{1,d-1,1}},\ |\bn''-\bn'|\leq
Q}\E^{(0)}_{\bn\bn''}W_{\bn''-\bn'}
$$ 
when $|\bn'|>\frac12E^{\sigma_{1,d-1,1}}$
and it is equal to zero otherwise. Hence,
$$
\left|\left(\E^{(0)}W\CP'\right)_{\bn\bn'}\right|\leq \|W\|
\sum _{\bn'':\ \frac12E^{\sigma_{1,d-1,1}}-Q\leq |\bn''|\leq \frac12E^{\sigma_{1,d-1,1}}
}|\E^{(0)}_{\bn\bn''}|
$$ 
if $|\bn'|<\frac12E^{\sigma_{1,d-1,1}}+Q$ and is zero
otherwise. Using \eqref{matrix elements}, we obtain
\begin{equation} \label{E1-1}
\left|\left(\E^{(0)}W\CP'\right)_{\bn\bn'}\right|<
\|V\|E^{d\sigma_{1,d-1,1}}\max _{|\bn''|>\frac12E^{\sigma_{1,d-1,1}}-Q}E^{-\dis^{(0)}(\bn,\bn'')}.
\end{equation}
 It easily follows:
$$
\left|\left(\E^{(0)}W\CP'\right)_{\bn\bn'}\right|<\|V\|E^{d\sigma_{1,d-1,1}}E^{-E^{\sigma_{1,d-1,1}}(2Q)^{-1}+1}
$$
when  $\frac12E^{\sigma_{1,d-1,1}}<|\bn'|<\frac12E^{\sigma_{1,d-1,1}}+Q$, and is zero otherwise. It
follows that 
$$\left\|\E^{(0)}W\CP'\right\|<\|V\|E^{-E^{\sigma_{1,d-1,1}}(3Q)^{-1}}.$$ 
Taking into account that $\E^{(0)}$ is a
one-dimensional projection, we obtain the same estimate for 
the trace-norm, namely, \eqref{Feb6}. Thus, we have proved \eqref{A_2}.
Let us estimate $I_{j_1...j_r}$. Suppose one of the indices is equal to 2.
Substituting \eqref{A_2} into \eqref{Feb1} and taking into account \eqref{||A||2}, \eqref{step2raz},
we obtain:
$$\left\|I_{j_1...j_r}\right\|<\|V\|^2E^{-E^{\sigma_{1,d-1,1}}(3Q)^{-1}}E^{-\sigma_0(\frac{r}{2}-1)}8E^{r_{1,3}5E^{d^2(l+\mu)\sigma_{0,d-1}}}<E^{-E^{\sigma_{1,d-1,1}}(4Q)^{-1}}E^{-\sigma_0 r/2}.$$
More precisely, our $A_2$ splits the integrand in $I_{j_1\dots j_r}$ into two parts; for each part we use \eqref{||A||2} for every product of two $A_{j_k}$ and \eqref{step2raz} for the last single $A_{j_s}$ or $(\tilde H^{(1)}-z)^{-1}$; we also take into account the length of the circle.
Note that the operator
$A_1$ is always followed by $A_2$ unless $A_1$ occupies the very
last position in the product. Thus, it remains to consider the case
$A_{j_1}.....A_{j_r}= A_0^{r-1}A_1$. It is easy to see that
$$A_{0}^{r-1}A_{1}=\left(A_2(\bar z)A_{0}^{r-1}(\bar z)\right)^*.$$
This implies the estimate for this case too.
Therefore (taking into account the sum over all permutations),
$$
\left\|G^{(1)}_r\right\|<E^{-E^{\sigma_{1,d-1,1}}(4Q)^{-1}}E^{-\sigma_0 r/4}.
$$ The same estimate
can be written for the trace norm of this operator, since
$\E^{(0)}$ is one-dimensional.

Let us obtain the estimate for $g_r^{(1)}$.
Obviously,\begin{equation} \label{Feb1'}
g^{(1)}_r=\frac{(-1)^r}{2\pi ir}\sum _{j_1,...j_r=0,1,2,\
j_1^2+...+j_r^2\neq 0}Tr\oint _{\cont_1}A_{j_1}.....A_{j_r} dz.
\end{equation}
Note that each non-trivial term contains both $A_1$ and $A_2$, since we compute
the trace of the integral. Using \eqref{Feb6} and \eqref{step2raz}, we obtain:
$\|A_1\|_1<2b^{-1}_1\|V\|E^{-E^{\sigma_{1,d-1,1}}(3Q)^{-1}}$, where $b_1$ is the radius of $\cont_1$.  Combining
this estimate with \eqref{A_2} and \eqref{||A||2}, \eqref{step2raz}, we obtain
\eqref{estg2} for $r\geq 2$. Finally, applying \eqref{g2} in the
case $r=1$, we see that $g^{(1)}_1=0$, since
$\E^{(0)}W\E^{(0)}=0$.

To prove \eqref{Feb6a} it is enough to notice that (see Lemma~\ref{2'}) the biggest block
of $\tilde H^{(1)}$ has the size not greater than $\frac12E^{\sigma_{1,d-1,1}}$ (the number of elements in $\hat K^{(0)}$).

{\newred By construction, $\E ^{(1)}({\nbka})$ is the spectral projection corresponding to  the interval $I_1$. Since the series  \eqref{sprojector-2} converges in the trace class, see \eqref{Feb1a}, we have  $\mbox{Tr}(\E^{(1)}({\nbka}))=\mbox{Tr}(\E^{(0)}({\nbka}))+o(1)$
for large $\rho $. Since both $\mbox{Tr}(\E^{(1)}({\nbka}))$ and $\mbox{Tr}(\E^{(0)}({\nbka}))$ are integers, for those $\rho$ we have  $\mbox{Tr}(\E^{(1)}({\nbka}))=\mbox{Tr}(\E^{(0)}({\nbka}))$. By Theorem \ref{Thm1},  $\mbox{Tr}(\E^{(0)}({\nbka}))=1$. Therefore, there exists a single eigenvalue  $\lambda^{(1)}({\nbka})$  in $I_1$. Now we carry on as in the proof of Theorem 2.1 in \cite{Ka} and obtain \eqref{eigenvalue-2}.}
\end{proof}


By Theorem \ref{Thm2}, the  coefficients $g^{(1)}_r({\nbka})$ and operators
$G^{(1)}_r({\nbka})$, $r\in \N$, can be analytically extended onto the complex neighbourhood $\CG^{\vec\phi (1)}_{j,\C}$ as functions of $\vec\phi $ and to the complex $E^{-r_{1,3}E^{2d^2(l+\mu)\sigma_{0,d-1}}}-$
neighbourhood of $\nka^{(0)}(\vec\phi )$ as functions
of $\k$, estimates \eqref{estg2}, \eqref{perturbation-2}
being preserved. Now, we use formulae \eqref{g2},
\eqref{eigenvalue-2} to extend
$\lambda^{(1)}({\nbka})=\lambda^{(1)}(\nka,\vec\phi)$ as an
analytic function. Obviously, series \eqref{eigenvalue-2} is
differentiable. Using the Cauchy integral we get the following lemma.

\bel \label{L:derivatives-2}
Under conditions of Theorem \ref{Thm2}, expansions \eqref{eigenvalue-2} and \eqref{sprojector-2}   can be analytically extended onto the complex neighbourhood $\CG^{\vec\phi (1)}_{j,\C}$ as functions of $\vec\phi $ and to the complex $E^{-r_{1,3}E^{2d^2(l+\mu)\sigma_{0,d-1}}}-$
neighbourhood of $\nka^{(0)}(\vec\phi )$ as functions
of $\k$. The following
estimates hold when $\vec\phi \in \CG_{j,\C}^{\vec\phi (1)}$ and $\nka\in \C:$
$|\nka-\k^{(0)}(\vec\phi )|<E^{-r_{1,3}E^{2d^2(l+\mu)\sigma_{0,d-1}}}:$
\begin{equation}\label{perturbation-2c}
\lambda^{(1)}({\nbka})=\lambda^{(0)}({\nbka})+ O\left(E^{-E^{\sigma_{1,d-1,1}}(2Q)^{-1}}\right),
\end{equation}
\begin{equation}\label{estgder1-2k}
\frac{\partial\lambda^{(1)}}{\partial\nka}=\frac{\partial\lambda^{(0)}}{\partial\nka}
+ O\left(E^{-E^{\sigma_{1,d-1,1}}(2Q)^{-1}}E^{r_{1,3}E^{2d^2(l+\mu)\sigma_{0,d-1}}}\right). \end{equation}

Similar estimates can be written for all derivatives of $\lambda^{(1)}$ and $\E^{(1)}$ with respect to $\nka$ and $\vec \phi $.

\enl

\subsection{\label{IS2}Isoenergetic Surface for Operator $H^{(1)}$}
The above estimates produce the following statement: 

\bel\label{ldk-2} \begin{enumerate}
\item For every $j$ and every $E>E_*$,  $\rho\in[E-1,E+1]$, $\lambda :=\rho^{2}$, and $\vec\phi \in\CG_{j,\C}^{\vec\phi (1)}\cap\R^{d-1}$, there is a unique
$\k^{(1)}( \vec\phi,\rho )$ in the interval
$$\tilde I_1:=[\k^{(0)}(\vec\phi,\rho)-E^{-r_{1,3}E^{2d^2(l+\mu)\sigma_{0,d-1}}},\k^{(0)}(\vec\phi,\rho
)+E^{-r_{1,3}E^{2d^2(l+\mu)\sigma_{0,d-1}}}],$$ 
such that
    \begin{equation}\label{2.70-2}
    \lambda^{(1)} \left(\bka
^{(1)}(\vec\phi,\rho )\right)=\rho^2 ,\ \ \bka ^{(1)}(\vec\phi,\rho ):=\k^{(1)}(\vec\phi,\rho ) \BPsi_j(\vec\phi).
    \end{equation}
\item  Furthermore, there exists an analytic in $ \vec\phi $ continuation  of
$\k^{(1)}(\vec\phi,\rho )$ to the complex set $\CG_{j,\C}^{\vec\phi (1)}$
such that $\lambda^{(1)} (\bka ^{(1)}(\vec\phi,\rho ))=\lambda $.
Function $\k^{(1)}(\vec\phi,\rho )$ can be represented as
$\k^{(1)}(\vec\phi,\rho )=\k^{(0)}(\vec\phi,\rho
)+h^{(1)}(\vec\phi,\rho )$, where
\begin{equation}\label{dk0-2} |h^{(1)}(\vec\phi )|=
O\left(E^{-E^{\sigma_{1,d-1,1}}(2Q)^{-1}-1}\right),
\end{equation}
\begin{equation}\label{dk-2}
\frac{\partial{h}^{(1)}}{\partial\vec\phi}= O\left(E^{-E^{\sigma_{1,d-1,1}}(2Q)^{-1}-1}E^{r_{1,3}E^{2d^2(l+\mu)\sigma_{0,d-1}}}\right),\ \ \ \ \
\ene
\bee
\frac{\partial^2{h}^{(1)}}{\partial\vec\phi^2}=O\left(E^{-E^{\sigma_{1,d-1,1}}(2Q)^{-1}-1}E^{2r_{1,3}E^{2d^2(l+\mu)\sigma_{0,d-1}}}\right).
\end{equation} \end{enumerate}\enl
\begin{proof}  The proof is completely analogous to that of Lemma 3.11 from \cite{KS}, 
only now we use estimates from Lemma \ref{L:derivatives-2}.
\end{proof}

\ber
As before, we sometimes will, slightly abusing notation, write $\bka
^{(1)}(\BPhi,\rho ):=\bka
^{(1)}(\BPsi_j(\vec\phi),\rho )$ when $\BPhi\in\CA^{\BPhi(1)}_j$. 
\enr

Let us consider the set of points in $\R^d$ given by the formula:
$\bka=\bka^{(1)} (\vec\phi), \ \ \vec\phi \in \CG_{j,\C}^{\vec\phi (1)}\cap\R^{d-1}$ for some $j$. By Lemma \ref{ldk-2} this set of points is a slight distortion
of ${\mathcal D}^{(0)}$. All the points of this curve satisfy
the equation $\lambda^{(1)} (\bka ^{(1)}(\vec\phi ))=\rho^{2}$. We call
it isoenergetic surface of the operator $H^{(1)}$ and denote by
${\mathcal D}^{(1)}={\mathcal D}^{(1)}(\rho)$.


\section{Step one: preparation for the induction. \label{goodset-3}}

We have discussed how to perform steps zero and one of our approximating procedure. Now we will start describing the induction process -- how to perform step $n+1$ assuming that step $n$ ($n\ge 1$) has been made. What we have done so far, can be considered as the base of our induction. It will be convenient for us to prove one more result that should be considered as a part of the base of induction, since the inductive step will be slightly different from the base. The result proved in this section is similar to the Bourgain Lemma \cite{Bou1}, but requires a more careful analysis than in \cite{Bou1} due to the fact that we need to make sure that all the (large) energies are covered by our result. Once we prove this result, in the next section we will describe the full inductive step. From now on, unless we state otherwise, we will always assume that pre-clusters $\BUps$ {\newred (defined in definitions \ref{7BUps}, \ref{new:ext} and \ref{def5.19})} and clusters $\CC$ {\newred (defined in \eqref{7CC} and \eqref{7CC1})} are taken with $p=1$.

Let us also introduce the following recursive notation. We denote by $\cs$ and $Z_0$ two  constants introduced in Lemma \ref{MGL}. The precise values of these constants will be fixed once and for all at the end of this section; now, we just remark that $\cs$ is so small that, in particular,  $\cs<(100l\mu^2)^{-1}$.
We put (recall that $r_{1,1}, r_{1,2}, r_{1,3}$ have been defined in \eqref{new5.1}): 
\bee\label{r13}
r_{1,3}=r_{1,3}(E):=E^{\sigma_0}.
\ene 
We also put for $n\ge 1$
\bee\label{rn2}
r_{n,2}=r_{n,2}(E):=\cs^2r_{n,3}
\ene
and
\bee\label{rn1}
r_{n,1}=r_{n,1}(E):=\frac{\cs}{10} r_{n,2}.
\ene
Finally, for any $n>1$ we define 
\bee\label{rn3}
r_{n,3}=r_{n,3}(E):=E^{\de_0r_{n-1,1}}, \ \ \de_0:=\cs/(100Z_0).
\ene
This, in particular, means that $r_{n,j}>r_{n',j'}$ assuming $n>n'$, or $n=n'$ and $j>j'$. 
We also introduce the parameter $r_n'$. For $n=1$ we define it as 
\bee\label{r1'}
r_1':=r_{1,3},
\ene
and for $n>1$ we postpone the definition of $r_n'$ until the next section.

\subsection{Bourgain type Lemma}

Let us consider a fixed patch $\CA^{\bxi,\rho,\vec\boldom(0)}=\CA^{\bxi,\rho,\vec\boldom(0)}_{\tim}$ (we have introduced such patches in section \ref{section1}) of  size $\En^{-1}$ in $\bxi$ and $\rho$ and $\En^{-2}$ in $\vec\boldom$. We make sure that $\{\rho\in [E-1,E+1]\}$ and $\{\bxi\in\R^d:\ |\  \|\bxi\|^2-\rho^2| <E^{\sigma_0}\}$ are covered by these patches. 
 Next, we call the patch in $\vec\boldom$ bad, if it contains no frequency vectors $\vec\boldom$ that satisfy SDC (with $\hat\mu$ and $B_0$ fixed in Section \ref{section3}). 
Obviously, $\vec\boldom$-part of such patches is a subset of $\CNN^{\vec\boldom(0)}(B_0)$, so their total volume is less than $CB_0^{1/d}$. Since any good patch contains a frequency vector satisfying \eqref{strong}, we can, as discussed in section \ref{section1}, assume that it is the centre $\vec\boldom^*$ that satisfies \eqref{strong}. 
We construct super-extended pre-clusters $\tilde\BUps^{\Z}_1(\bxi)=\tilde\BUps^{\Z}_{1,\tim}(\bxi)$ from  Section 5 based on the points $(\bxi^*,\rho^*,\vec\boldom^*)=(\bxi^*_{\tim},\rho^*_{\tim},\vec\boldom^*_{\tim})$. 
We also denote by 
\bee
\tilde\CP=\tilde\CP_{\tim}=\CP(\tilde\BUps^{\Z}_1(\bxi),\bxi) 
\ene
the projection onto this pre-cluster. Note that this includes the (simple) case when  $\rank_1(\bxi)=0$, in which case $\BUps^{\Z}_1(\bxi)=\{0\}$, and $\tilde\BUps^{\Z}_1(\bxi)=\Om(E^{\sigma_{1,0,1}})$.

Let $S_{\tim}=S_{\tim}^{(0)} \subset \R^{ld+1+d}$  be defined as:
\bee
S_{{\tim}}:=
\left \{(\bxi,\rho,\vec \boldom  ) \in \CA_{\tim}^{\bxi,\rho,\vec \boldom (0)}: \left \|
(H(\tilde\BUps^{\Z}_1(\bxi),\bxi )-\rho^2)^{-1}\right\|_2> \En^{r_{1}'}\right\} 
\label{Sn}
\end{equation}
(recall that $\|\cdot\|_2$ is the Hilbert-Schmidt norm). 
 Let
\begin{equation}
S=S_{total}=S_{total}^{(0)}:=\cup _{{\tim}=1}^MS_{\tim}. \label{S}
\end{equation}
Here we take the union over all good (in $\vec\boldom$) patches; the number of such patches $M$ can be trivially estimated by 
\bee\label{Nest}
M\leq E^{2d(l+1)}.
\ene 
Recall that we have denoted 
$$\BS\BL(\rho,\En^{\sigma _0}):=\left\{ \bxi :  \left| \|\bxi\|^2-\rho^2\right|<\En^{\sigma _0}\right\}.$$
Recall also that $\Om(R)$ is a ball of radius $R$ in $l^{\infty}$-norm in $\Z^l$ and $r_\iota=r_\iota(\En)$ is a growing function of $\En$.  
Here is the main lemma that we prove in this section; we will call it {\it the Main {\newred Semi-Algebraic} Lemma at level one}: 
\bel\label{MGL} For every sufficiently large $Z_0$ and every $\En>\En_*$, there is a set ${G' }_{}(\En,Z_0)={G'}^{(1)}(\En,Z_0)\subset \CG^{\vec\boldom(0)}(B_0)$ 
 and $\gamma _0=\gamma _0(Z_0)$, $0<\gamma_0<1/2$,
such that for any $(\vec \boldom, \rho, \bxi )\in S_{total}$ with $\vec \boldom \in { G' } (\En,Z_0)$ and $\rho\in[E-1,E+1]$ 
there is a $\gamma $, $\gamma =\gamma (\vec \boldom, \rho , \bxi, Z_0)$ with the following properties:
\bee \label{gamma} \gamma _0<\gamma <1-\gamma _0,
\ene
\bee \label{no}
\left\{ \bn\in\Z^l:   (\vec \boldom, \rho, \bxi +\bn \vec \boldom )\in S_{total}, \ \  \bn \in \Om(\En^{\gamma r_{1,1}}) \setminus \Om\left(\En^{\frac{\gamma r_{1,1}}{Z_0}}\right)\right\}=\emptyset .
\ene
The set ${ G' } (\En, Z_0)$ has asymptotically full measure in $\CG^{\vec\boldom(0)}$:
\bee \meas({G' } )=\meas(\CG^{\vec\boldom(0)})-O(\En ^{- C_1r_{1,1}}), \ {\En\to \infty },\ \ C_1=C_1(Z_0).\label{mesLambda0}
\ene
The value of $\gamma $ can be taken constant in the $\En^{-r_{1}'-2 }$-neighbourhood of every $(\rho, \bxi )$
and in the $\En^{-2r_{1}' }$-neighbourhood of every $\vec \boldom $.
\enl
\bel\label{MGL'}
Similar statement holds for the `enlarged structure' (the nature of this terminology will become clear in the next section). This means that we can find a (possibly different set) $G''=G''^{(1)}(E,Z_0)$ and a (possibly different) $\tilde\gamma$ such that instead of \eqref{no} when $\vec\boldom\in G''^{(1)}(E)$ we have 
\bee \label{no'}
\left\{ \bn\in\Z^l:   (\vec \boldom, \rho, \bxi +\bn \vec \boldom )\in S_{total}, \ \  \bn \in \Om(\En^{\tilde\gamma r_{1,2}}) \setminus \Om\left(\En^{\frac{\tilde\gamma r_{1,2}}{Z_0}}\right)\right\}=\emptyset .
\ene 
The set ${ G'' } (\En, Z_0)$ has an asymptotically full measure in $\CG^{\vec\boldom(0)}$:
\bee \meas({G''(E) } )=\meas(\CG^{\vec\boldom(0)})-O(\En ^{- C_1r_{1,2}}), \ {\En\to \infty },\ \ C_1=C_1(Z_0).\label{mesLambda00}
\ene
The value of $\tilde\gamma $ can be taken constant in the $\En^{-r_{1}'-2 }$-neighbourhood of every $(\rho, \bxi )$
and in the $\En^{-2r_{1}' }$-neighbourhood of every $\vec \boldom $.
\enl
{\newred
\ber Apart from the constants $\gamma$, the only difference between \eqref{no} and \eqref{no'} are the scales: $r_{1,1}$ and $r_{1,2}$. Since both these scales are much smaller than the scale $r'_1$ used in the definition of bad sets  $S$, 
the proof of lemma \ref{MGL'} is practically identical to the proof of lemma \ref{MGL}.
\enr}

{\newred 
\ber\label{Bourgainlargescale} We would like to emphasise once again that these Bourgain-type Lemmas are based on intrinsically exponential estimates with the large parameter $Z_0$ playing the crucial role. Because of that, we can apply these lemmas now (i.e. in step 2) when we have reached the exponential scale (in the spectral parameter) of the remainder. Attempts to apply these lemmas in Step 1 would lead to estimates being much weaker than required. See also Remark \ref{new:rmBourgain}. 
\enr}

{\newred 
\ber\label{new:B}
We would like to comment on the properties of the set  $S_{total}$ that we use in the proof of these Lemmas. Of course, we use the estimates of the measure and degree of it (Corollaries \ref{6.2} and \ref{new:19}). The only thing we use apart from this is a certain algebraic  structure of $S$ that intertwines the variables $\bxi$ and $\vec\boldom$. 
\enr}

Now we define 
\bee
{G } ^{(1)}(E):={G' } ^{(1)}(E)\cap{G'' } ^{(1)}(E).
\ene
Then 
\bee \meas({G^{(1)} }(E) )=\meas(\CG^{\vec\boldom(0)})-O(\En ^{- C_1r_{1,1}}), \ {\En\to \infty }.\label{mesLambda}
\ene

We also denote $E_q:=E_*+q$, $q=0,1,...$, and define the set ${\bf \mathcal G }^{{\vec\boldom }(1)}={\bf \mathcal G }^{{\vec\boldom }(1)}(\tilde E)$ by 
\bee\label{CG1}
{\bf \mathcal G }^{{\vec\boldom }(1)}(E_q)
:=\CG^{\vec\boldom(0)}\cap\left(\cap _{k=q}^{\infty}{G^{(1)} }_{} (\En _k,Z_0)\right);
\ene
for $\tilde E\in[E_q,E_{q+1})$ we put ${\bf \mathcal G }^{{\vec\boldom }(1)}(\tilde E):={\bf \mathcal G }^{{\vec\boldom }(1)}(E_q)$. 

\bec\label{MGC1}
We have:
\bee \meas({\mathcal G }^{\vec\boldom (1)}(\tilde E))=\meas({\mathcal G }^{\vec\boldom (0)})-O(\tilde E ^{- C_1r_{1,1}}),\ {\tilde E\to \infty }.
\label{mesbfLambda1}
\ene
\enc
The corollary easily follows from \eqref{mesLambda} and a power  growth of $r_{1,1}(\En_q)$ with $q$. 
\ber
Notice that $\gamma$ (and $\tilde\gamma$)  depend on $\vec\boldom$ and $\bxi$ highly non-trivially, and this dependence cannot be easily controlled. This lack of control makes further proof of our results much more involved. 
\enr

As we have already stated, we will assume $p=1$ and omit writing index $p$ in the rest of this section.  Since the proof of lemma \ref{MGL'} is similar to the proof of lemma \ref{MGL}, we concentrate on that proof. It is  based on Lemmas 1.18 and 1.20 in \cite{Bou1} for semi-algebraic sets. Most technical complication in our case come from the necessity to treat
vectors $\bn $ located  relatively close to coordinate hyperplanes.  Semi-algebraic sets needed for the proof are introduced in Section \ref{S6.2}. The set $G'$ is described in Section
\ref{S6.3}, where estimate \eqref{mesLambda0} is also proved. Proofs of \eqref{no} and  of the stability of $\gamma $ with respect to $(\vec \boldom, \rho, \bxi )$ are in Section \ref{S6.4}.

Until the end of this section we will use several implicit constants. We will denote them by $\cun_j$ (for small constants) and $\Cun_j$ (for large constants). Those constants are closely related to the constants $\hat C_j$, $j=1,2,3$ from Appendices 2-4. While they are implicit, they can be chosen uniform and depend only on $d$ and $l$.

\subsection{Semi-algebraic sets.} \label{S6.2} Here we investigate properties of the set $S$  given by \eqref{S} and introduce new semi-algebraic sets, see \eqref{0,s}, \eqref{k,s},
needed for proving our lemmas. 
{\newred Recall that a set $S\subset \R^d$ is a semi-algebraic set of degree (not greater than) $N$ means that $S$ can be defined by finitely many polynomial inequalities of degrees $j_1,j_2,...,j_l$ and $j_1+...+j_l\le N$.} 

\bel \label{6.1} The set $S_{\tim}$ is a semi-algebraic subset in $\R^{ld+1+d}$ of degree  $E^{\sigma _{1,d}}$. 
\enl
\bec \label{6.2} The set $S$ is a semi-algebraic subset in $\R^{ld+1+d}$ of degree  $E^{2d(l+1)+1}$. 
\enc

The corollary follows from the estimate \eqref{Nest}  and $\sigma _{1,d}<1$.

\bep Using  Cramer's rule we can rewrite the inequality for the resolvent in the definition of $S_{\tim}$ in terms of determinants, the biggest matrix being that of the operator
$(H(\tilde\BUps^{\Z}(\bxi),\bxi )-\rho^2)$ itself. The size of this matrix is equal to the number of elements $\#\{ \tilde\BUps^{\Z}(\bxi)\}$ in our pre-cluster. Using \eqref{importantest} and \eqref{superext}, we obtain
\bee
\#\{ \tilde\BUps^{\Z}(\bxi)\}<E^{d^2(l+\mu)\sigma _{1,d-1}+d\sigma _{1,d-1,1}}.   \label{Oct29} 
\ene 
Taking into account that each diagonal term is a quadratic polynomial, we obtain that the determinant of the biggest matrix is a polynomial (in $\vec\boldom,\rho,\bxi$) of degree less than 
$$2E^{d^2(l+\mu)\sigma _{1,d-1}+\sigma _{1,d-1,1}d}.$$ 
Since we square the determinants to compute the Hilbert-Schmidt norm of the resolvent,  the inequality for the resolvent is an inequality for  a polynomial of degree less than $4E^{d^2(l+\mu)\sigma _{1,d-1}+\sigma _{1,d-1,1}d}$. We have $2d(l+1)+3$ inequalities in the definition of $S_{\tim}$. Therefore, the degree of $S_{\tim}$ does not exceed $4(2dl+2d+3)E^{d^2(l+\mu)\sigma _{1,d-1}+\sigma _{1,d-1,1}d}<E^{\sigma _{1,d}}$. \enp

Let $(S_{\tim})_{\mathrm {cs}}(\vec \boldom ,\rho )\subset \R^d$  be a cross section of $S_{\tim}$:
\bee
(S_{\tim})_{\mathrm {cs}}(\vec \boldom ,\rho ):=\left\{\bxi : (\vec \boldom ,\rho, \bxi )\in S_{\tim}\right\}. \label{CS-Sn}
\ene
We will use analogous notations for cross sections of other sets.


\bel \label{L5.5} The Lebesgue measure of $(S_{\tim})_{\mathrm {cs}}(\vec \boldom, \rho )$ satisfies the following estimate:
\bee \label{(7.21)}
\meas((S_{\tim})_{\mathrm {cs}}(\vec \boldom, \rho ))
<
 E^{-r_1'-d-1+\sigma _{1,d}}. 
 \ene 
 \enl
 \bec The Lebesgue measure of $S_{\mathrm {cs}}(\vec \boldom, \rho )$ satisfies the following estimate:
 \bee
\meas((S)_{\mathrm {cs}}(\vec \boldom, \rho ))
<
 E^{-r_1'+2d(l+1)-d}.  \label{mesSrho}\ene \enc
 To obtain this corollary we use \eqref{S} , \eqref{Nest} and $\sigma _{1,d}<1$.
 
 \bec\label{new:19} 
 The Lebesgue measures of $S_{\tim}$, $S$ satisfy the estimates:
 \bee
\meas(S_{\tim})
<
 E^{-r_1'-d -1+\sigma _{1,d}}, \ene 
 \bee
\meas( S)
<
 E^{-r_1'+2d(l+1)-d}.  \label{mesS}\ene \enc
 This corollary is obtained by integrating with respect to $\vec \boldom $ and $\rho $.
 
 \bep Obviously, $\nabla _{\bxi }H(\tilde\BUps^{\Z}(\bxi),\bxi )=\nabla _{\bxi }H_0(\tilde\BUps^{\Z}(\bxi),\bxi )=2\bxi I+\CE$, $\CE $ being a diagonal operator.
Using Lemmas \ref{1}, \ref{2'} and formulas \eqref{ext}, \eqref{superext}, we easily obtain that both the norm of $\CE$ and its gradient with respect to $\bxi$ are $o(E)$. Hence, $\nabla _{\bxi }\lambda _i(\bxi )=2\bxi +o(E)$ for every single eigenvalue of $H(\tilde\BUps^{\Z}(\bxi),\bxi )$. The derivative of every eigenvalue in the direction $\bxi$ is piece-wise continuous as a function of one variable $\|\bxi \|$, while each eigenvalue is a continuous function of $\|\bxi \|$. Therefore, the inequality
$\left|\lambda _i(\bxi )-\rho ^2\right|<E^{-r_1'}$ holds on a set of measure less than $E^{-r_1'-2-(d-1)}$ in $\CA_{\tim}^{\bxi(0)}$. By \eqref{Oct29}, the number of eigenvalues does not exceed 
$E^{d^2(l+\mu)\sigma _{1,d-1}+\sigma _{1,d-1,1}d}<E^{\sigma _{1,d}}$. Hence, the inequality $\min _{j}\left|\lambda _j(\bxi )-\rho ^2\right|<E^{-r_1'}$ holds on a set of measure less than $E^{-r_1'-2-(d-1)+\sigma _{1,d}}$. \enp

\ber
This proof used the fact that the size of the pre-cluster $\tilde\BUps^{\Z}(\bxi)$ was not very large. Unfortunately, we will not be able to use this argument at the further stages of our procedure, so the proof of the corresponding inductive statement in section \ref{section8} will be a bit more complicated. 
\enr

Now, we want to introduce artificial variables $\by_1,\dots,\by_s\in\R^d$, with $s$ to be determined later. 
We define the set
$\tilde S ^{(s)}\subset \R^{ld+1+(s+1)d}$ by 
\bee \label{5.15}
\tilde S^{(s)}:=\left \{(\vec \boldom,\rho,\bxi, \by _1,...,\by _{s}): (\vec \boldom, \rho, \bxi )\in S, \ (\vec \boldom, \rho, \bxi+\by _j)\in S,\ j=1,...s\right\}.
\ene
Obviously,
\bee
\tilde S^{(s)}=\left \{(\vec \boldom, \rho,\bxi, \by _1,...,\by _{s}):  (\vec \boldom, \rho, \bxi, \by _j)\in \tilde S^{(1)},\ j=1,...s\right\}.
\ene

\bel \label{L5.6} The set $\tilde S^{(s)}$ is a semi-algebraic set of degree  $(s+4)E^{2d(l+1)}$ and
\bee
\meas(\tilde S^{(s)}_{\mathrm {cs}}(\vec \boldom, \rho ,\bxi ))
<E^{-\left(r_1'-2d(l+1)+d\right)s}.
\ene \enl
\bec
\bee 
\meas(\tilde S^{(s)})
<E^{-\left(r_1'-2d(l+1)+d\right)s+d}.
\label{mestS}
\ene 
\enc
\bep  We integrate with respect to $\by _j$, taking into account \eqref{mesSrho}. Integration over $\vec \boldom $, $\rho $ and $\bxi $ produces a factor $E^d$ (even slightly better). 
\enp
Let $(\tilde S_{\mathrm {cs}}^{(s)}(\vec \boldom ))_{\mathrm pr}^{\by _1, ...,\by _{s}}$ be the projection of $\tilde S^{(s)}(\vec \boldom )$ on the space $\R^{sd}$ of vectors $(\by _1, ...,\by _{s})$:
\bee \label{5.21}
(\tilde S_{\mathrm {cs}}^{(s)}(\vec \boldom ))_{\mathrm pr}^{\by _1, ...,\by _{s}}:=\left\{ (\by _1, ...,\by _{s}): \exists (\rho, \bxi )\mbox{ such that } (\vec \boldom, \rho, \bxi, \by _i) \in \tilde S^{(1)} , i=1,...,s\right\}. 
\ene
It is easy to see that

\bee  \label{5.19}
(\tilde S_{\mathrm {cs}}^{(s)}(\vec \boldom ))_{\mathrm pr}^{\by _1, ...,\by _{s}} =\left\{ ( \by _1, ...,\by _{s}): \tilde S^{(1)}_{\mathrm {cs}}(\vec \boldom, \by _1)\cap...\cap \tilde S^{(1)}_{\mathrm {cs}}(\vec \boldom, \by _{s})\neq \emptyset \right\}, \ene
where $\tilde S^{(1)}_{\mathrm {cs}}(\vec \boldom, \by )$ is the cross section of $\tilde S^{(1)}$ by $\vec \boldom, \by $.
\bel \label{5.10}
Let $s=2^{d+1}$. Then the set $(\tilde S_{\mathrm {cs}}^{(s)}(\vec \boldom ))_{\mathrm pr}^{\by _1, ...,\by _{s}}$ is a semi-algebraic set of degree $E ^{\Cun_1}$ in $\R^{sd}$. Its Lebesgue measure satisfies:
\bee 
\meas \left((\tilde S_{\mathrm {cs}}^{(s)}(\vec \boldom ))_{\mathrm pr}^{\by _1, ...,\by _{s}}\right)<
 E ^{\Cun_1}E^{-r_1' \cun_1}.  \label{mesS-pr-omega}\ene 
 \enl
 \bep 
 We apply Lemma 1.18 in \cite{Bou1}, formulated in Appendix 3.  We take there $A=\tilde S^{(1)}_{\mathrm {cs}}(\vec \boldom )$, $r=1+d$, $t=(\rho , \bxi )$, $x_i=\by _i $, 
 $B=5E^{2d(l+1)}$, $\eta =E^{-r_1'+2d(l+1)-d}$, the last estimate being given by Lemma \ref{L5.6} with $s=1$.
 \enp
 
 Let $(\tilde S^{(s)})_{\mathrm pr}^{\vec \boldom, \by _1, ...,\by _{s}}$ be the projection of $\tilde S^{(s)}$ on the space $\R^{ld+sd}$ of vectors $(\vec \boldom, \by _1, ...,\by _{s})$:
 \bee \label{Sspr} 
 (\tilde S^{(s)})_{\mathrm pr}^{\vec \boldom, \by _1, ...,\by _{s}}:=\left\{ (\vec \boldom ,\by _1, ...,\by _{s}): \exists (\rho, \bxi )\mbox{ such that } (\vec \boldom, \rho, \bxi, \by _1, ...,\by _{s}) \in \tilde S^{(s)}\right\}.
 \ene
 It is easy to see that
\bee \label{5.22}
(\tilde S^{(s)})_{\mathrm pr}^{\vec \boldom, \by _1, ...,\by _{s}}=\left\{ (\vec \boldom ,\by _1, ...,\by _{s}): \exists (\rho, \bxi )\mbox{ such that } (\vec \boldom, \rho, \bxi, \by _j) \in \tilde S^{(1)} ,  j=1,...,s\right\} 
\ene
and
\bee 
(\tilde S^{(s)})_{\mathrm pr}^{\vec \boldom, \by _1, ...,\by _{s}}=\left\{ (\vec \boldom ,\by _1, ...,\by _{s}): \tilde S^{(1)}_{\mathrm {cs}}(\vec \boldom, \by _1)\cap...\cap \tilde S^{(1)}_{\mathrm {cs}}(\vec \boldom, \by _{s})\neq \emptyset \right\}, \ene
where as usual $\tilde S^{(1)}_{\mathrm {cs}}(\vec \boldom, \by )$ is the cross section of $\tilde S^{(1)}$ by $\vec \boldom, \by $.
The set $(\tilde S_{\mathrm {cs}}^{(s)}(\vec \boldom ))_{\mathrm pr}^{\by _1, ...,\by _{s}} $ introduced by \eqref{5.21} is a cross section of $(\tilde S^{(s)})_{\mathrm pr}^{\vec \boldom, \by _1, ...,\by _{s}}$.
\bel Let $s=2^{d+1}$. Then, the set $(\tilde S^{(s)})_{\mathrm pr}^{\vec \boldom, \by _1, ...,\by _{s}}$ is a semi-algebraic set of degree $E ^{\Cun_2}$ in $\R^{ld+sd}$. Its Lebesgue measure satisfies:
\bee 
\meas((\tilde S^{(s)})_{\mathrm pr}^{\vec \boldom, \by _1, ...,\by _{s}})
<
 E ^{\Cun_1}E^{-r_1'\cun_1}.  \label{mesS-pr}\ene 
 \enl
 \bep 
 The set $(\tilde S^{(s)})_{\mathrm pr}^{\vec \boldom, \by _1, ...,\by _{s}}$ is a semi-algebraic set of  degree $(s+4)^{\hat C_3}E^{\hat C_3 2d(l+1)}$ by Lemma \ref{L5.6} and the Tarski-Seiderberg principle, see Appendix 4. Integrating 
 \eqref{mesS-pr-omega} with respect to $\vec \boldom $, we obtain \eqref{mesS-pr}. \enp 
 
  Finally, we can introduce a  set  we are going to use  to treat vectors $\bn$ (see \eqref{no}) that are sufficiently far from all the coordinate hyperplanes. Let ${\mathbf {T}}\in \R^{ld}$, ${\mathbf {T}} =(\bt _1,....\bt _l)$,  $\bt _j\in \R^d$, $j=1,...,l$. 
Let $\tilde {\mathbf {T}}$ be a linear mapping of ${\mathbf {T}}$ to $\R^d$: 
  \bee
   {\mathbf {\tilde T}}= {\mathbf {\tilde T}}({\mathbf {T}}):=\sum _{j=1}^l\bt _j. 
\ene  
  Let $L$ be a large parameter (to be fixed later). We  introduce the set which we denote by
  $\tilde S_{pr}^{(0,s)}\subset \cube^{ld}\times [-L,L]^{sld}$:
  \bee \label{0,s}
  \tilde S_{pr}^{(0,s)}:=
  \left\{ \left(\vec \boldom ,\bold T_1, ...,\bold T_s\right): 
 \left(\vec\boldom, \tilde {\bold T }_1, ...,\tilde{ \bold T}_s\right)\in 
 (\tilde S^{(s)})_{\mathrm pr}^{\vec \boldom, \by _1, ...,\by _{s}}, \bold T_i\in (-L,L )^{ld},\, i=1,...,s\right\}, 
 \ene
 where we have denoted $\tilde {\bold T }_j=\tilde {\bold T }({\bold T }_j)$.

 \bel \label{Lemma6.10} Let $s=2^{d+1}$. Then the set $\tilde S_{pr}^{(0,s)}$ is a semi-algebraic set of degree $E ^{\Cun_2}$ in $\R^{ld(s+1)}$. Its Lebesgue measure satisfies:
\bee 
\meas(\tilde S_{pr}^{(0,s)})
<
 E ^{\Cun_1}E^{-r_1'\cun_1}(2L)^{(l-1)ds}.  \label{mesS-pr-0}\ene 
 \enl
 \bep  
 Obviously,  set $\tilde S_{pr}^{(0,s)}$ is a semi algebraic set of degree just $2sld$ larger than $\tilde S_{pr}^{(s)}$. For convenience, we still denote the estimate by $E^{\Cun_2}$.
 Estimate \eqref{mesS-pr-0} follows from \eqref{mesS-pr} and the condition $\bold T_i\in (-L,L )^{ld}$, $i=1,...,s$.\enp

  We need yet more sets to treat vectors $\bn$ in \eqref{no} that lie relatively close to the coordinate hyperplanes. These sets will be denoted $\tilde S_{pr}^{(k,s)}$; each such set will take care of  vectors $\bn$ relatively close to coordinate hyperplanes of dimension ${\newred l}-k$. These sets will be properly defined in \eqref{k,s}, but first we need to define more objects similar to those discussed above.

 
 

Define $S_{i,n_i}^{(1)}\subset \R^{2ld+1+d}$ ($n_i\in\Z$) as follows:
$$
S_{i,n_i}^{(1)}:=\left\{ \left(\vec \boldom, \rho, \bxi , \bold T \right): \left(\vec \boldom, \rho,  \bxi +n_i\boldom _i+\sum _{j\neq i}\bt _j \right)\in S,\ \bold T\in (-L,L )^{ld}\right\};
$$
recall that $\bold T =(\bt _1,....\bt _l)$ and $\bt_j\in \R^d$, $j=1,..,l$. Obviously, the definition of $S_{i,n_i}^{(1)}$ does not include any conditions on $\bt _i$. Hence if $ \left(\vec \boldom, \rho, \bt _1,...,\bt_i,..,.\bt _l\right)\in S_{i,n_i}^{(1)}$,
then $ \left(\vec \boldom, \rho, \bt _1,...,\bt_i+\tilde \bt_i,...,\bt _l\right)\in S_{i, n_i}^{(1)}$ for any $\tilde \bt_i\in \R^d$, assuming that we still have $\bold T\in (-L,L )^{ld}$. 
 Clearly, $S_{i,n_i}^{(1)}$ is a semi-algebraic set and it has degree by $2ld$ larger than $S$. By Corollary \ref{6.2}, the degree of $S_{i,n_i}^{(1)}$ is at most $E^{2d(l+1)+2}$.

Next, for $\gamma_1>0$ we define  
\bee 
S_i^{(1)}(\gamma _1):=\cup _{|n_i|<E^{\gamma _1r_{1,1}}}S_{i,n_i}^{(1)}
\ene
and
\bee 
S^{(1)}(\gamma _1):=\cup _{i=1,...,l}S_i^{(1)}(\gamma _1). 
\ene

Similarly, let $S^{(k)}_{(i_1,...i_k),(n_{i_1}, ..., n_{i_k})}\subset \R^{2ld+1+d}$, $k\le {\newred l}$, be defined by the following formula: 
\bee \label{Jan23-b}
\bes
&S^{(k)}_{(i_1,...i_k),(n_{i_1}, ..., n_{i_k})}:=\\
&\left\{ \left(\vec \boldom, \rho, \bxi, \bold T\right): \left(\vec \boldom, \rho,  \bxi +\sum _{j=1}^k n_{i_j}\boldom _{i_j}+\sum _{j\neq i_1,...i_k}\bt _j \right)\in S, \ \bold T\in (-L,L )^{ld}\right\}.
\end{split}
\ene
Here and below, we use the convention $i_j\neq i_{j'}$ if $j\neq j'$.
Further,  $S^{(k)}_{(i_1,...i_k),(n_{i_1}, ..., n_{i_k})}$ is a semi-algebraic set of degree by $2ld$
larger than $S$. By Corollary \ref{6.2}, the degree of $S^{(k)}_{(i_1,...i_k),(n_{i_1}, ..., n_{i_k})}$ is at most $E^{2d(l+1)+2}$.

Next, for $\gamma_1,...,\gamma_k>0$ we define  
\bee \label{6.22} 
S^{(k)}_{i_1,...i_k}(\gamma _1,...,\gamma _k):=\cup _{|n_{i_j}|<E^{\gamma _jr_{1,1}}, j=1,...,k}\ S^{(k)}_{(i_1,...i_k),(n_{i_1}, ..., n_{i_k})} 
\ene
and 
\bee \label{6.23} 
S^{(k)}(\gamma _1,...,\gamma _k):=\cup _{i_1,...i_k=1,\ i_j\neq i_{j'}}^l S^{(k)}_{i_1,...i_k}(\gamma _1,...,\gamma _k). 
\ene
\bel \label{L6.10} The set $S^{(k)}(\gamma _1,...,\gamma _k)$ is a semi-algebraic set of degree 
$$
 \Cun_3 E^{(\sum _{j=1}^k\gamma _{j})r_{1,1}+2d(l+1)+2}.
$$

 Its cross-section satisfies the estimate:
\bee 
\meas((S^{(k)}(\gamma _1,...,\gamma _k))_{\mathrm cs}(\vec \boldom, \rho, \bxi ))
< \Cun_3 E^{-r_1'+(\sum _{j=1}^k\gamma _{j})r_{1,1}+2d(l+1)-d}(2L)^{(l-1)d}. \label{measure-S-k}
\ene
\enl
\bec 
 The measure of the set $S^{(k)}(\gamma _1,...,\gamma _k)$ can be estimated like the measure of its cross-section in \eqref{measure-S-k}:
\bee 
\meas(S^{(k)}(\gamma _1,...,\gamma _k))
\leq \Cun_3 E^{-r_1'+(\sum _{j=1}^k\gamma _{j})r_{1,1}+2d(l+1)}(2L)^{(l-1)d}. \label{measure-S-k*}
\ene
\enc 

\bep The degree of $S^{(k)}(\gamma _1,...,\gamma _k)$ is, obviously,  the degree of $S^{(k)}_{(i_1,...i_k),(n_{i_1}, ..., n_{i_k})}$ mutiplied by $M_0$, where $M_0$ is the number of items in the union \eqref{6.22}, \eqref{6.23}.  Taking into account that $M_0=O(E^{(\sum _{j=1}^k\gamma _{j})r_{1,1}})$, we obtain the above estimate for the degree of 
$S^{(k)}(\gamma _1,...,\gamma _k)$. By definition,
\bee \label{Jan23-a}
\bes
& (S^{(k)}_{(i_1,...i_k),(n_{i_1}, ..., n_{i_k})})_{\mathrm cr}(\vec \boldom, \rho, \bxi )\\
 &=
\left\{  \bold T \in [-L,L]^{ld}: 
\bxi +\sum _{j=1}^k n_{i_j}\boldom _{i_j}+\sum _{j\neq i_1,...i_k}\bt _j \in S(\vec \boldom, \rho )\right\}.
\end{split}
\ene
Integrating with respect to $\bt=\sum _{j\neq i_1,...i_k}\bold{t}_{j}$ first and using \eqref{mesSrho}, we obtain
\bee 
\meas((S^{(k)}_{(i_1,...i_k),(n_{i_1}, ..., n_{i_k})})_{\mathrm cr}(\vec \boldom, \rho, \bxi ))
<E^{-r_1'+2d(l+1)-d}(2L)^{(l-1)d}. \label{measure-S-k*a'}
\ene
This implies
\bee 
\meas((S^{(k)}_{(i_1,...i_k)}(\gamma _1,...,\gamma _k))_{\mathrm cr}(\vec \boldom, \rho, \bxi ))
<\Cun_3 E^{-r_1'+(\sum _{j=1}^k\gamma _{j})r_{1,1}+2d(l+1)-d}(2L)^{(l-1)d}, \label{measure-S-k*a}
\ene
and, hence, \eqref{measure-S-k}.

 \enp

Next, let 
$\tilde S ^{(k,s)}=\tilde S ^{(k,s)}(\gamma _1,...,\gamma _k)\subset \R^{(s+1)ld+d+1}$, 
\bee \label{5.38}
\bes
&\tilde S^{(k,s)}(\gamma _1,...,\gamma _k)\\
&:=\left \{(\vec \boldom,\rho,\bxi, \bold{T} _1,...\bold T _{s}): (\vec \boldom, \rho, \bxi )\in S, (\vec \boldom, \rho, \bxi, \bold T _i)\in S^{(k)}(\gamma _1,...,\gamma _k),\   i=1,...s\right\}.
\end{split}
\ene
Obviously,
\bee \label{k=k1} \tilde S^{(k,1)}= S^{(k)} 
\ene
and
\bee
\bes
&\tilde S^{(k,s)}(\gamma _1,...,\gamma _k)\\
&=\left \{(\vec \boldom, \rho,\bxi, \bold T _1,...\bold T_{s}):  (\vec \boldom, \rho, \bxi, \bold T_i)\in \tilde S^{(k,1)}(\gamma _1,...,\gamma _k),\ i=1,...s.\right\}.
\end{split}
\ene

\bel \label{L5.15} The set $\tilde S^{(k,s)}(\gamma _1,...,\gamma _k)$ is a semi-algebraic set of  degree  
\bee
s\Cun_3  E^{(\sum _{j=1}^k\gamma _{j})r_{1,1}+2d(l+1)+2}
\ene 
and
\bee
\meas(\tilde S^{(k,s)}_{\mathrm {cs}}(\vec \boldom, \rho ,\bxi ))
<\left(\Cun_3 E^{-r_1'+(\sum _{j=1}^k\gamma _{j})r_{1,1}+2d(l+1)-d}(2L)^{(l-1)d}\right)^s \label{6.32}
\ene \enl
\bec
\bee 
\meas(\tilde S^{(k,s)})
<\left(\Cun_3 E^{-r_1'+(\sum _{j=1}^k\gamma _{j})r_{1,1}+2d(l+1)-d}(2L)^{(l-1)d}\right)^s E^{d}. \label{6.33}
\ene 
\enc
\bep  Obviously, the degree of $\tilde S^{(k,s)}$ is equal to that of $\tilde S^{(k)}$ multiplied by $s$, see Lemma \ref{L6.10}. The relation $(\vec \boldom, \rho, \bxi, \bold T_1,\dots, \bold T_s)\in \tilde S^{(k,s)}(\gamma _1,...,\gamma _k)$ can be rewritten as  $(\vec \boldom, \rho, \bxi, \bold T _i)\in S^{(k)}(\gamma _1,...,\gamma _k),$ for $i=1,...,s$; i.e.,
$\bold T_i\in S^{(k)}(\gamma _1,...,\gamma _k)(\vec \boldom, \rho, \bxi ).$ Taking into account \eqref{measure-S-k}, we arrive to \eqref{6.32}.
\enp
Let $(\tilde S^{(k,s)}_{\mathrm {cs}}(\vec \boldom ))_{\mathrm pr}^{\bold T _1, ...,\bold T _{s}}$ be the projection of the cross-section $\tilde S^{(k,s)}_{\mathrm {cs}}(\vec \boldom )$ on the space $\R^{dls}$ of vectors $(\bold T _1, ...,\bold T _{s})$:
\bee 
\label{5.41} (\tilde S^{(k,s)}_{\mathrm {cs}}(\vec \boldom ))_{\mathrm pr}^{\bold T _1, ...,\bold T _{s}}=\left\{ (\bold T _1, ...,\bold T _{s}): \exists (\rho, \bxi )\mbox{ such that } (\vec \boldom, \rho, \bxi, \bold T _i) \in \tilde S^{(k,1)} , i=1,...,s.\right\}. \ene
It is easy to see that
\bee 
(\tilde S^{(k,s)}_{\mathrm {cs}}(\vec \boldom ))_{\mathrm pr}^{\bold T _1, ...,\bold T _{s}} =\left\{ ( \bold T_1, ...,\bold T _{s}): \tilde S^{(k,1)}_{\mathrm {cs}}(\vec \boldom, \bold T _1)\cap...\cap \tilde S^{(k,1)}_{\mathrm {cs}}(\vec \boldom, \bold T _{s})\neq \emptyset \right\},
 \ene
where $\tilde S^{(k,1)}_{\mathrm {cs}}(\vec \boldom, \bold T )$ is the cross section of $\tilde S^{(k,1)}$ by $\vec \boldom, \bold T $.

\bel \label{L5.17} 
Let $s=2^{d+1}$ and assume that $\sum _{j=1}^k\gamma _{j}<\frac12$. Then, the set $(\tilde S^{(k,s)}_{\mathrm {cs}}(\vec \boldom ))_{\mathrm pr}^{\bold T _1, ...,\bold T _{s}}$ is a semi-algebraic set of  degree $E^{\Cun_4}E^{\Cun_4(\sum _{j=1}^k\gamma _{j})r_{1,1}}$ in $\R^{sdl}$. Its Lebesgue measure satisfies:
\bee 
\meas( (\tilde S^{(k,s)}_{\mathrm {cs}}(\vec \boldom ))_{\mathrm pr}^{\bold T _1, ...,\bold T _{s}})<
 E^{-r_1'\cun_2}L^{\Cun_4}E^{\Cun_4}E^{\Cun_4(\sum _{j=1}^k\gamma _{j})r_{1,1}}.  \label{mesS-pr-omega*}\ene 
 \enl
 \bep We apply Lemma 1.18 in \cite{Bou1} (see Appendix 3), where we take there 
 $A=\tilde S^{(k,1)}(\vec \boldom )$, 
 $r=1+d$, $t=(\rho , \bxi )$, $x_i=\bold T_i $, 
 $B=\Cun_3 E^{(\sum _{j=1}^k\gamma _{j})r_{1,1}+2d(l+1)+2}$ (see Lemma \ref{L5.15}), 
 $\eta =\Cun_3 E^{-r_1'+(\sum _{j=1}^k\gamma _{j})r_{1,1}+2d(l+1)-d}(2L)^{(l-1)d}$.
 \enp
 
 Let $\tilde S_{\mathrm pr}^{(k,s)}=\tilde S_{\mathrm pr}^{(k,s)}(\gamma_1,...,\gamma_k)$ be the projection of $\tilde S^{(k,s)}$ on the space $\R^{ld(1+s)}$ of vectors $(\vec \boldom, \bold T _1, ...,\bold T _{s})$:
\bee 
\label{5.44} 
\bes
\tilde S_{\mathrm pr}^{(k,s)}:&=(\tilde S^{(k,s)})_{pr}^{\vec \boldom, \bold T _1, ...,\bold T _{s}}\\
&=\left\{ (\vec \boldom ,\bold T _1, ...,\bold T _{s}): \exists (\rho, \bxi )\mbox{ such that } (\vec \boldom, \rho, \bxi, \bold T _i) \in \tilde S^{(k,s)} , i=1,...,s\right\}. 
\end{split}
\ene
It is easy to see that

\bee \label{k,s}
\tilde S_{\mathrm pr}^{(k,s)}=\left\{ (\vec \boldom ,\bold T _1, ...,\bold T _{s}): \tilde S^{(k,1)}_{\mathrm {cs}}(\vec \boldom, \bold T _1)\cap...\cap \tilde S^{(k,1)}_{\mathrm {cs}}(\vec \boldom, \bold T _{s})\neq \emptyset \right\}, \ene
where $\tilde S^{(k,1)}_{\mathrm {cs}}(\vec \boldom, \bold T )$ is the cross section of $\tilde S^{(k,1)}$ by $\vec \boldom, \bold T$.
Obviously, the set $(\tilde S^{(k,s)}_{\mathrm {cs}}(\vec \boldom ))_{\mathrm pr}^{\bold T _1, ...,\bold T _{s}}$ introduced above is a cross section of $\tilde S_{\mathrm pr}^{(k,s)}$. We also note that the set defined in \eqref{0,s} is a special case of \eqref{k,s} when $k=0$. 

\bel \label{L6.15} Let $s=2^{d+1}$. Then the set $\tilde S_{pr}^{(k,s)}(\gamma _1,...,\gamma _k)$ is a semi-algebraic set of degree $E^{\Cun_5}E^{\Cun_5(\sum _{j=1}^k\gamma _{j})r_{1,1}}$ in $\R^{(s+1)dl}$. Its Lebesgue measure satisfies:
\bee 
\meas(\tilde S_{pr}^{(k,s)})<
 E^{-r_1'\cun_2}L^{\Cun_4}E^{\Cun_4}E^{\Cun_4(\sum _{j=1}^k\gamma _{j})r_{1,1}}. \label{mesS-pr*}\ene 
 \enl
 \bep The estimate of the degree of the set $\tilde S_{pr}^{(k,s)}$ follows from the Tarski-Seiderberg principle, see (1.9) in \cite{Bou1}. Integrating 
 \eqref{mesS-pr-omega*} with respect to $\vec \boldom $, we obtain \eqref{mesS-pr*}. \enp

Next, for a given positive $\gamma$ we consider 
\bee\label{K-gamma}
K(\gamma ):=\Om(\En^{\gamma r_{1,1}}).
\ene
 We split it into several regions. Let  $\gamma _1<\gamma $ and put 
 \bee \label{Pi0} \Piy ^0 (\gamma _1, \gamma ):=\left \{\bn \in K(\gamma ): \min _{j{\newred =1,...,l}}|n_j|\geq E^{\gamma_1r_{1,1}}\right\}.
 \ene
 Obviously,  $\Piy ^0$ is the union of $2^l$ disjoint cubes with the side length $E^{\gamma r_{1,1}}-E^{\gamma _1r_{1,1}}$.
 Put 
 \bee \label{Pi1i} \Piy ^1 _i(\gamma _1, \gamma ):=\left \{\bn \in K(\gamma ): |n_i| <E^{\gamma _1r_{1,1}} \right\},\ i=1,...,{\newred l}
 \ene
 and
 \bee \label{Pi1} 
\Piy ^1= \Piy ^1(\gamma _1, \gamma ):=\cup _{i=1}^{{\newred l}}\Piy ^1_i.
 \ene
 Clearly,
 \bee \label{Jan22} K(\gamma )=\Piy ^0(\gamma_1 , \gamma )\cup \Piy ^1(\gamma_1 , \gamma).\ene 
 Note that $\Piy ^1 _i\cap \Piy ^1 _j \neq \emptyset $ if ${\newred l}>1$. Our goal is to split $K(\gamma )$ into a disjoint union of sets of similar type. Assume that $\gamma _1<\gamma _2<\gamma $ and define 
 $$\Piy ^2 _{i,j}(\gamma _1, \gamma _2, \gamma ):=\left \{\bn \in K(\gamma ): |n_i| <E^{\gamma _1r_{1,1}}, \  |n_j| <E^{\gamma _2r_{1,1}}\right\}, \ i\neq j,\ i,j=1,...,{\newred l}$$

 $$\Piy ^2 (\gamma _1, \gamma _2, \gamma ):=\cup _{i,j=1, i\neq j}^{{\newred l} }\Piy ^2 _{i,j}(\gamma _1, \gamma _2, \gamma ),$$
 and
  \bee \label{tildePi1}\tilde \Piy ^1 (\gamma _1, \gamma _2, \gamma ):=\Piy ^1 (\gamma _1, \gamma )\setminus \Piy ^2 (\gamma _1, \gamma _2, \gamma ).\ene
Clearly,
 the sets $\Piy ^0 $, $\tilde \Piy ^1$ and $\Piy ^2$ are all mutually disjoint and
 $$K(\gamma )= \Piy ^0(\gamma_1 , \gamma)\sqcup \tilde \Piy ^1(\gamma _1, \gamma _2, \gamma )\sqcup \Piy ^2 (\gamma _1, \gamma _2, \gamma ).$$
 Further, for $ 2<k\leq {\newred l}$ and 
  $\gamma _1<\gamma _2<...<\gamma _k<\gamma $ we define
 $$\Piy ^k_{i_1...i_k}(\gamma _1,...,\gamma _k,\gamma ):=\left \{\bn \in K(\gamma ): |n_{i_j}| <E^{\gamma _jr_{1,1}},\ j=1,...,k\right\},$$
 (as usual, all indices $i_1,...,i_k$ are different here) and 
  $$\Piy ^k (\gamma _1, ...,\gamma _k, \gamma ):=\cup _{i_1,...i_k=1}^{{\newred l}}\Piy ^k _{i_1,...i_k} (\gamma _1, ...,\gamma _k, \gamma ).$$
  Next,
  \bee \label{Jan28} \tilde \Piy ^k (\gamma _1, ...,\gamma _k, \gamma _{k+1}, \gamma ):=\Piy ^k (\gamma _1, ...,\gamma _k, \gamma )\setminus \Piy ^{k+1} (\gamma _1, ...,\gamma _k, \gamma _{k+1}, 
  \gamma ), \  \  k=1,..., {\newred l}-1.\ene
  {\newred
  In principle,  $\tilde \Piy ^k$ consists of points $\bn$ such that the absolute value of some $l-k$ of their coordinates is much much bigger than the absolute value of the rest $k$  coordinates. This is done with a view of applying Lemma \ref{LB}.
  }
  Obviously, 
  $$K(\gamma )=\Piy ^0(\gamma_1 , \gamma)\sqcup_{k=1}^{{\newred l}-1} \tilde \Piy ^k(\gamma _1, ...,\gamma _{k+1} ,\gamma )\sqcup  \Piy ^{{\newred l}}(\gamma _1,....,\gamma _d, \gamma ).$$
  Clearly,
  $\Piy ^{{\newred l}}(\gamma _1,....,\gamma _{\newred l}, \gamma )\subset K(\gamma _{\newred l})$ and $\gamma $ can be omitted from the formula for $\Piy ^{{\newred l}}$. Further we will write just $\Piy ^{\newred l}(\gamma _1,....,\gamma _{\newred l})$.

\bel \label{L6.18} Let $\vec \boldom \in \cube^{ld}$, $(\vec \boldom, \rho, \bxi )\in S_{total}$. Assume there is $$\bn \in \tilde \Piy ^k (\gamma _1, ...,\gamma _k, \gamma _{k+1}, \gamma )$$ such that $(\vec \boldom, \rho, \bxi+ \bn \vec \boldom) \in S_{total}$. Then there is 
$$\bn ^*\in  \Piy ^0 (\gamma _{k+1}, \gamma )$$ such that $$(\vec \boldom, \rho , \bxi, \bn ^*\vec \boldom )\in S^{(k)}(\gamma _1, ...,\gamma _k),$$ the set 
$S^{(k)}(\gamma _1, ...,\gamma _k)$ being defined by \eqref{Jan23-b}, \eqref{6.22}, \eqref{6.23} with $L=E^{\gamma r_{1,1}}$. \enl
\bep Indeed, let  $(\vec \boldom, \rho, \bxi+ \bn \vec \boldom) \in S_{total}$, $\bn \in \tilde \Piy ^k (\gamma _1, ...,\gamma _k, \gamma _{k+1}, \gamma )$. By the definition of $ \tilde \Piy ^k $, there is a finite sequence
$i_1,...i_k$,  $1\leq i_j \leq {\newred l}$  ($j=1,...,k$), such that $|n_{i_j}|<E^{\gamma _j r_{1,1}}$ for all  $j=1,...,k$, and $|n_i|>E^{\gamma _{k+1}r_{1,1}}$ when $i\neq i_1,...i_k$.  Hence, by \eqref{Jan23-a}, \eqref{6.22},
$$\bn \vec \boldom \in S^{(k)}_{(i_1,...i_k),(n_{i_1}, ...,n_{i _k})}(\vec \boldom, \rho ,\bxi)\subset S^{(k)}(\gamma _1, ...,\gamma _k)(\vec \boldom, \rho ,\bxi).$$ 

Note that the definition \eqref{Jan23-a} of 
$ S^{(k)}_{(i_1,...i_k),(n_{i_1}, ...,n_{i _k})}(\vec \boldom, \rho ,\bxi)$ does not include components $\bt_{i_1},...,\bt_{i_k}$ of ${\bf T}$.
Therefore,  for any $\bn^*=\bn +\sum _{j=1}^k m_{i_j}\be_{i_j}$, such than $\bn^*\in K(\gamma )$, 
the vector $(\vec \boldom, \rho ,\bxi +\bn ^*\vec \boldom)$ belongs to 
$S^{(k)}_{i_1,...i_k}(\gamma _1, ...,\gamma _k)$ too. 
Hence it belongs to $S^{(k)}(\gamma _1, ...,\gamma _k)$. 
Obviously, $m_{i_j}$ can be chosen in such a way that $\bn^*\in \Piy ^0 (\gamma _{k+1}, \gamma )$.
\enp

\subsection{Bourgain's Lemma and a good set of $\vec \boldom $.} \label{S6.3} Here we formulate a lemma which is a direct consequence of Lemma 1.20 in \cite{Bou1}.
 Technical explanations how this lemma follows from Lemma 1.20 in \cite{Bou1} are in Appendix 2.
 
 \bel \label{LB} Assume $A\subset \cube^{ld}\times \R^{lds}$ 
 is a semi-algebraic set of degree $B$ and
$$\left|A\right|_{ld(s+1)}<\eta. $$

Let $ {\mathcal N_1}, ...,  {\mathcal N_{s}}\subset \Z^l$ 
 be finite sets with the following properties:
 \bee \min _{1\leq j\leq l}|n_j|>\left(B\max _{1\leq j\leq l}|m_j|\right)^{\Cun_6}, \label{1.21}
\end{equation}
if $\bn :=(n_1,...,n_l) \in  {\mathcal N_i}$ and $\bm :=(m_1,...,m_l)\in {\mathcal N_{i-1}}$, $i=2,...s$.

Assume also
\bee 
\frac{1}{\eta }>\max _{\bn \in {\mathcal N_{s}}}|\sqrt d\bn |^{\Cun_6}.
\label{1.22}
\ene
Let $\Lambda \subset \cube^{ld}$,
\bee
\Lambda:=
\left\{ \vec \boldom: \left( \vec \boldom , \{n ^{(i)}_j\boldom  _j\}_{j,i=1}^{l,s}\right)\in A\ \mbox{ for some } \bn ^{(i)} \in {\mathcal N_i, i=1,...,s.}\right\}. \label{Lambda}
\ene
Then $\Lambda $ satisfies the estimate
\bee
\meas(\Lambda)< B^{\Cun_6}\delta ,\  \  \  \delta ^{-1}:=\min _{\bn \in {\mathcal N_1}}\min _{1\leq j\leq l}|n_j|. \label{mesLambda*}
\ene
\enl
{\newred
\ber\label{new:rmBourgain}
Loosely speaking, this Lemma says the following. Consider the quasi-periodic lattice $\{\bn\vec\boldom, \bn\in\Z^l\}$. Then we can guarantee that some strategically placed point from this lattice does not lie inside a bad semi-algebraic set $A$ assuming that we control the measure and the degree of this set. The price for this is throwing away a small set of frequencies $\vec\boldom$. We also remark that condition \eqref{1.22} is one of the reasons we had to reach an exponential scale in $\lambda$ before applying the Bourgain machinery. 
\enr
}

From now on we put $s:=2^{d+1}$. Let ${\tilde C}_0$ be a sufficiently large constant to be specified at the end of this section. We put 
\bee \label{Ck}{\tilde C}_k:={\tilde C}_0^{(4s)^k},\ \ \ k=1,...,{\newred l}.\ene
Next, we define the following constants:
$$\gamma _{i_0}^{(0)}:={\tilde C}_{\newred l}^{-i_0},\ \  i_0=1,...,2s;$$
$$\gamma _{i_0,i_1}^{(1)}:=\gamma _{i_0-1}^{(0)}{\tilde C}_{{\newred l}-1}^{-i_1}, \ \ \ i_0=2,...,2s, \  \ i_1=1,...,2s;$$
\bee \label{Nov8} \gamma _{i_0,...,i_{k-1},i_k}^{(k)}:=\gamma _{i_0,..,i_{k-2},i_{k-1}-1}^{(k-1)}{\tilde C}_{{\newred l}-k}^{-i_k}, \ \ \ i_0,..,i_{k-1}=2,...,2s, \  \ i_k=1,...,2s.\ene
Obviously,
\bee \label{Jan23} \gamma _{i_0,...,i_{k-1},i_k}^{(k)}< \gamma _{i_0,...,i_{k-1},i_k-1}^{(k)},\ene
\bee \label{Nov10} \gamma _{i_0,...,i_{k-1},i_k}^{(k)}<\gamma _{i_0,...,i_{k-2},i_{k-1}-1}^{(k-1)}<1.\ene
Using \eqref{Nov8}, it is not difficult to show that
\bee
{\tilde C}_0^{2s}\gamma _{i_0,...,i_{k-1}}^{(k-1)}\leq \gamma _{i_0,...,i_{k-1},i_k}^{(k)},  \ \mbox{for any} \ i_k=1,...,2s. \label{Nov8-a}
\ene
Indeed, by \eqref{Nov8},
$$ \gamma _{i_0,...,i_{k-1}}^{(k-1)}=\gamma _{i_0,...,i_{k-2}-1}^{(k-2)}{\tilde C}_{{\newred l}-k+1}^{-i_{k-1}}, $$
$$ \gamma _{i_0,...,i_{k-1},i_k}^{(k)}=\gamma _{i_0,...,i_{k-2}-1}^{(k-2)}{\tilde C}_{{\newred l}-k+1}^{-i_{k-1}+1}{\tilde C}_{{\newred l}-k}^{-i_k}, \ i_k=1,..., 2s.$$
It follows from \eqref{Ck} that
${\tilde C}_{{\newred l}-k+1}^{-1}{\tilde C}_0^{2s}<{\tilde C}_{{\newred l}-k}^{-i_k}$ for any $i_k=1,...,2s$. Now \eqref{Nov8-a} follows immediately.

Hence, we have a nested sequence of intervals:
$$\cup _{i_k=2}^{2s}\left(\gamma _{i_0,...,i_{k-1},i_k}^{(k)}, \gamma _{i_0,...,i_{k-1},i_k-1}^{(k)}\right)\subset \left (\gamma _{i_0,...,i_{k-1}}^{(k-1)}, \gamma _{i_0,...,i_{k-1}-1}^{(k-1)}
\right)$$
for any $i_0,..., i_{k-1}=2,...,2s$.

Now we define a set $ G' \subset  \cube^{ld}$ of good frequencies:

\bee \label{good-omegas-set}
{ G'}(E,{\tilde C}_0):=\cube^{ld}\setminus \Lambda,
\ene
where a bad set 
 $\La=\Lambda (E, {\tilde C}_0)$
is defined as  
 \bee
 \La:=\cup _{k=0}^{{\newred l}-1}\Lambda ^{(k)}, \ \  \Lambda ^{(k)}\subset  \cube^{ld}, 
 \ene
and  $\Lambda ^{(k)}$ are defined as follows. 
First, $\Lambda ^{(0)}=\Lambda ^{(0)}(E,{\tilde C}_0)$ is defined by  Lemma \ref{LB} (see \eqref{Lambda}) for the set $A:=\tilde S _{pr}^{(0,s)}$ with  $\CN_i(E, {\tilde C}_0)$ being defined by the formula:
\bee 
\CN_{s+1-i_0}:=\Piy ^0\left(\gamma _{2i_0}^{(0)}, \gamma _{2i_0-1}^{(0)}\right), \ i_0=1,...,s \label{Ni}
\ene
so that 
$L:=\max _{\bn\in \CN_{s}}|\bn | = E^{\gamma ^{(0)}_{1}r_{1,1}}$. 
Next, 
\bee \label{Lambda1}\Lambda ^{(1)}(E,{\tilde C}_0):=\cup _{i_0=1}^{s}\Lambda ^{(1)}_{i_0},\ene
where $\Lambda ^{(1)}_{i_0}$ is defined by  Lemma \ref{LB} for the set $A:=\tilde S_{pr}^{(1,s)}(\gamma _{2i_0}^{(0)})$ and $\CN_i$ being defined by the formula: 
\bee 
\label{Ni1}\CN_{s+1-i_1}:=
\Piy ^0\left(\gamma _{2i_0,2i_1}^{(0)}, \gamma _{2i_0,2i_1-1}^{(0)}\right), \ \ i_1=1,...,s. 
\ene 
Finally,
\bee \label{Lambdak} 
\Lambda ^{(k)}(E,{\tilde C}_0):=\cup _{i_0,...,i_{k-1}=1}^{s}\Lambda ^{(k)}_{i_0,...,i_{k-1}},
\ene
where $\Lambda ^{(k)}_{i_0,...,i_{k-1}}$ is defined by Lemma \ref{LB} for $A=\tilde S_{pr}^{(k,s)}(\gamma _{2i_0}^{(0)},...,\gamma _{2i_0,...,2i_{k-1}}^{(k-1)})$, with
\bee 
\label{N}\CN_{s+1-i_k}:=\Piy ^0\left(\gamma _{2i_0,...,2i_{k-1}, 2i_k}^{(k)}, \gamma _{2i_0,...,2i_{k-1}, 2i_{k}-1}^{(k)}\right),\  \  i_k=1,...,s,
\ene
so that  
 $L:=\max _{\bn\in \CN_{s}}|\bn | = E^{\gamma _{2i_0,...,2i_{k-1}, 1}^{(k)}r_{1,1}}$.  
\bel Assume ${\tilde C}_0$ is sufficiently large.
  Then 
\bee
\meas(\Lambda ^{(0)})<E^{-\frac12\gamma _{ 2s}^{(0)}r_{1,1}} \label{good-omegas-1}\ene
and for any $k=1,...,{\newred l}-1$, the set $\Lambda ^{(k)}_{i_0,...,i_{k-1}}$ satisfies the estimate: 
\bee
\meas(\Lambda ^{(k)}_{i_0,...,i_{k-1}})<E^{-\frac 12 \gamma _{2i_0,...,2i_{k-1}, 2s}^{(k)}r_{1,1}}.\label{good-omegas-k}\ene
\enl
\bec
\bee
\meas(\Lambda )<\Cun_7 E^{-\frac12\gamma _{2s}^{(0)}r_{1,1}}. \label{good-omegas}\ene
\enc
We obtain the corollary summing the estimates for $\Lambda ^{(k)}\subset \cube^{ld}$ and using \eqref{Jan23}, \eqref{Nov8-a}.

\bep Assume $\tilde C_0$ is large enough. Let us check that \eqref{1.21} holds for sets $\CN_i$ defined by \eqref{Ni} for $k=0$ and \eqref{N} for $k=1,..., {\newred l}-1$.  Indeed, let $k=0$. By Lemma \ref{Lemma6.10}, the degree of the semi-algebraic set $A=\tilde S _{pr}^{(0,s)}$ obeys the estimate: $B_0<E^{\Cun_2}$. Taking into account that $\gamma _{2i_0}^{(0)}=
{\tilde C}_d\gamma _{2i_0+1}^{(0)}$ and using ${\tilde C}_0>\Cun_6\Cun_2$, we easily check \eqref{1.21}. Let $k>1$
 and $\bn\in \CN_{s+1-i_k}$, $\bm\in \CN_{s-i_k}$ for $A=\tilde S_{pr}^{(k,s)}(\gamma _{2i_0}^{(0)},...,\gamma _{2i_0,...,2i_{k-1}}^{(k-1)})$. By Lemma \ref{L6.15}, the degree $B_k$
 of the semi-algebraic set $A$ admits the estimate:
 \bee \label{Bk} B_k<E^{\Cun_5}E^{{\tilde \gamma _k}r_{1,1}},\ene
 where
 \bee\label{cun1}
 \tilde \gamma _k:=\Cun_5(\gamma _{2i_0}^{(0)}+...+\gamma _{2i_0,...,2i_{k-1}}^{(k-1)}).
 \ene
 By \eqref{Nov8-a},
 \bee \label{Jan29} \tilde \gamma _k<2\Cun_5\gamma _{2i_0,...,2i_{k-1}}^{(k-1)}.\ene
 Let us now assume that 
 \bee\label{ass1}
 (\gamma _{2i_0}^{(0)}+...+\gamma _{2i_0,...,2i_{k-1}}^{(k-1)})r_{1,1}>1.
 \ene
 After we fix the choice of the constants $\gamma$ at the end of this section, \eqref{ass1}  will be one of the new conditions on $E_*$ which we will summarise later.
 Next,  by \eqref{N},
$\min |\bn|= E^{\gamma ^{(k)}_{2i_0,...,2i_{k-1}, 2i_k} r_{1,1}}$, $\max |\bm|=E^{\gamma ^{(k)}_{2i_0,...,2i_{k-1}, 2i_k+1}r_{1,1}}$. Considering \eqref{Bk}, \eqref{Jan29},  and \eqref{Nov8}, \eqref{Nov8-a} and using ${\tilde C}_0>4\Cun_6\Cun_5$, we  easily check that  \eqref{1.21} holds.

 Now, we check \eqref{1.22}. Let $k=0$. By \eqref{Ni}, $\max _{\bn\in \CN_{s}}|\bn | = E^{\gamma ^{(0)}_{1}r_{1,1}}$ and we can apply  Lemma \ref{Lemma6.10} with
 $L=\max _{\bn\in \CN_{s}}|\bn | = E^{\gamma ^{(0)}_{1}r_{1,1}}$. By this lemma $\eta =E^{\Cun_1}E^{-r_1'\cun_1}(2L)^{(l-1)sd}$. Now, we assume
 \bee\label{ass2}
\gamma ^{(0)}_{1}r_{1,1}>\Cun_1.
 \ene
  Taking into account \eqref{new5.1}, \eqref{r1'} and $\gamma_0<1$, we easily see that   \eqref{1.22} holds when ${\tilde C}_0>(\Cun_6+lsd)/\cun_1$. 
 
 Assume now that $k\geq 1$. By \eqref{N}, $\max _{\bn\in \CN_{s}}|\bn | = E^{\gamma _{2i_0,...,2i_{k-1}, 1}^{(k)}r_{1,1}}$, and we can apply  Lemma \ref{L6.15} with
 $L=\max _{\bn\in \CN_{s}}|\bn | = E^{\gamma _{2i_0,...,2i_{k-1}, 1}^{(k)}r_{1,1}}$. By this lemma, $\eta = E^{-r_1'\cun_2}L^{\Cun_4}E^{2\Cun_4(\gamma _{2i_0}^{(0)}+...+\gamma _{2i_0,...,2i_{k-1}}^{(k-1)})r_{1,1}}$. Using ${\tilde C}_0>(\Cun_6+3\Cun_4)/\cun_2$, we obtain that 
 $\gamma _{2i_0,...,2i_{k-1}, 1}^{(k)}$ is small enough and, hence, \eqref{1.22}   holds. 
 
 Next, by \eqref{Ni}, $\delta ^{-1}=E^{\gamma _{2s}^{(0)}r_{1,1}}$ (for $k=0$) and, by \eqref{N}, $\delta ^{-1}=E^{\gamma _{2i_0,...,2i_{k-1}, 2s}^{(k)}r_{1,1}}$ otherwise. Hence, for $k=0$:
 $$B_0^{\Cun_6}\delta\leq E^{\tilde C_0}E^{-\gamma _{2s}^{(0)}r_{1,1}}\leq E^{-\frac12\gamma _{2s}^{(0)}r_{1,1}},$$ 
 where we assumed 
 \bee\label{ass3}
\gamma ^{(0)}_{2s}r_{1,1}>2\tilde C_0.
 \ene
 For $k\geq 1$,
 $$B_k^{\Cun_6}\delta\leq 
 E^{\left({\tilde C_0}(\gamma _{2i_0}^{(0)}+...+\gamma _{2i_0,...,2i_{k-1}}^{(k-1)})-\gamma _{2i_0,...,2i_{k-1}, 2s}^{(k)}\right)r_{1,1}}.$$
 Using \eqref{Nov8-a}, we obtain:
 $$B_k^{\Cun_6}\delta<E^{-\frac 12 \gamma _{2i_0,...,2i_{k-1}, 2s}^{(k)}r_{1,1}}. $$ 
 Lemma \ref{LB} yields \eqref{good-omegas-1} and \eqref{good-omegas-k}.
 \enp
 
 \subsection{Proof of Lemma \ref{MGL}. } \label{S6.4}
 Assume that $\vec\boldom \in G'(E, {\tilde C}_0)$ and $(\vec \boldom, \rho, \bxi )\in S_{total}$. Now we  find  $\gamma =\gamma (\vec \boldom,  \rho , \bxi,\tilde C_0, l,d)$,  satisfying  \eqref{gamma},  such that\eqref{no} holds. 
 We start by considering $\bn \in \cup _{i=1}^s \CN_{i}$, where $\CN_{i}(E,{\tilde C}_0)$ is given by \eqref{Ni}.  Assume 
 \bee (\vec\boldom, \rho , \bxi +\bn \vec \boldom)\in S_{total}\label{main} \ene  
  holds for each $\bn$ from a collection $\{\bn ^{(i)}\}$, $\bn ^{(i)}\in \CN_i$, $i=1,...,s$. We will show that
  \bee \label{Jan25}
  (\vec \boldom, \{n ^{(i)}_j\boldom  _j\}_{j,i=1}^{l,s}) \in \tilde S _{pr}^{(0,s)}.
  \ene
  Indeed, 
  by \eqref{main}, \eqref{5.15},  
  $(\vec\boldom, \rho , \bxi ,\bn ^{(1) 
}\vec \boldom , ...,\bn ^{(s)}\vec \boldom)\in \tilde S^{(s)}$, and by \eqref{Sspr}, \eqref{0,s}, we have \eqref{Jan25}.
This means that $\vec\boldom \in \Lambda^{(0)}$ by the definition of $\Lambda ^{(0)}$ and, hence,  contradicts our assumption $\vec \boldom \in G'$. It follows that
there is
 $i_0=i_0(\vec \boldom,  \rho , \bxi,\tilde C_0, l,d)$, such that $N_{s+1-i_0}$ contains no points $\bn$: $(\vec \boldom, \rho , \bxi +\bn \vec \boldom )\in S_{total}$. 
 From now on, we consider only $\bn \in K(\gamma ^0_{2i_0-1})$, see \eqref{K-gamma}. This means that we are looking for $\gamma \leq \gamma ^0_{2i_0-1}$. Further, by \eqref{Jan22}
$$K(\gamma ^0_{2i_0-1})=\Piy ^0(\gamma ^0_{2i_0}, \gamma ^0_{2i_0-1})\cup \Piy ^1(\gamma ^0_{2i_0}, \gamma ^0_{2i_0-1}).$$  Hence, the only option remaining for \eqref{main}
to hold is  $\bn \in \Piy ^1(\gamma ^0_{2i_0}, \gamma ^0_{2i_0-1})$.
 Let us consider $\cup _{i=1}^s \tilde \Piy ^1(\gamma ^{(0)}_{2i_0}, \gamma ^{(1)}_{2i_0,2i},\gamma ^{(1)}_{2i_0,2i-1})\subset \Piy ^1(\gamma ^{(0)}_{2i_0}, \gamma ^{(0)}_{2i_0-1})$, see \eqref{tildePi1}, \eqref{Nov10}.
 Assume there is a family $\{\bn ^{(i)}\}$, $i=1,...,s$, $\bn ^{(i)}\in \tilde \Piy ^1(\gamma ^{(0)}_{2i_0}, \gamma ^{(1)}_{2i_0,2i},\gamma ^{(1)}_{2i_0,2i-1})$, such that \eqref{main} holds for every  $\bn=\bn ^{(i)}$, $i=1,...,s$. By Lemma \ref{L6.18}  there is $\bn^{(i)*}\in
 \Piy ^0( \gamma ^{(1)}_{2i_0,2i},\gamma ^{(1)}_{2i_0,2i-1})$ such that 
 $(\vec \boldom, \rho, \bxi, \bn^{(i)*}\vec \boldom) \in S^{(1)}(\gamma ^{(0)}_{2i_0})$, $i=1,...,s$.  By \eqref{5.38},
   \bee \label{Jan26}(\vec \boldom, \rho, \bxi, \{n ^{(i)*}_j\boldom  _j\}_{j,i=1}^{l,s}) \in \tilde S ^{(1,s)}(\gamma ^{(0)}_{2i_0}).\ene
  By the definition of the projection,
   \bee \label{Jan26a}(\vec \boldom, \{n ^{(i)*}_j\boldom  _j\}_{j,i=1}^{l,s}) \in \tilde S _{pr}^{(1,s)}(\gamma ^{(0)}_{2i_0}).\ene
   This means that $\vec \boldom  \in \Lambda  ^{\newred (1)}$ by the definition of $\Lambda ^{\newred (1)}$, see  \eqref{Lambda1}, \eqref{Ni1},  and, hence,  contradicts  the assumption $\vec \boldom \in G'$. 
Therefore, there is $i_1$ such that 
$\tilde \Piy ^1(\gamma ^{(0)}_{2i_0}, \gamma ^{(1)}_{2i_0,2i_1},\gamma ^{(1)}_{2i_0,2i_1-1})$ does not contain points satisfying \eqref{main}. From now on, we consider only $\bn \in K(\gamma ^{(1)}_{2i_0,2i_1-1})$.  This means that we are looking for $\gamma \leq \gamma ^{(1)}_{2i_0,2i_1-1}$.
Since 
 $$K(\gamma ^{(1)}_{2i_0,2i_1-1})=\Piy ^0(\gamma ^{(0)}_{2i_0}, \gamma ^{(1)}_{2i_0,2i_1-1})\cup \tilde \Piy ^{1}(\gamma ^{(0)}_{2i_0}, \gamma ^{(1)}_{2i_0,2i_1}, \gamma ^{(1)}_{2i_0,2i_1-1})\cup 
 \Piy ^{2}(\gamma ^{(0)}_{2i_0},\gamma ^{(1)}_{2i_0,2i_1}, \gamma ^{(1)}_{2i_0,2i_1-1}),$$  the only option remaining for \eqref{main} is $\bn \in \Piy ^{2}(\gamma ^{(0)}_{2i_0},\gamma ^{(1)}_{2i_02i_1}, \gamma ^{(1)}_{2i_02i_1-1})$. Further, we describe an induction procedure. Assume we consider $\bn \in \Piy ^{k}\left(\gamma ^{(0)}_{2i_0},...,\gamma ^{(k-1)}_{2i_0,...,2i_{k-1}}, \gamma ^{(k-1)}_{2i_0,...,2i_{k-1}-1}\right)$, $k\geq 2$. Note that
 $$\cup _{i=1}^s\tilde \Piy ^k\left(\gamma ^{(0)}_{2i_0},...,\gamma ^{(k-1)}_{2i_0,...,2i_{k-1}}, \gamma ^{(k)}_{2i_0,...,2i}, \gamma ^{(k)}_{2i_0,...,2i-1}\right)\subset \Piy ^k\left(\gamma ^{(0)}_{2i_0},...,\gamma ^{(k-1)}_{2i_0,...,2i_{k-1}}, \gamma ^{(k-1)}_{2i_0,...,2i_{k-1}-1}\right).$$
 Assume that there is a family $\{\bn ^{(i)}\} $, $\bn ^{(i)} \in\tilde  \Piy ^k\left(\gamma ^{(0)}_{2i_0},...,\gamma ^{(k-1)}_{2i_0,...,2i_{k-1}}, \gamma ^{(k)}_{2i_0,...2i}, \gamma ^{(k)}_{2i_0,...,2i-1}\right)$, $i=1,...,s$, such that \eqref{main} holds for each $\bn=\bn ^{(i)}$. By Lemma \ref{L6.18}, there is $\bn^{(i)*}\in
 \Piy ^0(\gamma ^{(k)}_{2i_0,...,2i}, \gamma ^{(k)}_{2i_0,...,2i-1})$ such that 
 $$(\vec \boldom, \rho, \bxi, \bn^{(i)*}\vec \boldom) \in S^{(k)}\left(\gamma ^{(0)}_{2i_0},...,\gamma ^{(k-1)}_{2i_0,...,2i_{k-1}}\right),\ \ i=1,...,s.$$
  By \eqref{5.38},
   \bee \label{Jan26b}(\vec \boldom, \rho, \bxi, \{n ^{(i)*}_j\boldom  _j\}_{j,i=1}^{l,s}) \in \tilde S ^{(k,s)}\left(\gamma ^{(0)}_{2i_0},...,\gamma ^{(k-1)}_{2i_0,...,2i_{k-1}}\right).\ene
  By the definition of the projection,
   \bee \label{Jan26c}(\vec \boldom, \{n ^{(i)*}_j\boldom  _j\}_{j,i=1}^{l,s}) \in \tilde S _{pr}^{(k,s)}\left(\gamma ^{(0)}_{2i_0},...,\gamma ^{(k-1)}_{2i_0,...,2i_{k-1}}\right).\ene
   This means that $\vec \boldom  \in \Lambda  ^{\newred (k)}$ by the definition of $\Lambda ^{\newred (k)}$, see  \eqref{Lambdak}, \eqref{N},  and, hence,  contradicts the assumption $\vec \boldom \in G'$. 
Hence,  there is $i_k$ such that 
$$\tilde \Piy ^k\left(\gamma ^{(0)}_{2i_0},...,\gamma ^{(k-1)}_{2i_0,...,2i_{k-1}}, \gamma ^{(k)}_{2i_0,...,2i_{k}}, \gamma ^{(k)}_{2i_0,...,2i_{k}-1}\right)$$ 
does not contain points satisfying \eqref{main}. From now on, we consider only $\bn \in K(\gamma ^{(k)}_{2i_0,...,2i_{k}-1})$. 
  This means that we are looking for $\gamma \leq \gamma ^{(k)}_{2i_0,...,2i_{k}-1}$. By \eqref{Jan22}, \eqref{Jan28},
\bee \label{Jan28a} 
\bes
K(\gamma ^{(k)}_{2i_0,...,2i_{k}-1})&=\Piy ^0(\gamma ^{(0)}_{2i_0}, \gamma ^{(k)}_{2i_0,...,2i_{k}-1})\cup \\
&\cup _{j=1}^k\tilde \Piy ^j\left(\gamma ^{(0)}_{2i_0},...,\gamma ^{(j-1)}_{2i_0,...,2i_{j-1}}, \gamma ^{(j)}_{2i_0,...,2i_{j}}, \gamma ^{(k)}_{2i_0,...,2i_{k}-1}\right)\\
&\cup \Piy ^{k+1}\left(\gamma ^{(0)}_{2i_0},...,\gamma ^{(k-1)}_{2i_0,...,2i_{k-1}}, \gamma ^{(k)}_{2i_0,...,2i_{k}}, \gamma ^{(k)}_{2i_0,...,2i_{k}-1}\right).
\end{split}
 \ene 
 This implies that   
 $$\bn \in  \Piy ^{k+1}\left(\gamma ^{(0)}_{2i_0},...,\gamma ^{(k-1)}_{2i_0,...,2i_{k-1}}, \gamma ^{(k)}_{2i_0,...,2i_{k}}, \gamma ^{(k)}_{2i_0,...,2i_{k}-1}\right).$$  Thus, we have arrived at  the next step of the induction procedure. 
Taking $k={\newred l}-1$, we consider   $\bn \in K(\gamma ^{({\newred l-1)}}_{2i_0,...,2i_{{\newred l}-1}-1})$  as in \eqref{Jan28a}, and conclude that \eqref{main} can hold only when $$\bn \in \Piy ^{\newred l}\left(\gamma ^{(0)}_{2i_0},...,\gamma ^{({\newred l}-1)}_{2i_0,...,2i_{{\newred l}-1}}, \gamma ^{({\newred l}-1)}_{2i_0,...,2i_{{\newred l}-1}-1}\right),$$ which is just a subset of $K(\gamma ^{({\newred l}-1)}_{2i_0,...,2i_{{\newred l}-1}})$.
This mean that every $\bn \in K(\gamma ^{({\newred l}-1)}_{2i_0,...,2i_{{\newred l}-1}-1})$ satisfying  \eqref{main} is in fact in  $K(\gamma ^{({\newred l}-1)}_{2i_0,...,2i_{{\newred l}-1}})$.
Thus, we arrive at the conclusion that every $\gamma $ in the interval $({\tilde C}_0\gamma ^{({\newred l}-1)}_{2i_0,...,2i_{{\newred l}-1}}, \gamma ^{({\newred l}-1)}_{2i_0,...,2i_{{\newred l}-1}-1})$ satisfies
\eqref{no}. We now can put 
\bee\label{Z_0}
Z_0:=\tilde C_0
\ene
 and
\bee\label{gamma_0}
\gamma_0:=\tilde C_0 \gamma ^{(0)}_{2s},\ \ \ s:=2^{d+1}.
\ene 
Property \eqref{gamma} easily follows from \eqref{Ck}-\eqref{Nov8}.
 Note that the choice of $i_0,..., i_{{\newred l}-1}$ is stable with respect to perturbations of $\vec \boldom $ of order $E^{-2r'_1}$ and $\rho , \bxi $ of order $E^{-r'_1-2}$. This finishes the proof of Lemma \ref{MGL}. We now  add one more lower bound on $E_*$ so that \eqref{ass1}, \eqref{ass2} and \eqref{ass3} are satisfied. It is enough to assume
 \bee\label{E_*1}
 r_{1,1}(E_*)>2Z_0^2/\gamma_0.
 \ene 
 \ber\label{Zgamma0}
 Since $\tilde C_0$ was any constant satisfying finitely many bounds formulated in the construction above, we have indeed proved lemma \ref{MGL} for all sufficiently large $Z_0$. However, we will be using this lemma (and similar lemmas in the induction step formulated in the next Section) only for one value of $Z_0$. Therefore, from now on we fix $\tilde C_0$ satisfying the conditions needed for the proof above and define $Z_0$ and $\gamma_0$ by \eqref{Z_0} and \eqref{gamma_0}. These values will not change in the rest of our paper. 
 \enr
 
\section{Induction} \label{section7}
 Now we discuss how to extend the results of the previous section to higher scales. This will be done by induction. By superscript $(n)$ (or $(j)$) we denote the information that a  corresponding object is considered at step $n$ (or $j$).  The zeroth and first step have been considered in previous sections, so from now on we assume that $n\ge 2$. For a given number $\gamma\in (0,1)$ and $n\ge 2$, we also define 
\bee\label{rn'}
r'_{n}(\gamma)=r'_{n}(E;\gamma):=\En^{2l\gamma r_{n-1,1}/Z_0}, \ \ \ r'_{n}:=r'_{n}(1).
\ene
Recall that $r'_1=r_{1,3}$; it is easy to see that we have $r_n'>r_{n,3}$ when $n>1$.  Recall as well that various constants $\gamma^{(j)}$ have been defined in \eqref{Nov8} and $r_{n,m}$ were defined in the beginning of section \ref{goodset-3}. 

Now we will construct patches with which we will work during the induction procedure. 
We will need patches in three variables: $\bxi,\vec\boldom,\rho$, where the following `region of interest' will be covered by patches: $\{\bxi\in\R^d, \|\bxi\|\in[E-2,E+2]\}$, $\rho\in [E-1,E+1]$, $\vec\boldom\in[-1/2,1/2]^{ld}$; recall that we eventually will put $\bxi=\bk+\bn\vec\boldom$, but for some time we want to treat $\bxi$ as an independent variable, {\newred see Remark \ref{newrm2}}. In section \ref{section1} we have described conventions we follow when covering the region of interest by patches; here we just state that when performing step $n$, $n\ge 2$, we mostly deal with the patches $\CA^{(n-1)}$ of level $n-1$ and the size of patches in $\rho$ and $\vec\boldom$ is $\En^{-2r_{n-1}'}$; the size of patches in $\bxi$ is $\En^{-2r_{n-1}'+2r_{n-1,1}}$.  

We will also need patches in $\BPhi$. Recall that for $n=1$ we used the patches $\CA^{(0)}$ of size $E^{-2}$ in $\BPhi$ and $E^{-1}$ in other variables.  
The patches   
$\CA^{\BPhi(j)}$  of level $j$  have size 
$\En^{-r_{j,3}\En^{3lr_{j-1,1}/Z_0}}$, $j\geq2$, and $\CA^{\BPhi(1)}$ have size 
$\En^{-r_{1,3}\En^{2d^2(l+\mu)\sigma_{0,d-1}}}$. For each patch $\CA^{\BPhi(j)}_m$ we also consider the corresponding patch $\CA^{\vec\phi(j)}_m$ (so that $\CA^{\BPhi(j)}_m=\BPsi_m(\CA^{\vec\phi(j)}_m)$) and its complexification $\CA^{\vec\phi(j)}_{m,\,\C}$ of the same size. We also consider quasi-patches in $\bk$ associated with patches $\CA^{\BPhi(n)}$: 
\bee\label{CAbkn}
\CA^{\bk(n)}:=\{\bk\in\R^d:\ \bk\|\bk\|^{-1}\in \CA^{\BPhi(n)}\ \&\ |\|\bk\|-\k^{(n-1)}(\BPhi,\rho )|\leq \En^{-r_{n,3}\En^{3lr_{n-1,1}/Z_0}}\}.
\ene

\ber
The reader may wonder why we have labelled the patches where we work to perform step $n$ as $\CA^{(n-1)}$, and not $\CA^{(n)}$. The point is, when we label different objects (patches, good/bad sets, central cubes $\hat K$, cubes $K$ of the \Bourgain structure to be introduced in the next definition, etc.), we will try to synchronise the labelling as much as we can, but whatever we do, there is bound to be some discrepancy somewhere. We have chosen the labelling of all these objects to minimise these discrepancies as much as possible. 
\enr

 We always assume that any patch of level $j\ge 1$ is completely covered by some patch at the previous level. At the beginning of each step $n$ we have declared some patches in $\vec\boldom$,  and $\BPhi$ of level $n-1$ bad. These bad patches depend on $d$, $l$, and the Fourier coefficients $V_\bn$ only; the proportion of bad patches is small (and going to $0$ quickly as $n$ increases). The patches in $\rho$ and $\bxi$ are never going to be bad (and patches in $\bxi$ are playing an axillary role). After we fix a patch of order $n-1$ in $\rho$, $\BPhi$ and $\vec\boldom$, we will declare some patches of order $n$ in $\BPhi$ and $\vec\boldom$ bad and remove them from any further consideration. 
 
Recall that an extended ball was defined in definition \ref{extended} (we are using word cube as a synonym for a ball in $\Z^l$). Finally, recall that the parameters $Z_0$ and $\gamma_0$ were fixed in remark \ref{Zgamma0}.

\begin{defin}\label{8.1}
Let $\rho\in\R$ and $\vec\boldom\in [-1/2,1/2]^{ld}$ be fixed. We say that our operator $H(\bk)$ ($\bk\in\R^d$) has a \Bourgain structure of order $n\in\N$ at the points $\rho, \vec\boldom$ corresponding to our system of patches, if the following holds:

1. Let $j=0,...,n$ and let let  $\vec\boldom^{*(j)}_{\tim^{\vec\boldom}}$ be a centre of a patch of order $j$ that contains $\vec\boldom$, let $\CM^{\rho(j)}_{\tim^{\rho}}$  be a matryoshka of  patches of order $j$ in  $\rho$, and let $\CM^{\bxi(j)}_{\tim^{\bxi}}$ be a matryoshka of  patches of order $j$ in $\bxi$.
For each such triple
 $(\CM^{\bxi(j)}_{\tim^{\bxi}},\CM^{\rho(j)}_{\tim^{\rho}},\vec\boldom^{*(j)}_{\tim^{\vec\boldom}})$, $j=0,...,n$, there is a corresponding  `base  cube' $K^{b(j)}_{\ham}\subset\Zl$, $\ham=\ham(j, \tim^{\rho}, \tim^{\vec\boldom}, \tim^{\bxi})$. 
Each set  $K^{b(j)}_{\ham}$ for $j\ge 1$ contains a ball of radius  $\frac14\En^{\gamma^{(j)}_{\ham}r_{j,1}}$ and is contained inside 
 a ball of radius $\frac12\En^{\gamma^{(j)}_{\ham}r_{j,1}}$ (both balls are centred at the origin).  The factors $\gamma^{(j)}_{\ham}$ are numbers inside $(\gamma_0,1-\gamma_0)$. This means that  $K^{b(j)}_{\ham}$ is an extended cube centred at the origin. 
 For $j=0$ each set $ K^{b(0)}_{\ham}$ is a  super-extended pre-cluster $\tilde \BUps_1^{\Z}(\bxi^*)$ defined in \eqref{superext}.

2. Similarly, for each triple $(\CM^{\bxi(j)}_{\tim},\CM^{\rho(j)}_{\tim},\vec\boldom^{*(j)}_{\tim})$, $j=0,...,n$, there is a corresponding  `small base cube' $K^{b(j),{\mathrm {small}}}_{\ham}\subset\Zl$. For $j\ge 1$, each set  $K^{b(j),{\mathrm {small}}}_{\ham}$ is an extended ball centred at the origin and of radius $\En^{\gamma^{(j)}_{\ham}r_{j,1}/Z_0}$. 
 For $j=0$ each set $ K^{b(0),{\mathrm {small}}}_{\ham}$ is an extended pre-cluster $\check \BUps_1^{\Z}(\bxi^*)$ defined in \eqref{ext}; note that this is not an extended ball in $\Z^l$ according to definition \ref{extended}.

3. For each $j=0,...,n$ we have a collection of sets: `\Bourgain cubes' (or `normal \Bourgain cubes') $\{K^{(j)}_{m}\}$  and `small \Bourgain cubes' $\{K^{(j),{\mathrm {small}}}_{m}\}$ ($m\in\N$); all of them are subsets of $\Z^l$. Each set  $K^{(j)}_{m}$ 
is a shifted base cube. More precisely, given any  $K^{(j)}_{m}$ 
there exists a vector $\bn=\bn(j,m)\in K^{(j)}_{m}$ such that $\bxi=\bxi(j,\bn):=\bk+\bn\vec\boldom$ is $E^{\sigma_0}$-bad, and
\bee\label{(8.3)}
K^{(j)}_{m}=K^{b(j)}_{\ham}+\bn.
\ene
Here, $K^{b(j)}_{\ham}$ is constructed in the following way. First, we notice that $\bxi$ lies in several matryoshkas $\CM^{\bxi(j)}$. Also, the point $\rho$ can belong to several matryoshkas $\CM^{\rho(j)}$. We can choose one of these matryoshkas $\CM^{\bxi(j)}_{\tim^{\bxi}}$ and one of the matryoshkas in $\rho$ $\CM^{\rho(j)}_{\tim^{\rho}}$ so that  $K^{b(j)}_{\ham}$ is the base cube corresponding to these matryoshkas: $\ham=\ham(j, \tim^{\rho}, \tim^{\vec\boldom}, \tim^{\bxi})$. 
Similar property holds for each small \Bourgain cube $\{K^{(j),{\mathrm {small}}}_{m}\}$, with the same vector $\bn(j,m)$ and the same matryoshkas $\CM^{\bxi(j)}_{\tim^{\bxi}}$ and $\CM^{\rho(j)}_{\tim^{\rho}}$ as for the normal  \Bourgain  cube $K^{(j)}_{m}$. In particular, the centres of extended cubes $\{K^{(j)}_{m}\}$ and $\{K^{(j),{\mathrm {small}}}_{m}\}$ coincide.

We refer to $j$ as the order of  $K^{(j)}_m$; note that for $j=0$ our \Bourgain cubes are extended and super-extended clusters $\CC_1$ defined in \eqref{7CC} and \eqref{7CC1} with $p=1$.

4. For each $j=1,...,n$ and $m,m'\in\N$, the $\Z$-distance between $K^{(j)}_m$ and $K^{(j)}_{m'}$ is at least $\frac14\max(\En^{\gamma^{(j)}_mr_{j,1}},\En^{\gamma^{(j)}_{m'}r_{j,1}})$. For $1\leq j<j'$, the $\Z$-distance between $K^{(j)}_m$ and $K^{(j')}_{m'}$ is at least $\frac18\En^{\gamma^{(j)}_mr_{j,1}}$ (unless $K^{(j)}_m\subset K^{(j')}_{m'}$). For $j=0$ we assume that the properties described in Section 5 hold. If $j<j'$ and $K^{(j)}_m\subset K^{(j')}_{m'}$ we assume that $\Z$-distance between $K^{(j)}_m$ and $\Z^l\setminus K^{(j')}_{m'}$ is at least $10Q$.

5. Consider a  \Bourgain  cube  $K^{(j)}_m$, $j=0,...,n$. According to condition 3 of this definition, there is a vector $\bn(j,m)$ that determines a point $\bxi(j,m)$. This point can belong to several matryoshkas,  $\CM^{\bxi(j)}_{\tim}$, $\CM^{\bxi(j)}_{\tim'}$,...; similarly,  the point $\rho$ can belong to several patches $\CM^{\rho(j)}$. Let us list all the possible base cubes corresponding to different choices of matryoshkas in $\bxi$ and $\rho$:  $K^{b(j)}_{\ham}$, $K^{b(j)}_{\ham'}$,... We know that 
$K^{(j)}_{m}=K^{b(j)}_{\ham}+\bn$; let us denote other shifted base cubes by  
$K'^{(j)}_{m}=K^{b(j)}_{\ham'}+\bn$,...

We say that a given \Bourgain  cube  $K^{(j)}_m$, $j=0,...,n-1$, is bad, if for at least one of the shifted cubes $K^{(j)}_{m}$, $K'^{(j)}_{m}$,... described above (say, $K'^{(j)}_{m}$), we have
\bee\label{bad7}
||(H(K'^{(j)}_m;\bk)-\rho^2)^{-1}||_2=||(\CP( K'^{(j)}_m;\bk)(H(\bk)-\rho^2)\CP( K'^{(j)}_m;\bk))^{-1}||_2>\En^{r'_{j+1}(\gamma'^{(j)}_m)}, 
\ene
 and good otherwise.

6. We assume that every point $\bn\in(\Z^l\setminus \{0\})$ that lies outside all small  \Bourgain cubes of order $0$ satisfies $|\|\bk+\bn\vec\boldom\|^2-\rho^2|>\En^{\sigma_{0}}$.

7. For each bad  \Bourgain cube $K^{(j)}_m$, $j=0,...,n-1$ there is a small  \Bourgain cube  $K^{(j+1),{\mathrm {small}}}_{m'}$ such that 
\bee\label{small}
K^{(j)}_m\subset K^{(j+1),{\mathrm {small}}}_{m'}.
\ene
Also, for each small  \Bourgain cube  $K^{(j+1),{\mathrm {small}}}_{m'}$ there is a bad  \Bourgain cube $K^{(j)}_m$ such that inclusion \eqref{small} holds.

We say that our operator $H$ has a \Bourgain structure of order $n$ for given $\rho$ and $\vec\boldom$ on a set $S\subset\R^d$ if for any $\bk\in S$ the operator $H(\bk)$ has a \Bourgain structure. We will often say in this case that $K^{(j)}_m$ are \Bourgain cubes generated by $\bk$. 

Let $\CA^{\bk}\subset\R^d$ and $\CA^{\rho}\subset\R$. We say that a \Bourgain structure is {\it stable} in $\bk$ on $\CA^{\bk}$ and in $\rho$ on $\CA^{\rho}$, if all the cubes $K^{(j)}_m$ (for all $0\leq j\le n$) are the same  subsets of $\Zl$ for all $\bk\in\CA^{\bk}$ and $\rho\in\CA^{\rho}$. We say that a \Bourgain structure is  stable in $\bk$ on matryoshka $\CM^{\bk(n)}=\{\CA^{\bk(j)}\}_{j=0}^n$ and in $\rho$ on matryoshka $\CM^{\rho(n)}=\{\CA^{\rho(j)}\}_{j=0}^n$, if for each $j=0,...,n$ the structure of order $j$ is stable in $\bk$ on $\CA^{\bk(j)}$ and in $\rho$ on $\CA^{\rho(j)}$.
\end{defin}


\ber
This definition makes sure that any bad  \Bourgain cube $K^{j}_m$ is covered by a small  \Bourgain cube of the next level $K^{(j+1), {\mathrm {small}}}_{m'}$. Also, inside any `cubical layer' 
$K^{(j+1)}_{m}\setminus K^{(j+1), {\mathrm {small}}}_{m}$ there are no bad  \Bourgain cubes of order $j$. These two statements imply that any  \Bourgain cube $K^{j'}_{m'}$ ($j'\le j$) inside a spherical layer $K^{(j+1)}_{m}\setminus K^{(j+1), {\mathrm {small}}}_{m}$ is either good, or is covered by a good  \Bourgain cube of order at most $j$. 
\enr



\ber
The reason for the conditions 1 and 2 is as follows: we need to control the number of different `shapes' that \Bourgain cubes can take (and the `shape' is, of course, the base cube). We need this control to make   measure estimates \eqref{measSmn} and \eqref{meastildeSmn} feasible: we make the estimate for each possible `shape' and then multiply by the number of `shapes'. And in our definition the number of `shapes' is estimated by the product of the number of  different patches or matryoshkas in all  variables. 
\enr

\ber
Condition 3 of this definition may also look slightly strange: the reader may wonder why we cannot do the following: For each $\bn\in\Z^l$, we look at the centre $\bxi^{*(j)}$ of the patch containing $\bk+\bn\vec\boldom$, take corresponding base cube, shift it by $\bn$, and then declare all such shifts a \Bourgain structure. The point is, these shifts may (and will) have a non-trivial intersection, while we  want the  \Bourgain cubes of the same level to stay away from each other (property 4). Therefore, when we construct the \Bourgain structure in section \ref{section8}, after this step, we have to throw away some of thus constructed  \Bourgain cubes.
\enr

\ber
This definition does not prescribe anything about the relationship between $K^{(j)}_m$ and   $K^{(j'),{\mathrm {small}}}_{m'}$ assuming that $j<j'$ and $K^{(j)}_m\subset K^{(j'),}_{m'}$. The reason is, when we have the situation that $K^{(j)}_m$ is partially covered by $K^{(j'),{\mathrm {small}}}_{m'}$, we can easily resolve this (for example, by extending $K^{(j'),{\mathrm {small}}}_{m'}$ so that it contains $K^{(j)}_m$). This means that we can always assume without loss of generality that when $j<j'$, for any pair of cubes $K^{(j)}_m$ and   $K^{(j'),{\mathrm {small}}}_{m'}$ we either have $K^{(j)}_m\subset K^{(j'),{\mathrm {small}}}_{m'}$, or the two cubes do not intersect. 
\enr

\ber 
We are going to use this definition in the situation when Theorem~\ref{Thm2} holds. Then,  in condition 3 of this definition we can consider only vectors $\bn$ satisfying $|\bn|>E^{r_{1,2}}/2$. The point is that, strictly speaking, when $\bn\in \hat K^{(1)}$ (which more or less means $|\bn|<E^{r_{1,2}}/2$), we have been using for base  \Bourgain cubes $K^{b(0)}$   super-extended pre-clusters $\tilde \BUps_0(\bxi^*)$
with $p=0$, not $1$. 
\enr

{\newred \ber\label{newrm3}
We emphasise that set of base cubes $K^{b(j)}_{\ham}$ depends on $\bk$ only through $\rho$. In other words, the base cubes do not depend on $\BPhi$. 
\enr
}

Parts $6$ and $7$ of this definition in particular mean the following. Suppose, a point $\bn\in\Z^l$ is not $\En^{\sigma_{0}}$-good 
(i.e. $|\|\bk+\bn\vec\boldom\|^2-\rho^2|<\En^{\sigma_{0}}$). Then $\bn$ is covered by a  small \Bourgain cube of level $0$. If this  \Bourgain cube is bad, then $\bn$ is covered by a small  \Bourgain cube of level $1$, etc. This naturally leads to the following  definition:

\begin{defn}\label{corresponding}
Suppose, $\bn\in\Z^l$ is not $\En^{\sigma_{0}}$-good. The \Bourgain cube $K^{(j)}_m$ with the largest $j$ containing $\bn$  is called  the \Bourgain cube corresponding to $\bn$. The comment above means that if this cube is bad, then we necessarily have $j=n$ (the order of the \Bourgain structure). 
\end{defn}

As we mentioned earlier, the definition of the \Bourgain structure makes sure that any bad  \Bourgain cube $K^{j}_m$ is covered by a  \Bourgain cube of the next level $K^{(j+1), {\mathrm {small}}}_{m'}$ and that in the `cubical layer' 
$K^{(j+1)}_{m'}\setminus K^{(j+1), {\mathrm {small}}}_{m'}$ there are no bad  \Bourgain cubes of order $j$. 
Now, for technical purposes, we want to be able to put any bad  \Bourgain cube $K^{(j)}_m$ inside a cube of much bigger size. More precisely, we need the following additional structure:

\begin{defin}\label{8.2}
We say that our operator has an enlarged \Bourgain structure at the points $\rho, \vec\boldom$   of order $n$, corresponding to our system of patches, if it has a \Bourgain structure of order $n$  together with  the (of larger size than before) sets $\tilde K^{(j)}_m$ and $\tilde K^{(j),{\mathrm {small}}}_{m}$, $j=1,..,n$, called the enlarged  \Bourgain cubes. These cubes satisfy 
similar (but not all) properties to those in Definition \ref{8.1}, but with $r_{j,1}$ replaced with $r_{j,2}$. More specifically, the list of properties in each item is as follows.

$1'$. Each enlarged base cube  $\tilde K^{b(j)}_m$  contains a ball of radius $\frac14\En^{\tilde\gamma^{(j)}_mr_{j,2}}$ and is contained in a ball of radius $\frac12\En^{\tilde\gamma^{(j)}_mr_{j,2}}$ centred at the origin.  
The factors $\tilde\gamma^{(j)}_m$ are numbers inside $(\gamma_0,1-\gamma_0)$.

$2'$. Each small enlarged base cube $\tilde K^{b(j),{\mathrm {small}}}_m$ contains a ball of radius $\frac14\En^{\tilde\gamma^{(j)}_mr_{j,2}/Z_0}$ and is contained in a ball of radius $\frac12\En^{\tilde\gamma^{(j)}_mr_{j,2}/Z_0}$ centred at the origin. 

$3'$ is the same as item $3$ from definition \ref{8.1}, but with usual base cubes replaces with the enlarged base cubes (notice that there are no enlarged base cubes of order $0$). 

$4'$. The distance between $\tilde K^{(j)}_m$ and $\tilde K^{(j)}_{m'}$ is at least $\frac14\max(\En^{\tilde\gamma^{(j)}_mr_{j,2}},\En^{\tilde\gamma^{(j)}_{m'}r_{j,2}})$. The distance between $\tilde K^{(j)}_m$ and $ K^{(s)}_{m'}$, $s\leq j$, is at least $\frac18\En^{\gamma^{(s)}_{m'}r_{s,1}}$ (unless 
$K^{(s)}_{m'}\subset \tilde K^{(j)}_m$). If $s\leq j$ and $K^{(s)}_{m'}\subset \tilde K^{(j)}_{m}$, then the distance between $K^{(s)}_{m'}$ and $\Z^l\setminus \tilde K^{(j)}_{m}$ is at least $10Q$.

$5'$. For each bad  \Bourgain cube $K^{(j)}_m$, $j=0,\dots,n-1$, there is a small enlarged  \Bourgain cube  $\tilde K^{(j+1),{\mathrm {small}}}_{m'}$ such that 
\bee
K^{(j)}_m\subset \tilde K^{(j+1),{\mathrm {small}}}_{m'}.
\ene
Also, for any small enlarged  \Bourgain cube $\tilde K^{(j+1),small}_{m'}$ there exists a bad  \Bourgain cube $K^{(j)}_{m}$ inside it.

\end{defin} 

\ber\label{newrem5}
The enlarged \Bourgain structure is not `parallel' to the usual structure, but rather an additional auxiliary structure needed for technical purposes.  
The difference between enlarged and usual structures is, in particular, that we do not distinguish bad or good enlarged cubes, and there are no enlarged cubes of order zero. {\newred The enlarged \Bourgain structure is serving mostly technical purposes, in particular we use is it to establish estimates of the type  \eqref{Cartan3}.}
\enr

\ber
Note that, as in Definition \ref{8.1}, each enlarged cube $\tilde K^{(j)}_m$, $j=1,\dots,n$, is a shift of some `base' enlarged cube $\tilde K^{b(j)}_{\ham}$ by a corresponding vector $\bn$.
\enr

Our first  objective during the inductive step is to prove that for most $\vec\boldom$ if we assume that we have the enlarged \Bourgain structure  at  $ \bk,\rho,\vec\boldom$ of order $n-1$, then we have the enlarged \Bourgain structures of order $n$. Thus, we want to define recursively a set ${\mathcal G }^{\vec\boldom (n)}(\tilde E)$ that consists of frequencies that allow an enlarged \Bourgain structure of order $n$ for all $\rho\ge \tilde E$ and prove that its measure is sufficiently large.  Recall that the set ${\mathcal G }^{\vec\boldom (0) }={\mathcal G }^{\vec\boldom (0) }_{B_0}$ was introduced in the beginning of Section \ref{section3} (this set does not depend on the Fourier coefficients of the potential), and
 the set ${\mathcal G }^{\vec\boldom (1) }(\tilde E)$ is defined in Corollary~\ref{MGC1}; this set (as well as all the consecutive good sets of frequencies) depend on the choice of the Fourier coefficients $V_\bn$.  The proper inductive definition of the sets ${\mathcal G }^{\vec\boldom (n) }(\tilde E)$, $n\geq 2$, is given by Corollary~\ref{MGC}. Here, we just mention that the set  ${\mathcal G }^{\vec\boldom (n) }(\tilde E)$ consists of  frequencies for which there exist the usual and enlarged \Bourgain structures of order $n$ for all $\rho\ge\tilde E$.

Suppose, $\vec \boldom \in {\mathcal G }^{\vec\boldom (n) }(\tilde E)$ and $\rho\ge\tilde E$, $\rho\in[E-1,E+1]$, and 
define
\bee\label{Smn}
\bes
S_{\ham}^{(n)}=&
\Big\{(\vec \boldom, \rho ,\bxi )\in\CA^{\vec \boldom, \rho ,\bxi, (n) }(\vec\boldom^{*(n)}_{\tim^{\vec\boldom}},\rho^{*(n)}_{\tim^{\rho}},\bxi^{*(n)}_{\tim^{\bxi}}):   \\
&||(H( K^{b(n)}_{\ham};\bxi)-\rho^2)^{-1}||_2>\En^{r'_{n+1}(\gamma^{(n)}_{\ham} )}\Big\} 
\end{split}     
 \ene
 and
 \bee\label{tildeSmn}
 \bes
\tilde S_{\ham}^{(n)}=&
\Big\{(\vec \boldom, \rho ,\bxi )\in\CA^{\vec \boldom, \rho ,\bxi, (n) }(\vec\boldom^{*(n)}_{\tim^{\vec\boldom}},\rho^{*(n)}_{\tim^{\rho}},\bxi^{*(n)}_{\tim^{\bxi}}):   \\
&||(H( \tilde K^{b(n)}_{\ham};\bxi)-\rho^2)^{-1}||_2>\En^{\tilde r'_{n+1}(\tilde\gamma^{(n)}_{\ham} )}
\Big\}.
\end{split}    
 \ene
Here,  
\bee\label{tildern'}
\tilde r'_{n}(\gamma):=\En^{2l\gamma r_{n-1,2}/Z_0}, \ \ \ \tilde r'_{n}:=\tilde r'_{n}(1)
\ene
(compare with \eqref{rn'}) and $\ham=\ham(\tim^{\rho}, \tim^{\vec\boldom}, \tim^{\bxi})$ as described in part 1 of definition \ref{8.1}. Note that as a result there are not more than $E^{2d(l+2)r_n'}$ possible values for $\ham$.  We also define
\bee
S^{(n)}_{total}:=\cup_{\ham} S_{\ham}^{(n)}.
\ene


Another inductive assumption is given by the following definition:
\begin{defin}\label{MGD}
We say that our \Bourgain structure at level $n$ is reasonable on ${\mathcal G }^{\vec\boldom (n)}(\tilde E)$ above energy $\tilde E$, if for each $\vec\boldom\in{\mathcal G }^{\vec\boldom (n) }(\tilde E)$, each $ \rho\ge \tilde E$, $\rho\in [E-1,E+1]$, and each $\ham$, 
the cross-section $(S_{\ham}^{(n)})_{\mathrm {cs}}(\vec\boldom,\rho)$ satisfies 
\bee\label{measSmn}\meas((S_{\ham}^{(n)})_{\mathrm {cs}}(\vec\boldom,\rho))<\En^{-(r'_{n+1}(\gamma^{(n)}_{\ham}))^{1/4}}.
\ene 

We say that our enlarged \Bourgain structure at level $n$ is reasonable above energy $\tilde E$, if, in addition, for each $\vec\boldom\in{\mathcal G }^{\vec\boldom (n) }(\tilde E)$, each $ \rho\geq \tilde E$, $\rho\in [E-1,E+1]$, and each $\ham$, the cross-section  
$(\tilde S_{\ham}^{(n)})_{\mathrm {cs}}(\vec\boldom,\rho)$ satisfies 
\bee\label{meastildeSmn}\meas((\tilde S_{\ham}^{(n)})_{\mathrm {cs}}(\vec\boldom,\rho))<\En^{-(\tilde r'_{n+1}(\tilde\gamma^{(n)}_{\ham}))^{1/4}}.\ene
\end{defin}

{\newred 
Analogously to Corollary \ref{6.2}, we can prove the following estimate:
\bel\label{new:121}
 The set $S^{(n)}_{total}$ is a semi-algebraic subset in $\R^{ld+1+d}$ of degree  $E^{3d(l+2)r_n'}$. 
\enl}

Now we can formulate the first main inductive statement. It concerns the existence of \Bourgain structures:
\bet\label{MGT}

a)  There are reasonable (usual and enlarged) \Bourgain structures  at level $1$ on ${\mathcal G }^{\vec\boldom (1)}(\tilde E)$ above energy $\tilde E$.

b) Suppose, $n\ge 1$ and there are reasonable \Bourgain structures (usual and enlarged) at level $n$ on ${\mathcal G }^{\vec\boldom (n)}(\tilde E)$ above energy $\tilde E$.  Then we can construct a set ${\mathcal G }^{\vec\boldom (n+1)}(\tilde E)\subset{\mathcal G }^{\vec\boldom (n)} (\tilde E)$ such that there are reasonable \Bourgain structures (again, both usual and enlarged) at level $n+1$ on ${\mathcal G }^{\vec\boldom (n+1)}(\tilde E)$ above energy $\tilde E$. Moreover, 
\bee 
\frac{\meas({\mathcal G }^{\vec\boldom (n+1) }(\tilde E))}{\meas({\mathcal G }^{\vec\boldom (n) }(\tilde E))}=_{\tilde E\to \infty }1-O\left(\tilde E^{- \tilde C_1r_{n+1,1}(\tilde E)}\right), \ \ \tilde C_1=\tilde C_1(Z_0).\label{MGTmeasure}
\ene

For any $E\ge \tilde E$ and $n\ge 1$, the \Bourgain structures at level $n$ can be made stable in $\bk$ on any matryoshka of (quasi-)patches  $\CM^{\bk(n)}=\{\CA^{\bk(j)}\}_{j=0}^n$  and in $\rho$ on any matryoshka of patches  $\CM^{\rho(n)}=\{\CA^{\rho(j)}\}_{j=0}^n$ that has a non-empty intersection with $[E-1,E+1]$. 
\ent 

\ber\label{rem2}
The necessity to establish the stability of the \Bourgain structures in $\bk$ and $\rho$ on matryoshkas of patches is the reason why we had to introduce these matryoshkas in the first place, cf. remark \ref{rem1}. The reader may be surprised however that we seemingly do not use this stability anywhere else in our paper. The point is that this stability is needed to properly apply the machinery of \cite{KS} in section \ref{8.1'}.  
\enr

The proof of this statement will be given in Section~\ref{section8}. We just remark here that neither of the estimates \eqref{measSmn} or \eqref{meastildeSmn} on its own is enough to perform an inductive step, even for the price of throwing away a small set of frequencies: we really need both \eqref{measSmn} and \eqref{meastildeSmn} to have a proper inductive statement. More precisely, we use \eqref{measSmn} at level $n$ to prove the existence of the \Bourgain structure at level $n+1$ and then we use \eqref{meastildeSmn} at  level $n$ to establish \eqref{measSmn} at level $n+1$. Luckily, we can also use \eqref{meastildeSmn} at  level $n$ to prove \eqref{meastildeSmn} at  level $n+1$.

Now we are going to define a notion of a good patch in $\BPhi$, $\CA^{\BPhi(n)}_{\tim_n}$. We define it inductively. We have already defined a good set of $\vec\phi$ and $\BPhi$ of order $1$, and this allows us to define good patches of order $1$. We define them as those  patches $\CA^{\BPhi(1)}$ 
that entirely consist of points of the form $\BPsi_j(\CG_{j,\C}^{\vec\phi(1)})\cap\R^{d-1}$ for some $j$; recall definition \eqref{good1}  
(however, check Theorem \ref{Thm7.4} to be sure). Suppose, we have defined a good patch $\CA^{\BPhi(j)}_{\tim_j}$ for all $j\le n-1$. We are going to define a good patch of order $n$. First, we recall that any patch that we declare bad is just thrown away: we do not consider sub-patches at the next level of a bad patch.  Also, we recall the following definition: 

\begin{defin}\label{def7.1bis}
A matryoshka $\CM_{\textrm {patches}}^{\BPhi(q)}$ of patches of level $q$ is a collection of patches $\{\CA^{\BPhi(j)}_{\tim_j}\}_{j=1}^q$, where each patch of level $j+1$ lies inside a patch of level $j$. We say that matryoshka $\CM_{\textrm {patches}}^{\BPhi(q)}$ is good, if each of the patches of it is good.
\end{defin} 
\ber
This definition is proper for $q\le n-1$, since we have assumed that we know the definition of good patches of all levels up to $n-1$. 
\enr

\begin{defin}\label{def7.2}
A matryoshka $\CM_{\textrm {cubes}}^{(n)}$ of central cubes of level $n$ is a collection $\{\hat K^{(j)}\}$, $j=1,...,n$ of extended cubes centred at the origin; $\hat K^{(j)}\subset\Zl$ has size  $\frac12\En^{r_{j,2}}$ (note that it is much bigger than the  size of any \Bourgain cube of level $j$). 
\end{defin} 

\ber\label{sentral}
In general, there is no relationship between a central cube $\hat K^{(j)}$ and \Bourgain cubes $ K^{(j)}_m$ of the same order. However, in the situations we study any central cube $\hat K^{(j)}$ contains not more than one \Bourgain cubes $ K^{(j)}_m$ of the same order (and this is the \Bourgain cube that contains the origin $\bn=0$). This fact will be established in section \ref{section9}. 
\enr

\begin{defin}\label{def7.3} Let $\rho\in[E-1,E+1]$ and $\vec\boldom\in {\mathcal G }^{\vec\boldom (n-1) }(\tilde E)$, $E\geq\tilde E\geq E_*$, be fixed. Consider 
a matryoshka $\CM_{\textrm {patches}}^{\BPhi(n)}$ of patches  of level $n$ and a matryoshka $\CM_{\textrm {cubes}}^{(n)}$ of central cubes. We assume that a sub-matryoshka $\CM_{\textrm {patches}}^{\BPhi(n-1)}$ of patches (i.e. our original  matryoshka with the last patch of order $n$ removed) is good. We say that $\CM_{\textrm {patches}}^{\BPhi(n)}$ and $\CM_{\textrm {cubes}}^{(n)}$ 
 are synchronised (with respect to $\rho$) if for $j=1$ the statement of Theorem \ref{Thm2} holds  for $H(\hat K^{(1)},\bk)$ and for $1<j\le n$ the following recursive Statement (the recursion means that the formulation of the Statement at level $n$ uses objects obtained in the same Statement at the previous step $(n-1)$) holds:

{\bf Statement at level $ n$.} \label{leveln}

Suppose, $\BPhi\in \CA^{\BPhi (n)}_{\tim}, \BPhi=\BPsi_{\tim}(\vec\phi)$,
$\k\in\R$,
$|\k-\k^{(n-1)}(\vec\phi,\rho )|\leq \En^{-r_{n,3}\En^{3l r_{n-1,1}/Z_0}}$,
$\bka=\k\BPsi_{\tim}(\vec\phi)$. Then there exists a single eigenvalue $\lambda^{(n)}({\bka})$ of
$H(\hat K^{(n)},\bka)$ in the interval\\ 
\bee
I_{n}:=\left( \rho^2-\En^{-r_{n,3}\En^{3lr_{n-1,1}/Z_0}},
\rho^{2}+\En^{-r_{n,3}\En^{3lr_{n-1,1}/Z_0}}\right). 
\ene
This eigenvalue is given by the absolutely
converging series:
\begin{equation}\label{eigenvalue-n}\lambda^{(n)}({\bka})=\lambda^{(n-1)}({\bka})+
\sum\limits_{q=2}^\infty g^{(n)}_q({\bka}).
\end{equation} 
The coefficients $g^{(n)}_q({\bka})$ satisfy the following estimates:
\begin{equation}\label{estgn} |g^{(n)}_q({\bka})|<\En^{-\En^{r_{n-1,2}}\En^{-r_{n-2,2}}}
\En^{-\sigma_0q/4}.
\end{equation}
The corresponding spectral projection is given by the series:
\begin{equation}\label{sprojector-n}
\E ^{(n)}({\bka})=\E^{(n-1)}({\bka})+\sum\limits_{q=1}^\infty
G^{(n)}_q({\bka}).
\end{equation} 
 The operators $G^{(n)}_q({\bka})$ satisfy
the estimates:
\begin{equation}
\label{Feb1a-n} \left\|G^{(n)}_q({\bka})\right\|_1<\En^{-\En^{r_{n-1,2}}\En^{-r_{n-2,2}}}
\En^{-\sigma_0q/4}
\end{equation}
and
\begin{equation}\label{Feb6a-n}
G^{(n)}_q({\bka})_{\bn\bn'}=0,\ \ \mbox{if}\ \ 4\sqrt{d}\cdot q\En^{r_{n-1,2}}<|\bn|+|\bn'|. 
\end{equation}
Here, $\E^{(n-1)}({\bka})$ and $\lambda^{(n-1)}({\bka})$ are respectively the spectral
projection and the eigenvalue of $H(\hat K^{(n-1)},\bka)$ obtained at the previous step and $\kappa^{(n-1)}(\vec\phi)=\kappa^{(n-1)}_{\tim}(\vec\phi)$
is the unique $\kappa$-solution of the equation $\la^{(n-1)}(\kappa\BPsi_{\tim}(\vec\phi))=\rho^2$ (the isoenergetic surface of the previous level). 

Coefficients $g^{(n)}_r({\nbka})$ and operators
$G^{(n)}_r({\nbka})$ can be analytically extended to the complex neighbourhood $\CA^{\vec\phi (n)}_{\tim,\C}$ as functions of $\vec\phi $ and to the complex $\En^{-r_{n,3}\En^{3l r_{n-1,1}/Z_0}}-$
neighbourhood of $\nka^{(n-1)}(\vec\phi )$ as functions
of $\k$, estimates \eqref{estgn}, \eqref{Feb1a-n}
being preserved.

{\bf End of the Statement at level $ n$.}

If a patch $\CA^{\BPhi (n)}_{\tim}$ is the last patch in a matryoshka of patches 
 synchronised with a matryoshka of central cubes of order $n$ (i.e., if the Statement at level $n$ holds), we call this patch perfect. 

\end{defin}
\ber
Since this Statement is recursive, we assume that if this statement holds at level $n$, then it holds at all levels $j$, $j\le n$. 
\enr

Assuming this Statement holds, we can make the following conclusions:

\bel \label{corthmn} Suppose, the Statement holds. Then
for the perturbed eigenvalue and its spectral
projection the following estimates hold:
 \begin{equation}\label{perturbation-n}
\lambda^{(n)}({\bka})=\lambda^{(n-1)}({\bka})+ O\left(\En^{-\En^{r_{n-1,2}}\En^{-r_{n-2,2}}}\right),
\end{equation}
\begin{equation}\label{perturbation*-n}
\left\|\E^{(n)}({\bka})-\E^{(n-1)}({\bka})\right\|_1<\En^{-\En^{r_{n-1,2}}\En^{-r_{n-2,2}}},
\end{equation}
\begin{equation}
\left|\E^{(n)}({\bka})_{\bn\bn'}\right|<\En^{-\dis^{(n)}(\bn,\bn')}\ \
{\mathrm {when}}\ |\bn|>4\sqrt{d}\En^{r_{n-1,2}} \mbox{\ or }
|\bn'|>4\sqrt{d}\En^{r_{n-1,2}},\label{Feb6b-n}
\end{equation}
where
$$\dis^{(n)}(\bn,\bn')=\frac{\sigma_0}{16\sqrt{d}}(|\bn|+|\bn'|)\En^{-r_{n-1,2}}+\En^{r_{n-1,2}}\En^{- r_{n-2,2}}.$$
\enl
\bel \label{L:derivatives-n}
Assume the Statement. Then the following
estimates hold when $\vec\phi \in \CA_{{\tim},\,\C}^{\vec\phi(n)}$ and $\k\in \C:$
$|\k-\k^{(n-1)}(\vec\phi )|<\En^{-r_{n,3}\En^{3lr_{n-1,1}/Z_0}}:$
\begin{equation}\label{perturbation-nc}
\lambda^{(n)}({\bka})=\lambda^{(n-1)}({\bka})+ O\left(\En^{-\En^{r_{n-1,2}}\En^{-r_{n-2,2}}}\right),
\end{equation}
\begin{equation}\label{estgder1-nk}
\frac{\partial\lambda^{(n)}}{\partial\k}=\frac{\partial\lambda^{(n-1)}}{\partial\k}
+ O\left(\En^{-\En^{r_{n-1,2}}\En^{-r_{n-2,2}}}\En^{r_{n,3}\En^{3lr_{n-1,1}/Z_0}}\right). \end{equation}

Similar estimates can be written for all derivatives of $\lambda^{(n)}$ and $\E^{(n)}$ with respect to $\nka$ and $\vec \phi $.
\enl
Finally,
\bel\label{ldk-n} \begin{enumerate}
Assume the Statement. Then: 
\item For every $\lambda :=\rho^{2}$ with $\rho>E_*$ and $\vec\phi \in\CA_{{\tim},\,\C}^{\vec\phi(n)}\cap\R^{d-1}$, there is a unique
$\k^{(n)}(\vec\phi,\rho )$ in the interval
$$\tilde I_{n}:=[\k^{(n-1)}(\vec\phi,\rho )-\En^{-r_{n,3}\En^{3lr_{n-1,1}/Z_0}},\k^{(n-1)}(\vec\phi,\rho
)+\En^{-r_{n,3}\En^{3lr_{n-1,1}/Z_0}}],$$ such that
    \begin{equation}\label{2.70-n}
    \lambda^{(n)} \left(\bka
^{(n)}(\vec\phi,\rho )\right)=\rho^2 ,\ \ \bka ^{(n)}(\vec\phi,\rho ):=\k^{(n)}(\vec\phi,\rho )\BPsi_{\tim}(\vec\phi).
    \end{equation}
\item  Furthermore, there exists an analytic in $ \vec\phi $ continuation  of
$\k^{(n)}(\vec\phi,\rho)$ to the complex set $\CA_{{\tim},\,\C}^{\vec\phi(n)}$
such that $\lambda^{(n)} (\bka ^{(n)}(\vec\phi,\rho))=\rho^2 $.
Function $\k^{(n)}(\vec\phi,\rho )$ can be represented as
$\k^{(n)}(\vec\phi,\rho )=\k^{(n-1)}(\vec\phi,\rho
)+h^{(n)}(\vec\phi,\rho)$, where
\begin{equation}\label{dk0-n} |h^{(n)}(\vec\phi )|=
 O\left(\En^{-\En^{r_{n-1,2}}\En^{-r_{n-2,2}}}\right),
\end{equation}
\begin{equation}\label{dk-n}
\bes
\frac{\partial{h}^{(n)}}{\partial\vec\phi}&=  O\left(\En^{-\En^{r_{n-1,2}}\En^{- r_{n-2,2}}}\En^{r_{n,3}\En^{3lr_{n-1,1}/Z_0}}\right),\\
\frac{\partial^2{h}^{(n)}}{\partial\vec\phi^2}&= O\left(\En^{-\En^{r_{n-1,2}}\En^{- r_{n-2,2}}}\En^{2r_{n,3}\En^{3lr_{n-1,1}/Z_0}}\right).
\end{split}
\end{equation} \end{enumerate}
\enl

\begin{defin}\label{def7.4}
We say that an enlarged \Bourgain structure  and a matryoshka of central cubes are consistent, if the following conditions hold:

1. Each $ \tilde K^{(j)}_m$, $j=1,...,n$, is either inside $\hat K^{(j)}$, or is at least $\frac14\En^{\tilde\gamma^{(j)}_mr_{j,2}}$-away from it (the distance is at least the size of the enlarged multiscale cube).

2. Each $ K^{(s)}_m$, $s<j$, is either inside $\hat K^{(j)}$, or is at least $\frac18\En^{\gamma^{(s)}_mr_{s,1}}$-away from it (the distance is at least the size of the smaller cube). 

3. If $ K^{(s)}_m\subset \hat K^{(j)}$, $s\leq j$, then $ K^{(s)}_m$ is at least $10Q$-away from $\Z^l\setminus \hat K^{(j)}$.
\end{defin}

\begin{defin}\label{def7.5}
We call a \Bourgain cube $ K^{(s)}_m$ ($s\ge 1$) generated by $\bk$ a {\bf housewife}, if it contains the origin (point $\bn=0$). 
We call a \Bourgain cube
a {\bf prodigal son}, if it does not contain the origin, but it contains a point $\bq$ such that $||\bq\vec\boldom||<\En^{-r_{s,3}\En^{4lr_{s-1,1}/Z_0}}$. A cube that is neither a housefive, nor a prodigal son, is called a {\bf globetrotter}. Enlarged \Bourgain cubes with these properties are called `enlarged housewife', `enlarged prodigal son' and `enlarged globetrotter' respectively. 
For $s=1$ the estimate defining a prodigal son is $\|\bq\vec\boldom\|<\En^{-r_{1,3}\En^{3d^2(l+\mu)\sigma_{0,d-1}}}$.
\end{defin}
\ber The reason for this terminology is this: a `housewife' stays at home, a `prodigal son' goes away, but then returns (sort of), and a globetrotter goes away and never comes back. 
The reason why we actually need this definition is as follows. First, we notice that the distance that defines the prodigal sons, $\En^{-r_{s,3}\En^{4lr_{s-1,1}/Z_0}}$, is smaller than the size of the patch in $\bk$. This means that prodigal sons can, after small manipulation, be considered as the shifted central cubes. Therefore, we can establish their properties just by taking the properties of the central cube and shifting them. Since, on the other hand, this shift is not extremely small (due to the Diophantine properties), we can easily achieve that each prodigal son has a unique eigenvalue which is close, but not too close, to $\rho^2$. The globetrotters, on the other hand, can be treated using the standard methods (like the Cartan's Lemma), which are unavailable for the prodigal sons. We also remark that, obviously, there is at most one housewife at each level $s\ge 1$.   
\enr

Together with the Statement formulated in Definition \ref{def7.3}, another important inductive assumption will be the following:

\begin{defin}\label{excellent}
We say that a perfect patch $\CA^{\BPhi(n)}_{\tim}=\BPsi^{(n)}_{\tim}(\Pi^{(n)}_{\tim})$ is excellent, if the following property (called the important inductive estimate, or IIE) holds on $\CA^{\BPhi(n)}_{\tim}$.

{\bf Important inductive estimate at level $n$}:

1) Let us fix a globetrotter $K^{(n)}_m$. 
 After throwing away a set $\CNN^{\vec\phi(n+1)}_{m}\subset \Pi_{\tim}^{(n)}$ of measure not greater than $\En^{-r_{n+1,3}}$, for the rest of $\vec\phi\in \Pi_{\tim}^{(n)}\setminus \CNN^{(n+1)}_{m}$ we have 
\begin{equation}\label{Cartan5-n}
\|((H(K^{(n)}_m,\bka^{(n)}(\vec\phi))-\rho^2))^{-1}\|\leq \En^{r_{n+1,3}\En^{2l\gamma^{(n)}_m r_{n,1}/Z_0}}.
\end{equation}

2) Let $\tilde K^{(n)}_m$ be an enlarged globetrotter. 
After throwing away a set $\tilde\CNN^{\vec\phi(n+1)}_{m}\subset \Pi_{\tim}^{(n)}$ of measure not greater than $\En^{-r_{n+1,3}\En^{4l\tilde \gamma^{(n)}_m r_{n,2}/Z_0}}$, for the rest of $\vec\phi\in \Pi_{\tim}^{(n)}\setminus \tilde\CNN^{(n+1)}_{m}$ we have 
\begin{equation}\label{Cartan5-ntilde}
\|((H(\tilde K^{(n)}_m,\bka^{(n)}(\vec\phi))-\rho^2))^{-1}\|\leq \En^{\left(r_{n+1,3}\En^{4l\tilde \gamma^{(n)}_m r_{n,2}/Z_0}\right)\En^{2l\tilde \gamma^{(n)}_m r_{n,2}/Z_0}}.
\end{equation}

{\bf End of Important inductive estimate at level $n$}

\end{defin}

\ber
Despite cases 1 and 2 in the definition above of an excellent patch looking similar, there is an important distinction: inequality \eqref{Cartan5-n} is stable on a patch of the next level, which means that if it holds at one point of that patch, it holds everywhere (with possibly an extra factor $2$ in the RHS). However, inequality \eqref{Cartan5-ntilde} is not stable on our next level patches, which means that a set  $\tilde\CNN^{\vec\phi(n+1)}_{m}$ (where this inequality is not satisfied) is quite difficult to control and, besides its measure being small, we do not know any further properties of it. 
\enr

Now, finally, we can define a good patch of order $n$. 

\begin{defin}\label{goodpatchesPhi}
We say that a patch $\CA^{\BPhi(n)}_{\tim}$ is good, if it is perfect and excellent, i.e. if the Statement (definition  \ref{def7.3}) and IIE (definition \ref{excellent}) hold there. 
\end{defin}
\ber
Since bad patches are always thrown away, and only good patches are covered by patches of the next levels,  the definition of a good patch of order $n$ implies that the Statement and IIE hold not just at level $n$, but also at all levels $j\le n$.   
\enr

Now we can formulate our second  inductive Theorem. Before doing this, note that, strictly speaking, we have cheated a bit when discussing the notion of good patches, since the first level patches $\CA^{\BPhi(1)}$ that we have called good have not been proved to be good. We have proved that they are perfect (i.e., the Statement holds) in section \ref{section5}, but we have not proved that they are excellent (i.e. that IIE holds there). Let us do both things at once:

\bet\label{Thm7.4}

a) The good patches $\CA^{\BPhi(1)}$ are indeed good (the IIE holds there).

b) Suppose $n\geq1$. Let $\rho\in[E-1,E+1]$ and $\vec\boldom\in {\mathcal G }^{\vec\boldom (n) }(\tilde E)$ be fixed, where $\tilde E\le E$. Suppose, we have a good matryoshka of patches (in $\BPhi$) of level $n$, $\CM_{\textrm {patches}}^{\BPhi(n)}$ such that corresponding matryoshka of central cubes $\CM_{\textrm {cubes}}^{(n)}$ is consistent and synchronised with the \Bourgain structure. Suppose that IIE at level $n$ is satisfied at the last patch $\CA^{\BPhi(n)}_{\tim}$. Consider a simple covering of this patch by the patches of the next level
$\{\CA^{\BPhi(n+1)}_{\tim'}\}_{\tim'=1}^M$, $M=M_{n+1}\sim \En^{d r_{n+1,3}\En^{3lr_{n,1}/Z_0}}$. Then   
we can choose at least $M_{n+1}(1-\En^{-r_{n+1,3}/2})$
of these next level patches (which we will call the good patches), such that for each of these patches $\CA^{\BPhi(n+1)}_{\tim'}$ there exists a corresponding central cube $\hat K^{(n+1)}_{\tim'} $ (possibly different for each $\tim'$) such that the new matryoshka of central cubes of level $n+1$ is consistent and synchronised with the \Bourgain structure (in particular, the Statement at level $n+1$ holds). Moreover, at these patches 
the  IIE also holds at level $n+1$. In other words, if a patch is good (i.e. perfect and excellent), then most of the patches at the next level are also perfect and excellent. 
\ent
\ber
The statement of the theorem can be loosely reformulated as follows: The Statement and IIE at level $n$ are persistent at the next level, modulo throwing away a set of spherical angles of small measure. We remark that neither the Statement, nor the IIE on their own are  persistent  at the next level. Even if we consider the Statement with a half of IIE (e.g. \eqref{Cartan5-n}, but not \eqref{Cartan5-ntilde}), this combination cannot be shifted to the next level, modulo a small set of spherical angles: we really need the entire package  of the  Statement, \eqref{Cartan5-n}, and \eqref{Cartan5-ntilde} to make a proper inductive step. 
\enr

The proof of this theorem is given in Section \ref{section9}. It allows us to define the good sets of spherical angles at each level. Namely, the good sets $\CG^{\BPhi(0)}$ and $\CG^{\BPhi(1)}$ are defined by formulas \eqref{W1} and  \eqref{goodPhi1} respectively. Theorem~\ref{Thm7.4} provides the inductive construction of the sets $\CG^{\BPhi(n)}=\CG^{\BPhi(n)}(\rho)$, $n\geq 2$: $\CG^{\BPhi(n)}$ is a union of all good patches at level $n$. Estimates obtained previously (see \eqref{W1} and \eqref{W1'}) allow us to estimate:
\bee\label{CGPhi}
\meas(\CG^{\BPhi(n)})\ge \meas(\S)(1-E^{-\sigma_{0}}),\ \ E\geq E_*.
\ene

 \section{Induction. Proof of Theorem \ref{MGT}}\label{section8}
 
 The strategy of the proof is going to be as follows. First, we will prove that if we assume the existence of a reasonable \Bourgain structure at level $n$, then there is a \Bourgain structure at level $n+1$ (for the price of throwing away a small proportion of frequencies). This is done in subsection \ref{9.1}; we also show there the existence of a \Bourgain structure of order $1$. The first stage of the proof is quite similar to the proof of Main {\newred Semi-Algebraic} Lemma at level one in section \ref{goodset-3}, after which we have to modify the cubes obtained in that lemma slightly to construct a proper \Bourgain structure. 

Then, in subsection \ref{9.2}, we prove that this \Bourgain structure is reasonable. 
We first prove that the structure is reasonable at level $1$: this, together with the existence of \Bourgain structure at level one can be seen as the base of induction. Next, we will finish proving the inductive step and prove that the structure is reasonable at level $n+1$. The proofs of both cases are very similar to each other though. 

Finally, at the end of subsection \ref{9.2}, we will define the good set of frequencies at level $n+1$, ${\bf \mathcal G }^{{\vec\boldom }(n+1)}$. 
 
Thus, to begin with, we assume that the \Bourgain structure (stable in $\rho$ and $\bk$ on the lower part -- up to the level $n$ -- of our matryoshkas of patches) has been established.

\subsection{The main {\newred Semi-Algebraic} Lemma at high levels}\label{9.1}

The first step in the proof of the existence of both (usual and enlarged) reasonable \Bourgain structures of order $n+1$ is based on estimates \eqref{measSmn} and \eqref{meastildeSmn} and is completely similar to the construction in section \ref{goodset-3} following Lemma~\ref{L5.5}. Recall that we have fixed the values of $Z_0$ and $\gamma_0$ in \eqref{Z_0} and \eqref{gamma_0}. In Lemma \ref{MGL} we have constructed a set $G^{(1)}(E,Z_0)$. Now we will state the inductive construction.

\bel\label{MGL-n} Suppose, $n\ge 1$ and there are reasonable \Bourgain structures (usual and enlarged) at level $n$ on ${\mathcal G }^{\vec\boldom (n)}(\tilde E)$ above energy $\tilde E$. For every 
$\En>\tilde E\ (>\En_{*})$, there is a set ${G '}^{(n+1)}(\En)\subset {G }^{(n)}(\En)$,
such that for any $(\vec \boldom, \rho, \bxi  )\in S_{total}^{(n)}$,  with  $\vec \boldom \in { G'^{(n+1)} } (\En)$ and $\rho\in [E-1,E+1]$  
there is a $\gamma $, $\gamma =\gamma ({n}, \vec \boldom, \rho , \bxi)$ with the following properties:
\bee \label{gamma-2-n} \gamma _0<\gamma <1-\gamma _0,
\ene
\bee \label{no-2-n}\left\{ \q:   (\vec \boldom, \rho, \bxi +\q \vec \boldom )\in S_{total}^{(n)}, \ \  \q \in \Omega(E^{\gamma r_{n+1,1}})\setminus \Omega(E^{\gamma r_{n+1,1}/Z_0})\right\}=\emptyset .
\ene
The set ${ G'} ^{(n+1)}(\En)$ has an asymptotically full measure in ${G }^{(n)}(\En)$:
\bee \frac{\meas({G' } ^{(n+1)} )}{\meas({G }^{(n)})}=_{\En\to \infty }1-O(\En ^{- C_1r_{n+1,1}}).\label{mesLambda-2-n}
\ene
The value of $\gamma $ can be taken constant in the $\En^{-r_{n+1}'-2 }$-neighbourhood of every $(\rho, \bxi )$
and in the $\En^{-2r_{n+1}'}$-neighbourhood of every $\vec \boldom $. 
\enl
\bel\label{MGL'-n}
Similar statement holds for the enlarged structure. This means that we can find a (possibly different) set $G''^{(n+1)}(E)$ and a (possibly different) $\tilde\gamma$ such that instead of \eqref{no-2-n} when $\vec\boldom\in G''^{(n+1)}(E)$ we have
\bee \label{no'-2-n}\left\{ \q:   (\vec \boldom, \rho, \bxi +\q \vec \boldom )\in  \tilde S_{total}^{(n)}, \ \  \q \in \Omega(E^{\tilde\gamma r_{n+1,2}})\setminus \Omega(E^{\tilde\gamma r_{n+1,2}/Z_0})\right\}=\emptyset .
\ene
The set ${ G''} ^{(n+1)}(\En)$ has an asymptotically full measure in ${G }^{(n)}(\En)$:
\bee \frac{\meas({G'' } ^{(n+1)} )}{\meas({G }^{(n)})}=_{\En\to \infty }1-O(\En ^{- C_1r_{n+1,2}}).\label{mesLambda-2-n''}
\ene
The value of $\tilde\gamma $ can be taken constant in the $\En^{-r_{n+1}'-2 }$-neighbourhood of every $(\rho, \bxi )$
and in the $\En^{-2r_{n+1}'}$-neighbourhood of every $\vec \boldom $. 
\enl
\bep
As we mentioned above, the proof of both lemmas repeats the arguments from the construction in section \ref{goodset-3} following Lemma \ref{L5.5}, with the set $S^{(0)}$ being replaced by  $S^{(n)}$, estimate \eqref{(7.21)} being replaced by \eqref{measSmn} {\newred and Corollary \ref{6.2} being replaced by Lemma \ref{new:121}, cf Remark \ref{new:B}. The algebraic structure of the sets $S^{(0)}$ and $S^{(n)}$ is the same. 
}
No new restrictions on $E_*$ are imposed during the proof. 
\enp

Let us also define 
\bee
{G } ^{(n+1)}(E):={G' } ^{(n+1)}(E)\cap{G'' } ^{(n+1)}(E).
\ene
Then 
\bee \frac{\meas({G } ^{(n+1)}(E) )}{\meas({G }^{(n)}(E))}=_{\En\to \infty }1-O(\En ^{- C_1r_{n+1,1}(E)}).\label{mesLambda-2-n'}
\ene

Now we will use these lemmas to construct \Bourgain structures at level $n+1$ (stable in $\bk$ and $\rho$). The same construction can be used when we construct \Bourgain structures at level $1$ using lemmas \ref{MGL} and \ref{MGL'}, so for definiteness we concentrate on the inductive step.
We are using the induction assumption that tells us that there is a \Bourgain structure of order $n$ that is stable in $\rho$ and $\bk$ on matryoshkas of order $n$. Now we have to add to this construction cubes of order $n+1$ to create a \Bourgain structure of order $n+1$. We fix $\vec\boldom\in G^{(n+1)}$, $\vec\boldom\in\CA^{\vec\boldom(n+1)}(\vec\boldom^*_{\tim^{\vec\boldom}})$ and $\rho\in\CA^{\rho(n+1)}(\rho^*_{\tim^{\rho}})$. Suppose we have a centre $\bxi^*_{\tim^{\bxi}}$ of the $\bxi$-patch  such that $(\vec\boldom^*_{\tim^{\vec\boldom}},\vec\rho^*_{\tim^{\rho}},\bxi^*_{\tim^{\bxi}})\in S^{(n)}_{total}$. 
Then for $\ham=\ham(n,\vec\boldom^*_{\tim^{\vec\boldom}},\rho^*_{\tim^{\rho}},\bxi^*_{\tim^{\bxi}})$ we initially define the base cubes (the existence of which is postulated in the first and second items of definition \ref{8.1}) as $K^{\mathrm{ib}
(n+1)}_{\ham}:=\Omega(E^{\gamma r_{n+1,1}})$ and  $K^{\mathrm{ib}(n+1){\mathrm {small}}}_{\ham}:=\Omega(E^{\gamma r_{n+1,1}/Z_0})$ (the index $\mathrm{ib}$ stands for `initial base' -- as we will see in a moment, we will have to modify these base cubes).   
Here, $\gamma=\gamma(n,\vec\boldom^*_{\tim^{\vec\boldom}},\rho^*_{\tim^{\rho}},\bxi^*_{\tim^{\bxi}})$ is given by lemma \ref{MGL-n}. That lemma also ensures that the same value of $\gamma$ works for all $\rho\in\CA^{\rho(n+1)}(\rho^*_{\tim^{\rho}})$. Obviously, the corresponding cube depends not only on the patch $\CA^{\bxi(n+1)}(\bxi^*_{\tim^{\bxi}})$, but also on the choice of matryoshka $\CM^{\bxi(n+1)}$ covering this patch. 
  
Suppose now that $\CM^{\bk(n+1)}=\{\CA^{\bk(j)}\}_{j=0}^{n+1}$ is a matryoshka of patches in $\bk$, and we want to construct a \Bourgain structure,  stable in $\bk$ on $\CM$. Stability with respect to patches of order $j<n+1$ in $\CM$ follows from the induction assumption, so now we will discuss how to achieve the stability in $\bk$ on the patch $\CA^{\bk(n+1)}$. 
First, for any $\bn\in\Zl$ we look at the shifted patch $\CA^{\bk(n+1)}+\bn\vec\boldom$. By the construction of the patches discussed in section \ref{section1} (patches in $\bxi$ being ten times bigger than the patches in $\bk$), there is a centre of a $\bxi$-patch, $\bxi^{*}=\bxi^*(\bn)$, such that  
$\CA^{\bk(n+1)}+\bn\vec\boldom\subset\CA^{\bxi(n+1)}(\bxi^{*})$. Now let us first try to choose the base cubes $K^{\mathrm{ib}(n+1)}=K^{\mathrm{ib}(n+1)}(\bn)$ and $K^{\mathrm{ib}(n+1){\mathrm {small}}}=K^{\mathrm{ib}(n+1){\mathrm {small}}}(\bn)$ as just described (i.e. we take $\ham=\ham(n,\vec\boldom^*_{\tim^{\vec\boldom}},\rho^*_{\tim^{\rho}},\bxi^*_{\tim^{\bxi}}(\bn))$). We denote the parameter $\gamma$ of the corresponding cube by $\gamma^{(n+1)}_{\bn}$. We will soon see what is wrong with such a choice of base cubes, after which we will modify them accordingly.

So, let us try to work with `initial base' cubes as defined above. The next step would be to shift them by vectors $\bn$ as prescribed in condition $3$ of definition \ref{8.1}. Let us denote these shifts by 
$K^{\mathrm{i}(n+1)}(\bn):=K^{\mathrm{ib}(n+1)}(\bn)+\bn$ and $K^{\mathrm{i}(n+1){\mathrm {small}}}(\bn):=K^{\mathrm{ib}(n+1){\mathrm {small}}}(\bn)+\bn$, where the index  $\mathrm{i}$ stands for `initial'. 
If we add these cubes to the already existing \Bourgain cubes $K^{(j)}_{m}$ and $K^{(j){\mathrm {small}}}_{m}$ ($j<n+1$), we will get the structure that satisfies all conditions of definition \ref{8.1} except, possibly, condition $4$; we obviously also have stability in $\bk$ and $\rho$. Condition $4$ is satisfied for $j,j'<n+1$ by the inductive assumption. We now explain how to modify $K^{\mathrm{ib}(n+1)}(\bn)$ to satisfy condition $4$ for $j=n+1$ and/or $j'=n+1$.

1) First, consider any two initial cubes of the same level $n+1$ ($K^{\mathrm{i}(n+1)}(\bn_m)=K^{\mathrm{ib}(n+1)}(\bn_m)+\bn_m$ and  $K^{\mathrm{i}(n+1)}(\bn_{m'})=K^{\mathrm{ib}(n+1)}(\bn_{m'})+\bn_{m'}$). Let us assume for definiteness $\gamma^{(n+1)}_{\bn_m}\geq\gamma^{(n+1)}_{\bn_{m'}}$. By construction (no bad points subcubes of order $n$ in a spherical layer $K^{\mathrm{i}(n+1)}(\bn_m)\setminus K^{\mathrm{i}(n+1){\mathrm {small}}}(\bn_m)$) we  either have
$$K^{\mathrm{i}(n+1)\mathrm{small}}(\bn_{m'})\cap K^{\mathrm{i}(n+1)\mathrm{small}}(\bn_{m})\not=\emptyset,$$ 
or 
$$K^{\mathrm{i}(n+1)\mathrm{small}}(\bn_{m'})\cap (\Z^l\setminus K^{\mathrm{i}(n+1)}(\bn_{m}))\not=\emptyset.$$ 

If the former case, we simply discard the cube $K^{\mathrm{ib}(n+1)}(\bn_m')$ (or rather two cubes: $K^{\mathrm{ib}(n+1)}(\bn_m')$ and $K^{\mathrm{ib}(n+1)\mathrm{small}}(\bn_m')$) from our list of base cubes. We continue this procedure for all other cubes of level $n+1$.

2) Now, we have kept only the cubes $K^{\mathrm{ib}(n+1)}(\bn_m)$ such that when we shift them by $\bn_m$, they do not cover any shifted small initial  cube.  Now we rescale all the remaining cubes by $\frac14$, i.e. we put
\bee\label{1/4ball}
K^{\mathrm{ibr}(n+1)}(\bn_m):=\frac14 K^{\mathrm{ib}(n+1)}(\bn_m),
\ene 
where the RHS of \eqref{1/4ball} is a ball in $\Z^l$ (centred at the origin) of the radius $\frac14$ times the radius of the ball in the LHS (here we do abuse the standard notation slightly). The extra index $\mathrm{r}$ stands for `rescaled'.   
This ensures the proper $\Z$-distance between the shifted cubes $K^{\mathrm{ibr}(n+1)}(\bn_m)+\bn_m$. 

3) Now property $4$ of definition \ref{8.1} is satisfied for $j=j'=n+1$. Let us ensure that it holds also for $0\le j<j'=n+1$. To do this, we consider the situation when the  \Bourgain cube of the smaller order $K^{(j)}_{m'}$ is too close to the  cube of order $n+1$ $K^{\mathrm{ibr}(n+1)}(\bn_{m})+\bn_{m}$, and if this happens we just attach the smaller cube to the re-defined bigger cube. More precisely, we proceed as follows. Suppose, there is a cube 
$K^{(n)}_{\bn_{m'}}=K^{b(n)}_{\bn_{m'}}+\bn_{m'}$ that is located within distance 
$\frac14E^{\gamma^{(n)}_{m'}r_{n,1}}$ of $K^{\mathrm{ibr}(n+1)}(\bn_{m})+\bn_{m}$. Then we attach $K^{(n)}_{\bn_{m'}}-\bn_{m}$ together with its $\frac{1}{16}E^{\gamma^{(n)}_{m'}r_{n,1}}$-neighbourhood to $K^{\mathrm{ibr}(n+1)}(\bn_{m})$. Then we do the same for all cubes $K^{(n-1)}_{\bn_{m'}}-\bn_{m}$ that are within distance 
$\frac14E^{\gamma^{(n-1)}_{m'}r_{n-1,1}}$ of $K^{\mathrm{ibr}(n+1)}(\bn_{m})$ (where we have attached all cubes of level $n$ to $K^{\mathrm{ibr}(n+1)}(\bn_{m})$); we also attach $\frac{1}{16}E^{\gamma^{(n-1)}_{m'}r_{n-1,1}}$-neighbourhood of such cubes to $K^{\mathrm{ibr}(n+1)}(\bn_{m})$. We carry on this process until we attach the cubes of level $0$. The resulting cube $K^{\mathrm{ibr}(n+1)}(\bn_{m})$ (with attached cubes of smaller levels) is what we will finally call the base cube $K^{\mathrm{b}(n+1)}(\bn_{m})$. {\newred Now, the shifted cubes are defined by \eqref{(8.3)}, thus establishing the translation invariance. It is also not hard to see that all the properties listed in item 4 of the definition \ref{8.1} are now satisfied. Indeed, the distance between cubes of order $n+1$ is controlled by construction and the estimate of the size of clusters of the cubes of a smaller order. Then, we included close smaller cubes together with their proper neighbourhoods. Since, by induction, the distance between cubes of smaller order is controlled, it ensures the proper distance between a cube of order $n+1$ and all cubes of smaller order.}

The enlarged base cubes $\tilde K^{b(n)}_{m}$ and $\tilde K^{b(n),{\mathrm {small}}}_{m}$ corresponding to this patch are initially defined in a similar way, but using $\tilde\gamma$ from lemma \ref{MGL'-n}; then we modify them in the same way as the usual cubes to satisfy $4'$ from definition \ref{8.2}.  

\subsection{Proof of Theorem \ref{MGT}.}\label{9.2}

It remains to prove that the \Bourgain structures are reasonable, i.e. estimates \eqref{measSmn} and \eqref{meastildeSmn} hold, both for $n=1$ (base) and for $n+1$, assuming they hold for $n$ (step). The two proofs are very similar, but not identical; that is why we had to split these two cases. 
We also would like to emphasise that the proof of the structure being reasonable is the main technical difference compared to  section \ref{goodset-3} (see Lemma~\ref{L5.5}) and this is one of the places where we actually need the enlarged \Bourgain structure. 

We will concentrate on proving \eqref{measSmn}. The proof of \eqref{meastildeSmn} is similar.  

First, we consider the case $n=1$. We are going to use Cartan's Lemma, see Lemma \ref{Cartan}, so we need to define all the objects in that Lemma as well as to check that all the assumptions of that Lemma are satisfied. Suppose, $\bxi\in\CA^{\bxi(1)}(\bxi^*_{\tim})$. Let $K_{\ham}^{b(1)}$ be a base cube corresponding to $\CM_{\tim^{\bxi}}^{\bxi(1)}$ according to Definition \ref{8.1}, $\ham=\ham(\tim)$. Put
  $\Lambda :=K_{\ham}^{b(1),\mathrm{small}}$ and  $\tilde\Lambda :=K_{\ham}^{b(1)}$.
  Obviously, 
  \begin{equation}
  \label{M}|\Lambda |<M:=2^l\En^{l\gamma _{\ham}^{(1)} r_{1,1}/Z_0}.
  \end{equation}
  
Denote 
 $$A({\bf z}):=(H(\tilde\Lambda,\bxi )-\rho^2)=
 \CP(\tilde\Lambda, \bxi)(H(\bxi )-\rho^2)\CP(\tilde\Lambda, \bxi ), 
 \ \ {\bf z}=(\bxi -\bxi ^*_{\tim})\En^{2r_1'}.$$
 Obviously, $A$ is an analytic function of ${\bf z}$ in $D^d$, $D:=\{z\in \C, |z|<1\}$ and $N$ -- the size of the matrix $A$ -- is bounded above by $2^l\En^{l\gamma _{\ham}^{(1)} r_{1,1}}$. It is easy to see that
 $\| A({\bf z})\|<2^{2l}\En^{2l\gamma _{\ham}^{(1)} r_{1,1}}$. Therefore, \eqref{Cartan1} holds with $B_1:=2^{2l}\En^{2l\gamma^{(1)} _{\ham} r_{1,1}}$. 
 Let us check \eqref{Cartan2}.

Let $\bq$ be any point from $\tilde\Lambda$ and consider the point $\bxi ^*_{\tim}+\bq \vec\boldom$. By Definition \ref{8.1}, we either have $|\|\bxi ^*_{\tim}+\bq \vec\boldom\|^2-\rho^2|>E^{\sigma_0}$, or there is a cube $K_{\ham'}^{b(0)}$, $\ham'=\ham'(\tim,\bq)$ corresponding to this point. 

Suppose first that $\bq\in   \tilde\Lambda\setminus\Lambda$ . Then 
 Lemma \ref{MGL} tell us that  
  $$ (\vec \boldom , \rho , \bxi ^*_{\tim}+\bq \vec\boldom )\not \in S^{(0)}_{total}, $$
  which means that  in the latter case we have:
  \bee \label{11!}
  \left\|\left(H(K_{\ham'}^{b(0)},\bxi ^*_{\tim}+\bq \vec\boldom)-\rho^2)\right)^{-1}\right\|_2<\En^{r_1'}.
  \end{equation}
Then we use the perturbation theory arguments to move from \eqref{11!} to
 \bee \label{11}
\left\|\left(H(K_{\ham'}^{b(0)},\bxi +\bq \vec\boldom)-\rho^2)\right)^{-1}\right\|_2<2\En^{r_1'}.
 \end{equation}
The last formula can obviously be re-written as
\bee \label{11n}
\left\|\left(H(K_{\ham'}^{b(0)}+\bq,\bxi )-\rho^2)\right)^{-1}\right\|_2<2\En^{r_1'}.
 \end{equation}

  Now, applying Lemma \ref{abstractlemma} with $K^{(n+1)}:=\tilde \Lambda \setminus \Lambda$ (see also Theorem \ref{Thm2}), we obtain that \eqref{Cartan2} holds with $B_2:=4\En^{r_1'}$. 
    
  Next we check \eqref{Cartan3}. Let us introduce somewhat longer notation for $S_{\ham}^{(0)}$ in \eqref{Sn}: $S^{(0)}_{\ham}=:S_{\bxi^* _{\tim}, r_1'}$ and consider, for $\bp\in\Lambda$, 
a modification of this set: 
\bee
S_{\bxi ^*_{\tim}+\bp \vec \boldom ,3lr_1'}:=
\left \{(\vec \boldom, \rho ,\bxi ) \in \CA_{\tim}^{(0)}: \left \|
(H( K_{\ham'}^{b(0)}+\bp,\bxi)-\rho^2)^{-1}\right\|_2> \En^{3lr_{1}'}\right\}, 
\ene
i.e. we use $3lr_1'$ in \eqref{Sn} instead of $r_1':=r_{1,3}$, and we put $\ham'=\ham'(\tim,\bp)$, so that $K_{\ham'}^{b(0)}$ is a base cube corresponding to matryoshka $\CM_{\tim^{\bxi}}^{\bxi(1)}$ that covers $\bxi+\bp\vec\boldom$.  By Lemma \ref{L5.5} (with the obvious adjustment of the power),
\bee
\meas( (S_{\bxi^* _{\tim}+\bp \vec \boldom ,3lr_1'})_{\mathrm {cs}}(\vec \boldom , \rho ))<
 \En^{-3lr_1'-d-1+\sigma _{1,d}}. 
 \ene 
 Next, put
 $$S' (\vec \boldom , \rho ):=\cup _{\bp \in \Lambda } (S_{\bxi^* _{\tim}+\bp \vec \boldom ,3lr_1'})_{\mathrm {cs}}(\vec \boldom , \rho ).$$
 Obviously,
 \bee \meas( S'(\vec \boldom , \rho ))<
 \En^{-3lr_1'-d-1+\sigma _{1,d}}\En^{l r_{1,1}/Z_0}<\En^{-2lr_1'}.
 \ene 
It easily follows that
\bee \label{set} 
\BB(\bxi ^*_{\tim};\frac{1}{2}\En^{-2r_1'})\setminus S'(\vec \boldom , \rho )\neq \emptyset.
 \ene

Let us choose $\tilde \bxi _{\tim}$ to be any point from the set in the LHS of \eqref{set}.
Then for any $\bp\in\Lambda$ we have:
\bee \label{12}
\left\|\left(H(K_{\ham'}^{b(0)}+\bp,\tilde\bxi_{\tim} )-\rho^2\right)^{-1}\right\|_2<\En^{3lr_1'}.
\ene
Notice that, unlike when we were deriving \eqref{11} from \eqref{11!}, here we cannot replace $\tilde \bxi_{\tim}$ by $\bxi$ using perturbation arguments. However, we do not need to do this, since we need to establish \eqref{12} only at one point to be able to apply the Cartan's Lemma. 
Next, we consider \eqref{11} for $\bn \not \in  \Lambda $ and \eqref{12} for $\bn \in  \Lambda $. Now, Lemma \ref{abstractlemma} implies \eqref{Cartan3} with $B_3:=2\En^{3lr_1'}$ and 
 $${\bf a }:=(\bxi ^*_{\tim}-\tilde \bxi _{\tim})\En^{2r_1'}
 \in (-\frac{1}{2},\frac{1}{2})^d.$$ 
 
Finally, we put $e^t=\En^{r_2'(\gamma_{\ham}^{(1)})}$. Now \eqref{Cartan4} implies
\bee
\meas(( S_{\ham}^{(1)})_{\mathrm {cs}}(\vec \boldom, \rho ))<C\En^{\frac{-cr_2'(\gamma_{\ham}^{(1)})}{M\ln (B_1B_2B_3)}}.
\ene
 Estimate \eqref{measSmn} for $n=1$ easily follows. The proof of \eqref{meastildeSmn} for $n=1$ is similar with obvious changes due to the fact that we consider the enlarged cubes. 

 Now let us prove the inductive step, i.e. we assume that \eqref{measSmn},  \eqref{meastildeSmn} hold for some $n\geq1$ and prove them for $n+1$.  Notice that $r_{n}'(\gamma_{\ham}^{(n-1)})>4r_{n,3} \En^{\frac32 l\gamma_{\ham}^{(n-1)} r_{n-1,1}/Z_0}$. 
 As above, we use Cartan's Lemma. The definition of matrix $A({\bf z})$ and the proof of \eqref{Cartan1}, \eqref{Cartan2} is analogous to the proof in the case $n=1$.
 Indeed, suppose that $\bxi\in\CA^{\bxi(n+1)}(\bxi^*_{\tim})$. Let $K_{\ham}^{b(n+1)}$ be a base cube corresponding to $\CM_{\tim^{\bxi}}^{\bxi(n+1)}$. Put
  $\Lambda :=K_{\ham}^{b(n+1),\mathrm{small}}$ and  $\tilde\Lambda :=K_{\ham}^{b(n+1)}$.
  Obviously, 
  \begin{equation}
  \label{M-n}|\Lambda |\le M:=2^l\En^{l\gamma _{\ham}^{(n+1)} r_{n+1,1}/Z_0}.
  \end{equation}
  
Let
 $$A({\bf z}):=(H(\tilde\Lambda,\bxi )-\rho^2)=\CP(\tilde\Lambda, \bxi)(H(\bxi )-\rho^2)\CP(\tilde\Lambda, \bxi ), \ \ {\bf z}=(\bxi -\bxi ^*_{\tim})\En^{2r_{n+1}'}.$$
 Obviously, it is an analytic function of ${\bf z}$ in $D^d$, $D:=\{z\in \C, |z|<1\}$, and the size of the matrix $N$ is $2^l\En^{l\gamma _{\ham}^{(n+1)} r_{n+1,1}}$. It is easy to see that
 $\| A({\bf z})\|<2^{2l}\En^{2l\gamma _{\ham}^{(n+1)} r_{n+1,1}}$. Therefore, \eqref{Cartan1} holds with $B_1:=2^{2l}\En^{2l\gamma _{\ham}^{(n+1)} r_{n+1,1}}$. 
 
Let $\bq$ be any point from $\tilde\Lambda$ and consider the point $\bxi ^*_{\tim}+\bq \vec\boldom$. By Definition \ref{8.1}, we either have $|\|\bxi ^*_{\tim}+\bq \vec\boldom\|^2-\rho^2|>E^{\sigma_0}$, or there is a \Bourgain cube $K_{\ham'}^{b(j)}$, $0\le j\le n$, $\ham'=\ham'(\tim,\bq,j)$ corresponding to this point. Recall (see Definition \ref{corresponding}) that if $j<n$, then $K_{\ham'}^{b(j)}$ is good. On the other hand, if  $\bq\in\tilde\Lambda\setminus\Lambda$, then Lemma \ref{MGL-n} implies that $K_{\ham'}^{b(n)}$ is good.
 Overall,  for all $j\le n$ and
  for $\bq\in\tilde\Lambda\setminus\Lambda$ we have that
  $$ (\vec \boldom , \rho , \bxi ^*_{\tim}+\bq \vec\boldom )\not \in S^{(j)}_{total}.$$
  Using again the perturbation arguments as in \eqref{11!}, \eqref{11}, we obtain:
  \bee \label{11-n}
  \left\|\left(H(K_{\ham'}^{b(j)}+\bq,\bxi )-\rho^2\right)^{-1}\right\|_2<\En^{r_{j+1}'(\gamma_{\ham'}^{(j)})}.  
  \end{equation}
  
 Now, applying Lemma \ref{abstractlemma} for $H(\tilde\Lambda\setminus\Lambda,\bxi)$ (notice that $r_{j+1}'(\gamma_{\ham'}^{(j)})\ll\En^{\gamma_{\ham'}^{(j)}r_{j,1}}$) we see that \eqref{Cartan2} holds with $B_2:=2\En^{r_{n+1}'}$. 
 
Next, we need to check \eqref{Cartan3}. Consider the enlarged \Bourgain cubes $\tilde K_{\ham'}^{b(n)}+\bq$ located inside (or at least having a non-empty intersection with) $K_{\ham}^{b(n+1),{\mathrm {small}}}$. At the same time, we consider the collection of sets $\tilde S_{\ham'}^{(n)}$ (see \eqref{tildeSmn}).
Using \eqref{meastildeSmn} for $n$ (our induction assumption), we have
\bee
\meas( (\tilde S_{\ham'}^{(n)})_{\mathrm{cs}}(\vec \boldom , \rho ))< \En^{-\En^{\frac12 l\tilde\gamma_{\ham'}^{(n)} r_{n,2}/Z_0}}<\En^{-\En^{4l  r_{n,1}/Z_0}}.
 \ene 
 Next, put
 $$S' (\vec \boldom , \rho ):=\cup(\tilde S_{\ham'}^{(n)})_{\mathrm{cs}}(\vec \boldom , \rho ),$$
where the union is over all extended cubes $\tilde K_{\ham'}^{b(n)}+\bq$ inside (or at least intersecting) $K_{\ham}^{b(n+1),{\mathrm {small}}}$.  Then we have:
 \bee 
 \meas( S'(\vec \boldom , \rho ))<
{\En^{l\gamma _{\ham}^{(n+1)} r_{n+1,1}}}\En^{-\En^{4l  r_{n,1}/Z_0}}<\En^{-\frac12\En^{4l  r_{n,1}/Z_0}}.\ene 
It easily follows that
\bee \label{set-n'} \BB(\bxi ^*_{\tim};\frac{1}{2}\En^{-2r_{n+1}'})\setminus S'(\vec \boldom , \rho )\neq \emptyset. \ene
We now choose any point $\tilde \bxi _{\tim}$ from the set in the LHS. 
Then we have:
\bee \label{12-n'}
\left\|\left(H(\tilde K_{\ham'}^{b(n)}+\bp,\tilde\bxi_{\tim} )-\rho^2\right)^{-1}\right\|_2<\En^{\tilde r_{n+1}'(\tilde\gamma_{\ham'}^{(n)})}
\ene
for all $\bp\in\Lambda$. For $\bp\in\tilde\Lambda\setminus\Lambda$ we still use \eqref{11-n}.

Recall that by definition \ref{8.2} all cubes $\tilde K_{\ham'}^{b(n)}+\bq$ and $K_{\ham''}^{b(j)}+\bq'$ are well-separated from each other. Now Lemma \ref{abstractlemma} implies \eqref{Cartan3} with $B_3:=\En^{\tilde r_{n+1}'}$. Finally, we choose $t$ so that $e^t=\En^{r_{n+2}'(\gamma_{\ham}^{(n+1)})}$ and 
$${\bf a }:=(\bxi ^*_{\tim}-\tilde \bxi _{\tim})\En^{2r_{n+1}'}
 \in (-\frac{1}{2},\frac{1}{2})^d.$$ 

By \eqref{Cartan4},
 $$\meas( S_{\ham}^{(n+1)}(\vec \boldom, \rho ))<C\En^{\frac{-cr_{n+2}'(\gamma_{\ham}^{(n+1)})}{M\ln( B_1B_2B_3)}}.$$ 
 Estimate \eqref{measSmn} (for $n+1$) easily follows. The proof of \eqref{meastildeSmn} is similar. 
 
Finally, we recall that $E_q=E_*+q$, $q=0,1,...$ and define the set ${\bf \mathcal G }^{{\vec\boldom }(n+1)}={\bf \mathcal G }^{{\vec\boldom }(n+1)}(\tilde E)$ by 
\bee\label{CGn}
{\bf \mathcal G }^{{\vec\boldom }(n+1)}(\En _q)
:=\CG^{\vec\boldom(n)}(\En _q)\cap\left(\cap _{k=q}^{\infty}{G^{(n+1)} }_{} (\En _k)\right);
\ene
for $\tilde E\in[E_q,E_{q+1})$ we put 
$${\bf \mathcal G }^{{\vec\boldom }(n+1)}(\tilde E):={\bf \mathcal G }^{{\vec\boldom }(n+1)}(E_q)=\CG^{\vec\boldom(n)}(\En _q)\cap\left(\cap _{k=q}^{\infty}{G^{(n+1)} }_{} (\En _k)\right).$$ 
The set ${\bf \mathcal G }^{{\vec\boldom }(n+1)}(\tilde E)$ consists of the frequencies for which there is a \Bourgain structure (usual and enlarged) of order $n+1$ for all $\rho\ge\tilde E$.

\bec\label{MGC}
We have:
\bee \meas({\mathcal G }^{\vec\boldom (n+1)}(\tilde E))=\meas({\mathcal G }^{\vec\boldom (0)})-O(\tilde E ^{- C_1r_{n+1,1}(\tilde E)}),\ {\tilde E\to \infty }.
\label{mesbfLambda}
\ene
\enc
Theorem~\ref{MGT} is proven.

\section{Induction. Proof of Theorem \ref{Thm7.4}}\label{section9}

The scheme of the proof is somewhat reminiscent of the scheme of the proof from the previous section.  

Part I. Here, we will use the properties of good matryoshkas (the Statement) and IIE (or rather the half of it) \eqref{Cartan5-n}  at level $n$ to obtain that the Statement at the level $n+1$ holds for most patches of level $n+1$.  The proof is relatively simple and is a direct application of the abstract resolvent  lemma \ref{abstractlemma} from Appendix 1.

Part II. Here, we first prove (a) the IIE on perfect patches of level one. Next,  (b) we assume that Statement and IIE hold at level $n$ and prove the IIE  at level $n+1$ for the patches described in Part I. The proofs of parts (a) (which again could be seen as the base of induction) and (b) (which is an inductive statement) are similar, but not identical, so we describe both of them in detail. 
The proofs of \eqref{Cartan5-n} and \eqref{Cartan5-ntilde} are similar to each other, so we will only prove one of them.

{\it Part I.} Here we show how to obtain the next ($n+1$)-st level of matryoshka of central cubes consistent and synchronised with the \Bourgain structure using the estimates \eqref{Cartan5-n} and the properties of the central cube of level $n$ listed in the Statement. To begin with, we fix a patch at level $n$, $\CA^{\BPhi(n)}_{\tim}$, $\BPhi=\BPsi_{\tim}(\vec\phi)$, and assume that $\bk$ is associated with it. We assume that this patch is good, so in particular 
the Statement and the IIE at the level $n$ hold. We also assume that the \Bourgain structure is stable in $\bk$ associated with $\CA^{\BPhi(n)}_{\tim}$ (Theorem \ref{MGT} states that this is possible).
We want to prove that for most of the patches at the next level inside $\CA^{\BPhi(n)}_{\tim}$, the Statement at the level $n+1$ holds.  
 
Consider first the prodigal sons $K^{(n)}_m$ (see Definition~\ref{def7.5}) and notice that 
$$\En^{-r_{n,3}\En^{4lr_{n-1,1}/Z_0}}=\En^{-r_{n,3}(r_n')^2}.$$ 
In particular, all good \Bourgain cubes of level $n-1$ are stable under such perturbation (meaning that after this perturbation the inequalities defining the good cubes will still be valid, possibly with an extra factor of $2$). Next, we modify all prodigal sons. Suppose, $\bq\in\Z^l$ is such that $||\bq\vec\boldom||<\En^{-r_{n,3}\En^{4lr_{n-1,1}/Z_0}}$. We then construct the cube $\hat K_\bq^{(n)}$ around it which is just the shifted central cube: $\hat K_\bq^{(n)}:=\hat K^{(n)}+\bq$. We also naturally denote $\hat K_{\bf 0}^{(n)}:=\hat K^{(n)}$. The  Diophantine condition implies that the distance between different prodigal sons $\hat K_\bq^{(n)}$ and $\hat K_{\bq'}^{(n)}$ is at least $\En^{r_{n,2}}$ (the size of the central cube). We also note that all the points $\bk+\bq\vec\boldom$ are associated with the same quasi-patch $\CA^{\BPhi(n)}_{\tim}$ (strictly speaking, we have to increase the size of this patch by a factor $2$). Therefore, as explained above, we can choose base \Bourgain cubes corresponding to the shifts of all such points $\bk+\bq\vec\boldom$ by any vector $\bn\in\Z^l$ to be the same. This means that 
the properties from definition \ref{def7.4} hold with respect to each $\hat K_\bq^{(n)}$, not only for $\hat K^{(n)}$. This means, in particular, that each cube $\hat K_\bq^{(n)}$ is well-separated from any \Bourgain cube that does not lie inside it. 
Finally, using the Statement for level $n$, we conclude that our operator restricted to each prodigal son, $H(\hat K_\bq^{(n)},\bka)$, has a single eigenvalue in the interval $I_{n}$. This eigenvalue is given by $\lambda^{(n)}(\bka+\bq\vec\boldom)$ and the properties listed in the Statement hold. In other words, the operator $H(\hat K_\bq^{(n)},\bka)$ is very similar to $H(\hat K^{(n)},\bka)$.

Given $\bq$, we denote $\bq\vec\boldom=:\|\bq\vec\boldom\|(x_1,\dots,x_d)$, so that $\|(x_1,\dots,x_d)\|=1$. Recall that we choose coordinates so that 
$$\BPhi=\left(\phi_1,\dots,\phi_{d-1},\sqrt{1-\sum_{j=1}^{d-1}\phi_j^2}\right).$$ 
Assume, as we can  without loss of generality, that $|x_1|\geq|x_j|$ for all $j$, $2\leq j\leq d-1$. Consider the operator $H(\hat K_\bq^{(n)},\bka^{(n)})$ as an analytic function of $\phi_1$, assuming that $\hat\phi:=(\phi_2,...,\phi_{d-1})$ is fixed and real. The next Lemma is a simple consequence of the properties of the function $\lambda^{(n)}$:

\bel\label{lemmasimple-n}
Let $\bq\in\Z^l$, $\bq\ne 0$ be such that $||\bq\vec\boldom||<\En^{-r_{n,3}\En^{4lr_{n-1,1}/Z_0}}$. Let us choose any $\vec\phi \in\Pi^{(n)}_{\tim}$. We fix $\hat\phi:=(\phi_2,...,\phi_{d-1})$ and start varying $\phi_1\in\C$ so that $\vec\phi \in\Pi^{(n)}_{\tim,\C}$. Then the resolvent $(H(\hat K_\bq^{(n)},\bka^{(n)})-\rho^2)^{-1}$  has at most one pole in $\phi_1$ for every choice of other variables and we have
\bee\label{estsimple-n}
\|(H(\hat K_\bq^{(n)},\bka^{(n)})-\rho^2)^{-1}\|\leq 2\sqrt{d}\En^{-1}\|\bq\vec\boldom\|^{-1}\varepsilon^{-1},
\ene
whenever $\phi_1$ lies  outside the $\varepsilon$-neighbourhood of the pole.
\enl
\bep
Using the Statement at all levels $1,...,n$, we obtain: 
\bee\label{10.2}
\begin{split}&
\lambda^{(n)}(\bka^{(n)}(\vec\phi)+\bq\vec\boldom)-\rho^2=\lambda^{(n)}(\bka^{(n)}(\vec\phi)+\bq\vec\boldom)-\lambda^{(n)}(\bka^{(n)}(\vec\phi))\cr &
=(\bka^{(n)}(\vec\phi)+\bq\vec\boldom)^2-(\bka^{(n)}(\vec\phi))^2+f_1(\bka^{(n)}(\vec\phi)+\bq\vec\boldom)-f_1(\bka^{(n)}(\vec\phi))\cr &
=2\kappa^{(n)}(\vec\phi)\|\bq\vec\boldom\|\left(x_1 \phi_1+x_d(1-\phi_1^2-\sum_{j=2}^{d-1}\phi_j^2)^{1/2}+\sum_{j=2}^{d-1}x_j\phi_j\right)+\|\bq\vec\boldom\|f_2.
\end{split}
\ene
Here, $f_1$ is the sum of the corrections for all $j\le n$ in the RHS of \eqref{eigenvalue-n}:
$$
f_1(\bka):=\sum\limits_{j=0}^{n}\sum\limits_{q=2}^\infty g^{(j)}_q({\bka}).
$$
Properties formulated in the Statement imply that  $f_1$ is a holomorphic function of $\phi_1$, 
and $f_2$ and its first and second derivatives are $O(\En^{-3/2})$. Notice that $x_d=o(1)$ as $E\to\infty$; otherwise, there are no zeros in the LHS of \eqref{10.2} at all. Now, it follows from the estimates for the first derivatives of $\ka^{(n)}$  that the first derivative of $\lambda^{(n)}(\bka^{(n)}(\vec\phi)+\bq\vec\boldom)$ with respect to $\phi_1$ has modulus bounded below by $E\|\bq\vec\boldom\|/\sqrt{d}$ when $\vec\phi \in\Pi^{(n)}_{\tim,\C}$. This completes the proof of the lemma by standard analytic arguments. 
\enp

Now, we consider the ball $\Omega(\En^{r_{n+1,2}})$ and all the prodigal sons $\hat K_\bq^{(n)}$ that are inside this ball. Then Diophantine estimate tells us that $\|\bq\vec\boldom\|\geq \En^{-\mu r_{n+1,2}}$. Therefore, if we want estimate \eqref{estsimple-n} with $\varepsilon:=\En^{-r_{n+1,3}}$ to hold on all prodigal sons, lemma \ref{lemmasimple-n} tells that we can achieve this by throwing away a subset of $\CA^{\BPhi(n)}_{\tim}$ of total measure $\En^{-r_{n+1,3}}\En^{lr_{n+1,2}}$ (the second term estimating the number of \Bourgain cubes of order $n$ inside $\Omega(\En^{r_{n+1,2}})$). Next, using IIE at level $n$ we can throw away another subset of $\CA^{\BPhi(n)}_{\tim}$ of total measure $\En^{-r_{n+1,3}}\En^{lr_{n+1,2}}$ to ensure that the estimate \eqref{Cartan5-n} holds for {\it any} globetrotter of level $n$ that is inside $\Omega(\En^{r_{n+1,2}})$. Moreover, since both estimates \eqref{Cartan5-n} and \eqref{estsimple-n} (with $\varepsilon:=\En^{-r_{n+1,3}}$ and  $\|\bq\vec\boldom\|\geq \En^{-\mu r_{n+1,2}}$) are stable on a patch $\CA^{\BPhi(n+1)}_{\tim'}$ (as usual, possibly with an extra factor 2), one can easily see that the set we have just thrown away has a non-empty intersection with at most $M_{n+1}\En^{-r_{n+1,3}/2}$ patches at the next level ($n+1$) (those are the patches that we will  declare non-perfect); recall that $M_{n+1}$ is the total number of patches at the next level. 

If we have a \Bourgain cube $K^{(j)}_p$ of level $j<n$, then such cube is either good (and so estimate opposite to \eqref{bad7} holds), or bad, and then $K^{(j)}_p$ is covered by either a prodigal son, or a globetrotter of level $j+1$. 

Finally, we construct $\hat K^{(n+1)}$ as follows. We consider $\Omega(\frac14\En^{r_{n+1,2}})$ and modify it as prescribed in item $3$ of the modification process described in Section 9. The small difference is that first we  add all {\em enlarged} \Bourgain cubes of level $n+1$ that are near the boundary of $\Omega(\frac14\En^{r_{n+1,2}})$, and then continue to include all {\em usual} \Bourgain cubes of all levels $n,n-1,\dots,0$ near the boundary of the already updated cube. This way we ensure the consistency (properties from Definition~\ref{def7.4}) on level $n+1$. Now, we repeat the construction from the proof of Theorem~\ref{Thm2} (see also Lemma~\ref{abstractlemma}) to obtain the Statement for $n+1$. Thus, the patches that have not been thrown away are indeed perfect.

\medskip

{\it Part II.}

(a) First, we need to establish the base of induction - IIE at level 1.

We are fixing a globetrotter $K^{(1)}_m$. Let us look at all clusters $K^{(0)}_{m'}:=\tilde\CC_1(\bn_{m'})$ of level $n=0$, with $p=1$, that are inside $K^{(1)}_m$. Suppose first that the rank $s$ of $\tilde\CC_1(\bn_{m'})$ is positive. Then we can use Lemma \ref{resclustershrink} with $\varepsilon=\En^{-r_{1,3}\En^{3d^2(l+\mu)\sigma_{0,d-1}}}$ to show that we can throw away a bad set $S'_{m'}\subset \Pi_{\tim}^{(1)}$ with
\bee\label{p=0'}
\meas(S'_{m'})\le E^{1-r_{1,3}E^{3d^2(l+\mu)\sigma_{0,d-1}}}
\ene
so that for $\vec\phi\in \Pi_{\tim}^{(1)}\setminus S'_{m'}$ we have: 
\begin{equation}\label{respatchesshrinknew1}
\|( (H(K^{(0)}_{m'},\bka^{(0)}(\vec\phi))-\rho^2))^{-1}\|\le 
\En^{r_{1,3}\En^{2d^2(l+\mu)\sigma_{1,s}}}
\end{equation}
(here we also use the estimate $10\sigma_{0,d-1}<\sigma_{1,s}$).

\ber\label{p=0}
This is the place where we use the fact that our constructions in Sections \ref{section5.3} --\ref{section5} (including definition \eqref{good1}) were defined using $p=0$: this allows us to guarantee that the difference $\Pi_{\tim}^{(1)}\setminus S'_{m'}$ is non-empty. 
\enr

Suppose that the rank  of $\tilde\CC_1(\bn_{m'})$ is $s=0$. Then we use Lemma \ref{s=0}  with $\varepsilon=\En^{-r_{1,3}\En^{3d^2(l+\mu)\sigma_{0,d-1}}}$ together with the (globetrotter) estimate $\|\bn_{m'}\vec\boldom\|\geq\En^{-r_{1,3}\En^{3d^2(l+\mu)\sigma_{0,d-1}}}$ to show that we can throw away a bad set $S'_{m'}\subset \Pi_{\tim}^{(1)}$ with 
$$
\meas(S'_{m'})\le 2E^{-r_{1,3}E^{3d^2(l+\mu)\sigma_{0,d-1}}}
$$
so that for $\vec\phi\in \Pi_{\tim}^{(1)}\setminus S'_{m'}$ we have 
\begin{equation}\label{respatchesshrinknew}
\|( (H(K^{(0)}_{m'},\bka^{(0)}(\vec\phi))-\rho^2))^{-1}\|\le 
\En^{3r_{1,3}\En^{3d^2(l+\mu)\sigma_{0,d-1}}}\leq\En^{r_{1,3}\En^{2d^2(l+\mu)\sigma_{1,0}}}.
\end{equation}

Using \eqref{dk0-2} and the usual perturbation arguments, we deduce that \eqref{respatchesshrinknew1} and \eqref{respatchesshrinknew} hold with $\bka^{(1)}$ instead of $\bka^{(0)}$ in the LHS and an extra factor $2$ in the RHS. These estimates hold simultaneously assuming $\vec\phi\not\in S'$, where $S':=\cup_{m'}S'_{m'}$. Taking into account the trivial bound on the number of clusters $K^{(0)}_{m'}$ inside $K^{(1)}_m$, we obtain
$$
\meas(S')\le \En^{-r_{1,3}\En^{3d^2(l+\mu)\sigma_{0,d-1}}}E^{lr_{1,1}+1},
$$
which implies that the set $\Pi^{(1)}\setminus S'$ is not empty. Applying  lemma~\ref{abstractlemma}, we see that for any  
$\vec\phi\in\Pi^{(1)}\setminus S'$ we have (here it is important that we deal with $p=1$!)
\bee\label{nonsimple}
\|( (H(K^{(1)}_{m},\bka^{(1)}(\vec\phi))-\rho^2))^{-1}\|\leq2\En^{r_{1,3}\En^{2d^2(l+\mu)\sigma_{1,d-1}}}.
\ene

We plan to apply Cartan's Lemma \ref{Cartan} once again (with \eqref{Cartan3} provided by the estimate we have just obtained). We denote 
  $\Lambda :=K_{m}^{(1),\mathrm{small}}$ and  $\tilde\Lambda :=K_{m}^{(1)}$. 
  Obviously, 
  \begin{equation}
  \label{M-n'}|\Lambda |\le M:=\En^{l\gamma_m^{(1)} r_{1,1}/Z_0}
  \end{equation}
and
 \begin{equation}
  |\tilde \Lambda |\le N:=\En^{l\gamma_m^{(1)} r_{1,1}}
  \end{equation}

Let us denote, as before, 
 $$A({\bf z}):=(H(\tilde\Lambda,\bka^{(1)}(\vec\phi))-\rho^2)=\CP(\tilde\Lambda, \bka^{(1)}(\vec\phi))(H(\bka^{(1)}(\vec\phi))-\rho^2)\CP(\tilde\Lambda, \bka^{(1)}(\vec\phi) ),$$
 where ${\bf z}:=\vec\phi \En^{r_{1,3}\En^{2d^2(l+\mu)\sigma_{0,d-1}}}.$
 Obviously, it is an analytic function of ${\bf z}$ in $D^{d-1}$, $D:=\{z\in \C, |z|<1\}$. It is easy to see that
 $\| A({\bf z})\|<2^{2l}\En^{2l r_{1,1}}$. Therefore, \eqref{Cartan1} holds with $B_1:=2^{2l}\En^{2l r_{1,1}}$. 
 
Applying Lemma \ref{abstractlemma} for $H(\tilde\Lambda\setminus\Lambda,\bka^{(1)}(\vec\phi))$ we prove the estimate \eqref{Cartan2} with $B_2=
E^{r'_1}=E^{r_{1,3}}$. Finally, \eqref{nonsimple} gives the estimate \eqref{Cartan3} with $B_3=2\En^{r_{1,3}\En^{2d^2(l+\mu)\sigma_{1,d-1}}}$. Now, we can apply Lemma~\ref{Cartan}. We define $t$ by requiring that the right hand side of \eqref{Cartan4} is equal to $\En^{-r_{2,3}}$. Then $e^t\leq E^{r_{2,3}E^{2l\gamma_m^{(1)}r_{1,1}/Z_0}}$ This proves \eqref{Cartan5-n} for $n=1$. The proof of \eqref{Cartan5-ntilde} is, as we have stated, similar.

(b) Here, we assume that the Statement and IIE at level $n$ hold on a fixed patch $\CA^{\BPhi(n)}_{\tim}$, $\BPhi=\BPsi_{\tim}(\vec\phi)$, $\vec\phi\in\Pi_{\tim}^{(n)}$, and prove that IIE at level $n+1$ holds on any patch $\CA^{\BPhi(n+1)}_{\tim}$ that were declared perfect during Part I, so that patch is also excellent.
 By Part I we can also assume that the Statement holds at the level $n+1$. {\newred Therefore,  formulas \eqref{eigenvalue-n}, \eqref{estgn} and Lemma \ref{ldk-n} hold for the step $n+1$. In particular, this means that there is  $\bka^{(n+1)}(\vec \phi)$ such that 
 $\k^{(n+1)}(\vec\phi)-\k^{(n)}(\vec\phi
)=
 O\left(\En^{-\En^{r_{n,2}}\En^{-r_{n-1,2}}}\right).$}


As above, we use Cartan's Lemma. The definition of matrix $A({\bf z})$ and the proof of \eqref{Cartan1}, \eqref{Cartan2} is analogous to the proof above. So, let $K_{m}^{(n+1)}$ be a globetrotter. 
 
 Put
  $\Lambda :=K_{m}^{(n+1),\mathrm{small}}$ and  $\tilde\Lambda :=K_{m}^{(n+1)}$.
  Obviously, 
  \begin{equation}
  |\Lambda |\le M:=\En^{l\gamma_m^{(n+1)} r_{n+1,1}/Z_0}
  \end{equation}
and
 \begin{equation}
  |\tilde\Lambda |\le N:=\En^{l\gamma_m^{(n+1)} r_{n+1,1}}.
  \end{equation}

Let
 $$A({\bf z}):=(H(\tilde\Lambda,\bka^{(n+1)}(\vec\phi))-\rho^2), \ \ {\bf z}:=\vec\phi E^{r_{n+1,3}E^{3lr_{n,1}/Z_0}}.$$
 Obviously, it is an analytic function of ${\bf z}$ in $D^{d-1}$, $D:=\{z\in \C, |z|<1\}$. It is easy to see that
 $\| A({\bf z})\|<2^{2l}\En^{2l\gamma _{m}^{(n+1)} r_{n+1,1}}$. Repeating the same arguments as above we see that  \eqref{Cartan1} holds with $B_1:=2^{2l}\En^{2l r_{n+1,1}}$ and \eqref{Cartan2} holds with $B_2:=
\En^{r_{n+1}'}=\En^{\En^{2l r_{n,1}/Z_0}}$. 

It remains to prove \eqref{Cartan3}. To do this, we modify the \Bourgain cubes in $\Lambda$. Similar to Part I, we consider new prodigal sons $\hat K_\bq^{(n)}$ instead of old  prodigal sons $K_{m'}^{(n)}$ inside $\Lambda$. The size of such cubes is $\En^{r_{n,2}}$. For those cubes we use Lemma~\ref{lemmasimple-n}, where we put $\varepsilon=\En^{-r_{n+1,3}\En^{4l r_{n,1}/Z_0}}$ and use the  estimate $\|\bq\vec\boldom\|\geq\En^{-r_{n+1,3}\En^{4l r_{n,1}/Z_0}}$, coming from the fact that $K_{m}^{(n+1)}$ is a globetrotter. As a result, we have that the inequality 
\bee\label{estsimple-n''}
\|(H(\hat K_\bq^{(n)},\bka^{(n)}(\vec\phi))-\rho^2)^{-1}\|\leq \En^{2r_{n+1,3}\En^{4l r_{n,1}/Z_0}}
\ene
holds for every $\vec\phi\in\Pi^{(n+1)}_{\tim}\setminus S_{\bq}^{(n)}$ with $$\meas(S_{\bq}^{(n)})\leq\En^{-r_{n+1,3}\En^{4l r_{n,1}/Z_0}}.$$ 
Then, using simple perturbation we can replace $\bka^{(n)}$ by $\bka^{(n+1)}$ (see \eqref{dk0-n}) to obtain
\bee\label{estsimple-n'}
\|(H(\hat K_\bq^{(n)},\bka^{(n+1)}(\vec\phi))-\rho^2)^{-1}\|\leq 2\En^{2r_{n+1,3}\En^{4l r_{n,1}/Z_0}}.
\ene

Next, instead of the globetrotters $K_{m'}^{(n)}$ inside $\Lambda$ we consider enlarged cubes $\tilde K_{m''}^{(n)}$. Now we use \eqref{Cartan5-ntilde} to show that the inequality 
\begin{equation}\label{Cartan5-n'}
\|((H(\tilde K^{(n)}_{m''},\bka^{(n+1)}(\vec\phi))-\rho^2))^{-1}\|\leq 2\En^{\left(r_{n+1,3}\En^{4l\tilde \gamma^{(n)}_{m''} r_{n,2}/Z_0}\right)\En^{2l\tilde \gamma^{(n)}_{m''} r_{n,2}/Z_0}}
\end{equation}
holds whenever $\vec\phi\in\Pi^{(n+1)}_{\tim}\setminus \tilde S_{m''}^{(n)}$ and 
\bee 
\meas(\tilde S^{(n)}_{m''})\leq\En^{-r_{n+1,3}\En^{4l\tilde \gamma^{(n)}_{m'} r_{n,2}/Z_0}}.
\ene
 Here, as before we also used a simple perturbation theory and \eqref{dk0-n} to replace $\bka^{(n)}(\vec\phi)$ with $\bka^{(n+1)}(\vec\phi)$.
 
 Next, we put together all bad sets thrown away in the last two paragraphs and define $\tilde S':=\cup\tilde S^{(n)}_{m''}\cup  S_{\bq}^{(n)}$, where the first union is taken over all globetrotters cubes $\tilde K_{m''}^{(n)}$ and the second union is over all prodigal sons $\hat K_\bq^{(n)}$ inside $\Lambda$. Obviously,
 \bee 
 \meas(\tilde S')<\En^{-\frac12r_{n+1,3}\En^{4l r_{n,1}/Z_0}}.
 \ene 
Thus, the set $\Pi^{(n+1)}_{\tim}\setminus \tilde S'$ is not empty.
Now using again Lemma~\ref{abstractlemma}, we obtain \eqref{Cartan3} with $B_3:=4\En^{r_{n+1,3}\En^{6l r_{n,2}/Z_0}}$.

Finally, we can apply Lemma~\ref{Cartan}. We define $t$ to be such that the right hand side of \eqref{Cartan4} is equal to $\En^{-r_{n+2,3}}$. Then $e^t\leq\En^{r_{n+2,3}\En^{2l\gamma_m^{(n+1)} r_{n+1,1}/Z_0}}$. This completes the proof of \eqref{Cartan5-n} for $n+1$; the proof of \eqref{Cartan5-ntilde} is similar.

\section{Final touches to the proof of our main results.}\label{8.1'}


Theorem~\ref{Thm7.4} implies that the Statement from Definition~\ref{def7.3} as well as Lemmas~\ref{corthmn}, \ref{L:derivatives-n}, \ref{ldk-n} are valid at all scales $n$. The rest of the proof is a straightforward (some readers may even call it standard), though rather technical argument very similar to the construction in Sections 8 and 9 from \cite{KS}. We briefly explain the argument here, while referring the reader interested in the details to \cite{KS}; after each statement here, we will refer to an analogous statement that has been proved in \cite{KS}. Some of the statements formulated here are not exactly required for the proof of our Main Theorem, but they may be used in our further work, so we state them here for convenience.

\subsection{Limit Set of Good Frequencies.}

Recall that the sets of good frequencies at step $j$ were chosen depending on the parameter $B_0$ and they satisfy the following properties (we now emphasise that all our constructions depend on $B_0$): 

\bee\label{CG0'}
\CG^{\vec\boldom(0)}=\CG^{\vec\boldom(0)}_{B_0}:=\BOm_0(B_0),
\ene
\bee
\meas(\CG^{\vec\boldom(0)})>1-CB_0^{1/d}.
\ene

The set $\CG^{\vec\boldom(0)}_{B_0}$ consists of the frequencies for which we can perform the zeroth step of our procedure for all energies $\tilde E$, assuming $\tilde E\geq E_*$ with $E_*$ defined in \eqref{E_*}.

We also have defined sets ${\mathcal G }^{\vec\boldom (n)}_{B_0}(\tilde E)$, $n\ge 1$; they consist of  frequencies for which the enlarged \Bourgain structure exists for all $\rho\geq\tilde E\,(\geq E_*)$, where $E_*$ also satisfies \eqref{E_*1}. We have proved that 
\bee \meas({\mathcal G }^{\vec\boldom (n)}_{B_0}(\tilde E))=\meas({\mathcal G }^{\vec\boldom (n-1)}_{B_0}(\tilde E))-O(\tilde E ^{- C_1r_{n,1}(\tilde E)}),\ {\tilde E\to \infty };
\label{mesbfLambda'}
\ene

Let us now define  
$${\mathcal G }^{\vec\boldom (\infty)}_{B_0}(\tilde E):=\cap_{n=0}^{\infty}{\mathcal G }^{\vec\boldom (n)}_{B_0}(\tilde E).$$
This set consists of  frequencies for which there exists \Bourgain structures of all levels for all $\rho\ge \tilde E$. Then we obviously have:
\bee 
\meas({\mathcal G }^{\vec\boldom (\infty)}_{B_0}(\tilde E))=\meas({\mathcal G }^{\vec\boldom (0)}_{B_0})-O(\tilde E ^{- C_1r_{1,1}(\tilde E)}),\ {\tilde E\to \infty }.
\label{mesbfLambda''}
\ene
Next, we define
$${\mathcal G }^{\vec\boldom }_{B_0}:=\cup_{\tilde E=E_*}^{\infty}{\mathcal G }^{\vec\boldom (\infty)}_{B_0}(\tilde E).$$
This set consists of  frequencies for which there exists \Bourgain structure of all levels for all sufficiently large $\rho$.
Then we have
\bee 
\meas({\mathcal G }^{\vec\boldom }_{B_0})=\meas({\mathcal G }^{\vec\boldom (0)}_{B_0})>1-CB_0^{1/d}. 
\label{mesbfLambda'''}
\ene
Finally, we put
\bee
\BOm_*={\mathcal G }^{\vec\boldom }_{0}:=\cup_{B_0>0}{\mathcal G }^{\vec\boldom }_{B_0}.
\ene
Then this set is of full measure. Suppose, $\vec\boldom\in\BOm_*$. Then $\vec\boldom\in {\mathcal G }^{\vec\boldom (\infty)}_{B_0}(\tilde E)$ for some $\tilde E\geq E_*$ and some $B_0$. We define $\rho_*$ to be this value of $\tilde E$ and put $\lambda_*:=\rho_*^2$. Let us prove that then the absolutely continuous spectrum of $H$ contains $[\lambda_*,+\infty)$.

\subsection{Limit set of Good Angles. } 
At every step $n$, we have constructed  a set
$\CG^{\BPhi(n)}(\rho)\subset \S$ (defined at the end of Section \ref{section7}; strictly speaking, these sets depend not just on $\rho>\rho_*$, but also on the choice of $\vec\boldom\in \BOm_*$; we omit mentioning the latter dependence in our notation). Next, we introduce the limiting set 
\begin{equation}
\CG^{\BPhi(\infty)}(\rho)=\cap_n\CG^{\BPhi(n)}(\rho) \subset \S. \label{good-angles} 
\end{equation}
 Estimates \eqref{CGPhi} imply that $\CG^{\BPhi(\infty)}(\rho)$ is non-empty and, moreover, 
\begin{equation}
\meas(\CG^{\BPhi(\infty)}(\rho))> \meas(\S)(1-\rho^{-\sigma_0}).
\end{equation}

\subsection{Construction of the Limit Isoenergetic Set} All steps of the inductive procedure hold on $\CG^{\BPhi(\infty)}(\rho )$.
At step $n$ we have constructed a function $\ka^{(n)}(\BPhi,\rho)$, $\BPhi \in \CG^{\BPhi(n)}(\rho)$
with the following properties.  For any $\nbka^{(n)} (\BPhi, \rho)=\ka^{(n)}(\BPhi ,\rho )\BPhi $ there is a
single eigenvalue $\lambda^{(n)}(\nbka^{(n)})$ of
 $H^{(n)}(\nbka^{(n)})$
  given by the perturbation series in Theorems \ref{Thm1},   \ref{Thm2} and the inductive statement \eqref{eigenvalue-n}. This eigenvalue is equal to $\rho ^2$.
  \ber
  Strictly speaking, we have defined the function $\ka^{(n)}(\vec \phi,\rho)$ for $\vec \phi\in \CG^{\vec \phi(n)}_m(\rho)$. We will, slightly abusing notation, write 
  $
  \ka^{(n)}(\BPhi,\rho):=\ka^{(n)}(\vec \phi,\rho)
  $
  if $\BPhi=\BPsi_m(\vec \phi)$ in this case. If a point $\BPhi$ belongs to several patches, we chose one of them (for example, the one that minimises the distance from $\BPhi$ to the centre of the patch) to define the function $\ka^{(n)}(\BPhi,\rho)$ at that point. This definition also allows us to differentiate $\ka^{(n)}(\BPhi,\rho)$ with respect to $\vec\phi$. 
  \enr

By Lemma \ref{ldk-n}, the sequences $\ka^{(n)}(\BPhi, \rho)$ and $\nabla _{\vec \phi }\ka^{(n)}(\BPhi, \rho)$ are Cauchy sequences in $L_{\infty}\left( \CG^{\BPhi(\infty)}(\rho)\right)$.
Let us define 
    $\ka^{(\infty)}( \BPhi,\rho):=\lim_{n \to \infty}\ka^{(n)}(\BPhi, \rho)$, 
    $\nabla _{\vec \phi} \ka^{(\infty)}:=\lim \nabla _{\vec \phi} \ka^{(n)}$, and
    $    \bka^{(\infty)}( \BPhi,\rho):=\ka^{(\infty)}( \BPhi,\rho)\BPhi$
     for $\BPhi \in \CG^{\BPhi(\infty)}(\rho)$.
    Note that  $\nabla _{\vec \phi} \ka^{(\infty)}$ is not quite the derivative of $\ka^{(\infty)}( \BPhi,\rho)$ in a usual sense, since the set $\CG^{\BPhi(\infty)}(\rho)$ is most likely to be a Cantor set and have no interior points. However, $\nabla _{\vec \phi} \ka^{(\infty)}$ can be thought of as a `derivative' of $\ka^{(\infty)}$ if we define the derivative as a limit over sequences inside $\CG^{\BPhi(\infty)}(\rho)$. 
        The following lemma is a straightforward consequence of this definition.
    
    \bel The function $\ka^{(\infty)}( \BPhi,\rho)$ satisfies the following estimates for $ \BPhi \in \CG^{\BPhi(\infty)}(\rho)$: 
   
    \begin{align}\label{n6.1}& \left|\ka^{(\infty)}(\BPhi ,\rho)
-\rho\right|\ll \rho ^{-2}.
    \end{align}
    Moreover,
    \begin{align}\label{6.1a}& \left|(\nabla _{\vec \phi})^q \ka^{(\infty)}(\BPhi ,\rho)
\right|\ll \rho ^{-2},\ \ \hbox{if}\ \ q<\frac{1}{3(l+\mu+1)\sigma_{1,d-1,1}},
   \cr &\left|\ka^{(\infty)}(\BPhi ,\rho)-\ka^{(0)}(\BPhi ,\rho)\right|\ll \rho^{-\rho^{\sigma_{1,d-1,1}}(2Q)^{-1}-1},
                             \cr &\left|\ka^{(\infty)}(\BPhi ,\rho)-\ka^{(n)}(\BPhi ,\rho)\right| \ll
   \rho^{-\rho^{r_{n,2}}\rho^{-r_{n-1,2}}},\ \ n\geq1.
\end{align}
\enl

We now define the following set: 
$$
{\CD}^{(n)}(\rho)=\{\nbka^{(n)} (\BPhi, \rho):\  \BPhi \in \CG^{\BPhi(n)}(\rho)\}\subset\R^d. 
$$ 
Since  all the points of this set satisfy
the equation $\lambda^{(n)} (\nbka ^{(n)}(\BPhi;\rho))=\rho^2$,
we call this set the isoenergetic surface of the operator $H^{(n)}$. The ``radius"
$\nka^{(n)}(\BPhi;\rho )$ 
increases with $\rho $ (for fixed $\BPhi$).
The set ${\CD}^{(n)}(\rho)$
    is a slightly distorted $(d-1)$-dimensional sphere with holes, see (\ref{new17})
    and Lemmas  \ref{ldk-2},
\ref{ldk-n}.

Further, we define 
\begin{equation}
{\CD}^{(\infty)}(\rho)=\left\{\ka^{(\infty) }(\BPhi, \rho)\BPhi , \BPhi \in \CG^{\BPhi(\infty)}(\rho)\right\}.\label{Dinfty} 
\end{equation} 

Let us also define
$$\CG ^{\bka(n) }:=\cup _{\rho
>\rho_*}{ \CD}^{n}(\rho)\  $$
and
$$\CG ^{\bka(\infty) }:=\cup _{\rho
>\rho_*}{ \CD}^{\infty}(\rho)\  $$
(the good sets of $\bka$ at step $n$ and in the limit respectively). 
We have proved that
\begin{equation}
\frac{\meas(\CG ^{\bka(\infty) }\cap B(R))}{\meas (B(R))}=1-O(R^{-\sigma_0}) \label{full}
\end{equation}
as $R\to\infty$. 

Next, we will show that ${\CD}^{(\infty)}(\rho)$ is in fact the isoenergetic
surface for $H$. 
Namely, for every $\bka \in {
\CD}^{(\infty)}(\rho)$ there is a generalized eigenfunction $\fun
^{(\infty) }(\bka ,\bx)$ such that $H\fun
^{(\infty )}=\rho^2
\fun^{(\infty )}$.

\subsection{Generalized Eigenfunctions of H}

Let $n\ge 1$. By (\ref{6.1a}), any $\bka \in {\CD}^{(\infty)}(\rho
)$ belongs to the $c\rho^{-\rho^{r_{n,2}}\rho^{-r_{n-1,2}}}-$ neighbourhood of ${\CD}^{(n)}(\rho)$. 
Let us consider the spectral projections $\E^{(n)}(\bka)$. They are one-dimensional spectral projections of $H^{(n)}(\bka)$ with the corresponding eigenvalue $\la^{(n)}(\bka)$ given by the perturbation series \eqref{eigenvalue-n}.  Each of these projections has initially been defined as an operator acting on $\GH( \hat K^{(n)},\bka)$. 
We will now extend them to operators acting in the entire space $\GH(\bka)$ (that we have identified with  $\ell^{2}(\Z^{l})$); they will remain orthogonal projections after such an  extension. 
In the previous sections (see \eqref{perturbation*},\eqref{perturbation*-2}, \eqref{perturbation*-n}) we have obtained the following
inequalities (recall that $\ka:=||\bka||$):
 \begin{equation}   
    \begin{split}& \left\|\E^{(0)}({\nbka})-\E_{\mathrm {unp}}({\nbka})\right\|_1\ll \ka^{-1+(2l+\mu+1)\sigma_{1,d-1,1}}.
    \cr &
\left\|\E^{(1)}({\nbka})-\E^{(0)}({\nbka})\right\|_1 \ll \ka^{-\ka^{\sigma_{1,d-1,1}}(4Q)^{-1}},
    \cr &\left\|\E^{(n)}({\bka})-\E^{(n-1)}({\bka})\right\|_1\ll \ka^{-\ka^{r_{n-1,2}}\ka^{-r_{n-2,2}}},\quad n \geq 2,\label{6.2.2}
    \end{split}
    \end{equation}
    \begin{equation}
    \begin{split}&
    \bigl|\lambda ^{(0)}(\bka)-\ka^{2}
     \bigr|
     \ll\ka^{-2+(3l+2\mu+2)\sigma_{1,d-1,1}},
    \cr &
     \bigl|\lambda ^{(1)}(\bka)-\lambda ^{(0)}(\bka)\bigr| \ll
   \ka^{-\ka^{\sigma_{1,d-1,1}}(2Q)^{-1}},
     \cr &
     \bigl|\lambda ^{(n)}(\bka )-\lambda ^{(n-1)}(\bka )\bigr|\ll \ka^{-\ka^{r_{n-1,2}}\ka^{-r_{n-2,2}}},
     \quad n \geq 2,
    \label{6.2.3}
    \end{split}
    \end{equation}
where  $\
{\E}_{unp}(\bka)$ is the unperturbed projection of $H_0(\bka)$. In all these formulas we assume that $r_{n,j}=r_{n,j}(\kappa)$. 


Using analyticity arguments one can easily obtain estimates for derivatives of the above objects (cf. Corollary~\ref{Cor4.8},   Lemma~\ref{L:derivatives-2},  and Lemma~\ref{L:derivatives-n}) valid in the corresponding neighbourhoods of the non-resonant sets. Although we do not need those estimates to prove our main result, we would like to state them here for future reference.

\bel\label{analyticity}
The following estimates hold:

\begin{equation}   
    \begin{split}& \left\|D^{\bm}_{\bka}(\E^{(0)}({\nbka})-\E_{\mathrm {unp}}({\nbka}))\right\|_1\ll \ka^{-1+(2l+\mu+1)\sigma_{1,d-1,1}}\ka^{|\bm|(l+\mu+3)\sigma_{1,d-1,1}}.
    \cr &
\left\|D^{\bm}_{\bka}(\E^{(1)}({\nbka})-\E^{(0)}({\nbka}))\right\|_1 \ll \ka^{-\ka^{\sigma_{1,d-1,1}}(4Q)^{-1}}\ka^{|\bm|r_{1,3}\ka^{2d^2(l+\mu)\sigma_{0,d-1}}},
    \cr &\left\|D^{\bm}_{\bka}(\E^{(n)}({\bka})-\E^{(n-1)}({\bka}))\right\|_1\ll \ka^{-\ka^{r_{n-1,2}}\ka^{-r_{n-2,2}}}\ka^{|\bm|r_{n,3}\ka^{3lr_{n-1,1}/Z_0}},\quad n \geq 2,\label{6.2.2analytic}
    \end{split}
    \end{equation}
    \begin{equation}
    \begin{split}&
    \bigl|D^{\bm}_{\bka}(\lambda ^{(0)}(\bka)-\ka^{2})
     \bigr|
     \ll\ka^{-2+(3l+2\mu+2)\sigma_{1,d-1,1}}\ka^{|\bm|(l+\mu+3)\sigma_{1,d-1,1}},
    \cr &
     \bigl|D^{\bm}_{\bka}(\lambda ^{(1)}(\bka)-\lambda ^{(0)}(\bka))\bigr| \ll
   \ka^{-\ka^{\sigma_{1,d-1,1}}(2Q)^{-1}}\ka^{|\bm|r_{1,3}\ka^{2d^2(l+\mu)\sigma_{0,d-1}}},
     \cr &
     \bigl|D^{\bm}_{\bka}(\lambda ^{(n)}(\bka )-\lambda ^{(n-1)}(\bka ))\bigr|\ll \ka^{-\ka^{r_{n-1,2}}\ka^{-r_{n-2,2}}}\ka^{|\bm|r_{n,3}\ka^{3lr_{n-1,1}/Z_0}},
     \quad n \geq 2,
    \label{6.2.3analytic}
    \end{split}
    \end{equation}
where $\bm=(m_1,\dots,m_d)$ and $D^{\bm}_{\bka}=D^{m_1}_{\ka_1} \dots D^{m_d}_{\ka_d}$.
\enl

 \ber \label{R:Dec9} We see from (\ref{6.1a}) that any $\bka
\in {\CD}^{\infty}(\rho)$ lies within distance $c\rho^{-\rho^{r_{n,2}}\rho^{-r_{n-1,2}}}$ 
from ${\CD}^{(n)}(\rho),\
n\geq1$. Applying perturbation formulae for the $n$-th step, we easily
obtain that  our sequence of eigenvalues $\lambda^{(n)}(\bka )$ of $H^{(n)}(\bka )$ satisfies the following property: 
\begin{equation} \lim _{n\to \infty }\lambda^{(n)}(\bka
)=\rho^2.
\label{n6.2}
\end{equation}
\enr

Let $\bv^{(n)}$ be a unit vector generating the range of  the projection
${\E}^{(n)}(\bka )$, ${\E}^{(n)}(\bka)=(\cdot ,\bv^{(n)})\bv^{(n)}$. As an element of  $\ell^2(\Z^{l})$,
 we can express $\bv^{(n)}$ as  
\bee\label{lincom}
\bv^{(n)}=\{ v^{(n)}_{\bs}\}_{\bs \in \Z^{l}}\in\ell^2(\Z^{l}), 
\ene
where our construction implies that $v^{(n)}_{\bs}=0$ when $\bs \not \in \hat K^{(n)}$. If we prefer to consider  $\bv^{(n)}$ as an element of $B^2(\R^d)$ (more precisely, of $\GH(\bka)$), the expression \eqref{lincom} corresponds to the linear combination of exponentials:  
\bee\label{lincom1}
\fun^{(n)}(\bka,\bx):=\sum _{\bs \in \hat K^{(n)}} v^{(n)}_{\bs}\be_{\bka+\bs \vec \boldom},\ \ n\geq0.
\ene
\bel\label{prelimit} The function $\fun^{(n)}(\bka,\bx),\ n\geq 0,$ satisfies the equation:
$$-\Delta \fun^{(n)}(\bka ,\bx)+V(\bx)\fun^{(n)}(\bka,\bx)=
\lambda ^{(n)}(\bka)\fun^{(n)}(\bka,\bx)+g^{(n)}(\bka,\bx),$$
 with $g^{(n)}$ satisfying the following estimates (as a function of $x$): 
 \begin{equation}
 \|g^{(0)}\|_{B^1(\R^d)}<\ka^{-1+3(l+\mu+1)\sigma_{1,d-1,1}},\ \ \|g^{(1)}\|_{B^1(\R^d)}<\ka^{-\ka^{\frac 12 \sigma_{1,d-1,1}}},\ \ \|g^{(n)}\|_{B^1(\R^d)}<\ka^{-\ka^{\frac 12 r_{n-1,2}}},\ \ n\geq2.
\label{Aug13-2} 
\end{equation} 
Consequently, we have
\begin{equation}\label{g_n}
\|g^{(0)}\|_{L_{\infty }(\R^d)}<\ka^{-1+3(l+\mu+1)\sigma_{1,d-1,1}},\ \ \|g^{(1)}\|_{L_{\infty }(\R^d)}<\ka^{-\ka^{\frac 12 \sigma_{1,d-1,1}}},\ \ \|g^{(n)}\|_{L_{\infty }(\R^d)}<\ka^{-\ka^{\frac 12 r_{n-1,2}}},\ \ n\geq2.
\end{equation}

Fourier coefficients $\lu g^{(n)},\be_{\bka+\bs\vec\boldom}\ru_{B^2(\R^d)}$, $\bs\in\Z^l$, 
can differ from zero only when $\bs\not\in\hat K^{(n)}$ but is inside $Q$-neighbourhood of $\hat K^{(n)}$. 
 
\enl
\bel The functions $\fun^{(n)}(\bka,\bx)$ satisfy the following inequalities:
\begin{equation}
\bes
\label{psi1}
    \|\fun^{(0)}-\be_{\bka}\|_{L_{\infty }(\R^d)}& \ll \ka^{l\sigma_{1,d-1,1}}\ka^{-1+(2l+\mu+1)\sigma_{1,d-1,1}},\\
   \|\Delta\fun^{(0)}+\ka^2\be_{\bka}\|_{L_{\infty }(\R^d)}&\ll \ka^{l\sigma_{1,d-1,1}}\ka^{1+(2l+\mu+1)\sigma_{1,d-1,1}}.
 \end{split}
    \end{equation}
    Moreover,
   \begin{equation}\label{psi2}
   \begin{split}
\|\fun^{(1)}-\fun^{(0)}\|_{L_{\infty}(\R^d)}&\ll \ka^{lr_{1,2}}\ka^{-\ka^{\sigma_{1,d-1,1}}(4Q)^{-1}},\\
\|\Delta \fun^{(1)}-\Delta \fun^{(0)}\|_{L_{\infty}(\R^d)}&\ll \ka^{2+lr_{1,2}}\ka^{-\ka^{\sigma_{1,d-1,1}}(4Q)^{-1}}.
    \end{split}
    \end{equation}
   Finally, 
   \begin{equation}
    \begin{split}
    &\|\fun^{(n)}-\fun^{(n-1)}\|_{L_{\infty}(\R^d)}
    \ll  \ka^{lr_{n,2}}\ka^{-\ka^{r_{n-1,2}}\ka^{-r_{n-2,2}}}, \ \cr &\|\Delta\fun^{(n)}-\Delta\fun^{(n-1)}\|_{L_{\infty}(\R^d)}
    \ll \ka^{2+lr_{n,2}}\ka^{-\ka^{r_{n-1,2}}\ka^{-r_{n-2,2}}} ,\ \ n \geq 2. \label{Dec10}
\end{split}
    \end{equation}
    \enl
        \begin{corollary} \label{C:Psi}
        All functions $\fun^{(n)}$, $n=0,1,....$ enjoy the estimate 
        $$ \|\fun^{(n)}\|_{L_{\infty
}(\R^d)}<1+C\ka^{-1+(3l+\mu+1)\sigma_{1,d-1,1}}$$ 
uniformly in $n$. 
\end{corollary}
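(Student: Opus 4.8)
The statement to prove is Corollary~\ref{C:Psi}: that $\|\fun^{(n)}\|_{L_\infty(\R^d)} < 1 + C\ka^{-1+(3l+\mu+1)\sigma_{1,d-1,1}}$ uniformly in $n$. The plan is to telescope the function $\fun^{(n)}$ through the sequence $\fun^{(0)}, \fun^{(1)}, \dots, \fun^{(n)}$, using the triangle inequality together with the two preceding lemmas: one giving $\|\fun^{(0)}-\be_{\bka}\|_{L_\infty}$ and the one giving $\|\fun^{(j)}-\fun^{(j-1)}\|_{L_\infty}$ for all $j \geq 1$.

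First I would write
\begin{equation*}
\|\fun^{(n)}\|_{L_\infty(\R^d)} \le \|\be_{\bka}\|_{L_\infty(\R^d)} + \|\fun^{(0)}-\be_{\bka}\|_{L_\infty(\R^d)} + \sum_{j=1}^{n}\|\fun^{(j)}-\fun^{(j-1)}\|_{L_\infty(\R^d)}.
\end{equation*}
The first term is exactly $1$, since $\be_{\bka}(\bx)=e^{i\lu\bka,\bx\ru}$ has modulus one everywhere (here $\bka$ is real on $\CD^{(\infty)}(\rho)$). The second term is bounded by $C\ka^{l\sigma_{1,d-1,1}}\ka^{-1+(2l+\mu+1)\sigma_{1,d-1,1}} = C\ka^{-1+(3l+\mu+1)\sigma_{1,d-1,1}}$ by \eqref{psi1}. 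The main point is then to show that the sum over $j$ of the increments is dominated by (a constant multiple of) this same quantity.

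For the tail sum I would split off $j=1$ and $j\ge 2$. By \eqref{psi2}, the $j=1$ term is $\ll \ka^{lr_{1,2}}\ka^{-\ka^{\sigma_{1,d-1,1}}(4Q)^{-1}}$, which is super-polynomially small in $\ka$ and hence negligible compared to $\ka^{-1+(3l+\mu+1)\sigma_{1,d-1,1}}$ for $\ka \ge \rho_*$ large. For $j \ge 2$, \eqref{Dec10} gives $\|\fun^{(j)}-\fun^{(j-1)}\|_{L_\infty} \ll \ka^{lr_{j,2}}\ka^{-\ka^{r_{j-1,2}}\ka^{-r_{j-2,2}}}$; since $r_{j,m}=r_{j,m}(\ka)$ grows (in fact the tower grows extremely fast in $j$), the exponent $-\ka^{r_{j-1,2}}\ka^{-r_{j-2,2}} + lr_{j,2}\log_\ka$-correction is still hugely negative and moreover summable in $j$: one checks that $\ka^{r_{j-1,2}}\ka^{-r_{j-2,2}} \to \infty$ geometrically (or faster) in $j$, so the series $\sum_{j\ge 2}\ka^{lr_{j,2}}\ka^{-\ka^{r_{j-1,2}}\ka^{-r_{j-2,2}}}$ converges and its sum is, say, $\le \ka^{-\ka^{\frac12 r_{1,2}}}$ for $\ka$ large. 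Thus the whole tail is absorbed into the error term, and collecting the three contributions yields the claimed bound with a suitable absolute constant $C$.

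The only mild obstacle is bookkeeping with the scales $r_{j,m}(\ka)$: one must verify that the super-exponential smallness in each increment genuinely beats the polynomial prefactors $\ka^{lr_{j,2}}$ and that the resulting series is summable with a sum negligible against $\ka^{-1+(3l+\mu+1)\sigma_{1,d-1,1}}$. This is routine given the recursive definitions $r_{n,3}=\En^{\de_0 r_{n-1,1}}$, $r_{n,2}=\cs^2 r_{n,3}$, $r_{n,1}=\frac{\cs}{10}r_{n,2}$ from Section~\ref{goodset-3}, which force $r_{j,m}$ to grow as an iterated exponential in $j$; in particular $\ka^{r_{j-1,2}}\ka^{-r_{j-2,2}}$ dominates any fixed power of $\ka^{r_{j,2}}$ for $\ka$ large. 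No new ideas are needed — it is a direct consequence of the two lemmas preceding the corollary, exactly as in the analogous argument in \cite{KS}.
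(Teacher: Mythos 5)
Your proof is correct and is exactly the intended argument: the corollary follows by telescoping $\fun^{(n)}=\be_{\bka}+(\fun^{(0)}-\be_{\bka})+\sum_{j=1}^{n}(\fun^{(j)}-\fun^{(j-1)})$ and applying \eqref{psi1}, \eqref{psi2}, \eqref{Dec10}, with the $\fun^{(0)}$-term supplying the stated exponent $-1+(3l+\mu+1)\sigma_{1,d-1,1}$ and the super-exponentially small increments being summable and negligible. This matches the paper's (implicit) reasoning, which uses the same telescoping series in the proof of Theorem \ref{T:Dec10}.
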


\bet \label{T:Dec10}
For every  $\lambda >\rho_*^{2}$ and $\bka \in {
\CD}^{\infty}(\rho)$ the sequence of functions $\fun^{(n)}(\bka,\bx)$
converges  in $L_{\infty}(\R^d)$ and $W_{2,loc}^{2}(\R^d)$. The
limit function $\fun^{(\infty )}(\bka,\bx):=\lim_{n\to \infty }
    \fun^{(n)}(\bka,\bx)$ is a quasi-periodic function:
  \begin{equation} \label{quasi-periodic}
  \fun^{(\infty )}(\bka,\bx)=\sum _{\bs \in \Z^{l}}  v^{(\infty)}_{\bs}\be_{\bka +\bs\vec \boldom}, 
  \end{equation}
  where $\bv^{(\infty)}:=\{v^{(\infty)}_\bs\}_{\bs\in\Z^{l}}\in \ell^1(\Z^{l})$ and $\|\bv^{(\infty)}\|_{\ell^2 (\Z^{l})}=1$.
  The function $\fun^{(\infty )}(\bka,\bx)$   satisfies the equation
    \begin{equation}\label{6.7}
     -\Delta \fun^{(\infty )}(\bka, \bx)+V(\bx)\fun^{(\infty )}(\bka,
    \bx)= \lambda \fun^{(\infty )}(\bka, \bx).
    \end{equation}
 It can be represented in the form
   \begin{equation}\label{6.4}
    \fun^{(\infty )}(\bka,\bx)=\be_{\bka}(\bx)\bigl(1+u^{(\infty)}(\bka, \bx)\bigr),
    \end{equation}
where $u^{(\infty)}(\bka, \bx)$ is an almost-periodic
function:
   \begin{equation}\label{6.5}
    u^{(\infty)}(\bka, \bx)=\sum_{n=0}^{\infty} u^{(n)}(\bka,
    \bx),
    \end{equation}
    each $u_n$ being a finite sum of exponentials,
 \begin{equation}
 u^{(n)}(\bka, \bx)=\sum _{\bs \in \hat K^{(n)}}
c^{(n)}_{\bs}\be_{\bs\vec \boldom}. 
\label{u_n} 
\end{equation}
 The  functions $u^{(n)}$ satisfy the estimates:
\begin{equation}
    \|u^{(0)}\|_{L_{\infty}(\R^2)} \ll \ka^{l\sigma_{1,d-1,1}}\ka^{-1+(2l+\mu+1)\sigma_{1,d-1,1}}, \ \
    \|u^{(1)}\|_{L_{\infty}(\R^2)} \ll \ka^{lr_{1,2}}\ka^{-\ka^{\sigma_{1,d-1,1}}(4Q)^{-1}},\label{6.6a}
    \end{equation}
    \begin{equation}\label{6.6}
    \|u^{(n)}\|_{L_{\infty}(\R^d)} \ll \ka^{lr_{n,2}}\ka^{-\ka^{r_{n-1,2}}\ka^{-r_{n-2,2}}}, \ \ \ n\geq
    2 .\end{equation}
\ent

\begin{proof} 
By
\eqref{Dec10}, we obtain that $\fun^{(n)}(\bka,\bx)$ is a Cauchy sequence
in $L_{\infty }(\R^d)$ and $W_{2,loc}^{2}(\R^d)$. Let $\fun^{(\infty)}(\bka
,\bx):=\lim_{n \to \infty}\fun^{(n)}(\bka, \bx).$ This limit is
defined pointwise uniformly in $\bx$ and in $W_{2,loc}^{2}(\R^d)$.
Noting also that $\lim \lambda ^{(n)}(\bka )=\rho^2=\la $, and
taking into account Lemma~\ref{prelimit} we obtain that \eqref{6.7}
holds.

 Next, we have: 
    $$ \fun^{(n)}=\be_{\bka}+(\fun^{(0)}-\be_{\bka})+\sum_{n=1}^{\infty}(\fun^{(n)}-\fun^{(n-1)}),
$$ 
the series converging in $L_{\infty}(\R^d)$ by (\ref{Dec10}).
Introducing $u^{(n)}:=e^{-i \langle \bka ,\bx\rangle}(\fun^{(n)}-\fun^{(n-1)}),$ we arrive at (\ref{6.4}), (\ref{6.5}). Note that
 $u^{(n)}$ has a form \eqref{u_n}. Estimates (\ref{6.6a}), (\ref{6.6})  follow from (\ref{psi1})--(\ref{Dec10}).
 \end{proof}

\bet[Bethe-Sommerfeld Conjecture]
The spectrum of  operator $H$ contains a semi-axis.
\ent
\begin{proof}
    The theorem immediately follows from the fact that the equation \eqref{6.7} has a bounded solution $\fun^{(\infty) }(\bka ,\bx )$ for every  sufficiently large $\lambda $.
\end{proof}

\subsection{Proof of Absolute Continuity of the Spectrum}\label{chapt8}
Recall that we have defined $\CG^{\bka(n)}:=\cup _{\rho >\rho_*}\CD ^{(n)}(\rho )$. This is a good set of momenta on step $n$.
 There is a family of  eigenfunctions $\fun^{(n)}(\bka)=\fun^{(n)}(\bka,\bx)$, $\bka \in \CG^{\bka(n)}$, of the operator
    $H^{(n)}$ as described above. Suppose, $\tilde\CG^{\bka(n)}$ is a measurable and bounded subset of $\CG^{\bka(n)}$.
    Let us define the approximate spectral projection $E^{(n)}\left(\tilde\CG^{\bka(n)}\right)$ in the following way. 
 First, for   $F\in C_0^{\infty}(\R^d)$ we put 
    \begin{equation} E^{(n)}\left( \tilde\CG^{\bka(n)}\right)F:=\frac{1}{(2\pi)^{d}}\int
    _{ \tilde\CG^{\bk(n)}}\bigl( F,\fun ^{(n)}(\bka )\bigr) \fun ^{(n)}(\bka) d\bka \label{s},
    \end{equation}
     where $\bigl( \cdot ,\cdot \bigr)$
    is the canonical scalar product in $L_2(\R^d)$, i.e.
    $$
    \bigl( F,\fun ^{(n)}(\bka )\bigr)=\int _{\R^d}F(\bx)\overline{\fun ^{(n)}(\bka ,\bx)}d\bx.
    $$
Note that the operator $E^{(n)}\left( \tilde\CG^{\bka(n)}\right)$ can be expressed as a composition:
    \begin{equation} E^{(n)}\left( \tilde\CG^{\bka(n)}\right)=S^{(n)}\left(\tilde\CG^{\bka(n)}\right)T^{(n)} \left(
    \tilde\CG^{\bka(n)}\right), \label{ST}
    \end{equation}
    where 
    $$T^{(n)}=T^{(n)}\left(\tilde\CG^{\bka(n)}\right):L_2(\R^d) \to L_2\left(  \tilde\CG^{\bka(n)}\right), \ \
    \ \ S^{(n)}=S^{(n)}\left(\tilde\CG^{\bka(n)}\right):L_2\left( \tilde\CG^{\bka(n)}\right)\to L_2(\R^d)$$
    are given by 
    \begin{equation}
    T^{(n)}F=\frac{1}{(2\pi)^{d/2} }\bigl( F,\fun ^{(n)}(\bka )\bigr) \mbox{\ \ for any $F\in C_0^{\infty}(\R^d)$}
    \label{eq2}
    \end{equation}
    (note that then $T^{(n)}F\in L_{\infty }\left( \tilde\CG^{\bka(n)}\right)$) and
    \begin{equation}S^{(n)}f = \frac{1}{(2\pi)^{d/2} }\int _{ \tilde\CG^{\bka(n)}}f (\bka)\fun ^{(n)}(\bka ,\bx)d\bka  \mbox{\ \ for any $f \in L_{\infty }\left(
    \tilde\CG^{\bka(n)}\right)$.} \label{ev}
    \end{equation}
    Note that $S^{(n)}f \in L_2(\R^d)$, since $\fun ^{(n)}$ is a finite combination of exponentials $\be _{\bka +\bn \vec \boldom}.$

   \bel\label{n9.1} Let $\tilde\CG^{\bka(n)}$ be  bounded and $f, g\in L_{\infty }\left(
    \tilde\CG^{\bka(n)}\right)$. Denote 
    $$\xi _*:=\inf _{\bxi \in \tilde\CG^{\bka(n)}}||\bxi ||.$$ 
    Then
    \begin{equation}| \left( S^{(n)} f, S^{(n)} g \right)_{L_2(\R^d)}-
    \left(f, g \right)_{L_2(\tilde\CG^{\bka(n)})}|<\xi_*^{-{r_{n-2,2}(\xi_*)}}\|f\|_{L_2(\tilde\CG^{\bka(n)})}\|g\|_{L^2(\tilde\CG^{\bka(n)})}. 
    \label{May14a-12}
    \end{equation}
In particular, we have, uniformly in   $f$, $g$  and $\tilde\CG^{\bka(n)}$:   
\begin{equation} \left( S^{(n)} f, S^{(n)} g \right)_{L_2(\R^d)}=
    \left(f, g \right)_{L_2(\tilde\CG^{\bka(n)})}+o(1)\|f\|_{L_2(\tilde\CG^{\bk(n)})}\|g\|_{L^2(\tilde\CG^{\bka(n)})}. 
    \label{May14a-12aa}
    \end{equation}    
    as $n\to\infty$.  
\enl
\begin{corollary} The following relation holds:
     \begin{equation} \left|\left( S^{(n)} f, S^{(n)} g \right)_{L_2(\R^d)}\right|<(1+o(1))
     \left\|f\right\|_{L_{\infty}(\tilde\CG^{\bka(n)})}\left\|g\right\|_{L_{\infty}(\tilde\CG^{\bka(n)})}\meas(\tilde\CG^{\bka(n)}). \label{Vienna-1}
     \end{equation}
     \end{corollary}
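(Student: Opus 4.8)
The statement to be proved is the Corollary following Lemma~\ref{n9.1}, asserting
\begin{equation*}
\left|\left( S^{(n)} f, S^{(n)} g \right)_{L_2(\R^d)}\right|<(1+o(1))
\left\|f\right\|_{L_{\infty}(\tilde\CG^{\bka(n)})}\left\|g\right\|_{L_{\infty}(\tilde\CG^{\bka(n)})}\meas(\tilde\CG^{\bka(n)}).
\end{equation*}

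\textbf{Approach.} The plan is to deduce the Corollary directly from Lemma~\ref{n9.1}, specifically from the asymptotic identity \eqref{May14a-12aa}, by controlling the right-hand side crudely via the $L_\infty$ norms. The only genuine content is passing from an $L_2$-bilinear estimate to an $L_\infty$-based bound using the boundedness of $\tilde\CG^{\bka(n)}$.

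\textbf{Key steps.} First I would apply \eqref{May14a-12aa} with the given $f,g\in L_\infty(\tilde\CG^{\bka(n)})\subset L_2(\tilde\CG^{\bka(n)})$ (the inclusion being valid since $\tilde\CG^{\bka(n)}$ is bounded, hence of finite measure), obtaining
\begin{equation*}
\left( S^{(n)} f, S^{(n)} g \right)_{L_2(\R^d)}=\left(f, g \right)_{L_2(\tilde\CG^{\bka(n)})}+o(1)\,\|f\|_{L_2(\tilde\CG^{\bka(n)})}\|g\|_{L_2(\tilde\CG^{\bka(n)})}
\end{equation*}
as $n\to\infty$, uniformly in $f$, $g$ and $\tilde\CG^{\bka(n)}$. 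Second, I would estimate the main term by Cauchy--Schwarz and then bound each $L_2$ norm over the bounded set $\tilde\CG^{\bka(n)}$ by the corresponding $L_\infty$ norm times $\meas(\tilde\CG^{\bka(n)})^{1/2}$:
\begin{equation*}
\left|\left(f,g\right)_{L_2(\tilde\CG^{\bka(n)})}\right|\le \|f\|_{L_2(\tilde\CG^{\bka(n)})}\|g\|_{L_2(\tilde\CG^{\bka(n)})}\le \|f\|_{L_\infty(\tilde\CG^{\bka(n)})}\|g\|_{L_\infty(\tilde\CG^{\bka(n)})}\meas(\tilde\CG^{\bka(n)}).
\end{equation*}
The same bound applies to the error term $o(1)\|f\|_{L_2}\|g\|_{L_2}$, which is therefore $o(1)\|f\|_{L_\infty}\|g\|_{L_\infty}\meas(\tilde\CG^{\bka(n)})$. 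Adding the two contributions and using the triangle inequality gives
\begin{equation*}
\left|\left( S^{(n)} f, S^{(n)} g \right)_{L_2(\R^d)}\right|\le (1+o(1))\|f\|_{L_\infty(\tilde\CG^{\bka(n)})}\|g\|_{L_\infty(\tilde\CG^{\bka(n)})}\meas(\tilde\CG^{\bka(n)}),
\end{equation*}
which is the claim (the strict inequality being absorbed into the $(1+o(1))$ prefactor for $n$ large). Uniformity in $f$, $g$, and in the choice of the bounded measurable set $\tilde\CG^{\bka(n)}\subset\CG^{\bka(n)}$ is inherited verbatim from the corresponding uniformity already asserted in Lemma~\ref{n9.1}.

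\textbf{Main obstacle.} There is essentially no obstacle here: the Corollary is a routine consequence of Lemma~\ref{n9.1}, the only point requiring a word of care being the uniformity of the $o(1)$ term, which must be read off from the statement of Lemma~\ref{n9.1} rather than re-derived, and the elementary observation that $\|h\|_{L_2(A)}\le\|h\|_{L_\infty(A)}\meas(A)^{1/2}$ for any $h$ and any finite-measure set $A$. The substantive work is contained in Lemma~\ref{n9.1} itself (whose proof uses the quasi-periodicity of $\fun^{(n)}$, the exponential smallness estimates \eqref{6.6a}--\eqref{6.6}, and orthogonality of distinct exponentials in $B^2$); the Corollary merely repackages it.
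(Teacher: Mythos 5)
Your derivation is correct and is precisely the intended one: the paper states this Corollary without proof as an immediate consequence of Lemma~\ref{n9.1}, and your chain of Cauchy--Schwarz followed by $\|h\|_{L_2(A)}\le\|h\|_{L_\infty(A)}\meas(A)^{1/2}$ applied to both the main term and the $o(1)$ error term in \eqref{May14a-12aa} is exactly the omitted argument. Nothing further is needed.
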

\begin{corollary} The operator $S^{(n)}$ is bounded and $\lim_{n\to \infty}\|S^{(n)}\|= 1$. \end{corollary}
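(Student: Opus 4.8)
The goal is to deduce the boundedness of $S^{(n)}$ and the fact that $\|S^{(n)}\|\to 1$ from Lemma~\ref{n9.1} and its first Corollary. The plan is to run the standard duality argument: apply the almost-isometry estimate \eqref{May14a-12aa} (or the cruder bound \eqref{Vienna-1}) with $g=f$ to control $\|S^{(n)}f\|_{L_2(\R^d)}$ from above by the $L_2$-norm of $f$ on $\tilde\CG^{\bka(n)}$, and then recover the matching lower bound by taking a sequence of test functions $f$ that nearly saturate the inequality.

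First I would fix a bounded measurable $\tilde\CG^{\bka(n)}\subset\CG^{\bka(n)}$. Setting $g=f$ in \eqref{May14a-12} gives
\[
\bigl|\|S^{(n)}f\|_{L_2(\R^d)}^2-\|f\|_{L_2(\tilde\CG^{\bka(n)})}^2\bigr|<\xi_*^{-r_{n-2,2}(\xi_*)}\|f\|_{L_2(\tilde\CG^{\bka(n)})}^2,
\]
which shows $\|S^{(n)}f\|_{L_2}^2\le (1+\xi_*^{-r_{n-2,2}(\xi_*)})\|f\|_{L_2}^2$ for every $f\in L_\infty(\tilde\CG^{\bka(n)})$; by density of $L_\infty$ in $L_2$ on a bounded set and the fact that $S^{(n)}$ is given by an explicit integral formula, this extends to all $f\in L_2(\tilde\CG^{\bka(n)})$, so $S^{(n)}$ is bounded with $\|S^{(n)}\|^2\le 1+\xi_*^{-r_{n-2,2}(\xi_*)}$. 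Since $\xi_*=\xi_*(n)\ge\rho_*$ and $r_{n-2,2}\to\infty$, one gets $\limsup_n\|S^{(n)}\|\le 1$. The subtlety here is that $S^{(n)}$ a priori acts on the whole of $\CG^{\bka(n)}$, not a bounded subset; I would handle this by first establishing the bound on bounded subsets and then noting that the eigenfunctions $\fun^{(n)}(\bka,\cdot)$, being finite exponential sums supported in $\hat K^{(n)}$, together with \eqref{May14a-12} yield that $S^{(n)}$ restricted to any bounded set is bounded uniformly, and the operator norm is the supremum of these — alternatively, restrict attention to the bounded sets $\tilde\CG^{\bka(n)}$, which is all that is needed later.

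For the lower bound $\liminf_n\|S^{(n)}\|\ge 1$, I would choose $\tilde\CG^{\bka(n)}$ of positive finite measure and take $f=\mathbf 1_{\tilde\CG^{\bka(n)}}$ (or a normalised version). Then \eqref{May14a-12} with $g=f$ gives $\|S^{(n)}f\|_{L_2}^2\ge (1-\xi_*^{-r_{n-2,2}(\xi_*)})\|f\|_{L_2}^2$, hence $\|S^{(n)}\|\ge (1-\xi_*^{-r_{n-2,2}(\xi_*)})^{1/2}$, and letting $n\to\infty$ finishes the argument. Combining the two bounds yields $\lim_{n\to\infty}\|S^{(n)}\|=1$.

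I do not expect a serious obstacle: the whole statement is a soft consequence of the almost-orthogonality Lemma~\ref{n9.1}, exactly as in the analogous step of \cite{KS}. The only point requiring a little care is the domain issue noted above — making precise in what sense $S^{(n)}$ is ``bounded'' when its natural domain $L_2(\CG^{\bka(n)})$ involves the unbounded set $\CG^{\bka(n)}=\cup_{\rho>\rho_*}\CD^{(n)}(\rho)$ — but this is purely a matter of working with the bounded pieces $\tilde\CG^{\bka(n)}$ and passing to the supremum, and it mirrors the treatment in \cite{KS}.
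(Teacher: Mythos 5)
Your proposal is correct and is exactly the argument the paper intends: the corollary is stated without proof as an immediate consequence of Lemma~\ref{n9.1}, and setting $g=f$ in \eqref{May14a-12} gives $(1-\epsilon_n)\|f\|^2\le\|S^{(n)}f\|^2\le(1+\epsilon_n)\|f\|^2$ with $\epsilon_n=\xi_*^{-r_{n-2,2}(\xi_*)}\to 0$ uniformly since $\xi_*\ge\rho_*>1$. Your worry about the domain is moot, since $S^{(n)}$ is by definition the operator $S^{(n)}(\tilde\CG^{\bka(n)})$ attached to a bounded measurable subset, and the estimate is uniform over all such subsets.
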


The proof of the above lemma is analogous to the proof of Lemmas 9.1 in \cite{KS}.

It is easy to see that $T^{(n)}\subset (S^{(n)})^*$. Therefore, $\|T^{(n)}\|\leq 1+o(1)$ and it can be extended to the whole space $L_2(\tilde\CG^{\bka(n)})$. We still denote the extended operator by $T^{(n)}$, $T^{(n)}=(S^{(n)})^*$. Therefore, $E^{(n)}$ is a self-adjoint operator. The proofs of the next three lemmas are analogous to those of Lemmas 9.4, 9.7 and 9.8  in \cite{KS}.

\bel \label{L:10.4}  Let $\tilde\CG^{\bka(n)}\subset \tilde{\tilde{\CG}}^{\bka(n)}\subset \CG^{\bka(n)}$. The following relations hold as $n\to \infty $:
\begin{equation} E^{(n)}(\tilde\CG^{\bka(n)})E^{(n)}(\tilde{\tilde{\CG}}^{\bka(n)})=E^{(n)}(\tilde\CG^{\bka(n)})+o(1), \label{Vienna-6} \end{equation}
\begin{equation}\label{C:10.4-1}  E^{(n)}(\tilde{\tilde{\CG}}^{\bka(n)})E^{(n)}(\tilde\CG^{\bka(n)})=E^{(n)}(\tilde\CG^{\bka(n)})+o(1),\end{equation}
\begin{equation} \label{C:10.4-2} (E^{(n)})^2(\tilde\CG^{\bka(n)})=E^{(n)}(\tilde\CG^{\bka(n)})+o(1) .\end{equation}
\enl

Let \begin{equation}
\CG^{(n)}_{\lambda}:=\{ \bka \in {\CG}^{\bka(n)}:
\lambda ^{(n)}(\bka) < \lambda\}; \label{d} 
\end{equation}
we have skipped writing the superscript $\bka$ for simplicity. 
 This set is bounded and Lebesgue measurable, since ${\CG}^{\bka(n)} $ is
open and $\lambda ^{(n)}(\bka)$ is continuous on $
{\CG}^{\bka(n)}$.

\bel\label{L:abs.6}
We have $\meas({\CG}^{(n)}_{\lambda+\varepsilon} \setminus
{\CG}^{(n)}_{\lambda}) \leq C(d) \lambda ^{-1+\frac{d}{2}}\varepsilon $ when $0\leq
\varepsilon \leq 1$. 
\enl

 By (\ref{s}),
$E^{(n)}\left({\CG}^{(n)}_{\lambda+\varepsilon}\right)-E^{(n)}\left({\CG}^{(n)}_{\lambda}\right)
=E^{(n)}\left({\CG}^{(n)}_{\lambda+\varepsilon}\setminus
    {\CG}^{(n)}_{\lambda}\right)$. 
\bel\label{L:abs.7} For any $F \in C_0^{\infty}(\R^d)$ and
$0\leq \varepsilon \leq 1$, \begin{equation}
\left\|\bigl(E^{(n)}({\CG}^{(n)}_{\lambda+\varepsilon})-E^{(n)}({\CG}^{(n)}_{\lambda})\bigr)F\right\|^2_{L_2(\R^d)}
 \leq C( F)\lambda ^{-1+\frac d2} \varepsilon , \label{tootoo1}
 \end{equation}
 where $C(F)$ is uniform with respect to $n$ and $\lambda$.
\enl

By construction,
$
    \CG^{\bka(n+1)}\subset \CG^{\bka(n)}$ and 
    $\CG^{\bka(\infty)}=\bigcap_{n=0}^{\infty}{\CG}^{\bka(n)}. $
Therefore, the perturbation formulas for $\lambda ^{(n)}(\bka )$ and $\fun^{(n)}(\bka )$ hold in
$\CG^{\bka(\infty)}$ for all $n$. We denote $\lambda^{(\infty)}(\bka )=\lim_{n\to\infty}\lambda^{(n)}(\bka )$ for $\bka\in\CG^{\bka(\infty)}$.
 Let
     \begin{equation}
     \CG^{(\infty)}_{ \lambda }:=\left\{\bka \in
    \CG^{\bka(\infty)}: \lambda^{(\infty)}(\bka )<\lambda
    \right\}. \label{dd}
    \end{equation}
The function $\lambda^{(\infty)}(\bka )$ is a Lebesgue
measurable function, since it is a limit of a sequence of
measurable functions. Hence, the set  $\CG^{(\infty)}_{\lambda
}$ is measurable.

\bel\label{add6*} The measure of the symmetric difference of
two sets $\CG^{(\infty)}_{\lambda }$ and $\CG^{(n)}_{
\lambda}$ converges
 to zero as $n \to
\infty$ uniformly in $\lambda$ in every bounded interval:
    $$\lim _{n\to \infty }\left|\CG^{(\infty)}_{\lambda }\triangle \CG^{(n)}_{ \lambda
    }\right|=0.$$
 \enl
 The proof is completely analogous to the proof of Lemma 4 in \cite{KL2}.

Next, we will show that the operators
$E^{(n)}(\CG^{(\infty)}_{\lambda })$ have a strong limit
$E^{(\infty)}(\CG^{(\infty)}_{ \lambda })$ in $L_2(\R^d)$ as $n$
tends to infinity. The operator $E^{(\infty)}(\CG^{(\infty)}_{
\lambda })$ is a spectral projection of $H$. It can be represented
in the form $E^{(\infty)}(\CG^{(\infty)}_{\lambda })=S^{(\infty)}
T^{(\infty)}$, where $S^{(\infty)}$ and $T^{(\infty)}$ are norm 
limits of $S^{(n)}(\CG^{(\infty)}_{\lambda })$ and
$T^{(n)}(\CG^{(\infty)}_{ \lambda })$ respectively.   
 For any $F\in
C_0^{\infty }(\R^d)$, we can show: 
\begin{equation} 
E^{(\infty)}\left(
\CG^{(\infty)}_{ \lambda }\right)F=\frac{1}{(2\pi)^d}\int
    _{ \CG^{(\infty)}_{\lambda }}\bigl( F,\fun ^{(\infty)}(\bka )\bigr) \fun ^{(\infty)}(\bka,\bx)\,d\bka \label{s1u}
    \end{equation}
    and 
    \begin{equation} HE^{(\infty)}\left(
\CG^{(\infty)}_{\lambda }\right)F=\frac{1}{(2\pi)^d}\int
    _{ \CG^{(\infty)}_{\lambda }}\lambda ^{(\infty)}(\bka )\bigl( F,\fun^{(\infty)}(\bka )\bigr) \fun ^{(\infty)}(\bka,\bx)\,d\bka .\label{s1uu}
    \end{equation}
  
The proof of the next lemma is analogous to the one of Lemma 9.10 in \cite{KS}.
\bel We have
\begin{equation} \label{Vienna-2}
\left\|(S^{(n)}(\CG^{(\infty)}_{ \lambda})-S^{(n-1)}(\CG^{(\infty)}_{\lambda}))
f\right\|_{L_2(\R^d)}< C\|f\|_{L_2(\CG^{(\infty)}_{\lambda
})}\rho_*^{-\rho _*^{\frac12 r_{n-1,2}(\rho_*)}}.
\end{equation}
In particular, the convergence of $S^{(n)}(\CG^{(\infty)}_{\lambda
    })$ to $S^{(\infty)}(\CG^{(\infty)}_{\lambda
    }) $  is uniform in $\lambda $ when $\lambda >\lambda _*$.
\enl


\bel \label{Lem1} The operator $T^{(\infty)}(\CG^{(\infty)}_{ \lambda
    })$  can be described by the
    formula $T^{(\infty)}(\CG^{(\infty)}_{ \lambda
    })F=\frac{1}{(2\pi)^{d/2} }\bigl( F,\fun^{(\infty)}(\bka )\bigr) $
for any $F\in C_0^{\infty }(\R^d)$.
\enl
\begin{proof} The lemma easily follows from  Theorem \ref{T:Dec10} and formula \eqref{eq2}. \end{proof}

The details for the next lemma can be found in Lemmas 9.11, 9.17 from \cite{KS} and Lemmas 10, 11 from \cite{KL2}.

\bel\label{May8} \begin{enumerate}  \item The sequence 
$E^{(n)}(\CG^{(\infty)}_{ \lambda })$ has a norm limit
$E^{(\infty)}(\CG^{(\infty)}_{\lambda })$, uniformly over $\lambda >\lambda _*$. The
operator $E^{(\infty)}(\CG^{(\infty)}_{\lambda })$ is an orthogonal 
projection. Its action on any $F\in C_0^{\infty }(\R^d)$ it is given by
(\ref{s1u}).
\item There is a strong limit
$E^{(\infty)}(\CG^{(\infty)})$ of the projections $E^{(\infty)}
(\CG^{(\infty)}_{\lambda })$ as $\lambda $ goes to infinity.
 \item The operator $E^{(\infty)}(\CG^{(\infty)})$ is an orthogonal projection.
\item Projections
$E^{(\infty)}(\CG^{(\infty)}_{\lambda })$ and
$E^{(\infty)}(\CG^{(\infty)})$ reduce the operator $H$.
\item The family of projections
$E^{(\infty)}(\CG^{(\infty)}_{\lambda} )$ is the resolution of the
identity of the operator $H$ 
restricted to the range of 
$E^{(\infty)}(\CG^{(\infty)})$. 
\item Formula \eqref{s1uu} holds when $F\in C_0^{\infty}(\R^d)$. 
\end{enumerate}
\enl

Finally, we can show that the  branch of spectrum (semi-axis) corresponding
to $\mathcal G^{(\infty)}$  is absolutely continuous.

\begin{thm}\label{T:abs} For any $F\in C_0^{\infty }(\R^d)$ and
$0\leq \varepsilon \leq 1$,
    \begin{equation}
    \left| \left(E^{(\infty)}(\CG^{(\infty)}_{\lambda+\varepsilon})F,F\right)-
    \left(E^{(\infty)}(\CG^{(\infty)}_{\lambda})F,F
    \right) \right| \leq C_F\lambda ^{-1+\frac d2}\varepsilon .\label{May10*}
    \end{equation}

\begin{proof} By  (\ref{s1u}),
    $$ | \left(E^{(\infty)}(\CG^{(\infty)}_{\lambda+\varepsilon})F,F\right)-\left(E^{(\infty)}(\CG^{(\infty)}_{\lambda})F,F
    \right) | \leq C_F\left| \CG^{(\infty)}_{ \lambda +\varepsilon
    }\setminus \CG^{(\infty)}_{ \lambda } \right| .$$
Applying Lemmas \ref{L:abs.6} and \ref{add6*}, we immediately get
(\ref{May10*}).

\end{proof}

\end{thm}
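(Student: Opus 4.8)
The plan is to reduce the estimate \eqref{May10*} to the measure estimate of Lemma~\ref{L:abs.6}, using the integral representation \eqref{s1u} of the limiting spectral projection. First I would recall that, by part (1) of Lemma~\ref{May8}, the operator $E^{(\infty)}(\CG^{(\infty)}_{\lambda})$ acts on any $F\in C_0^{\infty}(\R^d)$ via
\[
E^{(\infty)}\left(\CG^{(\infty)}_{\lambda}\right)F=\frac{1}{(2\pi)^d}\int_{\CG^{(\infty)}_{\lambda}}\bigl(F,\fun^{(\infty)}(\bka)\bigr)\fun^{(\infty)}(\bka,\bx)\,d\bka,
\]
so that
\[
\left(E^{(\infty)}\left(\CG^{(\infty)}_{\lambda+\varepsilon}\right)F,F\right)-\left(E^{(\infty)}\left(\CG^{(\infty)}_{\lambda}\right)F,F\right)=\frac{1}{(2\pi)^d}\int_{\CG^{(\infty)}_{\lambda+\varepsilon}\setminus\CG^{(\infty)}_{\lambda}}\left|\bigl(F,\fun^{(\infty)}(\bka)\bigr)\right|^2 d\bka .
\]
This expresses the quantity in \eqref{May10*} as a positive integral over the thin shell $\CG^{(\infty)}_{\lambda+\varepsilon}\setminus\CG^{(\infty)}_{\lambda}$.

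Next I would bound the integrand uniformly. By Corollary~\ref{C:Psi} (and the passage to the limit in Theorem~\ref{T:Dec10}), we have $\|\fun^{(\infty)}(\bka)\|_{L_\infty(\R^d)}\le 2$ for all $\bka$ with $\|\bka\|$ large, so for $F$ supported in a fixed bounded set,
\[
\left|\bigl(F,\fun^{(\infty)}(\bka)\bigr)\right|\le \|F\|_{L_1(\R^d)}\,\|\fun^{(\infty)}(\bka)\|_{L_\infty(\R^d)}\le 2\|F\|_{L_1(\R^d)}=:c_F,
\]
a bound independent of $\bka$, $\lambda$, and $\varepsilon$. (One should keep track of the fact that $\CG^{(\infty)}_{\lambda}$ lives in the region $\|\bka\|>\rho_*$, so the uniform bound on $\fun^{(\infty)}$ applies everywhere on the shell.) Combining this with the previous display gives
\[
\left|\left(E^{(\infty)}\left(\CG^{(\infty)}_{\lambda+\varepsilon}\right)F,F\right)-\left(E^{(\infty)}\left(\CG^{(\infty)}_{\lambda}\right)F,F\right)\right|\le \frac{c_F^2}{(2\pi)^d}\,\meas\left(\CG^{(\infty)}_{\lambda+\varepsilon}\setminus\CG^{(\infty)}_{\lambda}\right).
\]

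Finally I would invoke the measure estimate. Lemma~\ref{L:abs.6} gives $\meas(\CG^{(n)}_{\lambda+\varepsilon}\setminus\CG^{(n)}_{\lambda})\le C(d)\lambda^{-1+d/2}\varepsilon$ for each finite $n$ and $0\le\varepsilon\le1$; Lemma~\ref{add6*} says that $\meas(\CG^{(\infty)}_{\lambda}\triangle\CG^{(n)}_{\lambda})\to0$ as $n\to\infty$, uniformly in $\lambda$ on bounded intervals. Writing the symmetric-difference shell at level $\infty$ in terms of the level-$n$ shell and two symmetric differences, and letting $n\to\infty$, we obtain $\meas(\CG^{(\infty)}_{\lambda+\varepsilon}\setminus\CG^{(\infty)}_{\lambda})\le C(d)\lambda^{-1+d/2}\varepsilon$ as well. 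Substituting into the last display yields \eqref{May10*} with $C_F:=c_F^2 C(d)/(2\pi)^d$. The argument is almost entirely formal once the ingredients are in place; the one point that needs a little care—and which is really the content carried over from earlier sections—is the uniform-in-$n$ (hence in-$\lambda$) validity of Lemma~\ref{L:abs.6} and of the convergence in Lemma~\ref{add6*}, since without uniformity one could not pass the measure bound to the limit set; but both of these have been established, so the main obstacle here is purely bookkeeping rather than any new estimate.
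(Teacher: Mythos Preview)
Your proof is correct and follows essentially the same approach as the paper: both use the integral representation \eqref{s1u} to bound the difference by $C_F$ times the measure of the shell $\CG^{(\infty)}_{\lambda+\varepsilon}\setminus\CG^{(\infty)}_{\lambda}$, and then invoke Lemmas~\ref{L:abs.6} and~\ref{add6*} to control that measure. You simply spell out more explicitly how the constant $C_F$ arises (via the uniform $L_\infty$ bound on $\fun^{(\infty)}$ from Corollary~\ref{C:Psi} and the compact support of $F$), which the paper leaves implicit.
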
 \begin{corollary} The spectrum of the operator
$HE^{(\infty)}(\CG^{(\infty)})$ is absolutely continuous.
\end{corollary}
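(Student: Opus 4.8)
The plan is to read the corollary off directly from Theorem~\ref{T:abs} and Lemma~\ref{May8}, with no new analysis required. Write $P:=E^{(\infty)}(\CG^{(\infty)})$ for the orthogonal projection; by Lemma~\ref{May8} it reduces $H$, and $\{E^{(\infty)}(\CG^{(\infty)}_\lambda)\}_{\lambda\in\R}$ is the resolution of the identity of the restriction $HP$. So the assertion that the spectrum of $HE^{(\infty)}(\CG^{(\infty)})$ is absolutely continuous is equivalent to the statement that every vector of $\ran P$ has a spectral measure (with respect to $HP$) that is absolutely continuous with respect to Lebesgue measure, i.e. that $\ran P\subseteq\mathcal H_{ac}(H)$.

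First I would treat vectors of the form $PF$ with $F\in C_0^\infty(\R^d)$. For such $F$ put
\bee
m_F(\lambda):=\bigl(E^{(\infty)}(\CG^{(\infty)}_\lambda)F,F\bigr).
\ene
Since $\CG^{(\infty)}_\lambda\subseteq\CG^{\bka(\infty)}$ by \eqref{dd}, we have $E^{(\infty)}(\CG^{(\infty)}_\lambda)=E^{(\infty)}(\CG^{(\infty)}_\lambda)P$, so $m_F(\lambda)=\|E^{(\infty)}(\CG^{(\infty)}_\lambda)PF\|_{L^2}^2$ is exactly the distribution function of the scalar spectral measure $\mu_{PF}$ of the vector $PF$ for the operator $HP$. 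This function is non-decreasing; it vanishes for $\lambda\le\lambda_*$, because $\CG^{\bka(\infty)}=\cup_{\rho>\rho_*}{\CD}^{(\infty)}(\rho)$ forces $\lambda^{(\infty)}(\bka)>\rho_*^2=\lambda_*$ on that set; and by \thmref{T:abs} it satisfies $0\le m_F(\lambda+\varepsilon)-m_F(\lambda)\le C_F\lambda^{-1+d/2}\varepsilon$ for $\lambda>\lambda_*$, $0\le\varepsilon\le1$. Since $\lambda^{-1+d/2}$ is bounded on every bounded subinterval of $[\lambda_*,+\infty)$ (here $\lambda_*=\rho_*^2>0$ and $d\ge2$), $m_F$ is Lipschitz on each such interval, hence locally Lipschitz on all of $\R$ and therefore absolutely continuous. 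Consequently $\mu_{PF}$ is absolutely continuous with respect to Lebesgue measure, i.e. $PF\in\mathcal H_{ac}(H)$.

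To finish I would pass to the limit by density: $C_0^\infty(\R^d)$ is dense in $L^2(\R^d)$ and $P$ is bounded, so $\{PF:\,F\in C_0^\infty(\R^d)\}$ is dense in $\ran P$; and $\mathcal H_{ac}(H)$ is a closed subspace of $L^2(\R^d)$ which, by the previous paragraph, contains this dense set. Hence $\ran P\subseteq\mathcal H_{ac}(H)$, which is precisely the claim of the corollary (using once more Lemma~\ref{May8}, that $E^{(\infty)}(\CG^{(\infty)}_\lambda)$ is the resolution of identity of $HP$ and converges strongly to $P$ as $\lambda\to+\infty$).

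I do not expect any genuine obstacle here, since all the substance lies in \thmref{T:abs} and the lemmas preceding it. The only point deserving a line of care is the identification of $m_F$ with a bona fide spectral distribution function of $HP$ — this is exactly what parts (4)–(6) of Lemma~\ref{May8} (projections reduce $H$; $\{E^{(\infty)}(\CG^{(\infty)}_\lambda)\}$ is the resolution of identity on $\ran P$; formula \eqref{s1uu}) are designed to supply — after which the argument is entirely soft.
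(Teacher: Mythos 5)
Your proposal is correct and follows the same route the paper intends: the corollary is an immediate consequence of Theorem~\ref{T:abs} (the local Lipschitz bound on $\lambda\mapsto(E^{(\infty)}(\CG^{(\infty)}_{\lambda})F,F)$), combined with Lemma~\ref{May8} to identify this as the spectral distribution function of $PF$ for $HP$, and a standard density/closedness argument to pass from $PF$, $F\in C_0^{\infty}(\R^d)$, to all of $\ran P$. The paper leaves these last soft steps implicit; you have simply written them out, correctly.
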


{\newred
\ber\label{newrm1}
Finally, we would like to discuss the possibility of extending our Theorem \ref{mainth} to prove that the spectrum of $H$ is purely absolutely continuous for large energies (analogously to the one-dimensional case). 
Doing this would require constructing more general Bloch-Floquet solutions than those constructed in Theorem \ref{T:Dec10}. More precisely, instead of restricting ourselves to solutions $\fun^{(\infty )}(\bka,\bx)$ corresponding to $\bka \in {
\CD}^{\infty}(\rho)$, we would have to construct such solutions for all (or, possibly, almost all) $\bka\in\R^d$ and prove that $\{\fun^{(\infty )}(\bka,\bx),\bka\in\R^d\}$ forms a distorted Fourier basis. Of course, as we explained in the introduction, we cannot possibly hope that all such solutions will be small perturbations of one exponential function satisfying   \eqref{6.4}. Indeed, since some $\bka$ will be resonant for our frequencies, the best we can hope for is that such solutions will be small perturbations of a finite linear combination of exponentials, with the number of terms in this linear combination unbounded above. In other words, instead of throwing away some resonant $\bk$ at each step of our procedure, we would be forced to keep them and, instead of dealing only with solutions with one bump in $l^2(\Z^l)$, look at solutions with any number of bumps. This, of course, will make all the estimates like the Cartan Lemma significantly worse. Still, it does not seem totally  infeasible to prove the complete absolute continuity of the spectrum for large energies in this way. We plan to make an attempt of doing this in the future.   
\enr 
}

\section{Appendices}\label{section10}
\subsection{Appendix 1}
Here, we formulate a useful abstract perturbation type lemma. It has appeared in various shapes and forms many times before and can be considered as "folk knowledge". 
{\newred
Loosely speaking, this Lemma states that if we have an operator on a lattice, we know how to estimate the resolvent of this operator restricted to smaller cubes, and the distances between these cubes are sufficiently large (compared to the estimates of corresponding resolvents), then we can estimate the resolvent of the operator (in entire space or restricted to a much bigger cube).
}
We state it in the  most convenient for us setting and using notation close to the one used in this paper. The proof is based on the arguments from the proof of Theorem~\ref{Thm2} with the use of the multiple resolvent identity. We need one definition before we formulate this theorem. Suppose, $K\subset\Z^l$ is an extended cube with centre $a$ and radius $r$ (recall that this means that $\Omega(a,r)\subset K\subset \Omega(a,2r)$). By $int(K)$ we denote the ball $\Omega(a,r/2)$. Of course, since $a$ and $r$  are not uniquely defined by $K$, $int(K)$ is also not uniquely defined. Note that
 the $\Z$-distance from $int(K)$ to $\Z^l\setminus K$ is at least $r/2$.

\bel\label{abstractlemma}
Let $H=H_0+V$ be a self-adjoint operator acting on $l^2(\Z^l)$ with diagonal $H_0$ (i.e. $(H_0)_{\bn\bn'}=0$ if $\bn\ne\bn'$) and $V$ has range $Q$ (i.e. $V_{\bn\bn'}\ne 0$ only if  $0<|\bn-\bn'|<Q$) and is bounded. Let $z\in\C$ be any complex number. 

Let $n\in\N$ and $K^{(n+1)}\subset\Z^l$ (note that we do not assume that $K^{(n+1)}$ is an extended cube). Let $E$ be a big constant (see \eqref{Ebig}) and  $K^{(j)}_m\subset K^{(n+1)}$, $1\leq j\leq n$, $1\leq m\leq m_j<\infty$, be  extended cubes with the following properties:

1) the size of $K^{(j)}_m$ is $E^{s_{j,m}}$ with $\min_m s_{j,m}>10\max_m s_{j-1,m}$ ($1<j\leq n$), 
and $\min_m E^{s_{1,m}}>10Q$.

2) $$dist\{K^{(j)}_m,\,K^{(j)}_{m'}\}>2Q\ \ \hbox{for}\ \ 1\leq j\leq n,$$
and
$$dist\{K^{(j)}_m,\,K^{(j')}_{m'}\}>2Q\ \ \hbox{for}\ \ 1\leq j'<j\leq n,$$ unless $K^{(j')}_{m'}\subset K^{(j)}_{m}$. Also, if $K^{(j')}_{m'}\subset K^{(j)}_{m}$, then 
$$dist\{\Z^l\setminus K^{(j)}_m,\,K^{(j')}_{m'}\}>2Q.$$

3) There are positive numbers $p_0$, $p_{j,m}$ ($1\leq j\leq n$, $1\leq m\leq m_j$) satisfying  $p_{j,m}<\frac{p_0}{1000}E^{s_{j,m}/2}$ such that 
the following estimates hold: 
$$\|(H(K^{(n)}_m)-z)^{-1}\|\leq E^{p_{n,m}}.$$
Also, for any cube $K^{(j)}_m$ with $1\leq j<n$, either we have the estimate 
$$\|(H(K^{(j)}_m)-z)^{-1}\|\leq E^{p_{j,m}},$$ 
or $K^{(j)}_m\subset int(K^{(j+1)}_{m'})$ for some larger cube $K^{(j+1)}_{m'}$. Finally, for any point $\bq\in K^{(n+1)}$ either we have 
$$|(H_0)_{\bq\bq}-z|>E^{p_0},$$
 or $\bq\in int(K^{(1)}_{m'})$ for some cube $K^{(1)}_{m'}$.

Assume that  
\bee\label{Ebig}
E^{-p_0}<\frac{1}{10}\min\{1,\|V\|^{-1}\}.
\ene 
Then we have 
\bee\label{abslest}
\|(H(K^{(n+1)})-z)^{-1}\|\leq 2E^{p},
\ene
where $p:=\max_{j,m}p_{j,m}$.

\enl

\bep

First, without loss of generality, we may assume that for any $K^{(j')}_{m'}\subset K^{(j)}_m$, $j'<j$, we either have $K^{(j')}_{m'}\subset int(K^{(j)}_m)$ or $K^{(j')}_{m'}\cap int(K^{(j)}_m)=\emptyset$ (otherwise, we can just add to $int(K^{(j)}_m)$ all $K^{(j')}_{m'}$ which nontrivially intersect with it).  
We also introduce the following notation: $P^{int(j)}_m:=P(int(K^{(j)}_m))$, $P^{(j)}_m:=P(K^{(j)}_m)$, $P^{(n+1)}:=P(K^{(n+1)})$, $H^{int(j)}_m:=H(int(K^{(j)}_m))$, $H^{(j)}_m:=H(K^{(j)}_m)$.  Let us establish that for any $H^{(j)}_m$ satisfying the estimate from condition 3) of the lemma, we have
\bee\label{absl1}
||(P^{(n+1)}-P^{(j)}_m)V(H^{(j)}_m-z)^{-1}||<2E^{-p_0}||V||. 
\ene
We proceed by induction. For $j=1$ this inequality has, essentially, been proved in theorem \ref{Thm2}. Suppose, $j>1$. 
We denote 
$$\tilde P^{(j)}_m:=P^{int(j)}_m+\sum P(K^{(j')}_{m'}), \ \ \ \check P^{(j)}_m:=P^{(j)}_m-\tilde P^{(j)}_m,$$
where the sum is over all $j',m'$ with $j'<j$ and $K^{(j')}_{m'}\subset  K^{(j)}_m\setminus int(K^{(j)}_m)$. We put  
\bee\label{tildeH}
\tilde H^{(j)}_m:=\tilde P^{(j)}_m H \tilde P^{(j)}_m=P^{int(j)}_m H P^{int(j)}_m+\sum P(K^{(j')}_{m'})H P(K^{(j')}_{m'})
\ene 
and 
\bee\label{hatH}
\hat H^{(j)}_m:=\tilde H^{(j)}_m+\check P^{(j)}_m H_0\check P^{(j)}_m.
\ene
Then $H^{(j)}_m=\hat H^{(j)}_m+W^{(j)}_m$, where
$$W^{(j)}_m:= P^{(j)}_m V P^{(j)}_m-\tilde P^{(j)}_m V\tilde P^{(j)}_m=\check P^{(j)}_mV\check P^{(j)}_m+\tilde P^{(j)}_mV\check P^{(j)}_m+\check P^{(j)}_mV\tilde P^{(j)}_m.$$
We have
\bee\label{abspert}
(P^{(n+1)}-P^{(j)}_m)V(H^{(j)}_{m}-z)^{-1}=\sum_{r=0}^{R_0}B_r+(P^{(n+1)}-P^{(j)}_m)V((\hat H^{(j)}_{m}-z)^{-1}W^{(j)}_{m})^{R_0+1}( H^{(j)}_{m}-z)^{-1},
\ene
where
\bee
B_r:=(P^{(n+1)}-P^{(j)}_m)V((\hat H^{(j)}_{m}-z)^{-1}W^{(j)}_{m})^{r}(\hat H^{(j)}_{m}-z)^{-1}
\ene
and $R_0:=[E^{s_{j,m}}E^{-2\max_{m'}s_{j-1,m'}}/4]-1$. We notice that any block in $\hat H^{(j)}_m$ (except $int(K^{(j)}_m)$) has size smaller than than $E^{\max_{m'}s_{j-1,m'}}$. Thus, we never reach $int(K^{(j)}_m)$ in \eqref{abspert} with this number of steps, i.e. 
$$
(P^{(n+1)}-P^{(j)}_m)V((\hat H^{(j)}_{m}-z)^{-1}W^{(j)}_{m})^{r}P^{int(j)}_m=0,\ \ \ r\leq R_0+1.
$$
Using \eqref{absl1} for $j',m'$ (which holds due to the induction assumption), we obtain 
\bee\label{absbr}
\|B_r\|\leq (2\|V\|E^{-p_0})^{r+1},\ \ \ r\leq R_0.
\ene
For the last term in \eqref{abspert} we have
\bee\label{absbr'}
\|(P^{(n+1)}-P^{(j)}_m)V((\hat H^{(j)}_{m}-z)^{-1}W^{(j)}_{m})^{R_0+1}( H^{(j)}_{m}-z)^{-1}\|\leq (2\|V\|E^{-p_0})^{R_0+1}\|V\|E^{p_{j,m}}.
\ene
Now, we only need to adjust the estimate \eqref{abspert} for $r=0$. Indeed, in this case we don't even reach any $K^{(j')}_{m'}$ inside $K^{(j)}_{m}$, i.e. if $K^{(j')}_{m'}\subset K^{(j)}_{m}$, $1\leq j'<j$, then (see condition 2) $(P^{(n+1)}-P^{(j)}_m)VP^{(j')}_{m'}=0$. So, we can use better estimate from condition 3 of the lemma. We have 
\bee\label{absb0}
\|B_0\|\leq \|V\|E^{-p_0}.
\ene
Combining \eqref{absbr}, \eqref{absbr'} and \eqref{absb0} we obtain \eqref{absl1}.

Now, the statement of the lemma follows by perturbation arguments similar to those from the proof of Theorem~\ref{Thm2}. Indeed, we have (cf. \eqref{abspert})
\bee\label{abspertnew}
(H^{(n+1)}-z)^{-1}=\sum_{r=0}^{\infty}((\hat H^{(n+1)}-z)^{-1}W^{(n+1)})^{r}(\hat H^{(n+1)}-z)^{-1}.
\ene
Here, $\hat H^{(n+1)}$ is defined by \eqref{tildeH} and \eqref{hatH} assuming $int(K^{(n+1)})=\emptyset$ and, correspondingly, $P^{int(n+1)}=0$. The series is convergent due to \eqref{absl1}. The estimate \eqref{abslest} follows.

\enp

\subsection{Appendix 2. Lemma \ref{LB}}
First, for completeness, we formulate Lemma 1.20 in \cite{Bou1} with insignificant change of notations.

 \bel \label{L6.3} Assume ${\mathcal A}\subset \cube^{rq}$ is a semi-algebraic subset of the degree $B$ and
$\left|{\mathcal A}\right|_{rq}<\eta. $

Consider frequency vectors $\boldom \in \cube^r$ with components $(\omega _1,...\omega _r)$. 
For $\bn=(n_1,...,n_r)\in \Z^r$, denote 
$\bn\boldom _f=(\{n_1\omega _1\},...,\{n_r\omega _r\})$. Here, $\{x\}$ is a fractional part of a real number $x$. 

Let $\tilde {\mathcal N}_1, ...,  \tilde {\mathcal N}_{q-1}\subset \Z^r$ be finite sets with the following properties:
\bee \min _{1\leq p\leq r}|n_p|>\left(B\max _{1\leq p\leq r}|m_p|\right)^{\hat C_1}, \label{1.21B}
\end{equation}
if $\bn \in  \tilde {\mathcal N}_i$ and $\bm \in\tilde {\mathcal N}_{i-1}$, i=2,...q-1, and where $\hat C_1=\hat C_1(q,r)$.

Assume also
\bee 
\frac{1}{\eta }>\max _{\bn \in \tilde {\mathcal N}_{q-1}}|\bn |^{\hat C_{1}}
\label{1.22B}
\ene
Then
\bee
\left|\left\{ \vec \boldom\in \cube^r \big| \left( \boldom , \bn ^{(1)}\boldom _f, ...,  \bn^{(q-1)}\boldom  _f\right)\in {\mathcal A}\ \mbox{ for some } \bn ^{(i)} \in \tilde{\mathcal N}_i\right\}\right|_r< B^{\hat C_1}\delta ,\label{LambdaB}
\ene
where
\bee
 \delta ^{-1}=\min _{\bn \in \tilde{\mathcal N}_1}\min _{1\leq p\leq r}|n_p|. \label{mesLambda*B}
\ene
\enl

Next, we use the above lemma to prove Lemma \ref{LB}.
\bep In Lemma \ref{L6.3} we take $r:=ld$, $q:=1+s$, ${\mathcal A}=\{A\}\subset \cube^{ld(1+s)}$, where $\{A\}$ is the fractional part of $A$. Let $ \tilde {\mathcal N}_1, ...,  \tilde {\mathcal N}_{q-1}\subset \Z^{ld}$ be defined as:
$$\tilde {\mathcal N_i}=\left\{ \tilde \bn =\{n_{jk}\}_{j=1,k=1}^{l,d}:\ n_{j,k}=n_{j,k'}:=n_j \mbox{ for all\ }k,k';\ \bn=(n_1,...,n_l)\in  {\mathcal N}_i\right\},$$
where ${\mathcal N}_i$, $i=1,...,q-1$, are defined in the statement of Lemma \ref{LB}.
Considering \eqref{1.21}, we see that $ \tilde {\mathcal N}_1, ...,  \tilde {\mathcal N}_{q-1}$ have the property  \eqref{1.21B}:
\bee \min _{1\leq j\leq l, 1\leq k\leq d}|n_{jk}|>\left(B\max _{1\leq j\leq l, 1\leq k\leq d}|m_{jk}|\right)^{\hat C_{1}}, \label{1.21a}
\end{equation}
here a double index $jk$ is taken instead of $p$ in \eqref{1.21B}. Furthermore, by \eqref{1.22}, we have \eqref{1.22B}:
\bee 
\frac{1}{\eta }>\max _{\bn \in \tilde {\mathcal  N}_{q-1}}|\tilde \bn |^{\hat C_{1}}.
\label{1.22a}
\ene
Thus, the conditions of Lemma \ref{L6.3} hold. 
Let $\tilde \Lambda \in \cube^{r}$,
\bee
\tilde \Lambda:=\left\{ \vec \boldom: \left( \boldom , \tilde \bn ^{(1)}\boldom _f,...\tilde \bn ^{(q-1)}\boldom _f\right)\in {\mathcal A} \mbox{ for some } \tilde \bn ^{(i)} \in \tilde {\mathcal N_i},\\,i=1,...,q-1\right\},\label{tildeLambda}
\ene
where $\tilde \bn ^{(i)}\boldom _f$ is the fractional part of vector 
$\left\{n_{jk}^{(i)}\omega _{jk}\right\}_{j=1,k=1}^{l,d}$.
Applying  \eqref{LambdaB}, we obtain:
\bee
\meas(\tilde \Lambda)< B^{\hat C_1}\delta ,\  \  \  \delta ^{-1}=\min _{\bn \in \tilde{\mathcal  N}_1}\min _{1\leq j\leq l, 1\leq k\leq d}|n_{jk}|. \label{mes-tildeLambda}
\ene
Using \eqref{Lambda}, \eqref{tildeLambda}, it is easy to see that $\Lambda \subseteq \tilde \Lambda $ and, hence \eqref{mesLambda*} follows from \eqref{LambdaB}. \enp

\subsection{Appendix 3. Lemma 1.18 in \cite{Bou1}}
\bel Let $A\subset \cube^{d+r}{\newred \subset\R^d_{x}\times\R^r_t}$ be a semi-algebraic set of degree B. 
{\newred  For each $ t\in \cube^r$  
we put
  $$ A_{\mathrm {cs}}(t):=\{x\in \cube ^{d}: (t,x)\in A\} .$$
Similarly, for  $ x \in \cube^d $ we put
$$ A_{\mathrm {cs}}(x):=\{t\in \cube ^{r}: (t,x)\in A\} .$$ 
}
Assume that  for each $t$ {\newred
\bee\  \  \meas(A_{\mathrm {cs}}(t))<\eta . \ene
}
Then the set
{\newred
\bee
\left\{ \left(x_1,...,x_{2^r}\right) \big| A_{\mathrm {cs}}(x_1)\cap ....\cap A_{\mathrm {cs}}(x_{2^r})\neq \emptyset \right\}\subset \cube^{d2^r} \ene
 }
is semi-algebraic of the degree at most $B^{\hat C_{2}}$ and measure at most
\bee
\label{8.17} \eta _r=B^{\hat C_{2}}\eta ^{d^{-r}2^{-r(r-1)/2}}\ \ \mbox{with} \ \ {\hat C_{2}}={\hat C_{2}}(r). \ene
\enl

\subsection{Appendix 4. The Tarski-Seidenberg Principle} If $S\subset \R^{d_1+d_2}$ is a semi-algebraic set of degree $B$, then its projections $\pi _1(S)\subset \R^{d_1}$ and
$\pi _2(S)\subset \R^{d_2}$ are semialgebraic of degree at most $B^{\hat C_{3}(d_1,d_2)}$, see e.g. \cite{Bou1}.

\subsection{Appendix 5. Rouch\'e's type Theorem}

\bel\label{complexlemma}
Let $f$ be a meromorphic function in the disc $\{|z|\leq r\}$ such that the number of poles (counting multiplicity) in this disc is $N$ and on the boundary $\{|z|=r\}$ we have the estimate $|f(z)|\leq C$.
 Then for any $z$ inside the disc being $\epsilon$-away of any pole of $f$ we have
$$
|f(z)|\leq C\left(\frac{2r}{\epsilon}\right)^N.
$$
\enl
\begin{proof} Let $z_j,\ j=1,\dots,N$ be the poles of $f$. Consider $g(z):=f(z)\prod_{j=1}^N (z-z_j)$. Obviously, $g$ is analytic in $|z|\leq r$ and for $|z|=r$
$$
|g(z)|\leq C(2r)^N.
$$
By the maximum principle the same estimate holds for all $|z|\leq r$. If, in addition, $|z-z_j|\geq\epsilon$, then
$$
|f(z)|\leq |g(z)|\epsilon^{-N}\leq C\left(\frac{2r}{\epsilon}\right)^N.
$$
\end{proof}
We will apply this lemma to the inverse of some analytic matrix-valued function $H(z)$. Then $H(z)^{-1}$ has the form $S(z)/(\det(H(z)))$, where $S(z)$ is analytic. By definition, poles of $H(z)^{-1}$ are zeros of $\det(H)$; let us denote them by $z_j,\ j=1,\dots,N$. Then $H(z)^{-1}\prod_{j=1}^N (z-z_j)$ is analytic and we can apply the proof of our Lemma to it. As a result, we have
\bec
Let $H(z)$ be a meromorphic matrix-function in the disc $\{|z|\leq r\}$ such that the number of poles (counting multiplicity) is $N$ and on the boundary $\{|z|=r\}$ we have the estimate $||H(z)||\leq C$. Then for any $z$ inside the disc being $\epsilon$-away of any pole of $H$ we have
$$
||H(z)||\leq C\left(\frac{2r}{\epsilon}\right)^N.
$$ 
\enc

\subsection{Apppendix 6. Cartan's Lemma}

 Here, we formulate the analogue of Cartan's Lemma for matrices (see \cite{B1}, Lemma 2).

\begin{lem}\label{Cartan}
Let $A(\bx)$ be an real-analytic self-adjoint $N\times N$ matrix-function of $\bx\in[-1/2,1/2]^d$, satisfying the following conditions (with $M\ll N$ and $B_1,\ B_2,\ B_3>1$).

1) $A(\bx)$ has an analytic extension $A(\bz)$ to $\bz\in D^d$ (recall that $D$ is a unit disk in $\C$) with
\begin{equation}\label{Cartan1}
\|A(\bz)\|<B_1\ \ \ \hbox{for}\ \bz\in D^d.
\end{equation}

2) There is a subset $\Lambda$ of $[1,N]$ such that $|\Lambda|\leq M$ and for all $\bz\in D^d$
\begin{equation}\label{Cartan2}
\|(R_{[1,N]\setminus\Lambda}A(\bz)R_{[1,N]\setminus\Lambda})^{-1}\|<B_2
\end{equation}
(here $R_S$ denotes coordinate restriction to $S$).

3) For some $\ba\in[-1/2,1/2]^d$ we have
\begin{equation}\label{Cartan3}
\|A(\ba)^{-1}\|<B_3.
\end{equation}
Then
\begin{equation}\label{Cartan4}
\meas\{\bx\in[-1/2,1/2]^d:\ \|A(\bx)^{-1}\|>e^t\}\leq Cde^{-\frac{ct}{M\log (B_1B_2B_3)}}.
\end{equation}
\end{lem}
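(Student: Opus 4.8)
The final statement is Cartan's Lemma for matrices (Lemma \ref{Cartan}), and the proof proposal below sketches how I would derive it from the scalar (one-variable) Cartan estimate cited in the excerpt as Lemma 2 of \cite{B1}.

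\medskip

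\textbf{Strategy.} The plan is to reduce the matrix-valued, $d$-variable statement to a scalar, one-variable statement by controlling the quantity $\log|\det A(\bz)|$ as a subharmonic (more precisely, pluri-subharmonic) function, and then apply the classical Cartan bound coordinate-slice by coordinate-slice. The key observation is that if $\|A(\bx)^{-1}\|$ is large, then $|\det A(\bx)|$ is small relative to the size of its cofactors, and by condition 2) most of that smallness must come from the $\Lambda\times\Lambda$ block: writing $A$ in block form with respect to the decomposition $\Lambda \sqcup ([1,N]\setminus\Lambda)$ and using the Schur complement formula, $\det A(\bx) = \det\bigl(R_{[1,N]\setminus\Lambda}A R_{[1,N]\setminus\Lambda}\bigr)\cdot \det\bigl(S(\bx)\bigr)$, where $S(\bx)$ is the Schur complement, an $M\times M$ matrix. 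By \eqref{Cartan2} the first factor is bounded above and below (above by $B_1^N$, below by $B_2^{-N}$ roughly), and a standard resolvent-identity argument gives $\|A(\bx)^{-1}\|\gg e^t$ only if $\|S(\bx)^{-1}\|\gg e^{t}/(\text{poly})$, hence only if $|\det S(\bx)|$ is exponentially small in $t$ (using that $\|S(\bx)\|\ll B_1 B_2$, so the other eigenvalues of $S$ cannot be too large).

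\medskip

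\textbf{Key steps, in order.} First I would set up the block decomposition and Schur complement, and prove the implication: $\|A(\bx)^{-1}\| > e^t$ forces $\log|\det S(\bx)| < -t + C M\log(B_1 B_2)$, valid for all $\bx\in[-1/2,1/2]^d$. Second, I would observe that $u(\bz) := \log|\det S(\bz)|$ is pluri-subharmonic on $D^d$, bounded above by $M\log(B_1 B_2)$ there (from \eqref{Cartan1}, \eqref{Cartan2}), and that at the point $\ba$ of \eqref{Cartan3} it is bounded below: $u(\ba) \geq -CM\log(B_1 B_2 B_3)$. Third, I would apply the one-variable Cartan estimate (the cited Lemma 2 of \cite{B1}) on generic complex lines/slices — e.g. fixing $d-1$ of the real coordinates and varying the last one over the unit disk, then integrating the resulting one-dimensional measure bounds over the remaining $d-1$ variables via Fubini. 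The one-variable lemma says: a function $\log|f(z)|$ on the disk with $\sup \leq \log B$ and $\log|f(z_0)| \geq -\log B'$ at some interior point satisfies $\meas\{x : \log|f(x)| < -t\} \lesssim e^{-ct/\log(BB')}$; applying it with $B \sim (B_1B_2)^{M}$, $B' \sim (B_1B_2B_3)^{M}$ and the threshold $t' = t - CM\log(B_1B_2)$ yields the claimed $\meas\{\|A(\bx)^{-1}\|>e^t\} \lesssim d\, e^{-ct/(M\log(B_1B_2B_3))}$, with the factor $d$ accounting for the choice of which coordinate direction to slice along (one must cover the bad set by the $d$ families of slices).

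\medskip

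\textbf{Main obstacle.} The delicate point is the third step: passing from the one-dimensional Cartan estimate to the $d$-dimensional one while keeping the constants in the exponent clean (in particular the $\log(B_1B_2B_3)$ in the denominator rather than some larger power). The naive Fubini argument needs the lower bound at $\ba$ to propagate to \emph{almost every} slice, which is not automatic — a pluri-subharmonic function bounded above can be very negative on a large-measure set of slices through a fixed point. The standard fix (this is exactly how \cite{B1} handles it) is to use that the lower bound $u(\ba)\geq -\log B'$ at a single point, combined with the upper bound on all of $D^d$, forces, via the sub-mean-value property applied iteratively in each variable, that on a slice chosen appropriately the one-variable hypotheses hold with controlled $B'$; alternatively one invokes the multi-dimensional version of Cartan's estimate directly. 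I would follow the second route — cite the $d$-variable form already packaged in \cite{B1} or in \cite{Bou1} — so that the only thing that genuinely needs proving here is the block/Schur-complement reduction from $\|A(\bx)^{-1}\|$ to $|\det S(\bx)|$, which is elementary linear algebra plus the resolvent identity, together with bookkeeping of the exponents. The resulting constants $C, c$ depend only on $d$, as required.
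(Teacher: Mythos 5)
The paper offers no proof of this lemma: it is quoted verbatim from Bourgain \cite{B1} (Lemma 2), as the sentence preceding its statement indicates. Your sketch — the Schur-complement reduction to the $M\times M$ block $S(\bz)$, the plurisubharmonicity of $\log|\det S(\bz)|$ with upper bound $CM\log(B_1B_2)$ on $D^d$ and lower bound $-CM\log(B_1B_2B_3)$ at $\ba$ (in fact $S^{-1}=R_{\Lambda}A^{-1}R_{\Lambda}$, so $\|S(\ba)^{-1}\|\le B_3$ directly), followed by the several-variable Cartan/subharmonicity estimate applied slice by slice — is precisely the argument of \cite{B1}, so it is correct and matches the route the paper relies on.
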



\subsection*{Notation Index}

\begin{center}
 \begin{tabular}{| c | c | c|} 
 \hline
 Notation & Meaning & Place where this object is defined/remarks \\ [0.5ex] 
 
  \hline
 $\CA$ & Various patches &  \\
 \hline
 $\CA^{\BPhi(n)}$ & A patch of variables $\BPhi$ at step $n$ & Similar convention is used  \\ 
 & &  for other variables  \\ 
  \hline
$B_0$  & Constant from the SDC  &  \eqref{strong} \\ 
  \hline
 $B(a,r)$ & Ball in $l^2$-norm in $\R^d$ & {\newred  \eqref{newballs} and above}\\
  \hline
  $\CB$ & Various bad sets &  \\
  \hline
 $\CB^{(n)}$ & Various bad sets obtained at step $n$ &  \\
 \hline
 $\CB^{\BPhi}$ & Bad set of variables $\BPhi$ & Similar convention is used  \\ 
 & &  for other variables  \\ 
 \hline
  $\CC$,$\check\CC$,$\tilde\CC$& Cluster, (multiscale cube of level $0$)&\eqref{7CC}, \eqref{7CC1}\\
  \hline
 $D$ & $\{z\in\C:\ |z|<1\}$ &   \\ 
  \hline

 $E_*$ & A large number starting from which & \eqref{E_*}, \eqref{E_*1}  \\ 
 
  & our constructions work &   \\ 
   \hline
 $\CG$ & Various good set &  \\
  \hline
 $\CG^{(n)}$ & Various good sets obtained at step $n$ &  \\
 \hline
 $\CG^{\BPhi}$ & Good set of variables $\BPhi$ & Similar convention is used  \\ 
 & &  for other variables  \\ 
 \hline
$\gamma$  &  &  Lemma \ref{MGL} \\  
 \hline
 $\gamma_0$ &  & \eqref{gamma_0}  \\  
 \hline
{\newred $\gamma_m^{(j)}$} &  & {\newred\eqref{Nov8} } \\ 
\hline

 Notation & Meaning & Place where this object is defined/remarks \\ [0.5ex]

 \hline
 $H(\Lambda,\bk)$  & $\CP(\Lambda,\bk)H\CP(\Lambda,\bk)$  &  Definition \ref{proj} \\ 
 \hline
 $\GH(\bk)$ & The fibre generated by $\bk$ & \eqref{7fibre}  \\
 \hline
 $\GH(\Lambda,\bk)$ & The subspace of $\GH(\bk)$ & Definition \ref{proj}  \\
  & spanned by elements of $\Lambda$ &   \\ 
  \hline
    \end{tabular}
\end{center}

\begin{center}
 \begin{tabular}{| c | c | c|} 

 \hline
 $K^{b(j)}$ & Base cube & Definition \ref{8.1}, Condition 1 \\
  \hline
 $K^{b(j),{\mathrm{small}}}$ & Small base cube & Definition \ref{8.1}, Condition 2 \\
 \hline
 $K^{(j)}$ & Multiscale cube & Definition \ref{8.1}, Condition 3 \\
  \hline
 $K^{(j),{\mathrm{small}}}$ & Small multiscale  cube & Definition \ref{8.1}, Condition 3 \\
  \hline
 $\tilde K^{b(j)}$ & Enlarged base cube & Definition \ref{8.2}, Condition 1 \\
   \hline
 $\tilde K^{(j)}$ & Enlarged multiscale cube & Definition \ref{8.2}, Condition 3 \\
 \hline
 $\hat K^{(j)}$ & Central cube of order $j$ & Definition \ref{def7.2}   \\
 \hline 
 {\newred $L$-good (for $\bxi$)} &{\newred $\bxi\in\BS\BL(\sqrt{\rho^2-L},\sqrt{\rho^2+L})$ } &{\newred  Definition \ref{newdef1}} \\ 
 \hline
{\newred $L$-good (for $\bn$)} &{\newred $\bk+\bn\vec\boldom\in\BS\BL(\sqrt{\rho^2-L},\sqrt{\rho^2+L})$}  &{\newred  Definition \ref{newdef1}} \\ 
 \hline
$\CM$  & Matryoshka of patches or central cubes & Definitions \ref{def7.1xi}, \ref{def7.2}   \\ 
 \hline
 $\mu$ &   & Lemma \ref{angles}  \\ 
  \hline
$\hat\mu$  & Exponent from the SDC &  \eqref{strong} \\ 
 \hline
 $\Omega(a,r)$ & Ball in $l^{\infty}$-norm in $\Z^l$ &  {\newred \eqref{newballs} and above }\\
 \hline
 
{\newred $\Omega'(r)$ }& {\newred $\Om(0;r)\setminus\{0\}$}&  
{\newred  \eqref{newballs}} \\
 \hline
$\CP(\Lambda,\bk)$  & Projection onto $\GH(\Lambda,\bk)$  &  Definition \ref{proj} \\ 
 \hline
$\Pi$  & $(-\tilde\phi,\tilde\phi)^{d-1}$ & \eqref{coordinates3}  \\ 
 \hline
{\newred $Q$}  &{\newred $\max\{|\bn|, V_{\bn}\ne 0\}$}  & {\newred\eqref{V_q=0} } \\ 
 \hline
$r_{n,j}$  &  & \eqref{new5.1}, \eqref{rn3}  \\ 
 \hline
$r_{n}'(\gamma)$, $r_n'$  &  & \eqref{r1'},  \eqref{rn'}  \\ 
 \hline
$\tilde r_{n}'(\gamma)$, $\tilde r_{n}'$ &  &  \eqref{tildern'}  \\ 
 \hline
$\sigma_{p,s,q}$ &  &  \eqref{7sigma} and the text above it.  \\ 
\hline
$\BUps^{\Z}_p(\bxi)$& Primitive pre-cluster &Definition \ref{7BUps}\\
\hline
$\check\BUps^{\Z}_p(\bxi)$& Extended pre-cluster &\eqref{ext}\\
\hline
$\hat\BUps^{\Z}_p(\bxi)$& Intermediate pre-cluster &\eqref{haterext}\\
\hline
$\tilde\BUps^{\Z}_p(\bxi)$& Super-extended pre-cluster &\eqref{superext}\\
  \hline
 $Z_0$ &  & \eqref{Z_0}  \\ 

 \hline
\end{tabular}
\end{center}

\end{document}